\documentclass{article}
\usepackage{amsmath,amsfonts,amssymb,amsthm}
\usepackage{mathtools}
\usepackage{hyperref}
\usepackage{float}
\usepackage{multicol}
\usepackage{textcomp}
\usepackage{xcolor}
\usepackage{balance}
\usepackage{xspace}
\RequirePackage{xparse}
\usepackage[numbers]{natbib}

% theorems, definitions
\newtheorem{theorem}{Theorem}
\newtheorem{lemma}{Lemma}
\newtheorem{proposition}{Proposition}
\newtheorem{remark}{Remark}

\theoremstyle{definition}
\newtheorem{definition}{Definition}

\usepackage{thm-restate}

\usepackage{paper}
\usepackage{cleveref}
\usepackage[inline]{enumitem}
\usepackage{sbnf}

\renewcommand{\bnfmid}{\mathrel{\mbox{\Large{$\mid$}}}}
\usepackage[margin=1.2in]{geometry}
\usepackage{tikz}
\usepackage{authblk}

\usetikzlibrary{decorations.pathmorphing}
\usetikzlibrary{positioning}
\usetikzlibrary{calc}
\usetikzlibrary{fit}

\newcommand{\arxiv}{}

\newcommand{\DD}[1]{\ifdefined\commentaire{{{\color{orange!60!black!55} \texttt{[#1]}}}}\fi}
\newcommand{\TR}[1]{\ifdefined\commentaire{{{\color{blue!60!black!55} \texttt{[#1]}}}}\fi}

\author{Davide Davoli}
\author{Martin Avanzini}
\author{Tamara Rezk}
\affil{Inria, Université Côte d'Azur}

%\title{On Kernel's Safety and Speculative Execution}
\title{On Kernel's Safety in the Spectre Era \\(And KASLR is Formally
  Dead)\footnote{This is the extended version of the conference
    paper~\cite{ConferencePaper}. In comparison to the conference
    version, the following changes have been made: (i) the notion of
    system has been made more liberal, by permitting system call
    invocation during kernel mode execution, (ii) the semantics and
    proofs have been extended to reflect this enriched model; (iii)
    the constraints on system call's capabilities and
    the definition of $\delta_\mu$ have been modified. }}

\begin{document}

\maketitle
\begin{abstract}
  The efficacy of address space layout randomization has been formally demonstrated in a shared-memory model by Abadi et al., contingent on specific assumptions about victim programs. However, modern operating systems, implementing layout randomization in the kernel, diverge from these assumptions and operate on a separate memory model with communication through system calls. 
In this work, we relax Abadi et al.'s language assumptions while demonstrating that layout randomization offers a comparable safety guarantee in a system with memory separation. 
However, in practice, speculative execution and side-channels are recognized threats to layout randomization. 
We show that kernel safety cannot be restored for attackers capable of using side-channels and speculative execution and introduce a new condition,  that allows us to formally prove kernel safety in the Spectre era. Our research demonstrates that under this condition, the system remains safe without relying on layout randomization. 
We also demonstrate that our condition can be sensibly weakened, leading to enforcement mechanisms  that can guarantee
kernel safety for safe system calls in the Spectre era.

\DD{todo:
  \begin{itemize}
  \item compare with exorcising spectres with secure compilers
  \item Say that ordinary mitigations to spectre vunerabilities do not stop probing
  \end{itemize}
}
\end{abstract}

\section{Introduction}
\label{sec:introduction}
Memory safety violations on kernel memory can result in serious ramifications for security, such as e.g. arbitrary code execution, privilege escalation, or information leakage. 
In order to mitigate safety violations, operating systems --- such as Linux --- employ 
address space layout randomization~\cite{LWNKASLR,MacOSXASLR,AndroidASLR,iOSASLR,OpenBSD6.3,FreeBSDASLR}. This protection measure can prevent attacks that depend on knowledge of specific data or function location, as it introduces randomization of  these addresses.

On the one hand, the efficacy of layout randomization has been formally demonstrated in Abadi et al.'s line of work~\cite{Abadi, Abadi2, Abadi3}, as a protective measure within a {\it shared-memory model} between the attacker and the victim. These results, however, are contingent on specific assumptions regarding victim programs, notably the absence of pointer arithmetic,  introspection, or indirect jumps. These precise constraints shaped a controlled environment where memory safety could be enforced effectively via layout randomization.
However, operating systems employing layout randomization on kernel (a.k.a. KASLR in Linux e.g.~\cite{LWNKASLR}) diverge from these assumptions. Notably, they operate on a  separate memory model, wherein, kernel code --- acting as the victim --- resides on
kernel memory, while user code --- acting as the potential attacker --- resides in user space. The interaction between the two occurs through a limited set  of functions provided via interfaces or system calls~\cite{Tanenbaum}.
In the operating system's realm, system calls may be written in C and assembly code, further deviating from 
  the restricted conditions outlined by Abadi et al.  This introduces a distinction not only in the expressiveness of victim code considered but also in the underlying memory model.
  
Hence, our first research question  emerges: can we relax the language assumptions proposed by Abadi et al.~\cite{Abadi,Abadi2,Abadi3} while concurrently demonstrating that layout randomization offers a comparable safety guarantee in a system  with memory 
separation? We affirmatively respond to this question by showcasing that layout randomization probabilistically ensures kernel safety within a classic attacker model, where users of an operating system execute without privileges and victims can feature pointer arithmetic, introspection, and indirect jumps.
 %, under reasonable dynamic assumptions replacing Abadi's syntactical restrictions. Crucially, one of these assumptions asserts that the victim does not inadvertently leak information about the layout to potential attackers.

On the other hand, in the current state-of-the-art of security, often referred to as the Spectre era, speculative execution and side-channels are well known to be effective vectors for compromising layout randomization~\cite{BlindSide,Meltdown,cacheKASLR,EntryBleed,TagBleed,kaslrfbr}.\TR{add example Blindside here?? } 
Indeed, our first result neglects the impact of speculative execution and side-channels. Recognizing this limitation, our second research question arises:  can we restore a similar safety result in the Spectre era?

In this regard, we formally acknowledge that by relying solely on layout randomization it is not possible to restore kernel safety. We then introduce a new condition, called {\it speculative layout non-interference} akin to speculative constant-time~\cite{CTFundations}, which intuitively asserts that victims should not unintentionally leak information on the kernel's layout through side-channels. Our research formally demonstrates that under this assumption, the system is safe, and perhaps surprisingly, without the necessity of layout randomization.
Later, we show that speculative layout non-interference is not a necessary requirement, and this motivates us to study how safety can be enforced without requiring that property.

Our third contribution is to show that kernels can be protected even without requiring {\it speculative layout non-interference}. Following other similar works~\cite{Blade,ProSpeCT}, we do so by relating safety in the classic execution model to the speculative one.
%
%, and perhaps surprisingly, without the necessity of ASLR.  
%
We achieve this result by defining a program transformation by which we enforce safety against speculative attackers on a system that does not conform to this property.
%
% This transformation, in turn, is effective only if the system enjoys a notion of safety in the classic model that cannot be provided by ASLR alone, but that is sensibly weaker than safety in the speculative model. \DD{last sentence needs to be rephrased}
This transformation, in turn, requires the system to enjoy a notion of safety that cannot be provided by layout randomization, but that is sensibly weaker than the safety property it enforces. \DD{last sentence needs to be rephrased}
This marks the first formal step toward strengthening kernel safety in the presence of speculative and side-channel vulnerabilities,
and the surpassing of layout randomization as a system level protection mechanism.

% As a third contribution, we ask if the  speculative layout non-interference condition is necessary or only sufficient.
% We demonstrate that the condition can be sensibly weaken. This observation leads us to enforcement mechanisms that can guarantee kernel safety as transformations of syscalls code.

In summary, our contributions are: 
\begin{itemize}
\item  We formally demonstrate the effectiveness of layout randomization to provide kernel safety for a classic operating system scenario, with system calls offered as 
interfaces to attackers and different privilege execution modes, as well as kernel and user memory separation. 
\item We empower attackers in our first scenario to execute side-channel attacks and utilize speculative execution. Demonstrating that kernel safety is not maintained under this more potent attacker model, we subsequently present a sufficient condition to ensure kernel safety.
\item We show that it is possible to enforce safety against speculative attackers on a system that enjoys weaker security guarantees by the application of a program transformation.
\end{itemize} 

The paper is structured as follows: in \Cref{sec:motivations} we give an overview
of the contributions of this paper, motivated by some concrete examples.
In \Cref{sec:language}, we introduce our execution model by giving its
language and semantics; in \Cref{sec:threatmodel}, we establish
threat models.
\Cref{sec:safety1} is devoted to showing that layout randomization is an effective
protection measure for attacks that do not rely on speculative execution
and side-channel observations.
In \Cref{sec:safety2} we first extend the model of \Cref{sec:language}
to encompass time-channel info leaks
and speculative execution, then we show that layout randomization
is not a viable protection mechanism in this scenario.
In \Cref{sec:sksenforcement} we show that it is feasible
to convert any system that is safe against classic attackers
into an equivalent system that is safe against speculative attackers.
Finally, we consider related
work in \Cref{sec:relatedwork}, and we conclude in
\Cref{sec:conclusion}.
\ifdefined\conference{An extended version of this paper with the complete semantics and the omitted proofs is available online,~\cite{ExtendedVersion}.}\fi

%%% Local Variables: 
%%% mode: latex
%%% TeX-master: "paper.tex"
%%% End:

\section{Motivation}
\label{sec:motivations}
\begin{figure}[t]
  \centering
  \begin{framed}
    {\small
\begin{code}[emph={buf,recv,send,valid},morekeywords={socket,size_t}]
int buf[K+1][H];

int recv(socket* s, size_t idx) {
  if (valid(s, idx)) return buf[*s][idx];
  return 0;
}

void send(socket* s, size_t idx, int msg) {
  if (valid(s, idx)) {
    buf[*s][idx] = msg;
    if (buf[K][0] != NULL) (*buf[K][0])(s, idx);
  }
}
      \end{code}
    }%
  \end{framed}
  \caption{System Calls vulnerable to memory corruption}
  \label{fig:msgpassing}
\end{figure}

Each year, dozens of vulnerabilities are found in commodity operating systems' kernels, and the majority of them are memory corruption vulnerabilities~\cite{ThreatOverview}. A kernel suffers a memory corruption vulnerability when an unprivileged attacker can trigger it to read or write its memory in an \emph{unexpected} way, usually, by issuing a sequence of system calls with maliciously crafted arguments. In \Cref{fig:msgpassing}, we show a pair of system calls of a hypothetical system that are subject to this kind of vulnerability.
\def\ex{\cc[emph={buf, recv, send, valid, buf, native_write_cr4, sc_leak, foo}]}
The \ex{recv} and \ex{send} system calls are meant to implement a simple message passing protocol.
The implementation supports up to \ex{K} sockets, each socket storing up to \ex{H} messages.
A user can send messages by invoking the system call \ex{send}, and read them with the system call \ex{recv}.
These system calls employ a shared buffer \ex!buf! that stores messages, together with a hook for a customizable callback \ex{buf[K][0]}.
If specified, this callback is executed after a message is sent. Such a callback may, for instance, signal the receiver that a new message is available.

These system calls are meant to interact only with the memory containing the buffer, the code of the called functions and with the resources that these function in turn access.
In the following, we will refer to the set of memory resources that a system call may access rightfully as the \emph{\tcaps} of that system call.
Depending on the implementation of the \ex{valid} function, these system calls can suffer from memory corruption vulnerabilities.
For instance, if the \ex{valid} function does not perform any bound checks on the value of \ex{idx},
these two system calls can be used by the attacker to perform arbitrary read and write operations.
In particular, if the attacker supplies an out-of-bounds value for \ex{idx} to the \ex{recv} system call,
the system call can be used to perform an unrestricted memory read.
Similarly, the \ex{send} system call can be used to overwrite any value of kernel memory and,
in particular, to overwrite the function pointer to the callback that is stored within the buffer.
This means that the attacker can turn this memory-vulnerability into a control-flow vulnerability, as it can deviate the control flow from its intended paths.
When this happens, we talk about violations of control flow integrity (CFI)~\cite{CFI}.

However, if the system that implements these system calls is protected with layout randomization --- such as many commodity operating systems~\cite{LWNKASLR,MacOSXASLR,AndroidASLR,iOSASLR,OpenBSD6.3,FreeBSDASLR} --- the exploitation of these vulnerabilities is not a straightforward operation. For instance, if an attacker wants to mount a privilege-escalation attack, one of the viable ways is to disable the SMEP protection by running the \ex{native_write_cr4} function. When this protection is disabled, the attacker is allowed to run any \emph{payload} stored in user-space. To this aim, the attacker can use the \ex{send} system call twice: the first time to run the \ex{native_write_cr4} function instead of the callback, and the second time to run the \emph{payload}.  
However, in order to do so, the attacker has to first infer the address of \ex{native_write_cr4}.
In the absence of side channel info-leaks, an attacker has to effectively guess this address and, due to layout randomization, the probability of success is low.

This is what we show with our first result (\Cref{thm:scenario1}):
without side-channel leaks (and speculative execution), if a system is protected with layout randomization, the probability that an unprivileged attacker leads the system to perform an unsafe memory access is very low, provided the address space is sufficiently large. Of course, the precise probability depends on the concrete randomization scheme.
We emphasize that this result is compatible with the large number
of kernel attacks that break Linux's kernel layout randomization, e.g.\ by means of heap overflows~(\cite{HeapFengShui2022}).
The distribution of Linux's heap addresses lack entropy~\cite{FreelistRandomization}, in consequence,
the probability of mounting a successful attack are relatively high.

Although it was already well known that layout randomization can provide some security guarantees~\cite{Abadi, Abadi2, Abadi3}, the novelty of \Cref{thm:scenario1} lies in showing that these guarantees are valid even if victims can perform pointer arithmetic and indirect jumps. 

Despite this positive result, the threat model considered in \Cref{thm:scenario1} is unrealistic nowadays. In particular, it does not take in account the ability of the attackers to access side-channel info-leaks and to steer speculative execution. There is evidence that, by leveraging similar features, the attackers can leak information on the kernel's layout~\cite{cacheKASLR,TLBKASLR,TagBleed, EntryBleed, kaslrfbr} and compromise the security guarantees offered by layout randomization~\cite{BlindSide,Meltdown}.

In particular, if the system under consideration suffers from side-channel info-leaks that involve the layout, an attacker may break the protection offered by randomization.
As an illustrative example, suppose the system contains the following system call:
\ifdefined\conference{\small
\begin{code}[emph={sc_leak,native_write_cr4}]
int sc_leak(x){
  if ((void*) x  == (void*) native_write_cr4)
    for(int i = 0; i < K; i++);
  return 0;
}
\end{code}
\normalsize
}\fi
\ifdefined\arxiv{
\begin{code}[emph={sc_leak,native_write_cr4}]
int sc_leak(x){
  if ((void*) x  == (void*) native_write_cr4)
    for(int i = 0; i < K; i++);
  return 0;
}
\end{code}
}\fi
By measuring the execution time of the system call, an attacker may deduce information on the location of \ex{native_write_cr4}.
If a call \ex{sc_leak(a)} takes sufficiently long to execute, the attacker can deduce that the address \ex{a} corresponds to that of \ex{native_write_cr4}.
Once deduced, the attacker will be effectively able to disable SMEP protection via the vulnerable system call \ex{send}.

Similar attacks can be mounted by taking advantage of speculative execution:
in our example from \Cref{fig:msgpassing}, an attacker can make use of the \ex{read} primitive to probe for readable data without crashing the system. This can be done by supplying to the system call arguments \ex{s} and \ex{idx} such that \ex{valid(s, idx)} returns false --- ideally, causing an out of bound access when the return value is fetched from memory. If the attacker manages in mis-training the branch predictor, the access to \ex{buf[*s][idx]} is performed in transient execution. Depending on the allocation state of the address referenced by \ex{buf[*s][idx]}, two cases arise. If that address does not store any readable data, the memory violation is not raised to the architectural state, because it occurred during transient execution. Most importantly, if that address stores writable data, this operation loads a new line in the system's cache and, as soon as the system detects the mis-prediction, the execution backtracks to the latest safe state. Although this operation does not affect the architectural state, the insertion of a new line in the cache can be detected from user-space. Thus, the attacker can infer that the address referenced by \ex{buf[*s][idx]} contains readable data, and it can make use of the vulnerabilities of the \ex{send} and the \ex{recv} system calls to read or write the content of that memory address. 
This form of \emph{speculative probing} is very similar to what happens, for instance, in the BlindSide attack~\cite{BlindSide} that effectively defeats Linux's KASLR.

The reader may observe that these two attacks rely on the attacker's ability to reconstruct the kernel's memory layout by collecting side-channel info-leaks. For this reason, a natural question is whether these attacks can be prevented by imposing that no information of the layouts leaks to the architectural and the micro-architectural state during the execution of system calls. It turns out that this is the case, as we show in~\Cref{thm:scenario2}. In practice, this mitigation is of little help though, as it would effectively rule out all system calls that access memory at runtime.

However, we are able to show that any operating system can be pragmatically turned into another system that is architecturally equivalent to it, but that is not subject to vulnerabilities that are due to transient execution. With this approach, showing that a kernel is safe in the speculative execution model, reduces to showing that the kernel under consideration is safe in the classic execution model. Notably, this holds independently of the technique that is used to show safety in the classic model. Concretely, with this approach, the attack we showed above would be prevented by disallowing the transient execution of the unsafe load operation. In turn, this can be achieved by placing an instruction that stops transient execution before that operation. The efficacy of this technique is shown in~\Cref{thm:mitigation}. 

%%% Local Variables: 
%%% languagetool-local-disabled-rules: ("EN_CONTRACTION_SPELLING" "ID_CASING") 
%%% End: 
%%% mode: latex
%%% TeX-master: "paper.tex"
%%% End:
%%% languagetool-local-disabled-rules: ("ID_CASING" "EN_UNPAIRED_QUOTES" "COMMA_PARENTHESIS_WHITESPACE" "CURRENCY" "WHITESPACE_RULE") 

\section{The Language}
\label{sec:language}
In this section, we introduce the language that we employ throughout the following to study the effectiveness of kernel address space layout randomization. We are considering a simple imperative \texttt{while} language. The address space is explicit, and segregated into user and kernel space. We start by describing the syntax.
\subsection{Syntax}

\begin{figure}
  \centering
  \begin{framed}
    \begin{align*}
      \Expr \ni \expr ,\exprtwo &\bnfdef
                                  \val
                                  \bnfmid  \vx 
                                  \bnfmid  \ar 
                                  \bnfmid  \fn 
                                  \bnfmid  \op (\expr_1,\dots,\expr_n)\\
      \Instr \ni \stat,\stattwo &\bnfdef
                                  \cskip 
                                  \bnfmid
                                  \vx \ass \expr % assignment \\
      \bnfmid \cmemread \vx \expr % & memory load \\ 
      \bnfmid \cmemass \expr \exprtwo % & memory store\\
      \bnfmid \ccall \exprtwo{\expr_1,\dots,\expr_n} % &  procedure call \\
      \bnfmid \csyscall \syscall {\expr_1,\dots,\expr_n} \\ % & system call \\
                                &  \bnfmid \cif{\expr}{\cmd}{\cmdtwo} % & conditional \\
      \bnfmid \cwhile{\expr}{\cmd} \\ % & while loop
                                          \Cmd \ni \cmd, \cmdtwo &\bnfdef \cnil \bnfmid \stat\sep\cmd
    \end{align*}
  \end{framed}
  \caption{Syntax of the language. Here, $\val$ is a value, $\vx$ a register, $\ar$ an array identifier, $\fn$ a procedure identifier, and $\op$ is an operator.}
  \label{fig:syntax}
\end{figure}

% \begin{figure}
%   \centering
%   \begin{subfigure}[t]{0.48\textwidth}
%     % \begin{minipage}{\linewidth}
%       \begin{bnf}
%         \Instr \ni \stat,\stattwo
%         & \cskip & no-op \\
%         & \vx \ass \expr & assignment \\
%         & \cmemread \vx \expr & memory load \\
%         & \cmemass \expr \exprtwo & memory store\\
%         & \ccall \exprtwo{\expr_1,\dots,\expr_n} & procedure call \\
%         & \csyscall \syscall {\expr_1,\dots,\expr_n} & system call \\
%         & \cif{\expr}{\cmd}{\cmdtwo} & conditional \\
%         & \cwhile{\expr}{\cmd} & while loop
%       \end{bnf}
%     % \end{minipage}
%   \end{subfigure}
%   \hfill
%   \begin{subfigure}[t]{0.48\textwidth}
%     % \begin{minipage}{\linewidth}
%       \begin{bnf}
%         \Expr \ni \expr ,\exprtwo
%         & \val & values \\
%         & \vx & register \\
%         & \ar & array identifier \\
%         & \fn & procedure identifier \\
%         & \op (\expr_1,\dots,\expr_n) & operation\\[2mm]
%         \Cmd \ni \cmd, \cmdtwo & \cnil \bnfmid \stat\sep\cmd
%       \end{bnf}
%     % \end{minipage}
%   \end{subfigure}
%   % \vspace{-3mm}
%   \caption{Syntax of the language.}
%   \label{fig:syntax}
% \end{figure}

% Notably, the language permits pointer arithmetic, arrays and procedures residing in memory.
The set $\Cmd$ of \emph{commands} is given in \Cref{fig:syntax}.
Memories may store \emph{procedures} and \emph{arrays}, i.e.,
sequences of \emph{values} $\val \in \Val$ organized as contiguous
regions.  The set of values is left abstract, but we assume that it
encompasses at least \emph{Boolean values}
$\cBool \defsym \{\ctrue,\cfalse\}$, \emph{(memory) addresses} $\Add$,
modeled as non-negative integers, and an \emph{undefined value}
$\cnull$.  Within expressions, $\vx \in \Reg$ ranges over
\emph{registers}, $\ar \in \Ar$ and $\fn \in \Fn$ over \emph{array}
and \emph{procedure identifiers}, and $\op \in \Op$ over
\emph{operators}. \emph{Identifiers} $\Id \defsym \Ar \uplus \Fn$ are
mapped to addresses at runtime, as governed by a layout randomization
scheme.  The \emph{size} (length) of an array $\ar$ is denoted by
$\size{\ar}$ and is fixed for simplicity, i.e., we do not model
dynamic allocation and deallocation.

A command $\cmd \in \Cmd$ is a sequence of instructions, evaluated in-order.
The instruction $\vx \ass \expr$ stores the result of evaluating $\expr$ within register $\vx \in \Reg$.
% Notice that within $\expr$, each reference $\rf$ evaluates to an address (depending on the memory layout).
% Only through explicit dereferencing $\cmemread \vx \rf$ its value can be accessed.
To keep the semantics brief, expressions neither read nor write to
memory. Specifically, addresses are dereferenced explicitly.  To this
end, the instruction $\cmemread \vx \expr$ performs a memory read from
the address given by $\expr$, and stores the corresponding value in
register $\vx$. Dually, the instruction $\cmemass \expr \exprtwo$
stores the value of $\exprtwo$ at the address given by $\expr$.  The
instruction $\ccall \exprtwo{\expr_1,\dots,\expr_n}$ invokes the
procedure residing at address $\exprtwo$ in memory, supplying
arguments $\expr_1,\dots,\expr_n$.  Likewise,
$\csyscall \syscall {\expr_1,\dots,\expr_n}$ invokes a system call
$\syscall$ with arguments $\expr_1,\dots,\expr_n$ from a finite set of
system calls $\Sys$.  The execution of a system call engages the
\emph{privileged execution mode} and thereby the accessible address
space changes.  To this end, the address space $\Add$ is
  partitioned into $\kappa_\um$ \emph{user-space} addresses
  $\Addu = \{0, \ldots, \kappa_\um-1\}$, visible in unprivileged mode,
  and $\kappa_{\km}$ \emph{kernel-space} addresses
  $\Addk = \{ \kappa_{\um}, \ldots, \kappa_{\um}+\kappa_{\km}-1\}$,
  visible in unprivileged mode.  The remaining constructs are
standard.

\paragraph*{Stores} % \MA{good?}
Let $\Arr$ denote the set of arrays, i.e., finite sequences of values
$\vec{\val}$ of fixed length $|\vec{\val}|$.  A \emph{store} is a
(well-sorted) mapping $\rfs : \Id \to \Arr \cup \Cmd$, mapping array
identifiers $\ar$ to arrays $\rfs(\ar) \in\Arr$ of length $\size{\ar}$
and procedure identifiers $\fn$ to their implementation
$\rfs(\fn) \in \Cmd$.  Let $\Idu \uplus \Idk = \Id$ be a partitioning
of identifiers int \emph{user-space} and \emph{kernel-space}
identifiers, respectively. This distinction will signify the intended
location of the corresponding objects within the memory address space.
We write $\Fnk \subseteq \Idk$ and $\Fnu \subseteq \Idu$ for the
\emph{kernel-space} and \emph{user-space procedure idenitifiers};
similar for array identifiers we use $\Ark \subseteq \Idk$ and
$\Aru \subseteq \Idu$ to denote \emph{kernel-space} and
\emph{user-space array idenitifiers} respectively.  Given a store
$\rfs$, we always assume that the address space is sufficiently large
to hold $\rfs$; that is, for every $b \in \{\um,\km\}$,
$\kappa_{b} \geq \sum_{\id \in \Id[b]} \size{\id}$.  Here, by
convention, $\size{\fn} \defsym 1$. In the following, we often work
with pairs of stores that associate a set of identifiers to the same
values---e.g., containing identical procedures. We write
$\rfs \eqon{Id} \rfs'$ if $\rfs$ and $\rfs'$ coincide on
$Id \subseteq \Id$.

\paragraph*{\Tcaps} To model safety, each system call $\syscall$
is associated with a fixed set of identifiers defining the memory
regions it can access. We refer to this set as the \emph{\tcaps} of
$\syscall$.

\paragraph*{Systems} % \MA{ok?}
In our model, kernels are modeled as triples defining system calls,
the content of kernel space memory, and the capabilities associated to every system call.  
Let $\Sys$ denote a (finite) set of \emph{system call identifiers}.
A \emph{system} for $\Sys$ is a tuple $\system = (\rfs,\syss,\caps)$, consisting of:
\begin{itemize}
  % \item $\Sys$ is a (finite) set of  privileged procedures that
  % the user-space programs can invoke, the set of \emph{system calls};
\item an \emph{initial store} $\rfs : \Id \to \Arr \cup \Cmd$, relating identifiers to their initial
  value;
  \item a \emph{system call map} $\syss : \Sys \to \Cmd$ associating system calls to
  their implementation; and
\item a \emph{capability map} $\caps : \Sys \to \parts{\Idk}$ associating system calls with their capabilities.
\end{itemize}
We require that the code $\rfs(\fn)$ associated to user space
identifiers $\fn \in \Fnu$ is \emph{unprivileged}, i.e.
$\ids(\rfs(\fn)) \subseteq \Idu$, where $\ids(\cmd) \subseteq \Id$ is
the set of identifiers literally occurring in $\cmd$.  As our notion
of safety will be defined in terms of system calls' capabilities, we
furthermore require that the set of capabilities $\caps(\syscall)$ of
a system call $\syscall$ includes all identifiers that $\syscall$
refers to, directly in its body, or indirectly through procedure or
system calls.
To define this latter requirement formally, let
$\refs_\system(\cmd) \subseteq \Id \cup \Sys$ denote the set of
identifiers and system calls that the command $\cmd$ refers to,
directly or indirectly.  Specifically, we call $\mathsf{syscalls}(\cmd)$ is
the set of system calls in $\cmd$, and we define $\refs_\system(\cmd)$
as the smallest set such that:
(a)~$\ids(\cmd) \cup \mathsf{syscalls}(\cmd) \subseteq
\refs_\system(\cmd)$, where (b)~if $\fn \in \refs_\system(\cmd)$ then
$\ids(\rfs(\fn)) \subseteq \refs_\system(\cmd)$, and likewise, (c)~if
$\syscall \in \refs_\system(\cmd)$ then
$\ids(\syss(\syscall)) \subseteq \refs_\system(\cmd)$.  The
requirement can now be stated as
$\refs_\system(\syss(\syscall)) \setminus \Sys\subseteq
\caps(\syscall)$.
% and if $\fn \in \caps(\syscall)$ then
% $\ids(\rfs(\fn)) \subseteq \caps(\syscall)$) and if
% $\syscalltwo \in \caps(\syscall)$ then
% $\ids(\syss(\syscalltwo)) \subseteq \caps(\syscall)$.
%

\subsection{Semantics}

We now endow our language with an operational semantics. To define our
notion of safety, we directly define an \emph{instrumented semantics},
which signals capability violations.  Semantics cannot be defined
directly on stores, which directly relate identifiers to their
values. Instead, to model address-based memory accesses, we introduce
\emph{memories}.

\paragraph*{Memories}
% A store $\rfs$ determines the contents of a memory, but not its layout.
% % which is governed by a function $\lay : \Id \to \Add$ associating identifiers with 
% % their concrete memory addresses.
% In contrast
A \emph{memory} is a function
$\mem: \Add \to \Val \cup \Cmd \cup \{\none\}$ associating addresses
with their content, or to the special symbol $\none \not\in \Val$ if
the location is not occupied.  We denote by $\Mem$ the set of all
memories.  Arrays will be represented by sequences of values in
  contiguous memory locations. We use $\update \mem \add \val$ to
denotes the memory that is pointwise identical to $\mem \in \Mem$,
except for the address $\add\in \Add$ that is mapped to
$\val\in \Val$.  Note that updates are restricted values, in
particular updating a memory location with a command is forbidden.
Thereby we model a W\^{}X memory protection policy, separating
writable from executable memory regions.

\paragraph{Layouts} A \emph{(memory) layout} is a function $\lay : \Id \to \Add$ that
describes where objects are placed in memory.
As we mentioned, an array $\ar$ is stored as continuous block at addresses
$\underline \lay(\ar) \defsym \{ \lay(\ar), \ldots, \lay(\ar)+\size\ar - 1 \}$
within memory. For procedure identifiers $\fn$, we set 
$\underline \lay(\fn) \defsym \{ \lay(\fn)\}$.
We overload this notation to sets of identifiers in the obvious way.
In particular, $\underline \lay(\Ar)$ and $\underline \lay(\Fn)$ refer
to the address-spaces of arrays and procedures, under the given layout.
We regard only layouts that associate identifiers with non-overlapping blocks
($\underline \lay(\id_1)\cap \underline \lay(\id_2) = \emptyset$ for all
$\id_1 \neq \id_2$) and that respect address space separation ($\underline \lay(\id) \subseteq \Add[b]$
for $\id \in \Id[b]$, $b \in \{\um,\km\}$).
The set of all such layouts is denoted by $\Lay$.
Note that, by the assumptions on the sizes
$\kappa_\um$ and $\kappa_{\km}$ of address spaces, layouts always exist.

% \MA{I'd maybe say that application of a layout to a store, resulting in a memory $(\lay \lcomp \rfs)(\add)$ defined in the obvious way...; and then say that randomization is done trhough sampling $\lay$ from a $\mu$; and so overall reduction should be seen as a probabilistic process in general.}

A layout $\lay : \Id \to \Add$ now defines how a store
$\rfs : \Id \to \Arr \cup \Cmd$ is placed in memory. This memory,
denoted by $\lay \lcomp \rfs$, is defined as follows:
\[
  (\lay \lcomp \rfs)(\add) \defsym
  \begin{cases}
    \rfs(\fn) & \text{if $\add = \lay(\fn)$ for some $\fn \in\Fn$,} \\
    \vec \val[k] & \parbox{35em}{if $\add = \lay(\ar)+ k$, for some $\ar \in \Ar$ and $0 \leq k < \size{\ar}$ s.t. $\rfs(\ar) = \vec \val$,} \\
    \none & \text{otherwise,}
  \end{cases}
\]
where $\vec \val[k]$ denotes the $k$-th element of the tuple $\vec \val$, indexed
starting from $0$.

\paragraph{Randomization scheme}
Abstracting from details, we model an address space randomization
scheme through a probability distribution over layouts.  A specific
layout $\lay$ is selected at random prior to system execution.  For a
given system $\system = (\rfs,\syss,\caps)$, this choice then dictates
the initial memory configuration $\lay \lcomp \rfs$.  Although the
semantics is itself deterministic, computation can be viewed as a
probabilistic process.  In our language, instructions are
layout-sensitive: for example, the outcome of a memory load operation
at a specific address depends on whether $\lay$ places an object at
that address. Therefore, kernel's safety should be construed as a
property that holds in a probabilistic sense.
% Layout Randomization is modeled by sampling $\lay \in \Lay$ from a
% probability distribution $\mu$.  and applying it to a fixed layout $\rfs$
% to produce a memory $\mem$ where the program is evaluated.
% Due to this initial random choice, the reduction relation
% can be seen as a probabilistic process.

%MA: done above; please revert if you don't like
% In the following, we restrict our analysus to
% \emph{coherent} and \emph{well-mapped} layouts only.

\paragraph*{Register maps} Besides memory locations, our program can
also manipulate the content of registers. In our semantics, we model
registers through functions $\regmap: \Reg \to \Val$, associating register
identifiers to their value. As for memories,
% given a register map $\regmap$, a value $\val \in \Val$, and a register $\vx \in \Reg$,
the notation $\update \regmap \vx \val$ denotes the register map that is pointwise
identical to $\regmap$, except for the register $\vx$, which is mapped to the value $\val$.
Again, this operation is only defined when $\val$
is a value, i.e., registers cannot store arrays or procedures.

\paragraph*{Semantics of expressions} To define the semantics of
expressions, we assume for each $n$-ary operator $\op$ an
interpretation $\widehat \op:\Val^n \to \Val$. The semantics of
an expression depends, besides registers, on a layout, in order to resolve identifiers.
The semantics is now defined by:
\begin{align*}
\sem\val_{\regmap, \lay} &\defsym \val & \sem\vx_{\regmap, \lay} &\defsym \regmap(\vx) & \sem \id_{\regmap, \lay} &\defsym \lay(\id) &  \sem{\op (\expr_1,\dots,\expr_n)}_{\regmap, \lay} \defsym \widehat \op (\sem{\expr_1}_{\regmap, \lay}, \ldots, \sem{\expr_n}_{\regmap, \lay} ).
\end{align*}
Let $(\cdot)^{\Add{}} : \Val \to \Add$ and
$(\cdot)^{\cBool{}} : \Val \to \cBool$ be functions that cast any
value to an address, or a Boolean, respectively. In particular,
$\sem {\expr}^{\Add}_{\regmap, \lay}$ and
$\sem {\expr}^{\cBool}_{\regmap, \lay}$ evaluate expression to
addresses and Boolean.

\paragraph*{Configurations}
Due to the presence of (possible recursive) procedures, configurations make use
of a stack of frames. 
Each such frame records the command under evaluation, the register contents
and the execution mode.
Formally, configurations are drawn from the following BNF:

{\hfill
\begin{minipage}[l]{0.25\linewidth}
  \begin{cbnf}
    \opt & \um \bnfmid \km[\syscall]% & execution mode \\
    % \fr & \frame{\cmd}{\regmap}{\opt} & frame \\
  \end{cbnf}
\end{minipage}
\begin{minipage}[l]{0.25\linewidth}
  \begin{cbnf}
    \st & \varepsilon \bnfmid \frame{\cmd}{\regmap}{\opt} : \st % & frame stack \\
  \end{cbnf}
\end{minipage}
\begin{minipage}[l]{0.25\linewidth}
\begin{cbnf}
    \confone, \conftwo & \conf{\st,\mem} \bnfmid \err \bnfmid \unsafe. % & configuration
  \end{cbnf}
\end{minipage}
\hfill}

A \emph{configuration} of the form $\conf{\st,\mem}$, with
top-frame $\frame{\cmd}{\regmap}{\opt}$, indicates that $\cmd$ is
executed with allocated registers $\regmap$ in \emph{execution mode}
$\opt$ on memory $\mem$.  In particular, $\opt = \km[\syscall]$ indicates that
execution proceeds in privileged kernel-mode, triggered by system call
$\syscall$. The annotation of the \emph{kernel-mode} flag by a system
call name facilitates the instrumentation of the semantics.  Indeed,
every time an access to the memory is made, the semantics enforces
that address is in the capabilities of the system call that is running
(if any).  If the address can be rightfully accessed, the execution
Finally, $\err$ signals abnormal termination (for instance, when
dereferencing a pointer to kernel memory from user-mode or vice
versa).

% By slight abuse of notation, we denote
% by $\Add[\opt]$ either user or kernel-space addresses,
% depending on $\opt = \um$ or $\opt=\km[\syscall]$, respectively;
% similar for $\Ar[\opt]$ and $\Fn[\opt]$.

\ifdefined\conference{\input{rules1}}\fi
\ifdefined\arxiv{\begin{figure*}[t]
  \small
  \centering
  \columnwidth=\linewidth
  \begin{framed}
    \[
      \Infer[WL][Pop]{
        \step
        {\ntc{\cnil}{\regmap}{\opt} {\frame {\cmd} {\regmap'} {\opt'}\cons \st}{\mem}}
        {\ntc{\cmd}{\update{\regmap'}{\ret} {\regmap(\ret)}}{\opt'}{\st}{\mem}}
      }
      {}
    \]
    \\[-3mm]
    \[
      \Infer[WL][Skip]
      { \step
        {\ntc{\cskip\sep\cmd}{\regmap}{\opt}{\st}{\mem}}
        {\ntc{\cmd}{\regmap}{\opt}{\st}{\mem}}
      }
      {
      }
    \]
    \\[-3mm]
    \[
      \Infer[WL][Op]
      { \step
        {\ntc{\vx \ass \expr\sep\cmd}{\regmap}{\opt}{\st}{\mem}}
        {\ntc{\cmd}{\update{\regmap}{\vx}{\sem \expr_{\regmap, \lay}}}{\opt}{\st}{\mem}}
      }
      {
      }
    \]
    \\[-3mm]
    \[
      \Infer[WL][If]
      { \step
        {\ntc{\cif \expr {\cmd_\ctrue} {\cmd_\cfalse}\sep\cmdtwo}{\regmap}{\opt}{\st}{\mem}}
        {\ntc{\cmd_{\toBool{\sem \expr_{\regmap, \lay}}}\sep \cmdtwo}{\regmap}{\opt}{\st}{\mem}}
      }
      {
      }
    \]  
    \\[-3mm]
    \[
      \Infer[WL][While]
      { \step
        {\ntc{\cwhile {\expr} {\cmd}\sep\cmdtwo}{\regmap}{\opt}{\st}{\mem}}
        {\confone_{\toBool{\sem \expr_{\regmap, \lay}}}}
      }
      {
        \confone_{\ctrue} = {\ntc{\cmd\sep\cwhile {\expr} {\cmd}\sep \cmdtwo}{\regmap}{\opt}{\st}{\mem}} &
        \confone_{\cfalse} = {\ntc{\cmdtwo}{\regmap}{\opt}{\st}{\mem}}     }
    \]
  \end{framed}
  \caption{Semantics w.r.t. system $\system=(\rfs,\syss,\caps)$, first part.}
  \label{fig:stepexcerpt1}
\end{figure*}

\begin{figure*}[t]
  \small
  \centering
  \columnwidth=\linewidth
  \begin{framed}
    \[
      \Infer[WL][Load]
      { \step
        {\ntc{\cmemread \vx \expr\sep\cmd}{\regmap}{\opt}{\st}{\mem}}
        {\ntc{\cmd}{\update{\regmap}{x}{\mem(\add)}}{\opt}{\st}{\mem}}
      }
      {
        \toAdd{\sem{\expr}_{\regmap, \lay}} = \add &
        % \mem(\add) \in \Val &
        \add \in \underline \lay(\Ar[\opt]) &
        \fbox{$\opt = \km[\syscall] \Rightarrow \add \in \underline \lay(\caps(\syscall))$}
      }
    \]
    \\[-3mm]
    \[
      \Infer[WL][Load-Error]
      {\step
        {\ntc{\cmemread \vx \expr\sep\cmd}{\regmap}{\opt}{\st}{\mem}}
        {\err }
      }
      {
        \toAdd{\sem{\expr}_{\regmap, \lay}} = \add &
        \add \not\in \underline \lay(\Ar[\opt])
      }
      \quad
      \Infer[WL][Load-Unsafe]
      {\step
        {\ntc{\cmemread \vx \expr\sep\cmd}{\regmap}{\km[\syscall]}{\st}{\mem}}
        {\unsafe}
      }
      {
        \toAdd{\sem{\expr}_{\regmap, \lay}} = \add &
        \add \in \underline \lay(\Ar[\km]) &
        \fbox{$\add \not\in \underline \lay(\caps(\syscall))$}
      }
    \]
    \\[-3mm]
    \[
      \Infer[WL][Store]
      { \step
        {\ntc{\cmemass \expr \exprtwo\sep\cmd}{\regmap}{\opt}{\st}{\mem}}
        {\ntc{\cmd}{\regmap}{\opt}{\st}{\update{\mem}{\add}{\sem \exprtwo_{\regmap, \lay}}}}
      }
      {
        \toAdd{\sem{\expr}_{\regmap, \lay}} = \add &
        % \mem(\add) \in \Val &
        \add \in \underline \lay(\Ar[\opt]) &
        \fbox{$\opt = \km[\syscall] \Rightarrow \add \in \underline \lay(\caps(\syscall))$}
      }
    \]
    \\[-3mm]
    \[
      \Infer[WL][Store-Error]
      {\step
        {\ntc{\cmemass \expr \exprtwo\sep\cmd}{\regmap}{\opt}{\st}{\mem}}
        {\err}
      }
      {\toAdd{\sem\expr_{\regmap, \lay}} = \add &
        \add \notin \underline \lay(\Ar[\opt])
      }
      \quad
      \Infer[WL][Store-Unsafe]
      {\step
        {\ntc{\cmemass \expr \exprtwo\sep\cmd}{\regmap}{\km[\syscall]}{\st}{\mem}}
        {\unsafe}
      }
      {
        \toAdd{\sem{\expr}_{\regmap, \lay}} = \add &
        \add \in \underline \lay(\Ar[\km]) &
        \fbox{$\add \not\in \underline \lay(\caps(\syscall))$}
      }
    \]
    \\[-3mm]
    \[
      \Infer[WL][Call]{
        \step
        {\ntc{\ccall{\expr}{\vec \exprtwo}\sep\cmd}{\regmap}{\opt}{\st}{\mem}}
        {
          \ntc
          {\mem(\add)}
          {\regmap_0[\vec \vx \upd \sem{\vec \exprtwo}_{\regmap,\lay}]}
          {\opt}
          {\frame{\cmd}{\regmap}{\opt} : \st}
          {\mem}
        }
      }
      {\toAdd{\sem{\expr}_{\regmap, \lay}}=\add &
        \add \in \underline \lay(\Fn[\opt]) &
        \fbox{$\opt = \km[\syscall] \Rightarrow \add \in \underline \lay(\caps(\syscall))$}
      }
    \]
    \\[-3mm]
    \[
      \Infer[WL][Call-Error]{
        \step
        {\ntc{\ccall{\expr}{\vec\exprtwo}\sep\cmd}{\regmap}{\opt}{\st}{\mem}}
        {\err}
      }
      {
        \toAdd{\sem{\expr}_{\regmap, \lay}} = \add &
        \add \not\in \underline \lay(\Fn[\opt])
      }
      \quad
      \Infer[WL][Call-Unsafe]{
        \step
        {\ntc{\ccall{\expr}{\vec \exprtwo}\sep\cmd}{\regmap}{\km[\syscall]}{\st}{\mem}}
        {\unsafe}
      }
      {
        \toAdd{\sem{\expr}_{\regmap, \lay}} = \add &
        \add \in \underline \lay(\Fn[\km]) &
        \fbox{$\add \not\in \underline \lay(\caps(\syscall))$}
      }
    \]
    \\[-3mm]
    \[
      \Infer[WL][SystemCall][\textsc{SC}]{
        \step
        {\ntc{\csyscall{\syscall}{\vec \exprtwo}\sep\cmd}{\regmap}{\opt}{\st}{\mem}}
        {
          \ntc
          {\syss(\syscall)}
          {\update {\regmap_0}{\vec \vx}{\sem{\vec\exprtwo}_{\regmap,\lay}}}
          {b'}
          {\frame{\cmd}{\regmap}{\opt} : \st}
          {\mem}
        }
      }
      {
        b = \um \Rightarrow b' = \km[\syscall] &
        b = \km[\syscalltwo] \Rightarrow b' = \km[\syscalltwo] 
      }
    \]
  \end{framed}
  \caption{Semantics w.r.t. system $\system=(\rfs,\syss,\caps)$, second part.}
  \label{fig:stepexcerpt2}
\end{figure*}}\fi

\paragraph{Small step operational semantics} Transitions in our semantics take the form
\[
  \step{\confone}{\conftwo},
\]
indicating that, w.r.t. system $\system$, configuration $\confone$
reduces to $\conftwo$ in one step, under layout $\lay$.
\ifdefined\conference{The most important reduction rules are defined
    in \Cref{fig:stepexcerpt}}\fi\ifdefined\arxiv{The reduction rules
    are defined in \Cref{fig:stepexcerpt1,fig:stepexcerpt2}}\fi.
%
%MA: redundant
% Our semantics keeps track of the
% systemcall that is being executed (and thus indirectly of its \tcaps).
% This way, when an unsafe memory access takes place, the semantics signals
% this event by reaching the configuration $\unsafe$.
% MA: shortened
Rule~\ref{WL:Load} implements a successful memory load $\cmemread \vx \expr$.
Expression $\expr$ is evaluated to an address $\add = \toAdd{\sem{\expr}_{\regmap, \lay}}$,
and the content of the register $\vx$ is updated with the value $\mem(\add)$. 
The side-condition $\add \in \underline \lay(\Ar[\opt])$ enforces
that $\add$ refers to a value accessible in the current execution mode $\opt$ (by slight abuse of notation, we disregard
the system call label in kernel-mode),
otherwise the instruction leads to $\err$ (see Rule~\ref{WL:Load-Error}).
As such, we are modeling unprivileged execution and SMAP protection, preventing
the access of kernel-space addresses when in user-mode, and vice versa.
The final, boxed, side-condition refers to the safety instrumentation.
In kernel-mode, triggered by system call $\syscall$ ($\opt =\km[\syscall]$), the rule
ensures that $\add$ refers to an object within the capabilities of $\syscall$ ($\add \in \underline w(\caps(\syscall))$).
When this condition is violated, unsafe execution is signaled (see Rule~\ref{WL:Load-Unsafe}).
In a similar fashion, the rules for memory writes and procedure calls
are defined.

% \diff{
%   Observe that the safety instrumentations and
%   the distinction between arrays and procedures operated by these rules
%   are sound only when the memory and the layout are \emph{consistent}.
%   In other words, when the memory can be expressed as $\lay \lcomp \rfs$ for some
%   store $\rfs$ and the transitions are established using the layout $\lay$.
%   If this condition is not met, the system may not be capable to
%   detect capability violations. For instance, consider the system $\system = (\rfs,\syss,\caps)$,
%   where $\syss(\syscall) = \cmemread \vx \ar$ and $\caps(\syscall) = \{\ar\}$. 
%   If $\syscall$ is executed using the memory $\lay_2 \lcomp \rfs$ with
%   $\lay_1(\ar) = \add_1$, and $\lay_1(\artwo)= \add_2$, but using
%   the layout $\lay_2$ such that $\lay_2(\ar) = \add_2$, and
%   $\lay_2(\artwo)= \add_1$, the instruction $\cmemread \vx \ar$
%   loads the content of $\lay_2 \lcomp \rfs(\lay_1(\ar))$ is loaded in $\vx$
%   during the execution of $\syscall$. This value corresponds
%   to $\rfs(\artwo)[0]$, actually violating the capabilities of $\syscall$.
% %
%   However, in the following we only evaluate
%   configurations with memories in the form $\mem = \lay \lcomp \rfs$
%   using the layout $\lay$, so sound safety instrumentations and
%   separation of data and procedures are ensured.
% }

Rule~\ref{WL:Call} deals with procedure calls.  It opens a new frame
and, places the $n$ evaluated arguments
$\exprtwo_1, \ldots, \exprtwo_n=\vec\exprtwo$ at registers
$\vx_1,\dots,\vx_n$ in an initial register environment $\regmap_0$,
summarized by the notation
$\regmap_0[\vec \vx \upd \sem{\vec \exprtwo}_{\regmap,\lay}]$. Notice
that our choice does not exclude stack-based inter-procedural
communication from our model: procedures can use a dedicated array as
stack, and pass the stack and return pointers as arguments.  System
calls, modeled by Rule~\ref{WL:SystemCall}, follow the same calling
convention.  Note that, in the newly created frame, the execution flag
is set to kernel-mode.  Once a procedure or system call finished
evaluation, Rule~\ref{WL:Pop} removes the introduced frame from the
stack. Observe how the rule permits return values through a designated
register $\ret$.  The remaining rules are standard.

Let us denote by $\lay \red \confone \to^* \conftwo$ that configuration $\confone$ reduces in zero or more steps to configuration $\conftwo$,
and by $\diverge \confone$ that $\confone$ \emph{diverges}.
In our semantics, under layout $\lay$, any non-diverging computation either
halts in a terminal configuration of the form $\conf{\frame{\cnil}{\regmap}{\opt}, \lay \lcomp \store'}$,
or abnormally terminates through an error $\err$, or safety violation $\unsafe$.
This motivates the following definition of an evaluation function:

\[
  \Eval {\cmd,\regmap,\opt,\store} \defsym
  \begin{cases}
    (\val,\store')
    & \!\!\!\text{if } \lay \calign{\red \conf{\frame{\cmd}{\regmap}{\opt},\lay \lcomp \store} \to^* \conf{\frame{\cnil}{\regmap'[\ret \mapsto \val]}{\opt}, \lay \lcomp \store'},} \\
    \err
    & \!\!\!\text{if } \lay \red \conf{\frame{\cmd}{\regmap}{\opt},\lay \lcomp \store} \to^* \err, \\
    \unsafe
    & \!\!\!\text{if } \lay \red \conf{\frame{\cmd}{\regmap}{\opt},\lay \lcomp \store} \to^* \unsafe\\
      \Omega
    & \!\!\!\text{if $\diverge{\conf{\frame{\cmd}{\regmap}{\opt},\lay \lcomp \store}}$}.
  \end{cases}
\]

Note how, in the case of normal termination, a computation produces a pair of a return value and a store.

%%% Local Variables: 
%%% mode: latex
%%% TeX-master: "arxiv.tex"
%%% End:
%%% languagetool-local-disabled-rules: ("ID_CASING" "EN_UNPAIRED_QUOTES" "COMMA_PARENTHESIS_WHITESPACE" "CURRENCY" "WHITESPACE_RULE" "EN_CONTRACTION_SPELLING") 

\section{Threat Model}
\label{sec:threatmodel}
In our threat model, attackers are unprivileged user-space
programs that execute on a machine supporting two
privilege rings: user-mode and kernel-mode.
The victim is the host operating system which
runs in kernel mode and has exclusive access to its private memory.
In particular, the operating system exposes a set of procedures,
the system calls, that can be invoked by the attacker and
that have access to kernel's memory. The attacker's goal is
to trigger a system call to perform an unsafe memory
access.

In \Cref{sec:safety1}, attackers are ordinary
programs that do not control speculative execution and do not have access
to side-channel info-leaks. However, the target machine implements
standard mitigations against this kind of attacks.
In particular, it supports data execution protection mechanisms (DEP),
SMAP~\cite{LWNSMAP} that prevents kernel-mode access
to user-space data, and SMEP~\cite{SMEP} that prevents the execution
of user-space functions when running in kernel-mode.
More precisely, the above-mentioned protection mechanisms are
modeled in our semantics by the preconditions of the rules
\ref{WL:Call}, \ref{WL:Load}, \ref{WL:Store} that prevent the system from:
(i) overwriting functions, (ii) execute values, (iii)
accessing user-space data and functions when the system is in
kernel-mode. Most importantly, the system adopts kernel address
space layout randomization, that is modeled by
sampling the memory layout from a probability distribution.

In \Cref{sec:safety2}, we then consider a stronger threat model
where, in addition, attackers have access to side-channel observations and
control PHT and STL predictions, related to Spectre v1 and
v4 vulnerabilities~\cite{Spectre}. In addition to the above-mentioned
mechanisms against speculative attacks, the machine
supports PTI~\cite{PTI} to prevent the speculative
access of kernel-space memory from user-space;
this is modeled by using the same preconditions of rules
\ref{WL:Call}, \ref{WL:Load}, \ref{WL:Store} for their
speculative counterparts, see~\Cref{sec:internalsemantics}.

%%% Local Variables: 
%%% mode: latex
%%% TeX-master: "arxiv.tex"
%%% End:
%%% languagetool-local-disabled-rules: ("ID_CASING" "EN_UNPAIRED_QUOTES" "COMMA_PARENTHESIS_WHITESPACE" "CURRENCY" "WHITESPACE_RULE") 

\section{Classic Threat Model}
\label{sec:safety1}

In this section we show how the result of Abadi et. al. \cite{Abadi,Abadi2,Abadi3} scales
to the model introduced in Section~\ref{sec:language}.
The safety property that we aim at is defined in terms of our instrumented semantics, as follows:

\begin{definition}[Kernel safety]\label{def:ks}
  We say that a system $\system = (\rfs,\syss,\caps)$ is \emph{kernel safe},
  if for every layout $\lay$, 
  \emph{unprivileged} attacker $\cmd \in \Cmd$, 
  and registers $\regmap$, we have: 
%  store \rfs, 
 % and any initial configuration $\conf{\conf{\cmd, \regmap, \um}, \lay \lcomp \rfs}$, 
  \[
    \lnot \left(\lay \red \conf{\frame{\cmd}{\regmap}{\um}, \lay \lcomp \rfs} \to^* \unsafe \right).
  \]
\end{definition}
\def\ex{\cc[v,s1,s2,f,s]}
Thus, safety is broken if an attacker $\cmd$, executing in unprivileged user mode, is able to trigger a system call
in such a way that it accesses, or invokes, a kernel-space object outside its capabilities.
The source of such a safety violation can be twofold:
\begin{varenumerate}
  \item\label{issue1} \textbf{Scope extrusion.}
  An obvious reason why kernel-safety may fail is due to apparent
  communication channels, specifically through the memory and procedure returns.
  As an example, consider a system $\system =(\rfs, \syss, \caps)$,
  where:
  \begin{align*}
    \syss(\syscall_1) &\defsym \cmemass \ar \fn & \syss(\syscall_2) &\defsym \cmemread \vx \ar ;\ccall \vx {}   & \caps (\syscall_1)=\caps(\syscall_2)=\{\ar\}
  \end{align*}
  A malicious program can use $\syscall_1$ to store the address of
$\fn$ at $\ar[0]$, which is a shared capability. A consecutive call
to $\syscall_2$ then breaks safety if $\fn$ is not within the capabilities of $\syscall_2$.
% kernel's communication channels such as memory, system call arguments
% and return values. If we do not impose further restrictions,
% the information on the layout can exit the kernel boundaries and
% can be used by the attacker to break kernel's safety.
% For instance, we can take in exam the two following system calls:
% %
% \begin{align*}
%   \syscall () &\defsym \ret  \ass \fn & \caps(\syscall)&\defsym \{\fn\}\\
%   \syscalltwo (\vx) &\defsym \ccall \vx {}  & \caps(\syscall)&\defsym \{\}
% \end{align*}
% Against this system, an attacker could run
% $\csyscall \syscall {}\sep \csyscall \syscalltwo \ret$,
% % $\csyscall \syscall {}\sep \csyscall \syscalltwo {}$,
% and cause
% $\syscalltwo$ to execute $\fn$. This function does not belong
% to $\caps(\syscall)$, for this reason, the instrumented
% semantics would reach $\unsafe$. An idea to solve this issue
% is to prevent the system calls from returning layout-dependent values.
\item\label{issue2} \textbf{Probing.} Another counterexample is given
  by a system call accessing memory based on its input, such as the
  system call that only contains the instruction $\ccall {\vx_1}{}$,
  which directly invokes the procedure stored at the kernel-address
  corresponding to the value of its first argument $\vx_1$.  This
  system call can potentially be used as a gadget to invoke an
  arbitrary kernel-space procedure from user-space. Since an attacker
  lacks knowledge of the kernel-space layout, such an invocation needs
  to happen effectively through probing.  As any probe of an unused
  memory address leads to an unrecoverable error,\footnote{%
    This is not always the case for \emph{user-space} software
    protected with layout randomization, as some programs (e.g. web
    servers) may automatically restart after a crash to ensure
    availability.  This behavior can be exploited by attackers to
    probe the entire memory space of the victim program, thus
    compromising the protection offered by layout
    randomization~\cite{ApacheAttack}.  } the likelihood of an unsafe
  memory access is, albeit not zero, diminishingly small when the
  address-space is reasonably large.
\end{varenumerate}
To overcome Issue~\ref{issue1}, we impose a form of (layout) \emph{non-interference}
on system calls.
% \MA{
% The \emph{layout-non-interference} property is a variation of the standard non-interference which requires that two implementations of the same memory with different layouts cannot influence the runtime behavior of the system call. In order to define this property, we need to observe that the state of the memory when the system call is issued may vary from its state as described in the system. In particular, the content of array may have changed, but not that of functions. To enforce this constraint on the initial memory, given a set of ids $I \subseteq \Id$, ad two stores $\rfs[1], \rfs[2]$ we write $\rfs[1] \sim_{I} \rfs[2]$ as a shorthand for $\forall \id \in I.
% \rfs[1](\id)=\rfs[2](\id)$.}
\begin{definition}[Layout non-interference]
  \label{def:lni}
  Given $\system=(\rfs,\syss,\caps)$,
  a system call $\syscall$ is \emph{layout non-interfering}, if,
  \[
    \Eval[\system][\lay_1]{\syss(\syscall), \regmap, \km[\syscall], \store'}
    \evaleq
    \Eval[\system][\lay_2]{\syss(\syscall), \regmap, \km[\syscall], \store'}
  \]
  for all layouts $\lay_{1},\lay_{2}$, registers $\regmap$ and stores $\store' \eqon{\Fn} \rfs$.
  Here, the equivalence $\evaleq$ extends equality with $\err\evaleq \unsafe$ and $\unsafe\evaleq \err$.
  The system $\system$ is non-interfering if all its system calls are.
\end{definition}
In effect, layout non-interfering systems do not expose layout information,
neither through the memory nor through return values.
In particular, observe how non-interference rules out Issue~\ref{issue1},
as witnessed by two layouts placing $\fn$ at different addresses in kernel-memory.
% Second, layout information cannot be inferred from the termination behavior.
% This definition has two main implications. First, a non-interfering system call
% cannot leak kernel-addresses, neither through the memory nor through the return value.
% In particular, observe how this property rules out the system call \ex!s1!,
% witnessed by two layouts placing \ex!f! at different addresses.
% Second, layout information cannot be inferred from the termination behavior.

Concerning Issue~\ref{issue2}, it is well known that layout
randomization provides in general safety not in an absolute sense, but
\emph{probabilistically}~\cite{Abadi,DieHard,PaXASLR}.  Indeed, the
chance for a probe to be successful depends on the randomization
scheme.  Following \citet{Abadi}, let $\mu$ be a \emph{probability
  distribution} of layouts, i.e., a function $\mu : \Lay \to [0,1]$
assigning to each layout $\lay \in \Lay$ a probability $\mu(\lay)$
(where $\sum_{\lay \in \Lay} \mu(\lay) = 1$).  Without loss of
generality, we assume that the layout of public, i.e. user-space,
addresses is fixed.  That is, we require for each $\lay_{1},\lay_{2}$
with non-zero probability in $\mu$, that $\lay_1(\id)=\lay_2(\id)$ for
all $\id \in \Idu$.  For a distribution of layouts $\mu$ and a system
$\system = (\rfs, \syss, \caps)$, the value of $\delta_{\mu, \system}$
quantifies the smallest probability that a probe for an address
$\add \in \Addk$ fails.  To formally define $\delta_{\mu, \system}$, given a
system call $\syscall$ we denote with
$\id_1^\syscall, \dots,\id_k^\syscall$ the enumeration of its
references $\refs_\system(\syss(\syscall))$.  The value $\delta_{\mu, \system}$
can then be defined as follows:
\begin{multline*}
  \delta_{\mu, \system} \defsym
  \min \bigl \{
  \displaystyle{\Pr_{\lay\leftarrow \mu}}
  [ \add \notin \underline\lay(\Idk) \mid \lay(\id_i^\syscall)=p_i,
  \text{ for } 1\le i \le h]
  \mid \syscall \in \Sys, p,p_1,\dots,p_h \in \Addk \land {}\\
  \phantom{{} \mid} p \notin \{p_i, \ldots, p_i +\size{\id_i^\syscall}-1\}, \text{ for } 1\le i \le h
  \bigr \}.
\end{multline*}
In practice, $\delta_{\mu, \system}$ bounds the probability that,
during the execution of a system call $\syscall$, a fixed kernel
address $\add$ is not allocated, given that it does not store any
object that is in the references of that system call. Notably, if an
attacker controls the value of a kernel address $\add$,
$\delta_{\mu, \system}$ is a lower bound to the probability that its
probe for $\add$ does not hit any memory content.
This property
is reflected by the cases \ref{case:lemma12} and \ref{case:lemma2b} of
\Cref{lemma:onsyscall,lemma:onsyscallterm} below.  More precisely,
\Cref{lemma:onsyscall} proves this property for fixed-length reductions,
and \Cref{lemma:onsyscallterm} lifts it to full evaluations.
By considering the
complementary event, $\delta_{\mu, \system}$ gives an upper bound to
the probability of performing a safety violation in the presence of
layout randomization, as expressed by \Cref{thm:scenario1} below.

We now give the formal statements of
\Cref{lemma:onsyscall,lemma:onsyscallterm} together with their proofs.
To this aim, we extend $\refs_\sigma$ to frame stacks as follows:
\begin{align*}
  \refs_\sigma(\nil) &\defsym \emptyset&
  \refs_\sigma(f\cons \st) &\defsym \refs_\sigma(f) \cup \refs_\sigma(\st) &
  \refs_\sigma(\frame \cmd \regmap \opt) &\defsym \refs_\sigma(\cmd).
\end{align*}

\begin{restatable}{lemma}{onsyscall}
  \label{lemma:onsyscall}
  Let $\syscall$ be a system call of a \emph{layout non-interfering} system $\system=(\rfs, \syss, \caps)$, and let $\refs_\system(\syss(\syscall))\setminus \Sys=\{\id_1, \dots, \id_h\}$ be identifiers within the references of $\syscall$. Given a sequence of addresses $\add_1, \dots, \add_h$, an initial frame $\frame{\syss(\syscall)}{\regmap}{\km[\syscall]}$, and a store $\rfs'$ such that $\rfs'\eqon{\Fn}\rfs$,
  for every reduction length $\nat\in \Nat$ and distribution of layouts $\mu$, one of the following statements holds:
  
  \begin{enumerate}[label={(\arabic*)}]
    \item For every layout $\lay$ such that $\forall 1\le i\le h$, $\lay(\id_i)=\add_i$, we have \(
      \nstep{\nat}
      {\conf{\frame{\syss(\syscall)}{\regmap}{\km[\syscall]}, \lay \lcomp \rfs'}}
      {\conf{\overline \st, \lay \lcomp \overline \rfs}}.
      \)
      For some non-empty stack $\overline \st$
      such that
      $\refs_\system(\overline\st)\subseteq \refs_\system(\syss(\syscall))$, and a store
      $\overline \rfs\eqon{\Fn}\rfs'$.
    \item\label{case:lemma12}
    \(
      \Pr_{\lay \leftarrow \mu}\Big[ \exists \nat' \le \nat. \nstep {\nat'} {\conf{\frame{\syss(\syscall)}{\regmap}{\km[\syscall]}, \lay\lcomp \rfs'}}\err \,\,\Big|\,\,
      \forall 1\le i\le h.\lay(\id_i)=\add_i\Big]\ge \delta_{\mu, \system}.
    \)

  \item For every layout $\lay$ such that $\forall 1\le i\le h$, $\lay(\id_i)=\add_i$, we have   \(
      \nstep{\nat'}
      {\conf{\frame{\syss(\syscall)}{\regmap}{\km[\syscall]}, \lay \lcomp \rfs'}}
      {\conf{\frame{\cnil}{\overline\regmap}{\km[\syscall]}, \lay \lcomp \overline \rfs}}
      \)
      for some $\nat' \le\nat$,  $\overline \regmap$ and
    store $\overline \rfs\eqon{\Fn}\rfs'$.
  \end{enumerate}
\end{restatable}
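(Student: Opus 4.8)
The plan is to proceed by induction on the reduction length $\nat$. The three cases of the disjunction correspond to the three "fates" of the system call's execution restricted to the relevant address hypotheses: (1) it is still running after $\nat$ steps with a controlled frame stack, (2) it has a high probability of having hit an error (a failed probe), or (3) it has already terminated normally within $\nat$ steps. The base case $\nat = 0$ is immediate: the single-frame configuration $\conf{\frame{\syss(\syscall)}{\regmap}{\km[\syscall]}, \lay \lcomp \rfs'}$ itself witnesses case~(1), with $\overline\st$ the initial one-frame stack, $\overline\rfs = \rfs'$, and $\refs_\system(\overline\st) = \refs_\system(\syss(\syscall))$ by definition.

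For the inductive step, I would apply the induction hypothesis at length $\nat$. If case~(2) or case~(3) holds at length $\nat$, it persists at length $\nat+1$ (for (2), the event "$\exists\nat'\le\nat$" only grows; for (3), just take the same $\nat'\le\nat\le\nat+1$). So assume case~(1) holds: for every layout $\lay$ consistent with $\lay(\id_i)=\add_i$, we have $\nstep{\nat}{\conf{\frame{\syss(\syscall)}{\regmap}{\km[\syscall]}, \lay \lcomp \rfs'}}{\conf{\overline \st, \lay \lcomp \overline \rfs}}$ with $\refs_\system(\overline\st)\subseteq \refs_\system(\syss(\syscall))$ and $\overline\rfs \eqon{\Fn}\rfs'$. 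Now I inspect the next step from $\conf{\overline\st, \lay\lcomp\overline\rfs}$ by a case analysis on the top instruction of $\overline\st$. For the structural rules (\ref{WL:Skip}, \ref{WL:Op}, \ref{WL:If}, \ref{WL:While}, \ref{WL:Pop} when the stack remains non-empty), the step preserves case~(1): the reference set of the new stack does not grow — here I need a small sub-lemma that $\refs_\system$ is non-increasing under these reductions, using clauses (b) and (c) of the definition of $\refs_\system$ for the \ref{WL:Call} and \ref{WL:SystemCall} rules (a called procedure $\fn$ or system call $\syscalltwo$ already in $\refs_\system(\overline\st)$ contributes only $\ids(\rfs(\fn))$ resp. $\ids(\syss(\syscalltwo))$, which are already included) — and the store stays $\eqon{\Fn}\rfs'$ since only array cells are updated (the W$^\wedge$X discipline forbids overwriting procedures). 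For the memory-access rules (\ref{WL:Load}, \ref{WL:Store}, \ref{WL:Call}), the evaluated address $\add$ either lands inside $\underline\lay(\refs_\system(\syss(\syscall)))$ — which is fine, the access is within capabilities since $\refs_\system(\syss(\syscall))\setminus\Sys\subseteq\caps(\syscall)$, and we stay in case~(1) — or it lands outside. If $\add\notin\underline\lay(\Idk)$, the step goes to $\err$; if $\add$ hits some kernel object not referenced by $\syscall$, we would reach $\unsafe$, but by layout non-interference this cannot happen, which is the key leverage point (see below). The remaining possibility: the computation reaches $\conf{\frame{\cnil}{\overline\regmap}{\km[\syscall]}, \lay\lcomp\overline\rfs}$ via \ref{WL:Pop} emptying the stack, giving case~(3).

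The crux is handling the \emph{probing} step, i.e., when the address $\add$ depends on attacker-controlled data and may fall outside $\underline\lay(\refs_\system(\syss(\syscall)))$. Here two things must be argued. First, \textbf{non-interference forbids the $\unsafe$ outcome}: if for some layout $\lay$ (consistent with the $\add_i$ constraints) the step led to $\unsafe$, then $\Eval[\system][\lay]{\syss(\syscall),\regmap,\km[\syscall],\rfs'} = \unsafe$, whereas for a layout $\lay'$ agreeing with $\lay$ on $\refs_\system(\syss(\syscall))$ but relocating the illegitimately-hit object elsewhere, the evaluation would instead diverge or error or return normally — contradicting \Cref{def:lni}; this is where I must be careful that the two layouts produce genuinely equivalent evaluations up to $\err\evaleq\unsafe$, so I actually need that relocating the hit object turns $\unsafe$ into $\err$ or a normal result, not another $\unsafe$ — a standard but delicate "relocation" argument exploiting that the whole computation so far only touched addresses in $\underline\lay(\refs_\system(\syss(\syscall)))$. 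Second, \textbf{the probability bound}: the only way case~(1) fails to be re-established is if the probe address $\add$ is the "bad" one, and conditioned on $\lay(\id_i)=\add_i$, the probability that $\add\notin\underline\lay(\Idk)$ (hence the step goes to $\err$) is at least $\delta_{\mu,\system}$ by the very definition of $\delta_{\mu,\system}$ — applied with the references $\id_1^\syscall,\dots$ of $\syscall$ and the constraint that $\add$ is disjoint from the blocks of the $\id_i$; this yields case~(2) at length $\nat+1$. I expect this probing step — reconciling the deterministic relocation argument with the probabilistic $\delta_{\mu,\system}$ estimate, and in particular pinning down that "the computation so far stayed within $\underline\lay(\refs_\system(\syss(\syscall)))$" so that relocation is sound — to be the main obstacle; the rest is bookkeeping on $\refs_\system$ monotonicity and the $\eqon{\Fn}$ invariant.
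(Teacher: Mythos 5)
Your overall inductive skeleton matches the paper's proof: induction on $\nat$, application of the IH, case analysis on the top instruction of the stack, maintenance of the two invariants ($\refs_\system$ does not grow under reduction, and the store stays $\eqon{\Fn}\rfs'$), and an appeal to the definition of $\delta_{\mu,\system}$ when a probe address falls outside the (layout-independent) image $P$ of the references. The bookkeeping you describe is exactly what the paper does.

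However, the step you single out as the crux --- using layout non-interference and a ``relocation'' argument to \emph{forbid} the $\unsafe$ outcome at a probing step --- is both unnecessary and, as sketched, unsound. It is unsound because \Cref{def:lni} only requires the evaluations under two layouts to agree up to $\evaleq$, and $\evaleq$ explicitly identifies $\err$ with $\unsafe$; so relocating the illegitimately-hit object and observing that the run now yields $\err$ produces no contradiction whatsoever with non-interference. You notice this tension yourself but do not resolve it, and it cannot be resolved this way. More importantly, it is unnecessary: conclusion~(2) of the lemma does not assert that $\unsafe$ is unreachable --- it only asserts that, conditioned on the placement of the references, the probability of reaching $\err$ within $\nat$ steps is at least $\delta_{\mu,\system}$. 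The paper obtains this directly: when $\add \notin P$, every layout with $\add \notin \underline\lay(\Idk)$ deterministically steps to $\err$ via the error rule, and the conditional probability of that event is at least $\delta_{\mu,\system}$ by definition; the layouts in which $\add$ happens to hit some foreign allocated object (and which may therefore proceed or reach $\unsafe$) are simply absorbed into the complementary event. The exclusion of $\unsafe$ is never proved at this level; it is only the \emph{complement} bound $1-\delta_{\mu,\system}$ on $\unsafe$ that is extracted later, in \Cref{thm:scenario1}. If you delete the relocation argument and replace it by this direct probability estimate, your proof aligns with the paper's. (Two minor omissions: the sort-mismatch subcase where $\add\in P$ but the hit identifier has the wrong sort for the instruction --- e.g.\ a call targeting an array block --- which errors deterministically and hence gives case~(2) with probability $1$; and the base case when $\syss(\syscall)=\cnil$, which the paper files under case~(3).)
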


\begin{proof}[Proof sketch of \Cref{lemma:onsyscall}]
  The proof proceeds by induction on $\nat$ and case analysis on the transition rules. In the base case, it is trivial to establish claim (A). In the inductive case, by applying the IH to the initial configuration. Three cases arise:
  \begin{proofcases}
    \proofcase{A} Let $\system$ be a system, $\syscall$ a system call
    and consider the initial configuration
    $ {\conf{\frame{\syss(\syscall)}{\regmap}{\km[\syscall]}, \lay
        \lcomp \rfs'}}$.  Suppose
    $\refs_\system(\syss(\syscall))\setminus \Sys= \{\id_1, \dots,
    \id_h\}$ and fix addresses $\add_1, \dots, \add_h$.  By the IH,
    there exists a stack $\st$ and a
    store $\rfs'' \eqon{\Fn} \rfs$ satisfying the following property:
    \[
      \forall \lay. \left( \forall 1 \leq i \leq h, \lay(\id_i) = \add_i\right)  \Rightarrow
      \nstep{\nat}
      {\conf{\frame{\syss(\syscall)}{\regmap}{\km[\syscall]}, \lay \lcomp \rfs'}}
      {\conf{\st, \lay \lcomp \rfs''}}.
    \]
    Since $\st$ is non-empty, we can assume that
    $\st = \frame \cmd {\regmap'} {\km[\syscall]}:\st'$.  Furthermore,
    from the IH, we know that
    $\refs_\system(\st)\subseteq \refs_\system(\syss(\syscall))$
    (H). The proof proceeds with a case analysis on $\cmd$. We only
    focus on the representative case of procedure invocation.
    
    \begin{proofcases}
      \proofcase{$\ccall \expr {\exprtwo_1, \dots, \exprtwo_k}\sep
        \cmdtwo$} We start by observing that by (H) there exists a
      unique address $\add$ such that for every $\lay$ that satisfies
      the precondition (that stores the references of $\syscall$ at
      addresses $p_1, \ldots, p_h$), we have
      $\toAdd{\sem \expr_{\regmap,\lay}}=\add$. Similarly, we
      introduce the values $\val_1, \dots,\val_k$, which correspond to
      the semantics of $\exprtwo_1,\dots,\exprtwo_k$ evaluated under
      $\regmap$ and every layout that satisfies the precondition.
      Again for the same reason, we observe that there exists a set
      $P$ such that for each layout under consideration, it holds that
      $\underline{\lay} (\refs_\system(\syss(\syscall))\setminus
      \Sys)=P$. The proof proceeds by cases on whether $\add \in P$.
      \begin{proofcases}
        \proofcase{$\add\in P$} In this case, there is a unique identifier $\id_j$ such that for every layout $\lay$ that satisfies the precondition, we have $\add \in \underline{\lay}(\id_j)$. We analyze two cases based on whether $\id_j$ is a function identifier $ \fn$.
        \begin{proofcases}
          \proofcase{$\id_j= \fn$} In this case, from the definition
          of $\lcomp$, we deduce that for each of these layouts we
          have $\lay\lcomp \rfs''(\add) = \rfs''(\fn) = \rfs(\fn)$,
          where the last step follows from the assumption
          $\rfs''\eqon{\Fn}\rfs$.  Since $\add \in P$, we conclude
          that, independently of the specific layout, if the
          preconditions hold, then:
          \begin{equation*}
            \step
            {\conf{\frame{\ccall  \expr {\exprtwo_1, \dots, \exprtwo_k}\sep \cmdtwo}{\regmap}{\km[\syscall]}:\st', \lay\lcomp \rfs''}}{}\\
            {\conf{\frame{\rfs(\fn)}{\regmap_0'}{\km[\syscall]}:\frame{\cmdtwo}{\regmap}{\km[\syscall]}:\st', \lay\lcomp \rfs''}},
          \end{equation*}
          where $\regmap_0'= \regmap_0[\vx_1,\dots,\vx_k\upd \val_1,\dots,\val_k]$.
          Finally, we observe that, by the definition of
          $\refs_\system$, $\refs_\system(\syss(\syscall))$
          contains all the identifiers within
          $\rfs(\fn) = \rfs(\id_j)$ because $\id_j \in \refs_\system(\syss(\syscall))$ and
          $\refs$ is closed under procedure calls.
          This shows that (A) holds.
          \proofcase{$\id_j =\ar$} In this case, since the set of
          array identifiers and that of functions are disjoint, we
          conclude that for every layout $\lay$ that satisfies the
          preconditions, we have that
          $\add \notin \underline \lay(\Fn[{\km}])$.  This means that
          for each of these layouts, we can show:
          \begin{equation*}
            \step
            {\conf{\frame{\ccall  \expr {\exprtwo_1, \dots, \exprtwo_k}\sep \cmdtwo}{\regmap}{\km[\syscall]}:\st', \lay\lcomp \rfs''}}
            {\err},
          \end{equation*}
          and this means that (B) holds with probability 1.
        \end{proofcases}
        \proofcase{$\add\notin P$}
        % In this case, by introspecting the semantics, we observe
        % that only rules \ref{WL:Store-Error} and
        % \ref{WL:Store-Unsafe} can apply.  
        %
        Observe that, for every layout $\lay$ such that
        $\add \notin \underline{\lay}(\Idk)$,
        only rule \ref{WL:Call-Error} applies, which shows the following transition:
        \begin{equation*}
          \step
          {\conf{\frame{\ccall  \expr {\exprtwo_1, \dots, \exprtwo_k}\sep \cmdtwo}{\regmap}{\km[\syscall]}:\st', \lay\lcomp \rfs''}}
          {\err},
        \end{equation*}
        Thus, we observe that:
        \begin{equation*}
          \Pr_{\lay \leftarrow\mu}\big[
          \step
          {\conf{\frame{\ccall  \expr {\exprtwo_1, \dots, \exprtwo_k}\sep \cmdtwo}{\regmap}{\km[\syscall]}:\st', \lay\lcomp \rfs''}}
          \err\,\, \big|\\
          \forall 1\le i\le h.\lay(\id_i)=\add_i\big]
        \end{equation*}
        is greater than
        \[
          \Pr_{\lay \leftarrow\mu}\big[ \add \notin \underline\lay(\Idk) \mid
          \forall 1\le i\le h.\lay(\id_i)=\add_i\big]
        \]
        which, by definition, is greater than $\delta_{\mu}$.  
        This shows that (B) holds.
      \end{proofcases}
    \end{proofcases}
  \end{proofcases}
\end{proof}

Observe that, in the proof of \Cref{lemma:onsyscall}, the case where
$\add \notin P$ also covers situations where memory is accessed via a
\emph{raw reference} during a system call, such as when dereferencing
a constant pointer. In this scenario, \Cref{lemma:onsyscall}
establishes claim (B), emphasizing that such practices should be
avoided in kernel code, as they are highly likely to result in memory
violations.

\begin{restatable}{lemma}{onsyscallterm}
  \label{lemma:onsyscallterm}
    Let $\syscall$ be a system call of a \emph{layout non-interfering} system $\system=(\rfs, \syss, \caps)$. Given an initial frame $\frame{\syss(\syscall)}{\regmap}{\km[\syscall]}$, and a store $\rfs'$ such that $\rfs'\eqon{\Fn}\rfs$,
  for every distribution of layouts $\mu$, one of the following statements holds:

  % For every \emph{layout non-interfering} system $\system=(\rfs, \syss, \caps)$, every \emph{unprivileged} program $\cmd$,
  % register map $\regmap$, store $\rfs'\eqon{\Fn}\rfs$, and distribution $\mu$, one of the following statements holds:
  
  \begin{enumerate}[label={(\Alph*)}]
    \item For every layout $\lay$, we have
    \(
      \Eval{\syss(\syscall), \regmap, \rfs', \km[\syscall]} = (\overline \rfs, \overline v),
    \)
    for some $\overline \val$ and $\overline\rfs\eqon{\Fn}\rfs$.
    \item\label{case:lemma2b}
    \(
      \Pr_{\lay\leftarrow \mu}\left[\Eval{\syss(\syscall), \regmap, \km[\syscall], \rfs'} = \err\right] \ge \delta_{\mu, \system}.
    \)

    \item For every layout $\lay$, we have:
    \(
      \Eval{\syss(\syscall), \regmap, \km[\syscall], \rfs'} = \Omega.
    \)
  \end{enumerate}
\end{restatable}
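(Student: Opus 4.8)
The plan is to lift \Cref{lemma:onsyscall}, which is stated for fixed-length reductions, to full evaluations by a limiting argument over the reduction length $\nat$. First I would observe that for each $\nat \in \Nat$, \Cref{lemma:onsyscall} gives one of the three mutually-exclusive-in-spirit alternatives $(1)$, $(2)$, $(3)$; call them $P_1(\nat)$, $P_2(\nat)$, $P_3(\nat)$. The key structural fact is monotonicity in $\nat$: if $P_3(\nat)$ holds (the computation has reached a terminal $\cnil$-frame within $\nat$ steps, under every layout fixing the $\id_i$ at $\add_i$), then $P_3(\nat')$ holds for all $\nat' \ge \nat$; and $P_2(\nat)$ is likewise upward-closed in $\nat$ since the event $\exists \nat' \le \nat$ only grows. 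Conversely $P_1(\nat)$ — the computation is still running after exactly $\nat$ steps with a non-empty stack whose references stay within $\refs_\system(\syss(\syscall))$ — is the "nothing decided yet" case. So exactly one of the following holds: either $P_1(\nat)$ for all $\nat$ (the computation never terminates and never errs under any admissible layout), or there is a least $\nat_0$ with $\lnot P_1(\nat_0)$, at which point $P_2(\nat_0)$ or $P_3(\nat_0)$ holds.

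Second, I would case on this dichotomy. If $P_1(\nat)$ holds for every $\nat$, then for every layout $\lay$ (note: here we need it for \emph{all} layouts, not just those pinning the $\id_i$; but since $\delta$-style case $(2)$ never triggered, and case $(3)$ never triggered, and we may instantiate the $\add_i := \lay(\id_i)$ for the given $\lay$, we get that $\conf{\frame{\syss(\syscall)}{\regmap}{\km[\syscall]}, \lay \lcomp \rfs'}$ admits an infinite reduction) the configuration diverges, so $\Eval{\syss(\syscall), \regmap, \km[\syscall], \rfs'} = \Omega$ and claim $(C)$ holds. If instead $P_3(\nat_0)$ holds for some $\nat_0$, then for every layout $\lay$ — again instantiating $\add_i := \lay(\id_i)$ — the reduction reaches $\conf{\frame{\cnil}{\overline\regmap}{\km[\syscall]}, \lay \lcomp \overline\rfs}$ with $\overline\rfs \eqon{\Fn} \rfs'$ (and $\rfs' \eqon{\Fn} \rfs$ gives $\overline\rfs \eqon{\Fn} \rfs$); reading off the return value $\overline\val \defsym \overline\regmap(\ret)$ and the final store, this is exactly claim $(A)$. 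Finally, if $P_2(\nat_0)$ holds, then
\[
  \Pr_{\lay\leftarrow\mu}\left[\exists \nat' \le \nat_0.\ \nstep{\nat'}{\conf{\frame{\syss(\syscall)}{\regmap}{\km[\syscall]}, \lay\lcomp\rfs'}}{\err} \,\Big|\, \forall 1 \le i \le h.\ \lay(\id_i) = \add_i\right] \ge \delta_{\mu,\system}
\]
for a suitable choice of $\add_1,\dots,\add_h$; since reaching $\err$ in $\nat'$ steps means $\Eval{\syss(\syscall), \regmap, \km[\syscall], \rfs'} = \err$ under that layout, and since $\delta_{\mu,\system}$ as defined is already a minimum over system calls and over the conditioning addresses, averaging (or rather, minimizing-out) the conditioning $\add_i$ against $\mu$ yields the unconditional bound $\Pr_{\lay\leftarrow\mu}[\Eval{\dots} = \err] \ge \delta_{\mu,\system}$, which is claim $(B)$.

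The main obstacle I anticipate is the bookkeeping around the conditioning in case $(B)$: \Cref{lemma:onsyscall} delivers a \emph{conditional} lower bound, conditioned on $\lay(\id_i) = \add_i$, for some adversarially-chosen $\add_i$, whereas claim $(B)$ is unconditional. The resolution is that $\delta_{\mu,\system}$ is itself defined as a minimum that already ranges over all choices of the conditioning addresses $p_1,\dots,p_h$, so the bound survives when we remove the conditioning — but one must argue this carefully, e.g. by noting $\Pr[\Eval = \err] \ge \sum_{\vec p} \Pr[\lay(\vec{\id}) = \vec p]\cdot \Pr[\Eval = \err \mid \lay(\vec\id)=\vec p]$ and that each conditional term that \Cref{lemma:onsyscall} touches is $\ge \delta_{\mu,\system}$, while handling the choices of $\vec p$ for which \Cref{lemma:onsyscall} instead returns case $(1)$ or $(3)$: for those we do not get an $\err$ contribution, so strictly speaking one should invoke \Cref{lemma:onsyscall} with the \emph{specific} $\add_i$ that is guaranteed to fall into case $(2)$ and then drop to the unconditional statement. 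A secondary subtlety is making the "least $\nat_0$ with $\lnot P_1$" argument airtight — one should check that once $P_2$ or $P_3$ holds at some length it cannot revert to $P_1$ at a larger length, which follows because both $\err$-reachability and $\cnil$-reachability are preserved under extending the reduction horizon. Apart from this, the argument is essentially a routine König-style "either it runs forever or it decides at a finite stage" case split.
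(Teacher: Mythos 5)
There is a genuine gap, concentrated in your treatment of case (B). The statements $(1)$, $(2)$, $(3)$ of \Cref{lemma:onsyscall} are parameterized by a \emph{fixed} choice of conditioning addresses $\add_1,\dots,\add_h$; for different choices the lemma may a priori select different disjuncts. Your plan treats the trichotomy $P_1/P_2/P_3$ as if it were uniform across all address choices, but nothing in \Cref{lemma:onsyscall} alone guarantees this. This matters precisely where you anticipate trouble: to conclude the unconditional bound $\Pr_{\lay\leftarrow\mu}[\,\cdot\ \text{evaluates to}\ \err\,] \ge \delta_{\mu,\system}$ via the decomposition $\sum_{\vec p}\Pr[\lay(\vec\id)=\vec p]\cdot\Pr[\err \mid \lay(\vec\id)=\vec p]$, you need the conditional bound $\ge \delta_{\mu,\system}$ for \emph{every} $\vec p$ of non-zero probability, not just for ``the specific $\add_i$ that is guaranteed to fall into case $(2)$.'' Your proposed fallback --- invoking the lemma at one suitable $\vec p$ and then ``dropping to the unconditional statement'' --- is not sound: a bound $\Pr[\err \mid E]\ge\delta$ for a single conditioning event $E$ yields nothing about $\Pr[\err]$ unless $\Pr[E]=1$.

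The missing ingredient is the \emph{layout non-interference} hypothesis, which your argument never actually uses. The paper's proof starts by fixing an arbitrary layout $\overline\lay$ and casing on its evaluation result; non-interference (with $\err\evaleq\unsafe$) then propagates that outcome to every layout, giving the global trichotomy (normal termination everywhere with the same result, error-or-unsafe everywhere, or divergence everywhere). In the error case, finiteness of $\Lay$ yields a uniform step bound $\overline\nat$, and the fact that \emph{all} layouts end in $\{\err,\unsafe\}$ lets one refute conclusions $(1)$ and $(3)$ of \Cref{lemma:onsyscall} for \emph{every} choice of $\vec p$, so conclusion $(2)$ holds for each $\vec p$ and the convex combination is bounded by $\delta_{\mu,\system}$. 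Your K\"onig-style limiting argument over $\nat$ is not wrong in itself, but it cannot replace this use of non-interference; to repair the proof you should derive the cross-layout (hence cross-$\vec p$) uniformity of the outcome from \Cref{def:lni} first, and only then apply \Cref{lemma:onsyscall} at the uniform horizon. The same uniformity is also what justifies reading off a single pair $(\overline\val,\overline\rfs)$ in claim (A) rather than one per layout.
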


\begin{proof}[Proof Sketch of \Cref{lemma:onsyscallterm}]
  The proof is by case analysis: if for some layout $\lay$, $\Eval{\syss(\syscall), \regmap, \km[\syscall], \rfs'} \notin \{\err, \unsafe\}$, claims (A) or (C) must hold because of layout non-interference. On the other hand, if all layouts lead to $\err$ or to $\unsafe$, due to the finiteness of $\Lay$, there exists an upper bound $t$ on the number of steps needed to reach a terminal configuration. 
  We call $\refs(\syss(\syscall))\setminus \Sys=\{\id_1, \dots, \id_h\}$ and apply \Cref{lemma:onsyscall} with $\nat=t$. Observe that for every choice of $p_1, \ldots, p_h \in \Addk$, we can refuse conclusions (1) and (3) because they are contradictory with $\forall \lay. \Eval{\syss(\syscall), \regmap, \km[\syscall], \rfs'} \in \{\err, \unsafe\}$, so (2) must hold.
  Observe that:
  \[
    \Pr_{\lay\leftarrow \mu}\left[\Eval{\syss(\syscall), \regmap, \km[\syscall], \rfs'} = \err\right] = 
    \Pr_{\lay \leftarrow \mu}\Big[ \exists \nat' \le \nat. \nstep {\nat'} {\conf{\frame{\syss(\syscall)}{\regmap}{\km[\syscall]}, \lay\lcomp \rfs'}}\err].
  \]
  The right-hand-side, in turn, is equal to:
    \begin{multline*}
    \sum_{\add_1, \ldots, \add_k}\Pr_{\lay \leftarrow \mu}\Big[ \exists \nat' \le \nat. \nstep {\nat'} {\conf{\frame{\syss(\syscall)}{\regmap}{\km[\syscall]}, \lay\lcomp \rfs'}}\err \,\,\Big|\,\,
    \forall 1\le i\le h.\lay(\id_i)=\add_i\Big]\cdot\\
    \Pr_{\lay \leftarrow \mu}\Big[\forall 1\le i\le h.\lay(\id_i)=\add_i\Big]. % \ge \delta_{\mu}
  \end{multline*}
  Since all the factors on the left
  % $\Pr_{\lay \leftarrow \mu}\Big[ \exists \nat' \le \nat. \nstep
  % {\nat'} {\conf{\frame{\syss(\syscall)}{\regmap}{\km[\syscall]},
  %     \lay\lcomp \rfs'}}\err \,\,\Big|\,\, \forall 1\le i\le
  % h.\lay(\id_i)=\add_i\Big]$
  are bounded by $\delta_{\mu, \system}$ (2), their
  convex combination is also bounded, proving (B).
\end{proof}

We arrive now at the main result of this section, where we show the effectiveness of
layout randomization:

\begin{theorem}
  \label{thm:scenario1}
  Let $\system=(\rfs, \syss, \caps)$ be \emph{layout non-interfering}.
  Then, for any \emph{unprivileged} attacker $\cmd \in \Cmd$ and register map $\regmap$,
  \(
    \Prob{\lay \leftarrow \mu}\left[\Eval{\cmd, \regmap, \um, \rfs} =\unsafe\right] \leq 1 - \delta_{\mu, \system}
  \).
\end{theorem}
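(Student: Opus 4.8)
The plan is to reduce the global safety statement---about an arbitrary unprivileged attacker $\cmd$ running from user mode---to the per--system-call analysis already carried out in \Cref{lemma:onsyscallterm}. The key observation is that an unprivileged attacker, by the constraint $\ids(\rfs(\fn)) \subseteq \Idu$ on user-space code and the address-space separation enforced by rules \ref{WL:Load-Error}, \ref{WL:Store-Error}, \ref{WL:Call-Error}, can never itself perform an unsafe access: the $\unsafe$ configuration is only ever reachable from a frame whose execution mode is $\km[\syscall]$ for some $\syscall$ (see rules \ref{WL:Load-Unsafe}, \ref{WL:Store-Unsafe}, \ref{WL:Call-Unsafe}). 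Hence the only way the run $\lay \red \conf{\frame{\cmd}{\regmap}{\um}, \lay \lcomp \rfs} \to^* \unsafe$ can occur is that, at some point, a system call $\syscall$ is invoked via rule \ref{WL:SystemCall}, and the resulting kernel-mode sub-computation reaches $\unsafe$ \emph{before} returning control to the attacker.

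First I would make precise what "the kernel-mode sub-computation" is. At the moment of the first invocation of a system call $\syscall$ that eventually leads to $\unsafe$, the configuration has the form $\conf{\frame{\syss(\syscall)}{\regmap_0[\vec\vx \upd \vec\val]}{\km[\syscall]} : \st, \mem}$ where $\mem = \lay \lcomp \store'$ for some store $\store'$ reachable from $\rfs$ by the attacker's prior user-mode steps. Because user-mode steps never write to kernel-space addresses (by the side conditions $\add \in \underline\lay(\Ar[\um])$ / $\add \in \underline\lay(\Fn[\um])$ in \ref{WL:Store}, \ref{WL:Call} and address separation) and never modify the code of procedures (W\^{}X), we have $\store' \eqon{\Fn} \rfs$ and indeed $\store' \eqon{\Idk} \rfs$; in particular $\store'$ satisfies the hypothesis $\store' \eqon{\Fn} \rfs$ required by \Cref{lemma:onsyscallterm}. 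Moreover the sub-stack above $\st$ evolves independently of $\st$ until the matching \ref{WL:Pop}, so the sub-computation of $\syscall$ is faithfully described by $\Eval{\syss(\syscall), \regmap_0[\vec\vx \upd \vec\val], \km[\syscall], \store'}$. Now apply \Cref{lemma:onsyscallterm} to this data: case (A) says the call returns normally for \emph{every} layout, case (C) says it diverges for every layout---in both cases the run never reaches $\unsafe$ for \emph{any} $\lay$, contradicting that it does for our particular $\lay$---and case \ref{case:lemma2b} says $\Pr_{\lay\leftarrow\mu}[\Eval{\syss(\syscall),\ldots} = \err] \ge \delta_{\mu,\system}$.

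Finally I would assemble the bound. Fix the (finitely many, since user-space layout is fixed and sub-computations that reach a terminal state do so in boundedly many steps along the finitely many kernel layouts) attacker, and consider the event $E = \{\lay : \Eval{\cmd,\regmap,\um,\rfs} = \unsafe\}$. For $\lay \in E$, let $\syscall$ be the system call active in the first frame from which $\unsafe$ is reached; by the argument above, the corresponding sub-evaluation equals $\unsafe$ for this $\lay$, so in particular it is \emph{not} $\err$. The subtle point---the main obstacle---is that the identity of $\syscall$, its argument values, and the store $\store'$ at invocation time all depend on $\lay$ (the attacker may branch on return values of earlier calls, though by non-interference these are layout-independent, and on its own user-space data, which is layout-independent; so in fact the offending $(\syscall, \vec\val, \store')$ is \emph{constant} across all $\lay \in E$, or can be made so by an argument like the one in the proof of \Cref{lemma:onsyscall} about raw references). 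Once that is pinned down, $E$ is contained in the event $\{\lay : \Eval{\syss(\syscall), \regmap_0[\vec\vx\upd\vec\val], \km[\syscall], \store'] \ne \err\}$ for the fixed triple, whose probability is at most $1 - \delta_{\mu,\system}$ by \ref{case:lemma2b}. Hence $\Prob{\lay\leftarrow\mu}[\Eval{\cmd,\regmap,\um,\rfs} = \unsafe] \le \Prob{\lay\leftarrow\mu}[E] \le 1 - \delta_{\mu,\system}$, as claimed. I expect the bookkeeping needed to justify that the first offending system-call invocation and its arguments are layout-independent (using layout non-interference to handle the attacker's adaptive choices across multiple prior system calls) to be the delicate part; the rest is a direct appeal to \Cref{lemma:onsyscallterm}.
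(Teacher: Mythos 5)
Your plan follows essentially the same route as the paper's proof: reduce the global statement to \Cref{lemma:onsyscallterm} by arguing that the attacker's user-mode computation---and hence the identity, the arguments, and the store of each system-call invocation---is layout-independent, so that the whole run either terminates normally for every layout, diverges for every layout, or reaches a system call whose evaluation yields $\err$ with probability at least $\delta_{\mu,\system}$, bounding the probability of $\unsafe$ by $1-\delta_{\mu,\system}$. Be aware, however, that the ``delicate part'' you defer is where essentially all of the paper's effort lies: it is discharged by two substantial inductive lemmas (\Cref{lemma:premiseApreserv}, which lifts layout non-interference from individual system calls to whole executions and yields the uniform step bound needed to invoke \Cref{lemma:onsyscall}, and \Cref{lemma:mainlemmapms}, which establishes the normal-termination-versus-error dichotomy by case analysis on every instruction, threading the layout-independence of registers and stores through nested procedure and system calls), so what you have is a correct skeleton rather than a proof. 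One small misstatement: you claim $\store' \eqon{\Idk} \rfs$ at the offending invocation, but earlier system calls that returned normally may well have written to kernel arrays, so only $\store' \eqon{\Fn} \rfs$ holds (by the W\^{}X restriction on memory updates); fortunately that is all \Cref{lemma:onsyscallterm} requires.
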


\begin{proof}
  The $\unsafe$ state can only be reached during the execution of a system call. \Cref{lemma:onsyscallterm} provides a lower bound on the probability of reaching $\err$, which we can use to bound on the probability of reaching $\unsafe$.
\end{proof}

\Cref{thm:scenario1}
extends the results of \cite{Abadi,Abadi2,Abadi3} by showing
that layout randomization guarantees \emph{kernel safety} probabilistically to
operating systems; in contrast with \cite{Abadi,Abadi2,Abadi3}, this holds even
when victim's code contains unsafe programming constructs
such as arbitrary pointer arithmetic and indirect jumps. This is achieved by
replacing \Citet{Abadi}'s restrictions
on the syntax of the victims with a \emph{weaker} dynamic property: \emph{layout
  non-interference}. 
Notice that the strength of the security guarantee provided by
\Cref{thm:scenario1} depends on the distribution of the layouts $\mu$.
Therefore, in practice, it is important to determine a randomization
scheme that provides a good bound.  This can be done quite easily: for
instance, if we assume that
(i)~$\kappa_{\km} > \sum_{\id \in \Id[\km]} \size \id$ and that
(ii)~$\theta \defsym \max_{\id \in \Id[\km]} (\size \id)$ divides
$\kappa_{\km}$, we can think of the kernel space address range as
divided in $\frac {\kappa_{\km}} \theta$ slots, each one large enough
to store any procedure or array.  In this setting, we can define the
distribution $\nu$ as the uniform distribution of all the layouts that
store each \emph{memory object} within a \emph{slot} starting from the
beginning of that slot.  For this simple scheme, we can approximate
the bound $\delta_\nu$ as the ratio between unallocated slots and all
the slots that do not store any object that is in the reference of a
system call:
\begin{equation*}
  \label{eq:deltanu}
  \delta_\nu \ge 
  \min_{\syscall \in \Sys}\frac{\kappa_{\km}/\theta-|{\Idk}|} {\kappa_{\km}/\theta-|{\refs_\system(\syss(\syscall))\setminus \Sys}|}.
\end{equation*}
  % \quad
  % \epsilon_\nu \defsym 1-\delta_\nu = \frac{\size{\Idk}-\size{\caps(\syscall)}}{\kappa_{\km}/\theta-\size{\caps(\syscall)}}\le \frac{\size{\Idk}}{\kappa_{\km}/\theta},
In particular, the fraction in the right-hand side is the probability that
by choosing a slot that is not storing any object referenced by $\syscall$,
we end up with a fully unoccupied slot.
Observe that this lower-bound approaches
$1$ when $\kappa_{\km}$ goes to infinity.

\paragraph{Relation of kernel safety with security} Kernel safety
encompasses some form of \emph{spatial memory safety} and of
\emph{control flow integrity}, which are among the most critical
security properties for operating systems' kernels. This importance is
reflected in the numerous measures developed to enforce such
properties~\cite{kCFI,RustInLinux,towardkernelsafety,HyperSafe,FineGrainedkCFI,CompProtKer}.
Often, definitions of \emph{spatial memory safety} associate a
software component (a program, an instruction, or even a variable)
with a fixed memory area, that this component can access
rightfully~\cite{HighAssurance,SoftBound,MSWasm,PierceMS}. In this
realm, any load or store operation that does not fall within this area
is considered a violation of \emph{spatial memory safety}.
Our notion of \emph{kernel safety} encompasses a form of \emph{spatial
  memory safety}: if a system enjoys \emph{kernel safety}, then no
system call can access a memory region that does not appear within its
\tcaps.
In addition, \emph{kernel safety} also implies a form of \emph{control
  flow integrity}: a property which requires that the control transfer
operations performed by a program can reach only specific statically
determined targets~\cite{CFI}.  More precisely, in our semantics it is
unsafe to execute of a procedure if its address does not belong to the
set of \tcaps of the current system call. This means that, if a system
$\system = (\rfs, \syss, \caps)$ is \emph{kernel safe}, by executing
that system call, the control flow will flow across the procedures in
$\caps(\syscall)$.

\paragraph{Final remarks}
Kernel-space layout randomization provides \emph{kernel safety} in a
probabilistic sense. In particular, the probability of violating
safety depends on the randomization scheme, and for certain
randomization schemes, it approaches 0 when the size of the address
space goes to infinity.  In turn, \emph{kernel safety} implies critical
security guarantees, including forms \emph{spatial memory safety} and
\emph{control flow integrity}.

%%% Local Variables: 
%%% mode: latex
%%% TeX-master: "arxiv.tex"
%%% End:

\section{Speculative Threat Model}
\label{sec:safety2}
In this section we establish to which extent
a system enjoys \emph{kernel safety} in presence of speculative attackers.
To this aim, in \Cref{subsec:specmodel}, we extend the model of
\Cref{sec:language} for this new scenario. More precisely,
we endow the semantics of \Cref{sec:language}
with speculative execution and side-channel observations
that reveal the accessed addresses and the value of conditional branches~\cite{HighAssurance,CTFundations,Spectector,Contracts}.
In \Cref{subsec:specsafety},
we refine the notion of \emph{kernel safety} for this model,
by defining \emph{speculative kernel safety}.

\subsection{The Speculative Execution Model}
\label{subsec:specmodel}

A substantial difference between our model and previous models~\cite{HighAssurance,CTFundations,
  Spectector,Contracts}
lies in the possibility to explicitly model attackers. More precisely, an attacker is not
given as a mere sequence of microarchitectural directives,
but becomes a fully-fledged program that can directly
interact with the system.
This permits us to naturally extend the notion of \emph{kernel safety}
to the new scenario. Besides, we believe that modeling an attack explicitly can be
interesting on its own.  Feasibility of an attack is witnessed explicitly
through a program. In this setting, for instance, assumptions on
the attacker's computational capabilities can be imposed seamlessly. 

% ,
% and to study its extensional and intensional properties.

% This permits us to naturally extend the notion of \emph{kernel safety}
% to the new scenario. Besides, we believe that modeling an attack explicitly can be
% interesting on its own. Indeed, feasibility of an attack is witnessed explicitly
% through a program, supporting imposing further assumptions,
% and to study its extensional and intensional properties.

\DD{Relate with crypto's CPA property??}
% This extension on the thread model of \cite{HighAssurance} permits us to suite
% the notion of memory safety in a natural way to the extended semantics.

\subsubsection{Victim Language and Semantics}
\label{sec:internalsemantics}

The victims' language remains identical to the classic model.
To permit attackers to influence the speculative execution of specific instructions,
we assume load and branch instructions are tagged by unique labels $\lbl \in \Lbl$,
We enrich the language with a \emph{fence instruction} found
in modern CPUs~\cite{IntelManual}:
\begin{cbnf}
  \Instr \ni \stat & \dots \mid \cfence .
\end{cbnf}
Architecturally, this instruction is a no-op,
on the microarchitecture level
it commits all buffered writes to memory.
Following~\citet{HighAssurance}, the speculative semantics is instrumented through \emph{directives}, modeling the choice made by
prediction units of the processor. Directives take the form
\begin{cbnf}
  \dir \ni \Dir
  & \dbranch[\lbl] \bool
  \bnfmid \dload[\lbl]{i}
  \bnfmid \dbt
  \bnfmid \dstep,
\end{cbnf}
where $i \in \Nat$ and $\bool \in \cBool$.
The $\dbranch[\lbl] \bool$ directive causes a branch
instruction to be evaluated as if the guard resolved to
$b$. The $\dload[\lbl]{i}$ causes the load instruction to load the
$i$-th most recent value that is associated to an address in a
(buffered) memory. The $\dbt$ directive is used to direct speculations,
either backtracking the most recent mis-speculation or committing
the microarchitectural state.
Finally, the $\dstep$ directive evaluates an instruction
without engaging into speculation, in correspondence to the semantics
we have given in \Cref{sec:language}.

The semantics is also instrumented with observations to model timing side-channel leakage:
\begin{cbnf}
  \obs, \obstwo \ni \Obs
  & \onone
  \bnfmid \obranch \bool
  \bnfmid \omem \add
  \bnfmid \ojump \add
  \bnfmid \obt \bool,
\end{cbnf}
where $n \in \Nat$ $b\in \cBool$, and $\add \in \Add$.
We use $\onone$ to label transitions that do not leak observations.
The $\obranch \bool$ observation is caused
by branching instructions, with $\bool$ reflecting the taken branch.
The $\omem \add$ observation is caused by memory access, through loads or stores,
and contains the address of the accessed location, thus modeling
instruction-cache leaks.
Likewise, the $\ojump \add$ observation is caused by calls to
procedures residing at address $\add$ in memory.
Finally, the $\obt \bool$ observation signals a
backtracking step during speculative execution.
Notice that we leak full addresses on memory accesses, and
the value of the branching instructions, i.e. we adopt
the \emph{baseline leakage model} that is
widely employed in the literature to model side-channel
info-leaks~\cite{CTPolicies, HighAssurance,
  CTFundations, SLNonInt}.
% Even in the state-of-the art mitigations against speculative
% attacks, it is possible to fully reconstruct kernel
% addresses leveraging time side-channels~\cite{TagBleed,EntryBleed}.

\ifdefined\conference{\input{rules2}}\fi
\ifdefined\arxiv{
\begin{figure*}[t]
  \small
  \centering
  \begin{framed}
    \vspace{0.5ex}
    
  \resizebox{\textwidth}{!}{\(
      \Infer[SIE][SLoad][\textsc{SLoad-Load}]
      {\sstep
        {\confone \cons\cfstack}
        {\sframe{\frame{\cmd}{\update \regmap x \val}{\opt}\cons\st}{\bm\buf\mem}{\boolms\lor f}\cons
          \confone \cons \cfstack}  {{{\dload[\lbl]{i}}}} {\omem \add}}
      {\confone = \sframe{\frame{\cmemread[\lbl] \vx  \expr\sep\cmd}{\regmap}{\opt}\cons \st} {\bm\buf\mem}{\boolms} &
        \toAdd{\sem\expr_{\regmap, \lay}} = \add &
        \bufread {\bm\buf\mem} \add i =(\val, f) &
        \add \in \underline \lay(\Ar[\opt]) &
        \fbox{$\opt = \km[\syscall] \Rightarrow \add \in \underline \lay(\caps(\syscall))$}
      }
    \)}

    \[
      \Infer[SIE][SLoad-Error][\textsc{SLoad-}\\\textsc{Error}]
      {
        \lay \red
        \sframe{\frame{\cmemread[\lbl] \vx \expr\sep\cmd}{\regmap}{\opt}\cons \st}{\bm\buf\mem}{\boolms}\cons\cfstack
        \sto{\dload[\lbl]{i}}{\onone}
        \sconf{\err, \boolms}\cons\cfstack
      }{
        \toAdd{\sem\expr_{\regmap, \lay}} = \add &
        \add \notin \underline \lay(\Ar[\opt]) &
        % \dir = \dstep \lor \dir = \dload[\lbl]{i}
      }
    \]

    \[
      \Infer[SIE][SLoad-Unsafe][\textsc{SLoad-}\\\textsc{Unsafe}]
      {\sstep
        {\sframe{\frame{\cmemread[\lbl] \vx \expr\sep\cmd}{\regmap}{\km[\syscall]}\cons \st}{\bm\buf\mem}{\boolms}\cons\cfstack}
        {{\unsafe}}   {\dir} {\omem \add}}
      {\toAdd{\sem\expr_{\regmap, \lay}} = \add &
        \add \in \underline \lay(\Ar[\km]) &
        \fbox{$\add \notin \underline \lay(\caps(\syscall))$} &
        % \dir = \dstep \lor \dir = \dload[\lbl]{i}
      }
    \]

    \[
      \Infer[SIE][Load-Step][\textsc{SLoad-Step}]
      {
        \lay \red \sframe{\frame{\cmemread[\lbl] \vx \expr\sep\cmd}{\regmap}{\opt}\cons\st}{\bmem}{\boolms}\cons\cfstack
        \sto{\dstep}{\omem \add} \sframe{\frame{\cmd}{\update \regmap x \val}{\opt} \cons \st}{\bmem}{\boolms}\cons\cfstack
      }
      {\toAdd{\sem\expr_{\regmap, \lay}} = \add &
        \bufread {\bm\buf\mem} \add 0 =\val, \bot &
        \add \in \underline \lay(\Ar[\opt]) &
        \fbox{$\opt = \km[\syscall] \Rightarrow \add \in \underline \lay(\caps(\syscall))$}
      }
    \]
    
    \[
      \Infer[SIE][If-Branch]{
      \lay \red
      \specconfone\cons\cfstack
      \sto{\dbranch[\lbl]{d}}{\obranch{d}}
      \sframe{\frame{\cmdtwo_{d}\sep\cmdtwo}{\regmap}{\opt}\cons\st}{\bm\buf\mem}{\boolms\lor d\neq\toBool{\sem \expr_{\regmap, \lay}}}
      \cons \specconfone \cons \cfstack
    }
    {\specconfone=\sframe {\frame{\cif[\lbl] \expr {\cmdtwo_\ctrue} {\cmdtwo_\cfalse}\sep\cmdtwo}{\regmap}{\opt}\cons\st}{\bm\buf\mem}{\boolms}}
  \]

  \[
    \Infer[SIE][Backtrack-Top][\textsc{Bt}_{\top}]{
      \lay \red \specconfone \cons\cfstack \sto{\dbt}{\obt \top} \cfstack
    }{
      \specconfone = \sframe{\st}{\bm\buf\mem}{\top} \lor \specconfone = \sconf{\err,\top}
    }
    \quad
    \Infer[SIE][Backtrack-Bot][\textsc{Bt}_{\bot}]{
      \lay \red \specconfone \cons\cfstack \sto{\dbt}{\obt \bot} \specconfone \cons \nil
    }{
      \specconfone = \sframe{\st}{\bm\buf\mem}{\bot} \lor \specconfone = \sconf{\err,\bot}
      & \cfstack \neq \nil
    }
  \]

    \[
      \Infer[SIE][Fence]{
        \lay \red
        \sframe{\frame{\cfence\sep\cmd}{\regmap}{\opt} \cons\st}{\bm\buf\mem}{\bot} \cons \cfstack
        \sto{\dstep}{\onone} \sframe{\frame{\cmd}{\regmap}{\opt}\cons\st}{\overline {\bm\buf\mem}}{\bot} \cons \cfstack
      }{}
    \]
  \end{framed}
  \caption{Speculative semantics, excerpt.}
  \label{fig:scen2specsemiexcerpt}
\end{figure*}
}\fi

A reduction step now takes the form
\[
  \lay \red \cfstack \sto{\dir}{\obs} \cfstack',
\]
indicating that for a given system $\system$, under layout $\lay \in \Lay$,
the system evolves from state $\cfstack$ with directive $\dir \in \Dir$ to $\cfstack'$ in one step,
producing the side-channel observation $\obs$.
The state of a system $\cfstack$ is now modeled as a stack of backtrackable configurations.
% Configurations are alike before, but may carry an additional misspeculation flag $\boolms$, and memories are accompanied by a write-buffer.
Specifically, configurations follow the following BNF:
\begin{cbnf}
  \specconfone, \specconftwo & \sframe{\st}{\bm{\buf}{\mem}}{\boolms} \bnfmid \sconf
  {\err, \boolms} \bnfmid \unsafe
\end{cbnf}
In a configuration $\sframe{\st}{\bm{\buf}{\mem}}{\boolms}$,
$\st$ is a call-stack as in \Cref{sec:language}, $\bm{\buf}{\mem}$ is a memory equipped with a \emph{write buffer} $\buf$, and $\boolms$ the \emph{mis-speculation} flag.
Buffered memories $\bm{\buf}{\mem}$ permit out-of-order, speculative memory operations.
Specifically, writing a value $\val$ at address $\add$  results in a delayed write $[\add \mapsto \val]\bm{\buf}{\mem}$, and $\bufread {\bm{\buf}{\mem}} \add k$ yields the
$k$th-last buffered entry $\val$ at address $\add$, together with a boolean flag $f$ that it $\bot$ if and only if $\val$ is the most recent one associated to address $\add$.
This operation is formally described by the following function:
\begin{align*}
  \bufread {\bm{[]}\mem} \add k &\defsym \mem(a), \bot &  &\\
  \bufread {\bm{\bitem \add \nat \cons \buf}\mem} \add 0 &\defsym \nat, \bot &  &\\
  \bufread {\bm{\bitem \add \nat \cons \buf}\mem} \add {i+1} &\defsym \nat', \top &&\text{if }\bufread {\bm\buf\mem} \add i = \nat', b \\
  \bufread {\bm{\bitem {\add'} \nat \cons \buf}\mem} \add {i} &\defsym \bufread {\bm\buf\mem} \add i &&\text{if }\add\neq\add'.
\end{align*}
In a configuration, the mis-speculation flag $\boolms$ records whether a past
step of computation led to a mis-speculation.
It is employed when backtracking from a speculative state.
% \MA{We keep the invariant that the misspeculation flag of the bottom entry of $\cfstack$ is always false.}
As errors are recoverable under mis-speculation,
error configurations $\err$ carry also a mis-speculation flag. Finally,
as in \Cref{sec:language}, $\unsafe$ indicates
a safety violation.

Some illustrative rules of the semantics are given in \Cref{fig:scen2specsemiexcerpt},
% \ifdefined\conference{the complete set of rules is relegated to the extended version of this paper~\cite{ExtendedVersion}.}\fi
\ifdefined\arxiv{the complete set of rules is relegated to the appendix, see~\Cref{fig:scen2sem1,fig:scen2sem1bis}.}\fi
The rules for a load instruction are very similar to the ones we give in
\Cref{sec:language}, but attackers can take advantage of
the store-to-load dependency speculation by issuing a $\dload[\lbl] i$
directive. When this happens, the $i$-th most recent value
associated to the address $\add = \toAdd{\sem \expr_{\regmap, \lay}}$
is retrieved from the buffered memory, see rule \ref{SIE:SLoad}.
This value may not correspond to that of the most recent
store to the address $\add$, and this is signaled by the flag $f$
that is returned after the buffer lookup. If $f=\top$,
this operation may be engaging mis-speculation and, for this reason,
the semantics keeps track of the current configuration in the stack.
A successful load produces the observation $\omem \add$
that leaks the  address to the attacker.
The rules for erroneous and unsafe loads are
\ref{SIE:SLoad-Error} and \ref{SIE:SLoad-Unsafe}
and they are analogous to their non-speculative counterparts.
In our semantics, every command supports the $\dstep$ directive,
which evaluates the configuration without speculating. For instance,
the \ref{SIE:Load-Step} rule evaluates the $\cmemread \vx \expr$ command
by fetching the most recent value from the write buffer, instead of an
arbitrary one.
% In other
% words, this means that
% the target configuration is mis-speculating if the source one was
% already mis-speculating or if the retrieved value is not the
% most recent one associated to that address.

Even branch instructions can be executed speculatively by issuing
the directive $\dbranch d$ by means of the rule \ref{SIE:If-Branch}.
This causes the evaluation to continue as if the guard resolved to
$d$. This operation leaks which branch is being executed
by means of the observation $\obranch d$. Even in this case,
the rule may be mis-speculating; for this reason, the current configuration
is book-kept in the stack and the mis-speculation flag is updated.

When the topmost configuration of a stack carries
the mis-speculation flag $\top$,
% If ,
% and the current directive is $\dbt$, a backtrack takes place.
% The target configuration depends on the value
% of $\boolms$.
the configuration can be is discarded with the
rules \ref{SIE:Backtrack-Top}.
If it is $\bot$, the current
state is not mis-speculating, so the whole stack
of book-kept configurations can be discarded with
the rule  \ref{SIE:Backtrack-Bot}.
% Notice that although errors that are reached without misspeculation
% are non-terminal states because of the rule \ref{SIE:BTERRB}, but this rule
% is the only one that can be applied to that kind of configurations.

The \ref{SIE:Fence} rule commits all the entries in the
write buffer to the memory. Precisely, this operation is defined as follows:
\begin{align*}
  \overline {\bm{[]} \mem} &\defsym \mem &  \overline {\bm{\bitem \add \val : \buf}\mem} &\defsym \update{\overline {\bm{\buf}\mem}}\add \val,
\end{align*}
where, by $\update \mem \add\val$, we denote the memory
obtained by updating the value at address $\add$ with $\val$.
In particular, for consistency, a potentially mis-speculative state must
be resolved. This is why this rule requires the mis-speculation flag to
be $\bot$. This means that, if this configuration is reached
when the flag is $\top$, the semantics must backtrack
with the rule \ref{SIE:Backtrack-Top}.

\ifdefined\conference{\input{rules3}}\fi
\ifdefined\arxiv{\begin{figure*}
  \small
  \centering
  \begin{framed}
    \[
      \Infer[ALE][Poison]{
        \lay \red \aconf{\frame{\cpoison \dir\sep\speccmd}{\regmap}{\opt}\cons\st }{\mem}{\Ds}{\Os}
        \ato \aconf{\frame{\speccmd}{\regmap}{\opt}\cons\st}{\mem}{\dir:\Ds}{\Os}
      }{\strut}
    \]

    \[
      \Infer[ALE][Observe]{
        \lay \red \aconf{\frame{\vx \ass \eobs\sep\speccmd}{\regmap}{\opt}\cons\st }{\mem}{\Ds}{\obs:\Os}
        \ato \aconf{\frame{\speccmd}{\update \regmap\vx \obs}{\opt}\cons\st}{\mem}{\Ds}{\Os}
      }{}
    \]

    \[
      \Infer[ALE][Spec-Init]{
        \lay \red \aconf{\frame{\cspec\cmd\sep\speccmd}{\regmap}{\opt}\cons\st }{\mem}{\Ds}{\Os}
        \ato \hconf{\sframe{\frame{\cmd}{\regmap}{\opt}}{\mem}{\bot}}{\frame{\speccmd}{\regmap}{\opt}\cons\st}{\Ds}{\Os}
      }{}
    \]

    \[
      \Infer[ALE][Spec-D]{
        \lay \red \hconf{\cfstack}{\st}{\dir{\cons}\Ds}{\Os} \ato \hconf{\cfstack'}{\st}{\Ds}{\obs{\cons}\Os}
      }{
        \lay \red \cfstack \sto{\dir}{\obs} \cfstack'
      }
    \]

    \[
      \Infer[ALE][Spec-S]{
        \lay \red \hconf{\cfstack}{\st}{\Ds}{\Os} \ato \hconf{\cfstack'}{\st}{\Ds}{\obs{\cons}\Os}
      }{
        \nf \cfstack \Ds &
        \lay \red \cfstack \sto{\dstep}{\obs} \cfstack'
      }
      \
      \Infer[ALE][Spec-BT]{
        \lay \red \hconf{\cfstack}{\st}{\Ds}{\Os} \ato \hconf{\cfstack'}{\st}{\Ds}{\obs{\cons}\Os}
      }{
        \nf \cfstack \Ds &
        \nf \cfstack \dstep &
        \lay \red \cfstack \sto{\dbt}{\obs} \cfstack'
      }
    \]

    \[
      \Infer[ALE][Spec-Term]{
        \lay \red
        \hconf{\sframe{\frame{\cnil}{\regmap}{\opt}}{\bm\buf\mem}{\bot}}
              {\frame{\speccmd}{\regmap'}{\opt'}\cons\st}{\Ds}{\Os}
        \ato \aconf{\frame{\speccmd}{\regmap'}{\opt'}\cons\st}{\overline{\bm\buf\mem}}{\Ds}{\Os}
      }{}
    \]

  \[
    \Infer[ALE][Spec-Error]{
        \lay \red \hconf{\sconf{\err,\bot}} {\st}{\Ds}{\Os} \ato \err
    }{}
    \quad
    \Infer[ALE][Spec-Unsafe]{
      \lay \red \hconf{\unsafe} {\st}{\Ds}{\Os} \ato \unsafe
    }{}
  \]
  \end{framed}
  \caption{Semantics for speculative attackers, excerpt.}%
  \label{fig:scen2semiexcerpt}
\end{figure*}
}\fi

% If the directive $\dir$ cannot be used to evaluate $\cfstack$,
% we write
% \(
% \nf {\cfstack}{\dir}
% \).
% This notation is extended to \(\nf \cfstack {\Ds}\), that
% is defined as follows: $\nf\cfstack \nil \defsym\top$ and
% $\nf\cfstack {\dir:\Ds}$ is defined as $\nf\cfstack {\dir}$.

\subsubsection{Attacker's Language and Semantics}
\label{sec:adversarialsemantics}

To give a definition of kernel safety w.r.t.\ speculative semantics,
we endow an attacker with the ability to engage in speculative executions,
by issuing directives, and by the ability to read side-channel information.
To this end, we extend the instructions from \Cref{sec:language}
as follows:
\begin{cbnf}
  \SpInstr \ni \specstat & \dots &  \\
  & \cspec \cmd & speculation on victim $\cmd$\\
  & \cpoison \dir & speculative poisoning \\
  & \vx \ass \eobs & side-channel observation \\
  \SpCmd \ni \speccmd & \cnil \bnfmid \specstat\sep\speccmd
\end{cbnf}
The instruction $\cspec \cmd$ is used to execute a victim code $\cmd$,
w.r.t.\ the speculative semantics defined just above.
By using the instruction
$\cpoison \dir$, the attacker is able to to mistrain microarchitectural
predictors and to control the speculative execution.
Issued directives control the evaluation of victim code under speculative semantics.
Dual, the instruction $\vx \ass \eobs$ is used to extract side-channel info-leaks, collected
during speculative execution of the victim's code. To model this operation, in the following,
we assume $\Obs \subseteq\Val$.
As an example, the snippet
\begin{equation}
  \label{spec-prob}
  \cblock{
    \cpoison{\dbranch[\lbl]{\top}}\sep
    \cinline{\cspec{\cwhen[\lbl]{\expr}{\csyscall{\syscall}{\add}}}} \sep
    \vx \ass \eobs
  }
  \tag{\dag}
\end{equation}
forces the mis-speculative execution of $\csyscall{\syscall}{\add}$, independently of
the value of $\expr$. The register $\vx$ will hold the final observation leaked through executing the system call.

The attacker's semantics is defined in terms of a relation
\[
  \lay \red \specconfone \ato \specconfone' .
\]
In essence, the attacker executes under the standard semantics given in \Cref{sec:language},
the speculative semantics defined above play a role only when execution of the
victim is triggered by the directive $\cspec \cmd$.
Consequently, % for modeling the semantics of the adversary
configurations are identical in structure to
the ones underlying the standard semantics, but carry however
additionally stacks $\Ds$ and $\Os$ of directives and observations,
in order to model the new constructs.
In addition, hybrid configurations $\hconf{\cfstack}{\st}{\Ds}{\Os}$ are used
to model the system when executing the victim under speculative semantics.
Here $\cfstack$ is a stack of speculative configurations concerning the victim,
and $\st$ the attacker's call stack
up to the invocation of speculation. Again, $\Ds$ gives the
directives (to be processed) and $\Os$ the observations (collected from executing victim's code).
In summary, configurations are drawn from the following BNF:
\begin{cbnf}
  \specconfone
  & \aconf{\st}{\mem}{\Ds}{\Os}
  \bnfmid \hconf{\cfstack}{\st}{\Ds}{\Os}
  \bnfmid \err
  \bnfmid \unsafe
\end{cbnf}
% \begin{bnf}
%   \Ds & \nil \bnfmid \dir \cons \Ds & sequence of directives \\
%   \Os & \nil \bnfmid \obs \cons \Os & sequence of observations \\
%   % \st & \varepsilon \bnfmid \fr : \st & frame stack \\
%   \specconfone & \aconf{\st}{\mem}{\Ds}{\Os} \bnfmid \err \bnfmid \unsafe & ordinary configurations\\
%   & \hconf{\cfstack}{\st}{\Ds}{\Os} & hybrid configuration
%   % \confone, \conftwo & \conf{\st,\mem} \bnfmid \err \bnfmid \unsafe & configuration
% \end{bnf}

\Cref{fig:scen2semiexcerpt} shows the evaluation rules for the new constructs.
Rules~\ref{ALE:Poison} and~\ref{ALE:Observe} define the semantics for poisoning
and side-channel observations, by pushing and redacting elements of the corresponding stacks.
Rule~\ref{ALE:Spec-Init} deals with the initialization $\cspec \cmd$ of speculative execution,
starting from the corresponding initial configuration of the victim $\cmd$ in an empty speculation context.
A frame for the continuation of the attacker $\speccmd$ is pushed on the call stack $\cfstack$.
This frame is used to resume execution of the attacker, once the victim has been
fully evaluated.
The victim itself is evaluated
via the speculative semantics through rules~\ref{ALE:Spec-D}--\ref{ALE:Spec-BT}.
Note how execution of the victim is directed through the directive stack $\Ds$ (rule~\ref{ALE:Spec-D}).
Should the current directive be inapplicable, a non-speculative rewrite step (rule~\ref{ALE:Spec-S})
or backtracking (rule~\ref{ALE:Spec-BT}) is performed. Here,
the premise $\nf \cfstack \dir$ signifies that $\cfstack$ is irreducible
w.r.t.\ the directive $\dir$. Likewise, $\nf \cfstack \Ds$ means that
$\cfstack$ is irreducible w.r.t.\ the topmost directive of $\Ds$, or that $\Ds$ is empty. 
Note also how side-channel leakage, modeled through observations, is
collected in the configuration via these rules.
Upon normal termination, resuming of evaluation of the attacker is
governed by rule~\ref{ALE:Spec-Term} in the case of normal termination.
Finally, rules~\ref{ALE:Spec-Error} and~\ref{ALE:Spec-Unsafe} deal with abnormal termination.

We write $\sto{}{}^{*}$ for the multistep reduction relation induced by $\sto{}{}$, i.e,
$\cfstack \sto{\nil}{\nil} \cfstack$ and $\cfstack \sto{\dir : \Ds}{\obs : \Os}^{*} \cfstack'$ if
$\cfstack \sto{\dir}{\obs} \cdot \sto{\Ds}{\Os}^{*} \cfstack'$.

\subsection{Speculative Kernel Safety}
\label{subsec:specsafety}
We are now ready to extend the definition of kernel safety (\Cref{def:ks})
to the speculative semantics.
\begin{definition}[Speculative kernel safety]\label{def:sks}
  We say that a system $\system = (\rfs,\syss,\caps)$ is \emph{speculative kernel safe}
  if for every unprivileged attacker $\speccmd \in \SpCmd$,
  every layout $\lay$,
  and register map $\regmap$, we have: 
  \[
    \lnot \left(\lay \red \aconf{\frame{\speccmd}{\regmap}{\um}}{\lay \lcomp \rfs}{\nil}{\nil} \ato^* \unsafe \right).
  \]
\end{definition}
It is important to note that this safety notion captures violations that
occur during transient execution. This is in line with what happens, for instance,
for Spectre and Meltdown~\cite{Spectre,Meltdown}, both exploiting unsafe
memory access under transient execution in order to
reveal confidential information.

\subsection{The Demise of Layout Randomization in the Spectre Era}
\label{sec:breakingaslr}

A direct consequence of \Cref{def:sks} is that every system that is
\emph{speculative kernel safe} is also \emph{kernel safe}. 
The inverse, of course, does not hold in general. Most importantly,
the probabilistic form of safety provided by layout randomization
in \Cref{sec:safety1} does not scale to this extended threat model. 
This happens because \Cref{def:ks} does
not take side-channel leakage into account. As a simple example, gadgets like
\[
  \cif{\fn = \add}{\cmd}{\cmdtwo},
\]
can be exploited by an attacker to infer information about the address of a kernel-space
procedure $\fn$, through side-channel leakage distinguishing the execution of $\cmd$ and $\cmdtwo$.
In our model, this is reflected as executing this instruction allows the attacker to observe $\obranch \bool$, with $\bool$ being true precisely when $\fn$ resides at address $\add$.
Secondly, speculative execution undermines a fundamental premise crucially leveraged in \Cref{thm:scenario1} and, more widely, in the majority of studies demonstrating the efficacy of layout randomization as a defense against attacks (e.g.,~\cite{Abadi,Abadi2,Abadi3}): the notion that an unsuccessful memory probe leads to abnormal termination, thus thwarting the attack.
Indeed, within transient executions, memory access violations are recoverable.

This happens, for instance, if the system call $\syscall$
of \eqref{spec-prob} tries to load the content of the
address $\add$ form the memory to a register.
If $\add$ is not allocated, the system performs a
memory access violation under transient execution,
that does not terminate the execution. Conversely, if $\add$ is allocated,
its content is loaded into the cache, producing the observation $\omem \add$
before the execution of the branch and the system call are backtracked.
By reading side-channel observations, the attacker can thus
distinguish allocated kernel-addresses from those
that are not allocated.
This last example, in particular, is not at all
fictitious: the BlindSide attack~\cite{BlindSide}
uses the same idea to break Linux's KASLR
and locate the position of kernel's executable code
and data.

\subsection{Speculative Layout Non-Interference}
\label{sec:slni}

As this revised model significantly enhances the attackers' strength,
we will need to implement more stringent countermeasures in order to restore kernel safety.
To counter side-channel info-leaks, we can impose a form of
side-channel non-interference that is in line with the notion
of \emph{speculative constant-time} from~\cite{CTFundations}.

\begin{definition}[Speculative layout non-interference]\label{def:slni}
  Given $\system=(\rfs,\syss,\caps)$,
  a system call $\syscall$ is \emph{speculative layout non-interfering}, if,
  \[
    \lay_1 \red \sframe{\frame{\syss(\syscall)}{\regmap}{\km[\syscall]}}{\bm{\buf}{(\lay_1 \lcomp \store')}}{\boolms} \sto{\Ds}{\Os}^* \cfstack_1
  \]
  implies
  \[
    \lay_2 \red \sframe{\frame{\syss(\syscall)}{\regmap}{\km[\syscall]}}{\bm{\buf}{(\lay_2 \lcomp \store')}}{\boolms} \sto{\Ds}{\Os}^* \cfstack_2,
  \]
  for all layouts $\lay_{1},\lay_{2}$, configurations over stores $\store'\eqon\Fn\rfs$ coinciding on procedures ($\store'(\fn) = \rfs(\fn)$ for all $\fn \in \Fn$), directives $\Ds$, observations $\Os$ and register map $\regmap$.
\end{definition}
Speculative layout non-interference effectively prevents side-channel-related attacks, even during transient executions. Importantly, it ensures the non-leakage of layout information throughout the side-channels by requiring the identity of the two sequences of observations produced by the two reductions. This, however, implies severe restrictions on memory interactions --- effectively prohibiting the use of random memory layouts! Unsurprisingly, this form of non-interference directly establishes kernel safety of system calls:

\begin{lemma}
  \label{lemma:cttosafe}
  For every system $\system=(\rfs,\syss,\caps)$, if
  \[
    \kappa_{\km}\ge \sum_{\id \in \Idk} \size\id + 2\cdot\max_{\id \in \Idk} \size\id,
  \]
  and if $\syscall$ is speculative layout non-interfering, then
  \[
    \lnot \left(\lay \red \sframe{\frame{\syss(\syscall)}{\regmap}{\km[\syscall]}}{\bm{\buf}{(\lay \lcomp \store')}}{\boolms} \sto{\Ds}{\Os}^* \unsafe\right)
  \]
  for all layouts $\lay$ and initial configurations over stores $\store'$ coinciding with $\rfs$ on $\Fn$.
\end{lemma}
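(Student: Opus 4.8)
The plan is to argue by contradiction: suppose some layout $\lay$, store $\store' \eqon{\Fn} \rfs$, mis-speculation flag $\boolms$, directives $\Ds$ and observations $\Os$ witness a reduction to $\unsafe$. The key idea is to exploit the extra slack in the address space — the hypothesis $\kappa_{\km} \ge \sum_{\id\in\Idk}\size\id + 2\cdot\max_{\id\in\Idk}\size\id$ guarantees there is room for at least two ``spare'' blocks of maximal object size that are unused by $\lay$. From $\lay$ I will construct a second layout $\lay_2$ that agrees with $\lay$ on the user-space identifiers (which is forced anyway) and on all kernel identifiers \emph{except} the one whose block the offending access lands in; that identifier is relocated into one of the spare regions. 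Then $\lay$ and $\lay_2$ place that object at different addresses, so the unsafe access — which by the \textsc{SLoad-Unsafe} / \textsc{SStore-Unsafe} / \textsc{SCall-Unsafe} rules targets an address $\add \in \underline\lay(\Ark)$ (resp.\ $\underline\lay(\Fnk)$) with $\add \notin \underline\lay(\caps(\syscall))$ — behaves differently under the two layouts. The point of having \emph{two} spare maximal-size regions rather than one is to handle the corner case where the offending identifier is itself in $\caps(\syscall)$-adjacent memory or where the target address would, after relocation of a single block, accidentally coincide with another object; with two free slots one can always choose a relocation target disjoint from everything relevant.

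The heart of the argument is: by speculative layout non-interference (Definition~\ref{def:slni}), the same directive sequence $\Ds$ applied to the configuration built over $\lay_2 \lcomp \store'$ must also be reducible, producing the \emph{same} observation sequence $\Os$. So I track the two executions in lock-step along $\Ds$. They proceed identically until the step at which $\lay$-execution fires one of the unsafe rules: at that step, the address $\add$ evaluated from the guarding expression is the same in both runs \emph{if} the relevant registers and the semantics of the expression agree — which they do, because the two runs are in the same control state with the same register maps up to that point (the observations having matched so far, and I will have chosen the relocation so that no earlier observation $\omem{\cdot}$ or $\ojump{\cdot}$ distinguishes the runs). But then, under $\lay_2$, that same address $\add$ either is \emph{not} in $\underline{\lay_2}(\Ark)$ at all (it falls in the hole vacated by the relocated object), triggering \textsc{SLoad-Error} and yielding $\err$ rather than $\unsafe$; or it now lies in $\underline{\lay_2}(\caps(\syscall))$ (if I relocated precisely the capability object to cover $\add$) and the step succeeds — either way the resulting configuration is \emph{not} $\unsafe$ and, crucially, the branch of the semantics taken differs. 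In the $\err$ case the observation emitted (namely $\onone$ for \textsc{SLoad-Error}) differs from the $\omem\add$ emitted by \textsc{SLoad-Unsafe}; in the ``now-safe'' case the continuation differs. Either way we contradict the requirement that the two reductions along $\Ds$ produce the identical observation stream $\Os$ ending in a step to $\unsafe$.

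Concretely, the steps in order are: (i) from the assumed unsafe run, isolate the first configuration $\cfstack$ in the run from which a single directive $\dir$ leads to $\unsafe$, and read off the offending address $\add$, the running system call $\syscall$, and the offending identifier $\id^\star$ with $\add\in\underline\lay(\id^\star)$, $\id^\star\notin\caps(\syscall)$ (note $\add\in\underline\lay(\Idk)$ necessarily, else an \textsc{Error} rule would have fired instead); (ii) use the counting hypothesis to pick a fresh address-range $R$ of size $\size{\id^\star}$ disjoint from $\underline\lay(\Idk)$ and define $\lay_2 \defsym \lay[\id^\star \mapsto \min R]$, checking $\lay_2 \in \Lay$ (non-overlap and address-space separation — this is where the ``$+2\max$'' gives the needed breathing room, since relocating one object of size up to $\max$ must leave a valid layout and we may need a second free slot to avoid collisions); (iii) invoke Definition~\ref{def:slni} with the \emph{prefix} of $\Ds$ up to and including $\dir$, obtaining a matching reduction under $\lay_2$; (iv) prove by induction along this prefix that the two runs stay in identical configurations (same stacks, register maps, mis-speculation flags, write buffers up to the stored values) until the final step, using that every observation so far coincided and that $\lay$ and $\lay_2$ agree on every identifier referenced so far — here I lean on the ``references closed under calls'' discipline from Section~\ref{sec:language} to argue $\id^\star$, being outside $\caps(\syscall) \supseteq \refs_\system(\syss(\syscall))\setminus\Sys$, is not referenced by any expression evaluated during the run, so relocating it cannot perturb any earlier address computation; (v) derive the contradiction at the last step as described above.

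The main obstacle I anticipate is step (iv): showing that relocating $\id^\star$ genuinely does not disturb the lock-step execution before the fatal step. This requires that $\id^\star$ is never dereferenced, jumped to, or used as an operand in any expression producing an observation earlier in the run — which should follow because $\id^\star \notin \caps(\syscall)$ and the system's well-formedness forces $\refs_\system(\syss(\syscall)) \setminus \Sys \subseteq \caps(\syscall)$, hence $\id^\star$ is not among the identifiers syntactically reachable from the system call's body, so it literally cannot appear in any expression evaluated during the (transient) execution of that system call. A subtlety is buffered writes and the store-to-load directive $\dload[\lbl]{i}$: I must confirm that the value written at an address inside $\underline\lay(\id^\star)$ in the $\lay$-run, and read back speculatively, is mirrored faithfully in the $\lay_2$-run at the relocated address — but since the \emph{addresses} of all genuinely-referenced objects are unchanged and $\id^\star$'s block is never the target of any instruction before the fatal step (same reason as above), this buffer content issue does not actually arise. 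Once that invariant is nailed down, the contradiction is immediate from the divergence of observations (\textsc{SLoad-Unsafe} leaks $\omem\add$ whereas \textsc{SLoad-Error} leaks $\onone$) or from the impossibility of $\lay_2$'s run reaching $\unsafe$ at all, and the lemma follows.
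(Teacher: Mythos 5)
Your proposal has the same skeleton as the paper's proof: assume a run to $\unsafe$, observe that the final step must be one of \ref{SIE:SLoad-Unsafe}/\ref{SIE:SLoad-Error}-style unsafe rules and therefore emits $\omem\add$ (or $\ojump\add$) for an address $\add$ inside the block of some kernel object, use the $+2\cdot\max$ slack to build a second layout in which that object is relocated so that $\add$ is unallocated, and contradict speculative layout non-interference. The paper's construction of the alternative layout is a three-way case analysis on whether the object sits at the start, end, or interior of kernel space, using the pigeonhole principle to find room; your ``pick a fresh contiguous range $R$'' is the same counting argument, stated more optimistically (contiguity of the free space is exactly the delicate point, which is why the paper cases on the object's position).

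Where you diverge — and where you create work for yourself that the paper does not need — is step (iv). Definition~\ref{def:slni} already states that the $\lay_2$-run driven by the same directives $\Ds$ produces the \emph{identical} observation sequence $\Os$; you do not need to establish any lock-step correspondence of configurations to get this, it is the hypothesis. The paper then closes immediately with the observation (its Remark on unallocated addresses) that every rule emitting $\omem{\add}$ has $\add \in \underline{\lay_2}(\Ar[\opt])$ or $\underline{\lay_2}(\Fn[\opt])$ among its premises, so an unallocated $\add$ can never appear in any observation of any $\lay_2$-run — contradicting the fact that $\Os$ ends in $\omem\add$. Your lock-step invariant, as literally stated (``identical configurations \ldots same write buffers''), cannot hold: the memories $\lay\lcomp\store'$ and $\lay_2\lcomp\store'$ are different functions, so the configurations are never equal, and repairing the invariant would require setting up a layout-indexed bisimulation that the lemma simply does not need. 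Drop step (iv), keep steps (i)–(iii) and the observation-mismatch argument of (v), and add the explicit remark that $\omem\add$ is unobservable under $\lay_2$, and you recover the paper's proof.
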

\ifdefined\arxiv{The proof of this lemma is in \Cref{sec:proofs2}.}\fi

Intuitively, this statement holds because if an invocation of a system call $\syscall$ 
performs an unsafe memory access when executing under a layout $\lay$, 
the address $\add$ of the accessed resource is leaked;
but the same address cannot leak if the resource is moved to another location ---
this is why we impose the condition on the size of the memory.
Thus, if a system call is speculative layout non-interfering,
it cannot be speculative non-interferent because
different memory layouts produce different observations.

In general, it is not always the case that a non-interference
property has as consequence memory safety.
For instance, being \emph{non-interferent} with respect to a set
of secrets does not prevent a victim program from breaking
memory safety. In our case, this property holds because the layouts
are not only used as the inputs for a computation,
but they also determine \emph{where} objects are placed in memory.

\begin{theorem}
  \label{thm:scenario2}
  Under the assumption $\kappa_{\km}\ge \sum_{\id \in \Idk} \size\id + 2\cdot\max_{\id \in \Idk} \size\id$,
  if a system $\system$ is speculative layout non-interfering, then it is \emph{speculative kernel safe}.
\end{theorem}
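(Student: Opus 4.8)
The plan is to localise any attacker-induced safety violation to the execution of a single system call and then invoke \Cref{lemma:cttosafe}. Concretely, I would argue by contradiction: suppose $\system$ is not speculative kernel safe, so $\lay \red \aconf{\frame{\speccmd}{\regmap}{\um}}{\lay \lcomp \rfs}{\nil}{\nil} \ato^{*} \unsafe$ for some unprivileged attacker $\speccmd$, layout $\lay$ and register map $\regmap$. The first observation is that $\unsafe$ is introduced only by rule~\ref{SIE:SLoad-Unsafe} and its analogues for stores and procedure calls, each of which fires only in a frame whose execution flag has the form $\km[\syscall]$; the attacker-level rule~\ref{ALE:Spec-Unsafe} merely propagates an already-produced $\unsafe$. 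The flag $\km[\syscall]$, in turn, is installed only by the user-to-kernel transition of rule~\ref{WL:SystemCall} (and of its speculative counterpart), is left unchanged by ordinary calls and by nested system calls, and is restored by rule~\ref{WL:Pop}, while the attacker starts in user mode. Hence the violating frame carries a flag $\km[\syscall]$ where $\syscall$ is the outermost system call whose invocation is still on the stack at that moment, and that invocation performed the user-to-kernel transition at an earlier point of the run.

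Next I would establish a memory-shape invariant: along any attacker run, the underlying memory --- the memory beneath any pending write buffer --- always has the form $\lay \lcomp \store$ for some store $\store$ with $\store \eqon{\Fn} \rfs$. This holds at the start, and is preserved by every step, because updates are restricted to values, so code locations are never overwritten and keep agreeing with $\rfs$ on procedure identifiers (the W\^{}X discipline), and every write --- speculative or not --- targets an address inside some array block $\underline\lay(\ar)$ (with $\ar \in \Aru$ in user mode and $\ar \in \Ark$ in kernel mode, by SMAP), so it only alters array contents; a fence merely commits such buffered writes. Consequently, at the moment $\syscall$ is invoked the victim state has the form $\sframe{\frame{\syss(\syscall)}{\regmap'}{\km[\syscall]} \cons \st'}{\bm{\buf'}{(\lay \lcomp \store')}}{\boolms'}$ for some $\store' \eqon{\Fn} \rfs$, some buffer $\buf'$, flag $\boolms'$, register map $\regmap'$ and (possibly empty) enclosing call stack $\st'$; in the special case of a direct, non-speculative invocation of $\syscall$ by the attacker, $\buf'$ is empty, $\boolms' = \bot$, and the ensuing execution of $\syss(\syscall)$ is literally the speculative reduction driven by a stream of $\dstep$ directives.

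From that invocation up to the violating step, the computation acts only on the top frame of the current speculative configuration and on the speculative configurations pushed afterwards for internal speculation; it never pops below $\frame{\syss(\syscall)}{\regmap'}{\km[\syscall]}$, and --- since a step that triggers a capability violation collapses the current speculative state to $\unsafe$ outright --- no backtracking can discard that frame before the violation occurs. By locality of the speculative rules, this tail therefore induces a reduction $\lay \red \sframe{\frame{\syss(\syscall)}{\regmap'}{\km[\syscall]}}{\bm{\buf'}{(\lay \lcomp \store')}}{\boolms'} \sto{\Ds}{\Os}^{*} \unsafe$ for suitable $\Ds$ and $\Os$. Since every system call of $\system$ is speculative layout non-interfering, so is $\syscall$, and the hypothesis $\kappa_{\km} \ge \sum_{\id \in \Idk} \size\id + 2\cdot\max_{\id \in \Idk}\size\id$ holds, so \Cref{lemma:cttosafe} forbids exactly such a reduction --- a contradiction. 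Hence no attacker run reaches $\unsafe$, i.e.\ $\system$ is speculative kernel safe.

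The step I expect to be the main obstacle is this last projection. Because the speculative state is a stack of backtrackable configurations --- and system calls may be invoked during kernel-mode execution, or even speculatively inside a $\cspec$ block --- one must argue carefully that the slice of the global, stack-structured computation pertaining to one system-call invocation is itself a well-formed speculative reduction from a single-frame configuration. This is precisely where both the memory-shape invariant (so that the starting memory has the form $\bm{\buf'}{(\lay \lcomp \store')}$ demanded by \Cref{lemma:cttosafe}) and the remark that a capability violation short-circuits backtracking (so the $\syscall$-frame is never peeled off prematurely) are needed.
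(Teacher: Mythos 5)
Your proposal is correct and follows essentially the same route as the paper: localise the violation to a single system-call invocation, establish that the underlying memory retains the shape $\lay \lcomp \store'$ with $\store' \eqon{\Fn} \rfs$ along the run, project out a single-frame speculative reduction of $\syss(\syscall)$ to $\unsafe$, and contradict \Cref{lemma:cttosafe}. The ``main obstacle'' you flag is exactly where the paper invests its auxiliary lemmas (the case split between a violation in the attacker's ordinary semantics versus inside a $\cspec$ block, the lifting of ordinary steps to $\dstep$-driven speculative steps, and the elimination of $\dbt$ directives before slicing out the system call's sub-reduction), so your outline matches the paper's proof in both structure and substance.
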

\ifdefined\arxiv{This result is demonstrated in  \Cref{sec:proofs2}.}\fi 

Observe that the safety guarantee
provided by \emph{speculative layout non-interference} is not probabilistic.
%So,
% although that property is the natural way to extend the probabilistic
% guarantees of Layout Randomization to the speculative model, that property
% alone provides stronger guarantees than Layout Randomization. In other words,
Although its effectiveness, layout randomization is unlikely to be restored
at the software level without imposing \emph{speculative layout non-interference},
in presence of this assumption, layout randomization is a redundant protection measure.
Also notice that
\emph{speculative layout non-interference} is not a necessary condition for
\emph{speculative kernel safety}. For instance, we can take in consideration
the following system:

\begin{code}[emph={s,f}]
  void f() { skip; };
  void s() { f(); };
\end{code}

This system does not enjoy \emph{speculative non-interference}, because by executing
\ex!s!, the address of \ex!f! leaks, and this address changes under different layouts.
However, this system is speculatively safe if we assume that \ex!f! belongs
to the capabilities of \ex!s!.

%%% Local Variables: 
%%% mode: latex
%%% TeX-master: "paper.tex"
%%% End:
%%% languagetool-local-disabled-rules: ("ID_CASING" "EN_UNPAIRED_QUOTES" "COMMA_PARENTHESIS_WHITESPACE" "CURRENCY" "WHITESPACE_RULE") 

\section{Enforcement of Speculative Kernel Safety}
\label{sec:sksenforcement}
Although by requiring \emph{speculative layout non-interference},
we would be able to restore \emph{speculative kernel safety}, this would
impose important limitations on the system. For this reason,
we believe it is worth investigating whether \emph{speculative kernel safety}
can be enforced without imposing \emph{speculative layout non-interference}.

Nevertheless, directly enforcing \emph{speculative kernel safety}
is non-trivial, because it requires the developers to constantly
take in account a large variety of microarchitectural behaviors that
their system may run into. On the other hand, in the last decades, plenty of effort
has been put in developing safe code in the classic model,~\cite{kCFI,RustInLinux,
  towardkernelsafety,HyperSafe,FineGrainedkCFI,CompProtKer}.
So, our last question is to determine to which extent we can
establish a link between kernel safety and \emph{speculative}
kernel safety.
Following other works in this direction~\cite{ProSpeCT,Blade},
our main idea is to nullify the gap between \emph{kernel safety}
and \emph{speculative kernel} safety, by making the latter
property a consequence of the former. 

This can be achieved by finding a transformation
$\zeta$ that turns 
any \emph{kernel safe} system $\system$ into another system 
$\zeta(\system)$ which is architecturally equivalent to $\system$
but enjoys \emph{speculatively kernel safety}.
The semantic requirement on the transformation $\zeta$
is expressed by \Cref{def:sempres}, by which we ask that
no user-space program may show different behaviors by
executing in the two systems.

\begin{definition}[Semantics preservation]
  \label{def:sempres}
  A system transformation $\systrans$ is \emph{user-space semantics preserving}
  if, for any system $\system=(\rfs, \syss, \caps)$,
  \[
  \Eval[\systrans(\system)][\lay]{\cmd, \regmap, \um, \rfs'}
  \simeq
  \Eval[\system][\lay]{\cmd, \regmap, \um, \rfs}
  \]
  for every layout $\lay$, unprivileged command $\cmd$, and registers $\regmap$.
  Here, $\rfs'$ is the store underlying $\systrans(\system)$.
  The equivalence is given by $(v,\rfs[1]) \simeq (v,\rfs[2])$ if
  $\rfs[1] \eqon{\Idu} \rfs[2]$, and coincides with equality otherwise.
\end{definition}

Notice that in the previous definition we require $\rfs[1] \eqon{\Idu} \rfs[2]$ instead
of $\rfs[1] = \rfs[2]$ in order to allow the transformation $\systrans$
to modify kernel-space procedures.

% We also want that $\zeta$ transforms every \emph{kernel safe} system $\system$ into
% a second system $\zeta(\system)$ that enjoys \emph{speculative kernel safety}.
Thanks to \emph{semantics preservation}, the second requirement on $\zeta$
can be fulfilled by asking that the system $\zeta(\system)$
can violate \emph{speculative kernel safety} only if it violates \emph{kernel safety}, as
captured by \Cref{def:cbu} below.

\begin{definition}
  \label{def:cbu}
  We say that $\zeta$  \emph{imposes speculative kernel safety}
  if, for every system $\system$ such that $\zeta(\system)=(\rfs, \syss, \caps)$,
  every buffer $\buf$ with $\dom(\buf)\subseteq\underline \lay(\Ar)$
  and store $\rfs' \eqon{\Fun} \rfs$,
  if
  \[
    \lay \red[\zeta(\system)] \conf{\conf{\syss(\syscall), \regmap, \km[\syscall]}, \bm\buf{(\lay \lcomp \rfs')}, \boolms} \sto{\Ds}{\Os}^* \unsafe,
  \]
  then
  \[
    \lay \red[\zeta(\system)] \conf{\conf{\syss(\syscall), \regmap, \km[\syscall]}, \overline{\bm\buf{(\lay \lcomp \rfs')}}}
    \to^* \unsafe.
  \]
\end{definition}

Observe that, by means of this property, we can easily
show that if a system $\zeta(\system)$ is \emph{kernel safe}, then
it is also \emph{speculative kernel safe}. Finally,
by combining \Cref{def:cbu} and \Cref{def:sempres},
we obtain the following conclusion:

\begin{proposition}
  \label{prop:mitigation}
  If a system $\system$ is \emph{kernel-safe},
  and the transformation $\zeta$ (i) \emph{imposes speculative kernel
    safety} and (ii) is \emph{user-space semantics preserving}, then
  (i) $\zeta(\system)$ is \emph{speculative kernel safe},
  and (ii) $\zeta(\system)$ is semantically equivalent to $\system$.
\end{proposition}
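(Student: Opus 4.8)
The plan is to derive the two conclusions separately, each as an almost immediate consequence of one of the two hypotheses on $\zeta$, with \Cref{lemma:cttosafe}-style reasoning replaced here by the classic-model kernel safety of $\system$ together with a transfer argument. For conclusion (ii), semantic equivalence of $\zeta(\system)$ and $\system$ is literally the statement of \Cref{def:sempres} applied to $\zeta$: since $\zeta$ is user-space semantics preserving, for every layout $\lay$, unprivileged command $\cmd$ and register map $\regmap$ we have $\Eval[\zeta(\system)][\lay]{\cmd,\regmap,\um,\rfs'} \simeq \Eval[\system][\lay]{\cmd,\regmap,\um,\rfs}$, which is exactly the notion of semantic equivalence referred to in the statement. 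So (ii) requires no further work.

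For conclusion (i), I would argue by contradiction. Suppose $\zeta(\system)$ is \emph{not} speculative kernel safe: by \Cref{def:sks} there are an unprivileged attacker $\speccmd$, a layout $\lay$ and registers $\regmap$ with $\lay \red[\zeta(\system)] \aconf{\frame{\speccmd}{\regmap}{\um}}{\lay \lcomp \rfs'}{\nil}{\nil} \ato^* \unsafe$. The first step is to observe, by inspection of the attacker-semantics rules (\Cref{fig:scen2semiexcerpt}), that the $\unsafe$ configuration can only be produced by rule~\ref{ALE:Spec-Unsafe}, hence the $\unsafe$ arises inside a hybrid configuration $\hconf{\unsafe}{\st}{\Ds}{\Os}$, which in turn was reached from some $\hconf{\cfstack}{\st}{\Ds}{\Os}$ via the speculative-semantics rules \ref{ALE:Spec-D}/\ref{ALE:Spec-S}/\ref{ALE:Spec-BT}; unwinding the $\cspec{-}$ that opened this block (rule~\ref{ALE:Spec-Init}), the $\unsafe$ is reached by a speculative reduction $\sto{\Ds'}{\Os'}^{*}$ starting from an initial victim configuration of the form $\sframe{\frame{\syss(\syscall)}{\regmap_0}{\km[\syscall]}}{\bm{\buf}{(\lay \lcomp \rfs'')}}{\bot}$ — here $\syscall$ is the system call the attacker directed $\cspec$ at (or that a nested $\csyscall$ triggered), $\regmap_0$ the freshly-initialised register map with the arguments, $\buf$ the write buffer accumulated so far with $\dom(\buf)\subseteq\underline\lay(\Ar)$, and $\rfs''$ a store with $\rfs''\eqon{\Fun}\rfs'$ since the attacker runs in user mode and cannot overwrite kernel procedures. (One subtlety: because the enriched model permits system-call invocation during kernel execution, I may need a small auxiliary induction showing that any sub-call still reaches $\unsafe$ starting from such an initial victim configuration; this is the one place where a little care is needed.) This puts us exactly in the hypothesis of \Cref{def:cbu}.

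Applying \Cref{def:cbu} to $\zeta$ then yields $\lay \red[\zeta(\system)] \conf{\conf{\syss(\syscall),\regmap_0,\km[\syscall]}, \overline{\bm{\buf}{(\lay \lcomp \rfs'')}}} \to^{*} \unsafe$ in the \emph{classic} (non-speculative) semantics. Writing $\mem_0 \defsym \overline{\bm{\buf}{(\lay \lcomp \rfs'')}}$ for the committed memory, $\mem_0$ still has the form $\lay \lcomp \rfs'''$ for a store $\rfs'''$ with $\rfs'''\eqon{\Fun}\rfs$ — committing the buffer only overwrites array cells, never procedure cells, by the W\^{}X restriction on updates. The final step is to lift this system-call-level unsafe run back to a full classic run of an unprivileged attacker and contradict kernel safety of $\system$: one takes the attacker $\cmd$ that simply issues $\csyscall{\syscall}{\vec e}$ with arguments evaluating to the right $\regmap_0$ after first performing the stores that produced $\buf$ (each such store is an ordinary user-space write to an array in $\underline\lay(\Ar)$), so that $\lay \red[\zeta(\system)] \conf{\frame{\cmd}{\regmap}{\um},\lay \lcomp \rfs'} \to^{*} \unsafe$. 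Since $\system$ is kernel safe and $\zeta(\system)$ is architecturally equivalent to $\system$ by conclusion~(ii) already established, kernel safety transfers to $\zeta(\system)$ — formally, an $\unsafe$-reaching classic run of $\zeta(\system)$ would, via \Cref{def:sempres}, force $\Eval[\system][\lay]{\cmd,\regmap,\um,\rfs} = \unsafe$, contradicting \Cref{def:ks} for $\system$. Hence no such speculative run exists and $\zeta(\system)$ is speculative kernel safe.

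The main obstacle I expect is the bookkeeping in the middle step: cleanly extracting, from an arbitrary attacker-semantics derivation ending in $\unsafe$, the precise initial victim configuration $\sframe{\frame{\syss(\syscall)}{\regmap_0}{\km[\syscall]}}{\bm{\buf}{(\lay \lcomp \rfs'')}}{\bot}$ that matches the premise of \Cref{def:cbu} — in particular handling nested system/procedure calls inside the speculative block and justifying that the accumulated write buffer indeed satisfies $\dom(\buf)\subseteq\underline\lay(\Ar)$ and that the underlying store still agrees with $\rfs$ on procedures. Everything after that is a short chain of implications; the transfer of classic kernel safety along the architectural equivalence is routine given \Cref{def:sempres}.
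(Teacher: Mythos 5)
Your overall strategy is the one the paper intends: conclusion (ii) is \Cref{def:sempres} read off directly, and conclusion (i) is obtained by extracting, from a hypothetical speculative $\unsafe$-run, a system-call configuration matching the premise of \Cref{def:cbu}, collapsing it to a classic $\unsafe$-run, and transferring that back to $\system$ via semantics preservation to contradict kernel safety. The paper presents the proposition as exactly this combination of \Cref{def:cbu} and \Cref{def:sempres}, with the extraction bookkeeping handled by appendix lemmas of the kind you anticipate (the analogues of \Cref{lemma:thereisasyscall1,lemma:thereisasyscall2}), so you correctly identify where the real work lies. One omission in your case analysis: in the full attacker semantics $\unsafe$ can also arise \emph{outside} any $\cspec$ block, via the non-speculative kernel-mode rules for loads, stores and calls; that case bypasses \Cref{def:cbu} and yields a classic $\unsafe$-run immediately, so it is easy, but it must be listed.

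The step that would fail as written is the reconstruction of a classic unprivileged attacker witnessing non-kernel-safety. You propose an attacker that first replays the stores producing $\buf$ and then issues the system call, asserting that each such store is ``an ordinary user-space write to an array in $\underline\lay(\Ar)$''. But $\underline\lay(\Ar)$ contains kernel-space arrays: both the buffer and the underlying store at the moment the offending system call is entered may have been modified at kernel addresses, by earlier architecturally executed system calls and by speculative --- possibly mis-speculated --- writes that \Cref{def:cbu} then commits through $\overline{\bm\buf{(\lay \lcomp \rfs'')}}$. An attacker in $\Cmd$ cannot perform those writes (rule \ref{WL:Store} rejects kernel addresses in user mode), and no sequence of system calls is guaranteed to reproduce that exact store, so you cannot in general exhibit $\cmd\in\Cmd$ with $\lay \red \conf{\frame{\cmd}{\regmap}{\um}, \lay \lcomp \rfs} \to^* \unsafe$ by this route. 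The clean fix is not to return to the initial store at all: invoke kernel safety in the form the paper's own lemmas use, namely that no system call reaches $\unsafe$ from \emph{any} store agreeing with $\rfs$ on $\Fn$, and apply it directly to the configuration delivered by \Cref{def:cbu}. Since \Cref{def:ks} as literally stated quantifies only over the initial store, this reachability issue is a genuine gap that must be closed one way or the other.
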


This result states that every \emph{kernel safe} system can be transformed
into another system that is equivalent to it
from the user's perspective and enjoys
stronger security guarantees.
Notice that \emph{kernel safety} cannot be provided
solely by the adoption of layout randomization:
by \Cref{thm:scenario1}, we know that layout randomization
provides \emph{kernel safety}
only modulo a small probability of failure.

Just as an example, we observe that a simple transformation that satisfies
the requirements of \Cref{prop:mitigation} 
can be implemented by placing a $\cfence$ instruction before
\emph{all} the potentially unsafe operations. This
instrumentation stops any ongoing speculation before
executing potentially unsafe operations, and prevents their transient
execution, yet leaving the program's semantics
unaltered at the architectural level. 
This \emph{fencing} transformation is
expressed by $\fencetrans : \Instr \to \cmd$
on the level of instructions where, in particular:
\begin{align*}
  \fencetrans( \cmemass \expr\exprtwo)&\defsym \cfence\sep\cmemass \expr\exprtwo\\
  \fencetrans( \cmemread \vx\expr)&\defsym \cfence\sep\cmemread \vx\expr\\ 
  \fencetrans( \ccall \expr {\exprtwo_1,\dots,\exprtwo_k})& \defsym \cfence\sep \ccall \expr {\exprtwo_1,\dots,\exprtwo_k}\\
  \fencetrans( \cwhile{\expr}{\cmd} )& \defsym \cwhile{\expr}{\fencetrans( \cmd) } \\
  \fencetrans( \cif{\expr}{\cmd}{\cmdtwo} )& \defsym \cif{\expr}{\fencetrans(\cmd)}{\fencetrans(\cmdtwo)},
  % \fencetrans( \cfence\sep \cmdtwo)&\defsym \cfence
\end{align*}
and it is the identity on the remaining instructions. Here,
the transformation is extended homomorphically to a
transformation $\fencetrans : \cmd \to \cmd$. It is lifted
to a system $\system$ by systematically applying it to system calls
and procedures $\rfs(\fn)$ in kernel-memory ($\fn \in \Fnk$).

Notice that the transformation $\fencetrans$, does not stop
completely speculation as, for instance,
speculation on conditional instructions is still allowed. This is
not in contrast with \Cref{def:cbu} because even in transient
execution, a conditional instruction cannot perform any safety
violation. However, as their branches can contain unsafe operations,
the transformation visits them.

By observing that $\fencetrans$ enjoys both the properties in
\Cref{def:sempres,def:cbu}, we can draw the following conclusion:

\begin{theorem}
  \label{thm:mitigation}
  If a system $\system$ is \emph{kernel-safe}, then
  $\eta(\system)$ is \emph{speculative kernel safe},
  and $\eta(\system)$ is semantically equivalent to $\system$.
\end{theorem}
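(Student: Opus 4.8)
Here $\eta$ denotes the fencing transformation $\fencetrans$ described above, lifted to a system by fencing every system-call body and every kernel-space procedure. The plan is to apply \Cref{prop:mitigation} with $\zeta := \fencetrans$, so that it suffices to verify its two hypotheses: that $\fencetrans$ is \emph{user-space semantics preserving} (\Cref{def:sempres}) and that it \emph{imposes speculative kernel safety} (\Cref{def:cbu}). Both conclusions of the theorem follow immediately; note that \emph{kernel safety} of $\system$ is assumed outright, so the probabilistic caveat of \Cref{thm:scenario1} plays no role.

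For semantics preservation I would observe that $\cfence$ is architecturally inert: the non-speculative semantics operates on buffer-free memories, for which flushing the (empty) buffer is a no-op, so a fence there behaves exactly like $\cskip$. Since $\fencetrans$ only rewrites the code of kernel procedures and system calls---leaving array contents and every user-space object untouched, so that the store $\rfs'$ underlying $\fencetrans(\system)$ satisfies $\rfs' \eqon{\Idu} \rfs$---a routine simulation pairing each classic reduction of $\fencetrans(\system)$ with the reduction of $\system$ obtained by deleting the inserted $\cfence$ steps yields $\Eval[\fencetrans(\system)][\lay]{\cmd,\regmap,\um,\rfs'} \simeq \Eval[\system][\lay]{\cmd,\regmap,\um,\rfs}$ for all $\lay$, unprivileged $\cmd$ and $\regmap$. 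The attacker $\cmd$ itself is unchanged, being unprivileged it mentions no kernel identifier and reaches kernel code only through system calls, so the extra fences occur solely inside kernel code, where they have no architecturally observable effect.

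For the second hypothesis I would fix, as in \Cref{def:cbu}, a system call $\syscall$ of $\fencetrans(\system)$, a layout $\lay$, a buffer $\buf$ with $\dom(\buf)\subseteq\underline\lay(\Ar)$, a store $\rfs'\eqon{\Fn}\rfs$, registers $\regmap$, a flag $\boolms$, and a speculative run of the (fenced) body of $\syscall$ on $\bm\buf{(\lay\lcomp\rfs')}$ reaching $\unsafe$, and then build a classic run to $\unsafe$ from the flushed configuration $\overline{\bm\buf{(\lay\lcomp\rfs')}}$. The argument has three ingredients. First, $\unsafe$ is produced only by the unsafe variants of the load, store and call rules ($\ref{SIE:SLoad-Unsafe}$ and its store and call analogues), i.e.\ by an access whose target lies in $\underline\lay(\Ar[\km])$ or $\underline\lay(\Fn[\km])$ but outside $\underline\lay(\caps(\syscall))$, and this address depends only on $\regmap$ and $\lay$. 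Second, in $\fencetrans(\system)$ every load, store and call is immediately preceded by a $\cfence$; so, at the first step yielding $\unsafe$, the offending instruction sits in a command $\cfence\sep\mathsf{op}\sep\dots$ whose leading fence has already been consumed, necessarily via rule~$\ref{SIE:Fence}$, which requires the mis-speculation flag to be $\bot$ and replaces the buffered memory by its flush. Hence the configuration just before the op is \emph{architectural}: flag $\bot$, empty buffer, and some registers $\regmap^\star$, memory $\mem^\star$ (and if $\boolms=\top$ initially, no fence is ever crossed, since the flag only grows along a live configuration and backtracking merely discards configurations, so no unsafe op is reached and the requirement holds vacuously). Third, I would prove a \emph{projection lemma}: the speculative semantics conservatively extends the classic one on architectural configurations, i.e.\ a speculative run between two $\bot$-flagged configurations is mirrored, after flushing the buffers, by a classic run. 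Applying this lemma to the run up to the post-fence configuration, and then firing the matching classic unsafe rule ($\ref{WL:Load-Unsafe}$, $\ref{WL:Store-Unsafe}$ or $\ref{WL:Call-Unsafe}$)---which applies because $\regmap^\star$ and the computed address coincide with their speculative counterparts---produces the required classic run to $\unsafe$, which is exactly the conclusion of \Cref{def:cbu}.

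The main obstacle is the projection lemma. Its proof should go by induction on the speculative reduction, with the invariant that while the topmost configuration has flag $\bot$ it carries exactly the registers and flushed memory of the mirrored classic run, whereas any excursion into a $\top$-flagged configuration---created by a mispredicted branch ($\ref{SIE:If-Branch}$) or by a stale store-to-load read ($\ref{SIE:SLoad}$ at index at least $1$)---sits strictly above the last $\bot$-flagged checkpoint and leaves it untouched, so that driving the flag back to $\bot$ forces all such configurations to be discarded via~$\ref{SIE:Backtrack-Top}$. The delicate point is the interplay of this checkpoint stack with write buffers: buffered writes made during a mis-speculation are inherited by configurations pushed above but never propagate below the checkpoint where the mis-speculation began, so backtracking erases them; dually, a fence flushes only when the flag is $\bot$, keeping its flush consistent with the mirrored classic memory. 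Once this lemma is secured, the theorem follows by the appeal to \Cref{prop:mitigation} above.
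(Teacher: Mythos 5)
Your proposal is correct and follows essentially the same route as the paper: Theorem~\ref{thm:mitigation} is obtained by instantiating \Cref{prop:mitigation} with $\fencetrans$ and separately verifying user-space semantics preservation (the paper's \Cref{lemma:fencesempres}, via a fence-erasing simulation) and the imposition of speculative kernel safety (the paper's \Cref{lemma:fencesafeimpo}). Your ``projection lemma'' and the surrounding analysis of fences forcing a $\bot$ flag and a flushed buffer correspond directly to the paper's \Cref{lemma:nobt,lemma:nobtsteponly,lemma:stepsemsim} together with its $\fencetrans$-well-formedness invariant, so no genuinely different decomposition or missing idea is involved.
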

\ifdefined\arxiv{This result is proven in \Cref{sec:proofs3}.}\fi

In addition to $\fencetrans$, other program
transformations that fit the requirements of
\Cref{prop:mitigation} can be identified:
for instance, the variation of $\fencetrans'$
that places a single $\cfence$ instruction before
sequences of loads --- or of stores --- is a good choice.
Similarly, $\cfence$ instructions can be omitted before
direct calls. Finally, in presence of an external proof
that shows that a system call $\syscall$ enjoys
\emph{speculative kernel safety}, the instrumentation
may decide to leave that system call unchanged, yet
preserving \Cref{def:cbu}.

%%% Local Variables:
%%% mode: latex
%%% TeX-master: "paper"
%%% languagetool-local-disabled-rules: ("COMMA_PARENTHESIS_WHITESPACE" "COMMA_PARENTHESIS_WHITESPACE" "COMMA_PARENTHESIS_WHITESPACE" "COMMA_PARENTHESIS_WHITESPACE" "COMMA_PARENTHESIS_WHITESPACE")
%%% End:

% \section{Discussion of the Results}
% \label{sec:discussion}
% \input{discussion}

\section{Related Work}
\label{sec:relatedwork}
\paragraph{On Layout Randomization. }
The first work that provided  a formal account of layout randomization was by \Citet{Abadi}, later extended in \cite{Abadi2,Abadi3}. In these works, the authors show that layout randomization prevents, with high probability, malicious programs from accessing the memory of a victim in an execution context with  shared address space. We have already discussed this in the body of the
paper how these results do not
model speculative execution or side-channel observations.

%We have already discussed in the body of the paper how these results do not apply to real-world kernels, nor do they model speculative execution or side-channel observations.

\paragraph{Spatial Memory Safety and Non-Interference} Spatial memory safety is typically defined by associating a software component with a memory area and requiring that, at runtime, it only accesses that area~\cite{HighAssurance,SoftBound,MSWasm}. \Citet{PierceMS} demonstrated that memory safety can be expressed in terms of non-interference; this property, in turn, stipulates that the final output of a computation is not influenced by secret data that a program must keep confidential~\cite{NonInt}. Both of these properties have been extended to the speculative model. The definition of \emph{speculative memory safety} from~\cite{HighAssurance} closely aligns with ours, while \emph{speculative non-interference} was initially introduced in the context of the \textsc{Spectector} symbolic analyzer~\cite{Spectector}. \textsc{Spectector}'s property captures information flows to side-channels that occur with speculative execution but not in sequential execution. In contrast to \textsc{Spectector}'s approach, our definition aligns with \emph{speculative constant-time}~\cite{CTFundations}, as it specifically targets information leaks that occur with speculative semantics.

% Local Memory Via Layout Randomization and then recursively: they say that  memory secrecy is essential for protection is 

% \paragraph{On System's Non Interference. } \Citet{SLNonInt} deploy a model that closely resembles the non speculative fragment of ours, as it models both side-channel and distinct execution modes. In particular, they are interested in studying the effectiveness of CT to prevent host-host attacks in virtualization platforms. For this reason, they model the presence of two distinct privilege levels and they model \emph{hypercalls} as a way to execute operations with high privileges. Then, they are able to show that if one oh the hosts is constant-time then the system enjoys a form of non-interference between different OSs running in unprivileged mode on the system. For this reason, the purposes of this work is different from our: in~\cite{SLNonInt} the victim and the attacker have the same levels of privilege and the role of the hypervisor is to ensure their separation whilst, in our work, the privileged codebase is the victim.  

\paragraph{Formal Analysis of Security Properties of Privileged Execution Environments. } \Citet{SLNonInt} deploy a model with side-channel leaks and privileged execution mode, without specualtive execution. In particular, they are interested in studying the preservation of constant-time in virtualization platforms. They also model privilege-raising procedures \emph{hypercalls}, similar to our system calls. They show that if one of the hosts is constant-time then the system enjoys a form of non-interference with respect to that host's secret memory. For this reason, although the two models are similar, the purposes of \Citet{SLNonInt} and our work are different: in~\cite{SLNonInt} the victim and the attacker have the same levels of privilege and the role of the hypervisor is to ensure their separation whilst, in our work, the privileged code base is itself the victim.  

\paragraph{Attacks to Kernel Layout Randomization} Attacks that aim at leaking information on the kernel's layout are very popular and can rely on implementation bugs that reveal information the kernel's layout~\cite{Uncontained,KernelBugs,KernelElasticObj} or on side-channel info-leaks~\cite{cacheKASLR,TagBleed, EntryBleed, kaslrfbr, kaslrfbr}. In particular attacks such as EchoLoad, TagBleed and EntryBleed~\cite{TagBleed, EntryBleed, EchoLoad} are successful even in presence of state-of-art mitigations such as Intel's Page Table Isolation (PTI)~\cite{PTI}. These attacks motivate our decision to take into account side-channel info-leaks. Due to address-space separation between kernel and user space programs, an attacker cannot easily use a pointer to a kernel address to access the victim's memory. So, in general, if the attacker does not control the value of a pointer that is used by the victim, this kind of leak is not harmful.

The Meltdown attack~\cite{Meltdown} uses speculative execution to overcome this limitation on operating systems running on Intel processors that do not adopt KAISER~\cite{Kaiser} or PTI~\cite{PTI}. In particular, the hardware can speculatively access an address before checking its permissions. The attack uses this small time window to access kernel memory content and leak it by using a side-channel info-leak gadget. These attacks can also be used to leak information on the layout: by dereferencing pointers under transient execution, the whole kernel's address space can be brute-forced without crashing the system. Due to the adoption of PTI~\cite{PTI}, this kind of attack is mitigated by removing most of the kernel-space addresses from the page tables of user-space programs. The BlindSide attack~\cite{BlindSide} overcomes this issue by probing directly from kernel-space. Similar attacks can be mounted by triggering different forms of mispredictions~\cite{BHInjection,SPEAR}. Arm's Pointer Authentication~\cite{ARMpa} is a technique that prevents forging pointers by extending them with an authentication code and raising an error if the code is violated. This can be used to deploy protections similar to layout randomization, but \Citet{PACMAN} showed that by leveraging speculative execution, it is possible to brute-force the authentication codes.

% \paragraph{Relation Between Speculative- and Classic-model} 
% Our work follows a well-established research line in the study of the interaction between security properties in the \emph{speculative model} with the \emph{classic model}.
% A seminal work in this field has been the development of the \textsc{Spectector} checker~\cite{Spectector} symbolic analyzer, that is aimed detecting information flows to side-channel that occur with speculative execution but do not occur in sequential execution. Blade~\cite{Blade} is a protection mechanism which is aimed at preventing speculative data-flows by selectively stopping speculations. The authors are able to show that by applying this mechanism all those program that are constant-time in the sequential model, are constant-time in the speculative model also. Finally, \textsc{ProSpeCT}~\cite{ProSpeCT} is an open-source RISC-V processor that is able to lift constant-time guarantees from the sequential model to the classic model, relying on taint-tracking and explicit secret annotations.  
\paragraph{Relation Between Security in the Speculative- and Classic-model} 
% Our work follows a well-established research line in the study of the interaction between security properties in the \emph{speculative model} with the \emph{classic model}. 
Blade~\cite{Blade} is a protection mechanism which is aimed at preventing speculative data-flows by selectively stopping speculations. The authors show that, with this mechanism, all those program that are constant-time in the sequential model, are constant-time in the speculative model too. This is similar to what we do in \Cref{sec:sksenforcement}, by imposing \emph{speculative kernel safety} on a system that enjoys \emph{kernel safety}. \textsc{ProSpeCT}~\cite{ProSpeCT} is an open-source RISC-V processor that ensures a similar guarantee: each program that is constant-time in the classical model remains \emph{constant-time} even when executed on that processor. This protection relies on taint-tracking and requires explicit annotations on the security level of programs' data.  

\paragraph{Protections against Speculative Data Leaks} Commonly, speculative attacks are
aimed at leaking its victim's secret data~\cite{Spectre, Meltdown, ProSpeCT, Blade, Spectector, Contracts, HighAssurance, CTFundations}. As a consequence,
many of the conventional mitigations against speculative attacks are aimed at
preventing secret data from leaking during speculative execution. For instance,
Speculative Load Hardening~\cite{SLH} is a software protection measure which,
in its simplest form, sets each value that is loaded from memory during transient execution
to a constant value. By doing so, this mechanism does not prevent these value to be loaded --- i.e.
its application to the kernel would not prevent the attacker from breaking speculative kernel safety.
Together with the above-mentioned \textsc{ProSpeCT}, other hardware-level
taint-tracking based mechanisms have been deployed to prevent speculative leaks~\cite{STT,NDA}. These mechanisms limit the speculative execution of
load instructions with different levels of strictness, ranging from completely preventing
the execution of these instructions (\cite{NDA}, strict propagation and load restriction mode),
from just prohibiting the propagation of the loaded value~\cite{STT}.
Although this approach is promising, the above-mentioned mechanisms do
not impose limitations on the speculative execution of indirect branches that may be used
by attackers in practice to break speculative kernel safety. 

% This was also shown by the Hacking Blind attack~\cite{ApacheAttack}. There, the authors show how the automatic restart of a process can be dangerous if the attacker controls the value of a pointer, because at each restart he can perform a new probe and eventually succeed in breaking the protection offered by randomization. 

%%% Local Variables: 
%%% mode: latex
%%% TeX-master: "paper.tex"
%%% End:

% 1) Marton's mitigation + ordinary CT shoud entail safety

% 2) We need to discuss memory safety + fences on memory accesses + in CHERI we can acheive this

% corollary: all the abadi's victim, when fenced are safe syscalls.

% Cache is too hidden!! motivation section

%

\section{Conclusion}
\label{sec:conclusion}
We have formally demonstrated that kernel's layout randomization probabilistically ensures kernel safety for a classic model, where
an attacker cannot compromise the 
system via speculative execution or side-channels, and 
 users of an operating system execute without privileges, but 
victims can feature pointer arithmetic, introspection, and indirect jumps.

We have also shown that the protection offered by layout randomization does not naturally
scale against attackers that can control speculative execution and side-channels, and stipulate a sufficient condition 
to enforce kernel safety in the Spectre era.
We also propose mechanisms based on program transformations that provably enforce
 {speculative kernel safety} on a system, provided that this system
already enjoys {kernel safety} in the classic model.  % a notion of safety in the classic model
%that cannot be provided solely by layout randomization.
To the best of our knowledge, our work is the first to formally investigate and provide ways to achieve kernel safety in the presence of speculative and side-channel vulnerabilities. 
 
This work prepares the ground for future developments such as
modeling more expressive attacker models, e.g.
attackers speculating on the branch target buffer, related to Spectre v2~\cite{Spectre},
optimizing the instrumentation we presented in \Cref{sec:sksenforcement},
and assessing its overhead on a real operating system.

An orthogonal research direction is to study more fine-grained
safety properties instead of \emph{(speculative) kernel safety} ---
e.g. by distinguishing violations of CFI from violations
of spatial safety, and data integrity from confidentiality, akin to what happens in~\cite{FormalizingSafety}. In this direction, it would also be interesting to model
a call stack in order to determine whether the safety of kernel's
stack can be granted under other, possibly weaker, conditions.  

Finally, a valuable future development is to extend our execution model with other features
that are often used to undermine operating systems' security, such as dynamic memory
allocation and dynamic module loading, with the aim to establish their impact on
system's safety in presence of speculative attackers.

%%% Local Variables: 
%%% mode: latex
%%% TeX-master: "paper.tex"
%%% End:

\section*{Acknowledgments}
\label{sec:acks}

We are grateful to Gilles Barthe, M\'arton Bogn\'ar, Ugo Dal Lago, Lesly-Ann Daniel, Benjamin Gr\'egoire, Jean-Pierre Lozi, and Frank Piessens for their comments on an early draft of the paper. This work was partially supported through the projects PPS ANR-19-C48-0014 and UCA DS4H ANR-17-EURE-0004, and by the Wallenberg AI, Autonomous Systems and Software Program (WASP) funded by the Knut and Alice Wallenberg Foundation.
  
\bibliographystyle{plainnat}
\bibliography{bibliography}

\pagebreak

\appendix
\section{Appendix}
\label{sec:appendix}
\subsection{Appendix for \Cref{sec:safety1}}
\label{sec:appsafety1}

\subsubsection{Omitted Proofs and Results}
\label{sec:proofs1}

We begin by introducing additional notation that will be used throughout this section. We then present the proof of \Cref{thm:scenario1}, followed by the proofs of the intermediate results upon which this proof relies.

\paragraph*{Semantics' Notation}
We write $\update{{\rfs}}{(\ar, i)} \val$ to denote the store that is identical to $\rfs$ except for the array $\ar$, which remains pointwise equal to $\rfs(\ar)$ except at index $i$, where it is updated to $\val$.

We use the notation $\nf \confone{}$ when $\confone$ does not reduce
in system $\system$ and using layout $\lay$.
$\nstep{!} \confone \conftwo$ to represent the formula
$\exists \nat'. \nstep{\nat'} \confone \conftwo \land \nf \conftwo {}$. Similarly, we use $\nstep{!\nat} \confone \conftwo$ to
denote the formula
$\exists \nat' \leq \nat. \nstep{\nat'} \confone \conftwo \land
\nf \conftwo {}$.

Additionally, we generalize the function $\Eval\cdot$ to take full configurations as arguments. This allows us to use it in the following way:
\[
  \Eval {\conf{\st, \lay\lcomp\rfs}} \defsym
  \begin{cases}
    \Omega
    & \text{if $\diverge{\conf{\st, \lay\lcomp\rfs}}$},\\
    (\val,\store')
    & \text{if } \lay \red \conf{\st, \lay\lcomp\rfs}  \to^* \conf{\frame{\cnil}{\regmap[\ret \mapsto \val]}{\opt}, \lay \lcomp \store'}, \\
    \err
    & \text{if } \lay \red \conf{\st, \lay\lcomp\rfs} \to^* \err, \\
    \unsafe
    & \text{if } \lay \red \conf{\st, \lay\lcomp\rfs} \to^* \unsafe.
  \end{cases}
\]
\noindent
We also extend the relation $\evaleq$ by stating that:
\[
(\val,\store) \evaleq \conf{\frame{\cnil}{\regmap[\ret \mapsto \val]}{\opt}, \lay \lcomp \store}
\]
holds for every layout $\lay$, register map $\regmap$, and every flag $\opt$. Furthermore, we take its symmetric and transitive closure.

\paragraph*{User- and Kernel-mode stack}
We write $\um(\cmd)$ as a shorthand for $\ids(\cmd) \subseteq \Idu$. The predicate  $\ids(\st)$ is defined inductively as follows:
\[
  \um(\nil)\defsym \top \quad \um(\fr:\st)\defsym \um(\fr) \land \um(\st)\quad \um(\frame{\cmd}{\regmap}{\opt}) \defsym \um(\cmd) \land \opt=\um. 
\]
The predicate $\km[\syscall]$ is defined analogously:
\[
  \km[\syscall](\nil)\defsym \top \quad \km[\syscall](\fr:\st)\defsym \km[\syscalltwo](\fr) \land \km[\syscall](\st)\quad \km[\syscall](\frame{\cmd}{\regmap}{\opt}) \defsym \km(\cmd) \land \opt=\km[\syscall]. 
\]
We write $\km(\cdot)$ as a shorthand for $\exists \syscall \in \Sys. \km[\syscall](\cdot)$.

\begin{proof}[Proof of \Cref{thm:scenario1}]
  Let $\system=(\rfs, \syss, \caps)$ be a system. We aim to show that for every \emph{unprivileged} command $\cmd$,
register map $\regmap$, and layout distribution $\mu$, the following holds:
\[
    \Pr_{\lay \leftarrow\mu}\left[\Eval {\conf{\frame{\cmd}{\regmap}{\um}, \lay \lcomp \rfs}} = \unsafe \right]=
      \Pr_{\lay \leftarrow\mu}\left[\nstep! {\conf{\frame{\cmd}{\regmap}{\um}, \lay \lcomp \rfs}} \unsafe \right] \le
    1-\delta_\mu.
  \]
  If the probability is not 0, by \Cref{lemma:premiseApreserv} (that we can apply because all the system calls in $\system$ are \emph{layout non-interferent}, and because all the layouts are identical with respect to user-space identifiers), we deduce that:
  \[
    \forall \lay. \Eval[\system][\lay]{\conf{\frame{\cmd}{\regmap}{\um}, \lay \lcomp \rfs}} \evaleq \unsafe.
  \]
  This proposition can be rewritten as:
  \[
    \forall \lay. \Eval[\system][\lay]{\conf{\frame{\cmd}{\regmap}{\um}, \lay \lcomp \rfs}} \in \{\unsafe, \err\},
  \]
  which allows us to deduce that:
  \[
    \forall \lay. \exists \conftwo_\lay, \nat_\lay. \nstep{!\nat_\lay} {\conf{\frame{\cmd}{\regmap}{\um}, \lay \lcomp \rfs}} {\conftwo_\lay}.
  \]
  Since memory size is finite, also $\Lay$ is finite, so there is a natural number $\overline \nat$ such that $\overline \nat \ge \max_{\lay \in \Lay}{\nat_\lay}$.
  Thus, we obtain:
  \[
    \forall \lay. \exists \conftwo_\lay. \nstep{!\overline \nat} {\conf{\frame{\cmd}{\regmap}{\um}, \lay \lcomp \rfs}} {\conftwo_\lay}.
  \]
  Applying \Cref{lemma:mainlemmapms} to the system $\system$, the
  bound $\overline \nat$, the unprivileged attacker $\cmd$, the
  initial register map $\regmap$, the distribution $\mu$, and to the last
  intermediate claim, we deduce that one of the following statements
  holds:
  \begin{varitemize}
  \item There are a register map $\regmap'$ and a store $\rfs'\eqon{\Fun}\rfs$
    such that for every layout $\lay$:
    \[
      \nstep {!\overline \nat}{\conf{\frame{\cmd}{\regmap}{\um}, \lay \lcomp \rfs}}{\conf{\frame{\cnil}{\regmap}{\um}, \lay \lcomp \rfs'}},
    \]
  \item
    $
      \Pr_{\lay \leftarrow \mu}\left[\nstep {!\overline \nat} {\conf{\frame{\cmd}{\regmap}{\um}, \lay \lcomp \rfs}}\err \right] \ge \delta_\mu.
    $
  \end{varitemize}
  We go by cases on these two statements. If the first one holds,
  then for every layout $\lay$: 
  \[
    \nstep {!}{\conf{\frame{\cmd}{\regmap}{\um}, \lay \lcomp \rfs}}{\conf{\frame{\cnil}{\regmap'}{\um}, \lay \lcomp \rfs'}}.
  \]
  Consequently, we have
  \[
    \Pr_{\lay \leftarrow \mu}\left[    \nstep {!}{\conf{\frame{\cmd}{\regmap}{\um}, \lay \lcomp \rfs}}{\conf{\frame{\cnil}{\regmap'}{\um}, \lay \lcomp \rfs'}} \right] = 1.
  \]
  Since the final configuration cannot be contemporary $\unsafe$
  and ${\conf{\frame{\cnil}{\regmap'}{\um}, \lay \lcomp \rfs'}}$, we conclude:
  \begin{equation*}
    \Pr_{\lay \leftarrow \mu}\left[    \nstep {!}{\conf{\frame{\cmd}{\regmap}{\um}, \lay \lcomp \rfs}}{\unsafe} \right] = 0.
  \end{equation*}
  This established the claim. If the second proposition holds then:
  \begin{equation*}
    \Pr_{\lay \leftarrow \mu}\left[\nstep {!} {\conf{\frame{\cmd}{\regmap}{\um}, \lay \lcomp \rfs}}\err \right] \ge\\
    \Pr_{\lay \leftarrow \mu}\left[\nstep {!\overline \nat} {\conf{\frame{\cmd}{\regmap}{\um}, \lay \lcomp \rfs}}\err \right] \ge \delta_\mu,
  \end{equation*}
  and therefore we conclude that:
  \begin{equation*}
    \Pr_{\lay \leftarrow \mu}\left[\nstep {!} {\conf{\frame{\cmd}{\regmap}{\um}, \lay \lcomp \rfs}}\unsafe \right] \le\\
    1- \Pr_{\lay \leftarrow \mu}\left[\nstep {!} {\conf{\frame{\cmd}{\regmap}{\um}, \lay \lcomp \rfs}}\err \right] \le 1-\delta_\mu.
  \end{equation*}
\end{proof}

\begin{lemma}[Preservation of \emph{layout non-interference}]
  \label{lemma:premiseApreserv}
  If every system call in $\system$ is \emph{layout non-interferent},
  then the entire system is also \emph{layout non-interferent} with
  respect to \emph{unprivileged} attackers, in the following sense:
  for every \emph{unprivileged} attacker $\cmd$, register map
  $\regmap$, pair of layouts $\lay_1, \lay_2$, and configuration
  $\confone_1$ we have that:
  \[
    \Eval[\system][\lay_1]{\conf{\frame{\cmd}{\regmap}{\um}, \lay_1\lcomp \rfs}} \evaleq \Eval[\system][\lay_2]{\conf{\frame{\cmd}{\regmap}{\um}, \lay_2\lcomp \rfs}}
  \]  
\end{lemma}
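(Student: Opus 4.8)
The plan is to establish a lockstep simulation between the two executions in which invocations of system calls are treated as atomic macro-steps whose outcome is governed by the evaluation function, so that \Cref{def:lni} applies directly. Throughout, we use the paper's standing assumption that the two layouts agree on user-space identifiers, $\lay_1 \eqon{\Idu} \lay_2$; this is essential --- without it an unprivileged attacker could simply return the address of a user-space array, breaking the conclusion.

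First I would isolate a \emph{sandboxing invariant} for unprivileged execution. Since $\cmd$ is unprivileged ($\ids(\cmd)\subseteq\Idu$) and every user-space procedure body $\rfs(\fn)$, $\fn\in\Fnu$, is unprivileged by the well-formedness of $\system$, any configuration reachable from $\conf{\frame{\cmd}{\regmap}{\um},\lay\lcomp\rfs}$ without an active system call on the stack has an all-user-mode call stack carrying only user-space identifiers. From this and the semantic rules: (i) memory reads and writes target user-space arrays only --- otherwise rules \ref{WL:Load-Error}/\ref{WL:Store-Error} divert to $\err$ --- and $\underline{\lay_1}(\Aru)=\underline{\lay_2}(\Aru)$; (ii) procedure calls target user-space procedures, whose bodies are immutable since updates cannot store commands (W\^{}X), so the running store always satisfies $\store\eqon{\Fn}\rfs$; (iii) system calls act only within their capabilities $\subseteq\Idk$, hence user arrays are modified exclusively by user-mode code. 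Combining (i)--(iii), two executions under $\lay_1$ and $\lay_2$ started from the \emph{same} user configuration (same call stack, same store) proceed in perfect lockstep, producing identical configurations, until one of three things occurs: both diverge in user mode; both terminate in user mode, either normally with the same pair $(\val,\store')$ or both with $\err$ (note $\unsafe$ is unreachable in user mode, since every $\unsafe$-rule requires kernel mode); or both are about to execute the same instruction $\csyscall{\syscall}{\vec\exprtwo}$ with the same argument values (the argument expressions mention only user identifiers and registers, which agree).

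Next I would treat the system-call macro-step. When both executions invoke $\syscall$ from the same user configuration with store $\store\eqon{\Fn}\rfs$, rule \ref{WL:SystemCall} pushes in both the frame $\frame{\syss(\syscall)}{\regmap_0'}{\km[\syscall]}$ for the same initial register map $\regmap_0'$, and layout non-interference of $\syscall$ gives
\[
  \Eval[\system][\lay_1]{\syss(\syscall),\regmap_0',\km[\syscall],\store}
  \evaleq
  \Eval[\system][\lay_2]{\syss(\syscall),\regmap_0',\km[\syscall],\store}.
\]
A case split on the common $\evaleq$-class finishes the step: if both sides equal the same $(\val,\store^\star)$ (again $\store^\star\eqon{\Fn}\rfs$), then both system-call frames reduce to a $\cnil$ frame returning $\val$ and are popped by \ref{WL:Pop}, restoring a common user configuration with the invariant intact, so the simulation continues; if the two sides are $\err$ and $\unsafe$ in some order, then --- because an error or safety violation raised inside a (possibly nested) call propagates to the whole configuration --- the two attacker executions terminate with $\err$ and $\unsafe$ respectively, which are $\evaleq$; if both are $\err$, or both $\unsafe$, or both $\Omega$, the two executions terminate (resp.\ diverge) alike. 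To lift this finitely-many-macro-steps simulation to the level of $\Eval$, I would observe that either the simulation runs forever --- so both executions perform infinitely many steps and $\Eval=\Omega$ on both sides --- or it halts after finitely many macro-steps in one of the terminal situations above; in every case the two evaluations are $\evaleq$-related.

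The main obstacle is the bookkeeping of the invariant across the two regimes rather than any isolated computation: one must pin down precisely what ``same configuration'' means (identical call stack including register maps, identical store, store agreeing with $\rfs$ on procedures), check its preservation under each user-mode rule, and --- the genuinely delicate point --- check its re-establishment after a system-call macro-step. The latter hinges on two facts deserving care: layout non-interference yields \emph{literal} equality of returned values and stores (so no kernel address ever leaks into a register or array), and an $\err$ or $\unsafe$ occurring deep inside a system call surfaces unchanged at the top level, so that the $\err\evaleq\unsafe$ clause of $\evaleq$ is genuinely witnessed by the attacker's overall evaluation.
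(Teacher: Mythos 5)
Your proposal is correct and follows essentially the same route as the paper's proof: a lockstep argument on user-mode configurations (justified by the layouts agreeing on $\Idu$ and the all-user-mode stack invariant with $\store\eqon{\Fn}\rfs$), treating each system-call invocation as an atomic macro-step resolved by \Cref{def:lni} together with a context-plugging lemma to propagate $\err$/$\unsafe$ and return values back to the attacker's frame. The paper formalizes the simulation as an induction on the reduction length of one side and recovers the divergence case by a symmetry argument, whereas you phrase it symmetrically from the start, but the invariant, the case analysis, and the key lemmas are the same.
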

\begin{proof}
  The main claim is a consequence of an auxiliary claim, namely:
    \begin{equation*}
    \nstep[\system][\lay_1]!{\conf{\frame{\cmd}{\regmap}{\um}, \lay_1\lcomp \rfs}}{\confone_1} \Rightarrow
    \Eval[\system][\lay_2]{\conf{\frame{\cmd}{\regmap}{\um}, \lay_2\lcomp \rfs}} \evaleq \confone_1
    \tag{C}
  \end{equation*}
  To prove (C), we establish a slightly stronger statement: instead of
  quantifying over $\cmd$ and $\regmap$, we generalize to a non-empty
  stack $\st$ such that $\um(\st)$, and we rewrite the claim in a more
  convenient shape, where all the meta-variables that are not
  quantified explicitly are quantified universally.
  \begin{equation*}
    \forall \nat. \forall \rfs'\eqon{\Fun} \rfs. \left(\nstep[\system][\lay_1]\nat{\conf{\st, \lay_1\lcomp \rfs'}}{\confone_1} \land \final[\system][\lay_1] {\confone_1}\right) \Rightarrow\\
    \Eval[\system][\lay_2]{\conf{\st, \lay_2\lcomp \rfs'}} \evaleq \confone_1
    \tag{C'}
  \end{equation*}
  Once (C') is established, we derive (C) by instantiating $\rfs'$ as
  $\rfs$ and $\st$ as
  ${\conf{\frame{\cmd}{\regmap}{\um}, \lay_1\lcomp \rfs}}$. We fix
  $\nat$ as the number of steps taken by the reduction from the
  initial configuration to the terminal configuration
  $\confone_1$. From (C), we can deduce the main claim by case
  analysis:
  \begin{proofcases}
    \proofcase{$\Eval[\system][\lay_1]{\conf{\frame{\cmd}{\regmap}{\um},
          \lay_1\lcomp \rfs}} = \Omega$} In this case, the value of
    $\Eval[\system][\lay_2]{\conf{\frame{\cmd}{\regmap}{\um},
        \lay_2\lcomp \rfs}}$ must also be $\Omega$. If this was not
    the case, we could apply (C) with $\lay_1$ and $\lay_2$ swapped to
    prove that
    $\Eval[\system][\lay_1]{\conf{\frame{\cmd}{\regmap}{\um},
        \lay_1\lcomp \rfs}}\neq \Omega$.
    \proofcase{$\Eval[\system][\lay_1]{\conf{\frame{\cmd}{\regmap}{\um},
          \lay_1\lcomp \rfs}} \neq \Omega$} In this case, the
    conclusion is a direct consequence of (C).
  \end{proofcases}

  Now that we showed how the main claim can be deduced from (C'), we
  can focus on the proof of (C').  In the proof we will extensively
  use the assumption on the layouts:
  \[
    \forall \id \in \Idu.\lay_1(\id)=\lay_2(\id),
    \tag{\dag}
  \]
  For this reason, we fix it on top and name it (\dag). The proof goes by induction on $\nat$.
  \begin{proofcases}
    \proofcase{0} We assume
    \[
      \nstep[\system][\lay_1]0{\conf{\st, \lay_1\lcomp \rfs'}}{\confone_1} \land \final[\system][\lay_1] {\confone_1}.
    \]
    This implies that $\confone_1 = \conf{\st, \lay_1\lcomp
      \rfs'}$. From this last observation, and since $\confone_1$ is
    terminal, we deduce that $\st = \frame{\cnil}{\regmap'}{\um}$.  By
    examining the semantics, we also establish that
    \[
      \final[\system][\lay_2]{\conf{\st, \lay_2\lcomp \rfs'}},
    \]
    so we conclude that
    $\Eval[\system][\lay_2]{\conf{\st, \lay_2\lcomp \rfs'}} =
    (\regmap'(\ret),\rfs') \evaleq \confone_1$.
    \proofcase{$\nat+1$}
    Observe that $\st \neq \nil$, as it would contradict the assumption
    \[
      \nstep[\system][\lay_1]{\nat+1}{\conf{\st, \lay_1\lcomp \rfs'}}{\confone_1} \land \final[\system][\lay_1] {\confone_1}.
      \tag{H}
    \]
    Thus, we assume $\st = \frame \cmd \regmap \um :\st'$, and proceed by case analysis on $\cmd$.  
    Note that the induction hypothesis coincides syntactically with (C').
    \begin{proofcases}
      \proofcase{$\cnil$} In this case, we rewrite the first part of (H) as follows:
      \begin{equation*}
        \step[\system][\lay_1] {\conf{\frame{\cnil}{ \regmap}{\um}:\st', \lay_1\lcomp \rfs'}}{}
        \conf{\st'', \lay_1\lcomp \rfs'}\to^\nat{\confone_1},
      \end{equation*}
      where $\st''$ is obtained by updating the topmost register map
      in $\st'$ with the return value from $\regmap$.  Since
      $\um(\st')$ holds, also $\um(st'')$ does, so we can apply the IH
      and conclude that
      $ \Eval[\system][\lay_2]{\conf{\st'', \lay_2\lcomp \rfs'}}
      \evaleq \confone_1 $.  Thus, to conclude the proof, it suffices
      to observe that
      $ \step[\system][\lay_2]{\conf{\frame{\cnil}{\regmap}{\um}:\st',
          \lay_2\lcomp \rfs'}}{} {\conf{\st'', \lay_2\lcomp \rfs'}}, $
      which follows by introspection of the semantics.
      \proofcase{$\vx \ass\expr\sep \cmdtwo$} In this case, we rewrite
      the first part of (H) as follows:
      \begin{equation*}
        \step[\system][\lay_1] {\conf{\frame{\vx \ass\expr\sep\cmdtwo}{ \regmap}{\um}:\st', \lay_1\lcomp \rfs'}}{}\\
        {\conf{\frame{\cmdtwo}{\update \regmap\vx {\sem \expr_{\regmap, \lay_1}}}{\um}:\st', \lay_1\lcomp \rfs'}}\to^\nat{\confone_1}
      \end{equation*}
      observe that
      $\um(\frame{\cmdtwo}{\update \regmap\vx {\sem \expr_{\regmap,
            \lay_1}}}{\um}:\st')$ holds, so we can apply the IH and
      conclude that
      \[
        \Eval[\system][\lay_2]{\conf{\frame{\cmdtwo}{\update \regmap\vx {\sem \expr_{\regmap, \lay_1}}, }{\um}:\st', \lay_2\lcomp \rfs'}} \evaleq \confone_1
      \]
      This means that, in order to conclude the proof, it suffices to observe that
      \begin{equation*}
        \step[\system][\lay_2]{\conf{\frame{\vx \ass\expr\sep\cmdtwo}{\regmap}{\um}:\st', \lay_2\lcomp \rfs'}}{}\\
        {\conf{\frame{\cmdtwo}{\update \regmap\vx {\sem \expr_{\regmap, \lay_1}}}{\um}:\st', \lay_2\lcomp \rfs'}},
      \end{equation*}
      which, in turn, reduces to showing that
      $\sem \expr_{\regmap, \lay_1}=\sem \expr_{\regmap, \lay_2}$. By
      definition of $\um(\st)$, we have that
      $\vx \ass\expr\sep\cmdtwo$ is an unprivileged command, so, in
      particular all identifiers in $\expr$ belong to $\Idu$, hence,
      the $\sem \expr_{\regmap, \lay_1}=\sem \expr_{\regmap, \lay_2}$
      follows from \Cref{rem:expreval}, and (\dag).
      \proofcase{$\cskip\sep \cmdtwo$} Analogous to the case of
      assignments.
      \proofcase{$\cif \expr {\cmd_\bot}{\cmd_\top}\sep\cmdtwo$}
      Analogous to the case of assignments.
      \proofcase{$\cwhile \expr {\cmd}\sep\cmdtwo$} Analogous to the
      case of assignments.
      \proofcase{$\cmemass \expr \exprtwo\sep \cmdtwo$} In this case,
      we start by observing that
      $\sem \expr_{\regmap, \lay_1}=\sem \expr_{\regmap, \lay_2}$ by
      \Cref{rem:expreval}. This also implies
      $\toAdd{\sem \expr_{\regmap, \lay_1}}=\toAdd{\sem
        \expr_{\regmap, \lay_2}}$. Let
      $\toAdd{\sem \expr_{\regmap, \lay_1}}= \add$. By
      \Cref{rem:expreval}, we also know that
      $\sem \exprtwo_{\regmap, \lay_1}=\sem \exprtwo_{\regmap,
        \lay_2}$, and we denote this value $\val$.
      From the assumption (\dag) we deduce that
      $\underline \lay_1(\Ar[\um])=\underline \lay_2(\Ar[\um])$, and
      we denote this set $P$.  We go by cases on $\add \in P$.
      \begin{proofcases}
        \proofcase{$\add \in P$} In this case, the rule \ref{WL:Store}
        applies to both the configurations
        \[
        {\conf{\frame{\cmemass  \expr \exprtwo\sep \cmdtwo}{\regmap}{\um}:\st', \lay_1\lcomp \rfs'}}
        \]
        and
        \[
        {\conf{\frame{\cmemass  \expr \exprtwo\sep \cmdtwo}{\regmap}{\um}:\st', \lay_2\lcomp \rfs'}}
        \]
        under the layouts $\lay_1$ and $\lay_2$ respectively, obtaining:
        \[
          {\conf{\frame{\cmdtwo}{\regmap}{\um}:\st', \update{\lay_1\lcomp \rfs'}{\add}{\val}}}
        \]
        and
        \[
          {\conf{\frame{\cmdtwo}{\regmap}{\um}:\st', \update{\lay_2\lcomp \rfs'}{\add}{\val}}}.
        \]
        Since $\add \in \underline \lay_1(\Ar[\um])$
        and by definition of $\underline \lay_1$, we deduce that
        there exist $\ar \in \Ar[\um]$
        and $0\le i\le\size\ar$ such that $\lay_1(\ar)+i = \add$.
        By (\dag), we also deduce that $\lay_2(\ar)+i = \add$,
        so we can apply \Cref{rem:memupdtostupd}
        in order to show that:
        \[
          \update{\lay_1\lcomp \rfs'}{\add}{\val} =
          \lay_1\lcomp \update{\rfs'}{(\ar, i)}{\val}
        \]
        and that 
        \[
          \update{\lay_2\lcomp \rfs'}{\add}{\val} =
          \lay_1\lcomp \update{\rfs'}{(\ar, i)}{\val}.
        \]
        Finally, since $\um({\frame{\cmdtwo}{\regmap}{\um}:\st'})$ holds
        and we have:
        \[
          \update{\rfs'}{(\ar, i)}\val\eqon{\Fn}{\rfs'}\eqon{\Fn}\rfs,
        \]
        we can apply the IH, and conclude the proof of this
        sub-derivation.  \proofcase{$\add \notin P$} In this case,
        from (\dag), we deduce that
        $\add \notin \underline \lay_2(\Ar[\um])$.  Therefore, the
        rule \ref{WL:Store-Error} applies, showing both:
        \[
          \step[\system][\lay_1]
          {\conf{\frame{\cmemass  \expr \exprtwo\sep \cmdtwo}{\regmap}{\um}:\st', \lay_1\lcomp \rfs'}}
          \err,
        \]
        and
        \[
          \step[\system][\lay_2]
          {\conf{\frame{\cmemass  \expr \exprtwo\sep \cmdtwo}{\regmap}{\um}:\st', \lay_2\lcomp \rfs'}}
          \err.
        \]
        This establishes the claim. 
      \end{proofcases}
      \proofcase{$\cmemread \vx \expr\sep \cmdtwo$} Analogous to the
      case of memory store operations.
      \proofcase{$\ccall \expr {\exprtwo_1, \dots, \exprtwo_k}\sep
        \cmdtwo$} This case is also similar to that of memory loads,
      but requires some non-trivial observations.  As before, we
      define $\add = \toAdd{\sem \expr_{\regmap,\lay_1}}$ and observe
      that $\add = \toAdd{\sem \expr_{\regmap,\lay_2}}$ because of
      \Cref{rem:expreval}. Similarly, we introduce the values
      $\val_1, \dots,\val_k$ which correspond to the semantics of
      $\exprtwo_1,\dots,\exprtwo_k$ evaluated under $\regmap$ and both
      layouts $\lay_1, \lay_2$. By (\dag), we have
      $\lay_1(\Fn[\um]) = \lay_2(\Fn[\um])$, so we denote this set
      $P_{\Fn[\um]}$, and we go by case analysis on
      $\add \in P_{\Fn[\um]}$.
      \begin{proofcases}
        \proofcase{$\add\in P_{\Fn[\um]}$} In this case, there exists
        $\fn \in \Fn[\um]$ such that $\lay_1(\fn)=\add$, and from
        (\dag) we deduce $\add=\lay_2(\fn)$. Using these intermediate
        conclusions and the definition of $\lcomp$, we deduce that
        $(\lay_1\lcomp \rfs')(\add) = \rfs'(\fn) = \rfs(\fn)$, (the
        last step follows from the assumption $\rfs'\eqon{\Fn}\rfs$),
        and similarly for $\lay_1\lcomp \rfs'(\add)$. Since
        $\add\in P_{\Fn[\um]}$, we deduce that rule \ref{WL:Call} can
        be applied, proving:
        \begin{equation*}
          \step[\system][\lay_1]
          {\conf{\frame{\ccall  \expr {\exprtwo_1, \dots, \exprtwo_k}\sep \cmdtwo}{\regmap}{\um}:\st', \lay_1\lcomp \rfs'}}{}\\
          {\conf{\frame{\rfs(\fn)}{\regmap_0'}{\um}:\frame{\cmdtwo}{\regmap}{\um}:\st', \lay_1\lcomp \rfs'}},
        \end{equation*}
        and
        \begin{equation*}
          \step[\system][\lay_2]
          {\conf{\frame{\ccall  \expr {\exprtwo_1, \dots, \exprtwo_k}\sep \cmdtwo}{\regmap}{\um}:\st', \lay_2\lcomp \rfs'}}{}\\
          {\conf{\frame{\rfs(\fn)}{\regmap_0'}{\um}:\frame{\cmdtwo}{\regmap}{\um}:\st', \lay_2\lcomp \rfs'}},
        \end{equation*}
        where $\regmap_0'= \regmap_0[\vx_1,\dots,\vx_k\upd \val_1,\dots,\val_k]$.
        Finally, by assumption on $\rfs$, we know that $\rfs(\fn)$ is an unprivileged program. So, in particular,
        \[
          \um({\frame{\rfs(\fn)}{\regmap_0'}{\um}:\frame{\cmdtwo}{\regmap}{\um}:\st'})
        \]
        holds. This allows us to apply the IH and conclude the proof.
        \proofcase{$p \notin P_{\Fn[\um]}$} Analogous to the corresponding case for store operations.
      \end{proofcases}
      \proofcase{$\csyscall \syscall{\expr_1, \ldots,
          \expr_k}\sep\cmdtwo$} This is perhaps the most interesting
      case of this proof. We begin by introducing the values
      $\val_1, \dots,\val_k$, which correspond to the
      semantics of $\exprtwo_1,\dots,\exprtwo_k$ evaluated under
      $\regmap$ as well as any of layouts $\lay_1, \lay_2$.These
      values do not depend on the layout because of \Cref{rem:expreval}. By
      introspection of the rule \ref{WL:SystemCall}, we deduce that
      both the following statements hold:
      \begin{equation*}
          \step[\system][\lay_1]
          {\conf{\frame{\csyscall  \syscall {\exprtwo_1, \dots, \exprtwo_k}\sep \cmdtwo}{\regmap}{\um}:\st', \lay_1\lcomp \rfs'}}{}\\
          {\conf{\frame{\syss(\syscall)}{\regmap_0'}{\km[\syscall]}:\frame{\cmdtwo}{\regmap}{\um}:\st', \lay_1\lcomp \rfs'}},
        \end{equation*}
        and
        \begin{equation*}
          \step[\system][\lay_2]
          {\conf{\frame{\csyscall \syscall {\exprtwo_1, \dots, \exprtwo_k}\sep \cmdtwo}{\regmap}{\um}:\st', \lay_2\lcomp \rfs'}}{}\\
          {\conf{\frame{\syss(\syscall)}{\regmap_0'}{\km[\syscall]}:\frame{\cmdtwo}{\regmap}{\um}:\st', \lay_2\lcomp \rfs'}},
        \end{equation*}
        where
        $\regmap_0'= \regmap_0[\vx_1,\dots,\vx_k\upd
        \val_1,\dots,\val_k]$. Unfortunately, we cannot apply the IH
        to the configuration
        \[
          {\conf{\frame{\syss(\syscall)}{\regmap_0'}{\km[\syscall]}:\frame{\cmdtwo}{\regmap}{\um}:\st', \lay_1\lcomp \rfs'}}
        \]
        because  $\um(\frame{\syss(\syscall)}{\regmap_0'}{\km[\syscall]}:\frame{\cmdtwo}{\regmap}{\um}:\st')$ does not hold. However, we can go by cases on
        \[
          \Eval[\system][\lay_1]{{{\syss(\syscall)},{\regmap_0'},{\km[\syscall]},\rfs'}}
        \]
        \begin{proofcases}
          \proofcase{$\Omega$} By \Cref{lemma:contextplugging}, we deduce that 
        \[
          \Eval[\system][\lay_1]{\conf{\frame{\syss(\syscall)}{\regmap_0'}{\km[\syscall]}:\frame{\cmdtwo}{\regmap}{\um}:\st', \lay_1\lcomp \rfs'}} = \Omega,
        \]
        which leads to a contradiction because we assumed
        \small
        \[
          \nstep[\system][\lay_1] {\nat+1} {\conf{\frame{\syss(\syscall)}{\regmap_0'}{\km[\syscall]}:\frame{\cmdtwo}{\regmap}{\um}:\st', \lay_1\lcomp \rfs'}}{\confone_1}\land \nf[\system][\lay_1] {\confone_1}.
        \]
        \normalsize
        \proofcase{$\unsafe$} From the \emph{layout non-intreference property} (\Cref{def:lni}), we deduce that
        \[
          \Eval[\system][\lay_2]{{\syss(\syscall), \regmap_0', \km[\syscall], \rfs'}} \in\{\err,\unsafe\}.
        \]
        We take $\err$ as an example. This means that 
        \[
          \exists h.\nstep[\system][\lay_2] h {\conf{\frame{\syss(\syscall)}{\regmap_0'}{\km[\syscall]}, \lay_2\lcomp \rfs'}}{\err}.
        \]
        By applying \Cref{lemma:contextplugging} to the reduction with $\lay_1$,
        we deduce that
        \[
          \nstep[\system][\lay_1] {\nat} {\conf{\frame{\syss(\syscall)}{\regmap_0'}{\km[\syscall]}:\frame{\cmdtwo}{\regmap}{\um}:\st', \lay_1\lcomp \rfs'}}{\unsafe}.
        \]
        By applying \Cref{lemma:contextplugging} to the
        reduction with $\lay_2$, we deduce that:
        \[
          \nstep[\system][\lay_2] {!} {\conf{\frame{\syss(\syscall)}{\regmap_0'}{\opt}:\frame{\cmdtwo}{\regmap}{\um}:\st', \lay_2\lcomp \rfs'}} {\err}.
        \]
        We conclude by observing that $\unsafe\evaleq \err$. The case with $\unsafe$ follows analogously.
        \proofcase{$\err$} Analogous to the previous case.
        \proofcase{$(\val, \mem)$}
        We start by observing that $\mem = \lay_1\lcomp \rfs''$ for some $\rfs''$ due to \Cref{rem:simispreserved}. Form the \emph{layout non-interference}
        property (\Cref{def:lni}),
        we deduce that
        \begin{equation*}
          \Eval[\system][\lay_2]{{\syss(\syscall)},{\regmap_0'},{\km[\syscall]},\rfs'} =
          {(\val,\rfs'')}.
        \end{equation*}
        By applying  \Cref{lemma:contextplugging} twice, we deduce that:
        \begin{multline*}
          \step[\system][\lay_1]
          {\conf{\frame{\csyscall  \syscall {\exprtwo_1, \dots, \exprtwo_k}\sep \cmdtwo}{\regmap}{\um}:\st', \lay_1\lcomp \rfs'}}{}\\
          {\conf{\frame{\syss(\syscall)}{\regmap_0'}{\km[\syscall]}:\frame{\cmdtwo}{\regmap}{\um}:\st', \lay_1\lcomp \rfs'}} \to^*\\
          {\conf{\frame{\cmdtwo}{\update\regmap\ret\val}{\um}:\st', \lay_1\lcomp \rfs''}}
        \end{multline*}
        and 
        \begin{multline*}
          \step[\system][\lay_2]
          {\conf{\frame{\csyscall  \syscall {\exprtwo_1, \dots, \exprtwo_k}\sep \cmdtwo}{\regmap}{\um}:\st', \lay_2\lcomp \rfs'}}{}\\
          {\conf{\frame{\syss(\syscall)}{\regmap_0'}{\km[\syscall]}:\frame{\cmdtwo}{\regmap}{\um}:\st', \lay_2\lcomp \rfs'}} \to^*\\
          {\conf{\frame{\cmdtwo}{\update\regmap\ret\val}{\um}:\st', \lay_2\lcomp \rfs''}}
        \end{multline*}
        Next, we observe that we assumed
        \begin{equation*}
          \nstep[\system][\lay_1]{n+1}
          {\conf{\frame{\csyscall  \syscall {\exprtwo_1, \dots, \exprtwo_k}\sep \cmdtwo}{\regmap}{\um}:\st', \lay_1\lcomp \rfs'}}{}
          {\confone_1},
        \end{equation*}
        and $\confone_1$ is terminal, so it must be the case that
        $\nstep[\system][\lay_1]!{\conf{\frame{\cmdtwo}{\update\regmap\ret\val}{\um}:\st',
            \lay_1\lcomp \rfs''}}{\confone_1}$. In particular, the
        reduction requires less than $\nat+1$ steps. From the
        assumption $\um(\st)$, we also deduce $\um(\st')$, and
        $\um(\frame{\cmdtwo}{\update\regmap\ret\val}{\um}:\st')$ and
        from \Cref{rem:simispreserved} we conclude that
        $\rfs''\eqon{\Fn}\rfs'\eqon{\Fn}\rfs$, so we can apply the IH
        to
        \[
          {\conf{\frame{\cmdtwo}{\update\regmap\ret\val}{\um}:\st', \lay_1\lcomp \rfs''}}
        \]
        and deduce that
        \[
          \Eval[\system][\lay_2]{\conf{\frame{\cnil}{\regmap'}{\opt_2}:\st', \lay_2\lcomp \rfs''}}\evaleq
          \confone_1,
        \]
        which concludes the proof.
      \end{proofcases}
    \end{proofcases}
  \end{proofcases}
\end{proof}

\begin{lemma}[Main Lemma]
  \label{lemma:mainlemmapms}
  Let $\system=(\rfs, \syss, \caps)$ be a \emph{layout non-interferent} system.
  For every $\nat \in \Nat$,
  \emph{unprivileged} program $\cmd$,
  register map $\regmap$, and distribution $\mu$,
  if
  \[
    \forall \lay\in \Lay. \exists \conftwo_\lay. \nstep{!\nat}{\conf{\frame{\cmd}{\regmap}{\um}, \lay \lcomp \rfs}} \conftwo_\lay,
  \]
  then one of the two following statements must hold: 
  \begin{varitemize}
  \item There are a register map $\regmap'$ and a store $\rfs'\eqon{\Fun}\rfs$
    such that, for every layout $\lay$, we have the reduction
    \[
      \nstep {! \nat}{\conf{\frame{\cmd}{\regmap}{\um}, \lay \lcomp \rfs}}{\conf{\frame{\cnil}{\regmap'}{\um}, \lay \lcomp \rfs'}}.
    \]
  \item
    $
      \Pr_{\lay \leftarrow \mu}\left[\nstep {! \nat} {\conf{\frame{\cmd}{\regmap}{\um}, \lay \lcomp \rfs}}\err \right] \ge \delta_\mu.
    $
  \end{varitemize}  
\end{lemma}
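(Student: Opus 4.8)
The plan is to prove the lemma by strong induction on $\nat$, after suitably generalizing the statement: instead of the particular initial configuration $\conf{\frame{\cmd}{\regmap}{\um},\lay\lcomp\rfs}$ I consider an arbitrary configuration $\conf{\st,\lay\lcomp\rfs'}$ in which $\st$ is a non-empty \emph{unprivileged} call stack (i.e.\ $\um(\st)$) and $\rfs'\eqon{\Fn}\rfs$ is any store agreeing with $\rfs$ on procedures, the target of the first disjunct being a one-frame stack $\conf{\frame{\cnil}{\regmap'}{\um},\lay\lcomp\rfs''}$ with $\rfs''\eqon{\Fn}\rfs$. This is essentially the same generalization used in the proof of \Cref{lemma:premiseApreserv}, now refined so as to keep track of the probability of reaching $\err$; as there, I use the standing assumption that all layouts under consideration agree on user-space identifiers, so that by \Cref{rem:expreval} the value $\sem{\expr}_{\regmap,\lay}$ of any expression occurring in an unprivileged stack is independent of $\lay$ and the address sets $\underline\lay(\Aru)$ and $\underline\lay(\Fnu)$ are fixed. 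Write $\delta_\mu$ for $\delta_{\mu,\system}$.

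For $\nat=0$ the initial configuration is already a normal form, and $\um(\st)$ forces $\st=\frame{\cnil}{\regmap'}{\um}$, so the first disjunct holds. In the inductive step, if $\conf{\st,\lay\lcomp\rfs'}$ is irreducible we conclude as in the base case; otherwise $\st$ has the form $\frame{\stat\sep\cmd''}{\regmap}{\um}:\st'$ (or its top command is $\cnil$ over a non-empty tail), and I do a case analysis on the leading instruction $\stat$ (resp.\ on rule \ref{WL:Pop}). For every instruction other than a system call the pattern is uniform. Either the first transition is deterministic and identical for all $\lay$ and yields a successor $\conf{\st_1,\lay\lcomp\rfs_1}$ that is again unprivileged with $\rfs_1\eqon{\Fn}\rfs$ and whose logical state (call stack and store) is layout-independent: for $\cnil$/\ref{WL:Pop}, assignment, $\cskip$, conditional and loop this is immediate from \Cref{rem:expreval}; for a load or store at an address $\add=\toAdd{\sem{\expr}_{\regmap,\lay}}$ lying in the fixed set $\underline\lay(\Aru)$ one uses \Cref{rem:memupdtostupd} to turn the memory update into the store update $\update{\rfs'}{(\ar,i)}{\val}$ with $\ar\in\Aru$ and $i$ fixed, the value loaded (the $i$-th component of $\rfs'(\ar)$) being likewise fixed; for a procedure call at an address in the fixed set $\underline\lay(\Fnu)$ the invoked body $\rfs'(\fn)=\rfs(\fn)$ is fixed and, since $\fn\in\Fnu$, unprivileged, so the pushed frame keeps the stack unprivileged. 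In all these subcases the hypothesis of the lemma transports to $\conf{\st_1,\lay\lcomp\rfs_1}$ with bound $\nat$, the induction hypothesis supplies one of the two disjuncts for it, and prepending the single deterministic step gives the corresponding disjunct for $\st$ (the prepended step neither branches on $\lay$ nor produces $\err$, so the probability in the error branch is unchanged). Otherwise $\add$ lies outside the relevant fixed set, rules \ref{WL:Load-Error}, \ref{WL:Store-Error} or \ref{WL:Call-Error} fire for \emph{every} $\lay$, so $\conf{\st,\lay\lcomp\rfs'}\to\err$ for all $\lay$ and the second disjunct holds with probability $1\ge\delta_\mu$.

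The only genuinely different case is $\stat=\csyscall{\syscall}{\vec\exprtwo}$. The first step (rule \ref{WL:SystemCall}) leads, for every $\lay$, to a configuration $\conf{\frame{\syss(\syscall)}{\regmap_0[\vec\vx\upd\vec\val]}{\km[\syscall]}:\st'',\lay\lcomp\rfs'}$ with $\vec\val$ fixed and $\st''=\frame{\cmd''}{\regmap}{\um}:\st'$; since its top frame is in kernel mode the induction hypothesis does not apply directly, so I invoke \Cref{lemma:onsyscallterm} for $\syscall$ with initial frame $\frame{\syss(\syscall)}{\regmap_0[\vec\vx\upd\vec\val]}{\km[\syscall]}$, store $\rfs'$ and distribution $\mu$, and distinguish its three cases. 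Case (C) is impossible: if the system call diverged under every layout, then by \Cref{lemma:contextplugging} so would this configuration, hence $\conf{\st,\lay\lcomp\rfs'}$, contradicting the hypothesis that it terminates within $\nat+1$ steps. In case (B) we have $\Pr_{\lay\leftarrow\mu}[\Eval[\system][\lay]{\syss(\syscall),\dots}=\err]\ge\delta_\mu$, and since $\Eval=\err$ entails, via \Cref{lemma:contextplugging} and the \ref{WL:SystemCall} step, that $\conf{\st,\lay\lcomp\rfs'}\to^{*}\err$ — necessarily within $\nat+1$ steps, by determinism and the termination hypothesis — the second disjunct follows. In case (A) there are a \emph{fixed} return value $\overline{\val}$ and store $\overline{\rfs}\eqon{\Fn}\rfs$ with $\Eval[\system][\lay]{\syss(\syscall),\dots}=(\overline{\val},\overline{\rfs})$ for all $\lay$ — these are layout-independent because layout non-interference equates the two evaluations under $\evaleq$ and $\evaleq$ relates a pair only to an identical pair; by \Cref{lemma:contextplugging} followed by a \ref{WL:Pop} step the configuration above reduces, for every $\lay$, to $\conf{\frame{\cmd''}{\update{\regmap}{\ret}{\overline{\val}}}{\um}:\st',\lay\lcomp\overline{\rfs}}$, which is unprivileged, has a layout-independent logical state and a store agreeing with $\rfs$ on procedures; since the system call consumed at least two steps, for every $\lay$ the remaining computation terminates within $\nat-1$ steps, so the induction hypothesis applies and, composing its conclusion with the deterministic, layout-independent, error-free prefix leading to this configuration, one obtains the required disjunct exactly as in the cases above.

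I expect the system-call case to be the main obstacle. Its delicate points are threading the probability bound of case (B) of \Cref{lemma:onsyscallterm} through \Cref{lemma:contextplugging} and the \ref{WL:SystemCall} step while staying within the $\nat+1$-step budget (this uses that the reductions involved are the actual, deterministic reductions of the terminating computation), and, after case (A), arguing that the resumed configuration has a layout-independent logical state and a \emph{uniform} residual step bound strictly below $\nat+1$ — both of which ultimately rest on layout non-interference forcing $\overline{\val}$ and $\overline{\rfs}$ to be independent of $\lay$ and on the system call consuming at least two steps so that the residual bound is strictly smaller.
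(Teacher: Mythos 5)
Your proposal is correct and follows essentially the same route as the paper's proof: the same generalization to an arbitrary non-empty unprivileged stack over a store agreeing with $\rfs$ on $\Fn$, induction on $\nat$ with a case analysis on the leading instruction, layout-independence of the logical state via \Cref{rem:expreval} and \Cref{rem:memupdtostupd} for the non-syscall cases, and the system-call case discharged by combining \Cref{lemma:onsyscallterm} with \Cref{lemma:contextplugging} exactly as you describe. The delicate points you flag (threading the probability bound through the context-plugging lemma and obtaining a strictly smaller residual step budget after a system call returns) are precisely the ones the paper's proof addresses, and in the same way.
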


\begin{proof}
  To facilitate the induction, we prove a slightly stronger statement. Specifically, we replace the quantification over $\cmd$ and $\regmap$ with a quantification over a non-empty stack $\st$ such that $\um(\st)$ holds. Additionally, we quantify over a general $\rfs' \eqon{\Fn} \rfs$ instead of $\rfs$, and we rewrite the claim in a more convenient form. In the following claim, all meta-variables that are not explicitly quantified are globally and universally quantified. As a result, we obtain the following claim:
  if $\forall \lay\in \Lay. \exists \conftwo_\lay. \nstep{!\nat}{\conf{\st, \lay \lcomp \rfs'}} \conftwo_\lay$, then:
  \begin{varitemize}
  \item $\exists \regmap', \rfs''\eqon{\Fn}\rfs'. \forall \lay\in \Lay.\nstep {! \nat}{\conf{\st, \lay \lcomp \rfs'}}{} {\conf{\frame{\cnil}{\regmap'}{\um}, \lay \lcomp \rfs''}}$, or
  \item $\Pr_{\lay \leftarrow \mu}\left[\nstep {! \nat} {\conf{\st, \lay \lcomp \rfs'}}\err \right] \ge \delta_\mu$.
  \end{varitemize}
  We call this auxiliary claim (C') and we proceed by induction on $\nat$.
  \begin{proofcases}
    %
    %  Case 0
    %
    \proofcase{0} In this case, we assume
    \[
      \forall \lay\in \Lay. \exists \conftwo_\lay. \nstep{!0}{\conf{\st, \lay \lcomp \rfs'}} \conftwo_\lay,
    \]
    and from this premise, we deduce that
    $\conf{\st, \lay \lcomp \rfs'}$ is terminal. Using the definition
    of the notation $\to^!$, the fact that $\um(\st)$ holds, and the
    assumption on the non-emptiness of $\st$, we conclude that
    $\st = \frame \cnil \regmap \um : \nil$ for some $\regmap$. Hence, the claim
    holds trivially by choosing the register map $\regmap$ and the
    store $\rfs'$.
    %
    %  Case n+1
    %
    \proofcase{$\nat+1$} In this case, we start by assuming
    \[
      \forall \lay\in \Lay. \exists \conftwo_\lay. \nstep{!\nat+1}{\conf{\st, \lay \lcomp \rfs'}} \conftwo_\lay
      \tag{H}
    \]
    and we go by cases on $\st$, excluding the case where $\st =\nil$ due to our assumption on the non-emptiness of $\st$. 
    Therefore, in the following, we assume that $\st = \frame \cmd \regmap \um :\st'$,
    and we proceed by cases on $\cmd$. Notice that the induction
    hypothesis coincides syntactically with (C').
    \begin{proofcases}
      \proofcase{$\vx \ass\expr\sep \cmdtwo$} In this case, we observe
      that there exists a value $\val$ such that, for every layout
      $\lay$, we have $\sem \expr_{\regmap, \lay} = \val$. More
      specifically, this follows from \Cref{rem:expreval}, since for
      every $\id \in \Idu$ and every pair of layouts $\lay_1, \lay_2$,
      it holds that $\lay_1(\id) = \lay_2(\id)$. In particular, all
      identifiers appearing in $\expr$ belong to $\Idu$.
      By introspection of the semantics, we deduce that for every layout $\lay$:
      \begin{equation*}
        \step {\conf{\frame{\vx \ass\expr\sep\cmdtwo}{\regmap}{\um}:\st', \lay\lcomp \rfs'}}{}\\
        {\conf{\frame{\cmdtwo}{\update \regmap\vx {\val}}{\um}:\st', \lay\lcomp \rfs'}}.
        \tag{$*$}
      \end{equation*}
      Notice that
      $\um(\frame{\cmdtwo}{\update \regmap\vx
        {\val}}{\um}:\st')$. Because of (H) and this observation, we
      can apply the IH to the stack in the target configuration and
      $\rfs'$ to conclude that one among (A) and (B) below holds.
      \begin{equation*}
        \exists \regmap', \rfs''\eqon{\Fn}\rfs'. \forall \lay\in \Lay.\\
        \nstep {! \nat}{\conf{\frame{\cmdtwo}{\update \regmap\vx {\val}}{\um}:\st', \lay \lcomp \rfs'}}{}\\
        {\conf{\frame{\cnil}{\regmap'}{\um}, \lay \lcomp \rfs''}}
        \tag{A}
      \end{equation*}
      \[
        \Pr_{\lay \leftarrow \mu}\left[\nstep {! \nat} {\conf{\frame{\cmdtwo}{\update \regmap\vx {\val}}{\um}:\st', \lay \lcomp \rfs'}}\err \right] \ge \delta_\mu.
        \tag{B}
      \]
      We go by cases on this disjunction.
      \begin{proofcases}
        \proofcase{A} In this case, we introduce $\regmap'$ and $\rfs''$ from the IH,
        we assume $\rfs''\eqon{\Fn}\rfs'$, we fix a layout $\lay$ and from
        ($*$) and (A), we conclude:
        \begin{equation*}
          \nstep{\nat+1} {\conf{\frame{\vx \ass\expr\sep\cmdtwo}{\regmap}{\um}:\st', \lay\lcomp \rfs'}}{}
          {\conf{\frame{\cnil}{\regmap'}{\um}, \lay \lcomp \rfs''}}.
        \end{equation*}
        Due to the generality of $\lay$, we can
        introduce universal quantification over all layouts, thereby
        completing the proof.

        \proofcase{B} We call $E$ the set of all the layouts $\overline \lay$ such that
        \[
          \nstep[\system][\overline \lay] {! \nat} {\conf{\frame{\cmdtwo}{\update \regmap\vx {\val}}{\um}:\st', \lay \lcomp \rfs'}}\err
        \]
        Notice that we can apply the assumption ($*$)
        to each of these layouts, showing that
        \[
          \forall \overline \lay\in E.\nstep[\system][\overline \lay]{!\nat+1} {\conf{\frame{\vx \ass\expr\sep\cmdtwo}{\regmap}{\um}:\st', \overline\lay\lcomp \rfs'}}{\err}.
        \]
        From (B), we know that the probability associated to the event $E$ is bigger than $\delta_\mu$,
        the last observation establishes the claim.
      \end{proofcases}
      \proofcase{$\cnil$} Analogous to the previous case.
      \proofcase{$\cskip\sep \cmdtwo$} Analogous to the case of
      assignments.
      \proofcase{$\cif \expr {\cmd_\bot}{\cmd_\top}\sep\cmdtwo$}
      Analogous to the case of assignments.
      \proofcase{$\cwhile \expr {\cmd}\sep\cmdtwo$} Analogous to the
      case of assignments.
      \proofcase{$\cmemass \expr \exprtwo\sep \cmdtwo$} In this case,
      we begin by observing that there exists a value $\val_\expr$
      such that for every $\lay \in \Lay$, we have
      $\sem\expr_{\regmap,\lay} = \val_\expr$. This follows from
      \Cref{rem:expreval}, since we assume that
      $\um(\cmemass \expr \exprtwo \sep \cmdtwo)$, which implies that
      all identifiers within $\expr$ belong to $\Idu$.
      For the same reason, there exists a unique value $\val_\exprtwo$
      such that for every layout $\lay$, we have
      $\sem\exprtwo_{\regmap,\lay} = \val_\exprtwo$. Consequently,
      there is a unique address $\add \in \Add$ such that
      $\add = \toAdd{\val}$.
      Similarly, we deduce that there exists a unique set of addresses
      $P_{\Aru}$ such that, for every layout $\lay$, we have
      $\underline{\lay}(\Aru) = P_{\Aru}$. Moreover, due to the
      assumptions on the set of layouts, we conclude that
      $P_{\Aru} \subseteq \Addu$.
      Finally, we proceed by cases on whether $\add \in P_{\Aru}$.
      \begin{proofcases}
        \proofcase{$\add \in P_{\Aru}$} The rule \ref{WL:Store}
        can be applied to each configuration
        \[
        {\conf{\frame{\cmemass  \expr \exprtwo\sep \cmdtwo}{\regmap}{\um}:\st', \lay\lcomp \rfs'}}
        \]
        for every layout $\lay$, obtaining:
        \[
          {\conf{\frame{\cmdtwo}{\regmap}{\um}:\st', \update{\lay\lcomp \rfs'}{\add}{\val_\expr}}};
        \]
        from the assumption $\add \in \underline \lay(\Ar[\um])$
        and the definition of $\underline \lay$, we deduce that
        there is $\ar \in \Ar[\um]$
        and $0\le i\le\size\ar$ such that $\lay(\ar)+i = \add$
        and that for every other layout $\overline \lay$
        it must hold that $\overline \lay(\ar)+i = \add$,
        so we can apply \Cref{rem:memupdtostupd}
        in order to show that
        \[
          \forall \lay \in \Lay. \update{\lay\lcomp \rfs'}{\add}{\val_\expr} =
          \lay\lcomp \update{\rfs'}{(\ar, i)}{\val_\expr}.
        \]
        Finally, we observe that $\um({\frame{\cmdtwo}{\regmap}{\um}:\st'})$ holds
        and that
        \[
          \update{\rfs'}{(\ar, i)}{\val_\expr}\eqon{\Fn}{\rfs'}\eqon{\Fn}\rfs
        \]
        holds as well, so we can apply the IH
        and conclude the proof of this sub-derivation as we did in the case of assignments.
        \proofcase{$p \notin P_{\Aru}$}
        In this case,
        we just observe that
        the rule \ref{WL:Store-Error}
        can be applied to show both
        \[
          \step[\system][\lay_1]
          {\conf{\frame{\cmemass  \expr \exprtwo\sep \cmdtwo}{\regmap}{\um}:\st', \lay_1\lcomp \rfs'}}
          \err,
        \]
        and
        \[
          \step[\system][\lay_2]
          {\conf{\frame{\cmemass  \expr \exprtwo\sep \cmdtwo}{\regmap}{\um}:\st', \lay_2\lcomp \rfs'}}
          \err.
        \]
        This establishes the claim. 
      \end{proofcases}
      \proofcase{$\cmemread \vx \expr\sep \cmdtwo$} Analogous to the case of store operations.
      \proofcase{$\ccall \expr {\exprtwo_1, \dots, \exprtwo_k}\sep \cmdtwo$} 
      This case is similar to that of stores but requires additional
considerations regarding the stacks of the target configurations.
      We begin by observing that there exists a unique address $\add$
such that for every $\lay \in \Lay$, we have $\toAdd{\sem
\expr_{\regmap,\lay}} = \add$. Similarly, we define the values
$\val_1, \dots, \val_k$, which correspond to the semantics of
$\exprtwo_1, \dots, \exprtwo_k$ evaluated under $\regmap$ for every
layout $\lay \in \Lay$.
      Next, we note the existence of a set $P_{\Fn[\um]}$ such that
for all $\lay \in \Lay$, we have $\underline{\lay} (\Fn[\um]) =
P_{\Fn[\um]}$.
      The proof proceeds by cases on $\add \in P_{\Fn[\um]}$.

      \begin{proofcases}
        \proofcase{$\add\in P_{\Fn[\um]}$} In this case, there exists
        a unique $\fn \in \Fn[\um]$ such that for every
        $\lay \in \Lay$, we have $\lay(\fn) = \add$. From this
        observation and the definition of $\lcomp$, we deduce that for
        every $\lay \in \Lay$, $(\lay\lcomp \rfs')(\add) = \rfs'(\fn) = \rfs(\fn)$,
        where the last step follows from the assumption $\rfs' \eqon{\Fn} \rfs$. Since
        $\add\in P_{\Fn[\um]}$, we deduce that:
        \begin{multline*}
          \forall \add\in P_{\Fn[\um]}.
          \step
          {\conf{\frame{\ccall  \expr {\exprtwo_1, \dots, \exprtwo_k}\sep \cmdtwo}{\regmap}{\um}:\st', \lay\lcomp \rfs'}}{}\\
          {\conf{\frame{\rfs(\fn)}{\regmap_0'}{\um}:\frame{\cmdtwo}{\regmap}{\um}:\st', \lay\lcomp \rfs'}},
        \end{multline*}
        where
        $\regmap_0'= \regmap_0[\vx_1,\dots,\vx_k\upd
        \val_1,\dots,\val_k]$.  Finally, we observe that
        $\um(\frame{\rfs(\fn)}{\regmap_0'}{\um}:\frame{\cmdtwo}{\regmap}{\um}:\st')$
        holds since $\rfs$ maps unprivileged commands to identifiers
        in $\Fnu$.  This allows us to apply the IH. The proof then
        follows the same structure as in the case of assignments.

        \proofcase{$p \notin P_{\Fn[\um]}$} Analogous to the
        corresponding case for store operations.
      \end{proofcases}
      \proofcase{$\csyscall \syscall{\expr_1, \ldots, \expr_k}\sep\cmdtwo$}
      We begin by introducing the values $\val_1, \dots, \val_k$, which correspond to the semantics of $\exprtwo_1, \dots, \exprtwo_k$ evaluated under $\regmap$ and all layouts $\lay \in \Lay$. By introspection of the rule \ref{WL:SystemCall}, we deduce that for every $\lay \in \Lay$, the following holds:
      \begin{equation*}
        \step
        {\conf{\frame{\csyscall  \syscall {\exprtwo_1, \dots, \exprtwo_k}\sep \cmdtwo}{\regmap}{\um}:\st', \lay\lcomp \rfs'}}{}\\
        {\conf{\frame{\syss(\syscall)}{\regmap_0'}{\km[\syscall]}:\frame{\cmdtwo}{\regmap}{\um}:\st', \lay\lcomp \rfs'}},
        \tag{\dag}
      \end{equation*}
      where $\regmap_0'= \regmap_0[\vx_1,\dots,\vx_k\upd \val_1,\dots,\val_k]$. However, we cannot apply the IH on the configuration
      \[
        {\conf{\frame{\syss(\syscall)}{\regmap_0'}{\km[\syscall]}:\frame{\cmdtwo}{\regmap}{\um}:\st', \lay\lcomp \rfs'}}
      \]
      because
      $\um(\frame{\syss(\syscall)}{\regmap_0'}{\km[\syscall]}:\frame{\cmdtwo}{\regmap}{\um}:\st')$
      does not hold. Nevertheless, we can apply
      \Cref{lemma:onsyscallterm} and deduce that
      one of the following statements must hold:
      \begin{equation*}
        \exists \overline \regmap, \overline \rfs\eqon{\Fn} \rfs'.\forall \lay \in \Lay.\\
        \nstep !{\conf{\frame{\syss(\syscall)}{\regmap_0'}{\km[\syscall]}, {\lay\lcomp \rfs'}}}  {\conf{\frame{\cnil}{\overline \regmap}{\km[\syscall]}, \lay\lcomp \rfs''}},
        \tag{A}
      \end{equation*}
      \[
        \Pr_{\lay\leftarrow\mu}\left[\nstep !{\conf{\frame{\syss(\syscall)}{\regmap_0'}{\km[\syscall]}, \lay\lcomp \rfs'}}\err\right] \ge \delta_\mu,
        \tag{B}
      \]
      \[
        \forall \lay \in \Lay.\Eval{\conf{\frame{\syss(\syscall)}{\regmap_0'}{\km[\syscall]}, \lay\lcomp \rfs'}} = \Omega.
        \tag{C}
      \]
      We go by cases on the valid statement.
      \begin{proofcases}
        \proofcase{A} From (A) and \Cref{lemma:contextplugging}, we conclude that
        there exist a register map $\overline \regmap$ and a store
        $\overline \rfs$ such that for every
        layout $\lay$, we have:
        \begin{equation*}
          \nstep !        {\conf{\frame{\syss(\syscall)}{\regmap_0'}{\km[\syscall]}:\frame{\cmdtwo}{\regmap}{\um}:\st', \lay\lcomp \rfs'}}{}\\
          {\conf{\frame{\cmdtwo}{\update\regmap \ret{\overline \regmap(\ret)}}{\um}:\st', \lay\lcomp \overline \rfs}}.
        \end{equation*}
        We also observe that $\rfs \eqon{\Fn} \rfs'$ by \Cref{rem:simispreserved}. Combining this with (\dag), we conclude:
        \begin{equation*}
          \nstep ! {\conf{\frame{\csyscall  \syscall {\exprtwo_1, \dots, \exprtwo_k}\sep \cmdtwo}{\regmap}{\um}:\st', \lay\lcomp \rfs'}}{}\\
          {\conf{\frame{\cmdtwo}{\update\regmap \ret{\overline \regmap(\ret)}}{\um}:\st', \lay\lcomp \overline \rfs}}
        \end{equation*}
        holds as well.
        From (H), we also deduce that the number of steps must be at most $\nat+1$, while from (\dag), we infer that it must be greater than 1. This allows us to apply the IH to 
        ${\conf{\frame{\cmdtwo}{\update\regmap \ret{\overline \regmap(\ret)}}{\um}:\st', \lay\lcomp \overline \rfs}}$. The proof then proceeds similarly to the case of assignments.
        \proofcase{B} Define $E$ as the set of all layouts $\overline \lay \in \Lay$ such that:
        \[
          \nstep[\system][\overline \lay] {!} {\conf{\frame{\syss(\syscall)}{\regmap_0'}{\km[\syscall]}, \overline \lay\lcomp \rfs'}}\err
        \]
        Since (\dag) holds for each of these layouts, it follows that
        for every $\overline \lay \in E$, we have:
        \[
          \nstep[\system][\overline \lay]{!} {\conf{\frame{\csyscall  \syscall {\exprtwo_1, \dots, \exprtwo_k}\sep \cmdtwo}{\regmap}{\um}:\st', \overline \lay\lcomp \rfs'}}{\err}.
        \]
        From (B), we know that the probability measure associated with
        this set is greater than $\delta_\mu$. Furthermore, from (H),
        we deduce:
        \[
          \nstep[\system][\overline \lay]{!\nat+1} {\conf{\frame{\csyscall  \syscall {\exprtwo_1, \dots, \exprtwo_k}\sep \cmdtwo}{\regmap, \um}:\st', \overline \lay\lcomp \rfs'}}{\err},
        \]
        which completes the claim.
        \proofcase{C} Let $\lay\in \Lay$ be any layout.
        From \Cref{lemma:contextplugging}, we deduce that 
        \[
          \Eval{\conf{\frame{\syss(\syscall)}{\regmap_0'}{\km[\syscall]}:\frame{\cmdtwo}{\regmap}{\um}:\st', \lay\lcomp \rfs'}} = \Omega,
        \]
        which contradicts our assumption (H). This completes the proof.
      \end{proofcases}
    \end{proofcases}
  \end{proofcases}
  The main claim is a particular case of C' where $\st$ is $\frame{\cmd}{\regmap}{\um}$, and $\rfs'$ is $\rfs$.
\end{proof}

% \begin{lemma}
%   \label{lemma:onsyscallterm}
%   For every \emph{layout non-interferent} system $\system=(\rfs, \syss, \caps)$, 
%   \emph{unprivileged} program $\cmd$, 
%   register map $\regmap$, store $\rfs' \eqon{\Fn} \rfs$, 
%   and distribution $\mu$, 
%   one of the following statements holds:
%   \begin{varitemize}
%   \item[(A)] There exist a value $\overline \val$ and 
%     a store $\overline \rfs \eqon{\Fn} \rfs$ such that for every layout $\lay \in \Lay$, we have:
%     \[
%       \exists \overline \regmap. \nstep !{\conf{\frame{\syss(\syscall)}{\regmap}{\km[\syscall]}, \lay\lcomp \rfs'}}  {\conf{\frame{\cnil}{\update{\overline \regmap}{\ret}{\overline \val}}{\km[\syscall]}, \lay\lcomp \overline \rfs}},
%     \]
%   \item[(B)]
%     $
%       \Pr_{\lay\leftarrow \mu}\left[\nstep !{\conf{\frame{\syss(\syscall)}{\regmap}{\km[\syscall]}, \lay\lcomp \rfs'}}\err\right] \ge \delta_\mu,
%     $
%   \item[(C)] For every layout $\lay$, we have that:
%     \[
%       \Eval{\conf{\frame{\syss(\syscall)}{\regmap}{\km[\syscall]}, \lay\lcomp \rfs'}} = \Omega.
%     \] 
%   \end{varitemize}
% \end{lemma}

\onsyscallterm*

\begin{proof}
  We begin by fixing all the universally quantified variables in the
  statement using the same meta-variable as before. We also fix some
  $\overline{\lay} \in \Lay$ and proceed by case analysis on
  \[
    \Eval[\system][\overline \lay]{\conf{\frame{\syss(\syscall)}{\regmap}{\km[\syscall]}, \lay\lcomp \rfs'}}.
  \]
  \begin{proofcases}
    \proofcase{$(\val, \rfs'')$}
    In this case, by the \emph{layout non-interference} property, we conclude that (A) holds for $\overline{\val} = \val$ and $\overline{\rfs} = \rfs''$. Furthermore, by \Cref{rem:simispreserved}, we know that $\rfs'' \eqon{\Fn} \rfs' \eqon{\Fn} \rfs$.
    \proofcase{$\unsafe, \err$}
    Here, by the \emph{layout non-interference} property, we deduce that for every $\lay \in \Lay$,
    \[
      \Eval{\conf{\frame{\syss(\syscall)}{\regmap}{\km[\syscall]}, \lay\lcomp \rfs'}} \in \{\err, \unsafe\}.
    \]
    Consequently, for every $\lay \in \Lay$, there exists some $\nat_\lay$ such that
    \[
      \exists \nat_\lay.\nstep {\nat_\lay} {\conf{\frame{\syss(\syscall)}{\regmap}{\km[\syscall]}, \lay\lcomp \rfs'}} \conftwo \in\{\err, \unsafe\}.
    \]
    Since $\Lay$ is finite, there exists a uniform bound $\overline{\nat}$ such that
    \[
      \forall \lay\in \Lay.\nstep {!\overline \nat} {\conf{\frame{\syss(\syscall)}{\regmap}{\km[\syscall]}, \lay\lcomp \rfs'}} \conftwo \in\{\err, \unsafe\}.
      \tag{\dag}
    \]
    From this, we conclude that  
    \[
      \Pr_{\lay \leftarrow \mu}\left[ \nstep {!\overline \nat} {\conf{\frame{\syss(\syscall)}{\regmap}{\km[\syscall]}, \lay\lcomp \rfs'}}\err\right]\ge \delta_\mu.
    \]
    This follows by considering an enumeration $\id_0, \dots, \id_h$ of the identifiers in $\refs_\system(\syss(\syscall))$ and noting that the probability above can be rewritten as:
    \begin{multline*}
      \sum_{\add_1, \dots, \add_h \in \Addk}
      \Pr_{\lay \leftarrow \mu}\Big[ \nstep {!\overline \nat} {\conf{\frame{\syss(\syscall)}{\regmap}{\km[\syscall]}, \lay\lcomp \rfs'}}\err \,\,\Big|\,\,
      \forall 1\le i\le h.\lay(\id_i)=\add_i\Big] \cdot\\
      \Pr_{\lay \leftarrow \mu}[\lay(\id_1)=\add_1,\dots, \lay(\id_h)=\add_h]
    \end{multline*}
    From (\dag) and \Cref{lemma:onsyscall}, we deduce that for every
    choice of $\add_1, \dots, \add_h \in \Addk$, the terms on the left
    in the expression above are bounded by $\delta_{\mu}$ making
    their convex combination bounded by $\delta_{\mu}$. moreover, since:
    \[
      \nstep {!\overline \nat} {\conf{\frame{\syss(\syscall)}{\regmap}{\km[\syscall]}, \lay\lcomp \rfs'}}\err
      \Rightarrow
      \nstep {!} {\conf{\frame{\syss(\syscall)}{\regmap}{\km[\syscall]}, \lay\lcomp \rfs'}}\err
    \]
    for the definition of $\to^{!m}$, and this means that 
    \[
      \Pr_{\lay \leftarrow \mu}\left[ \nstep {!} {\conf{\frame{\syss(\syscall)}{\regmap}{\km[\syscall]}, \lay\lcomp \rfs'}}\err\right]
      \ge
      \Pr_{\lay \leftarrow \mu}\left[ \nstep {!\overline \nat} {\conf{\frame{\syss(\syscall)}{\regmap}{\km[\syscall]}, \lay\lcomp \rfs'}}\err\right],
    \]
    which establishes the claim.
    \proofcase{$\Omega$} In this case, due to the \emph{layout non-interference} property, we conclude that (C) holds.
  \end{proofcases}
\end{proof}

For the next lemma, we extend $\refs_\sigma$ to frame stacks as follows:
\begin{align*}
  \refs_\sigma(\nil) &\defsym \emptyset&
  \refs_\sigma(f\cons \st) &\defsym \refs_\sigma(f) \cup \refs_\sigma(\st) &
  \refs_\sigma(\frame \cmd \regmap \opt) &\defsym \refs_\sigma(\cmd).
\end{align*}

\onsyscall*

\begin{proof}
  We proceed by induction on $\nat$.
  \begin{proofcases}
    \proofcase{0}
    In this case, we conclude that either (A) or (C) holds trivially, depending on whether $\syss(\syscall) \neq \cnil$ or not.
    \proofcase{$\nat+1$}
    We begin by applying the IH to the initial configuration. Three cases arise:
  \begin{proofcases}
    \proofcase{A} Let $\system$ be a system, $\syscall$ a system call
    and consider the initial configuration
    $ {\conf{\frame{\syss(\syscall)}{\regmap}{\km[\syscall]}, \lay
        \lcomp \rfs'}}$.  Suppose
    $\refs_\system(\syss(\syscall))\setminus \Sys= \{\id_1, \dots,
    \id_h\}$ and fix addresses $\add_1, \dots, \add_h$.  By the IH,
    there exists a stack $\st$ and a
    store $\rfs'' \eqon{\Fn} \rfs$ satisfying the following property:
    \[
      \forall \lay. \left( \forall 1 \leq i \leq h, \lay(\id_i) = \add_i\right)  \Rightarrow
      \nstep{\nat}
      {\conf{\frame{\syss(\syscall)}{\regmap}{\km[\syscall]}, \lay \lcomp \rfs'}}
      {\conf{\st, \lay \lcomp \rfs''}}.
    \]
    Since $\st$ is non-empty, we can assume that $\st = \frame \cmd {\regmap'} {\km[\syscall]}:\st'$.
    Furthermore, from the IH, we know that that $\refs_\system(\st)\subseteq \refs_\system(\syss(\syscall))$ (H). We now perform a case analysis on $\cmd$.
    \begin{proofcases}
      \proofcase{$\vx \ass\expr\sep \cmdtwo$} We first observe that
      there exists a value $\val$ such that, for every layout in our
      quantification $\lay$, $\sem \expr_{\regmap, \lay} = \val$.
      This follows from \Cref{rem:expreval}, given that for every
      $\id \in \refs_\system(\syss(\syscall))$ and any two layouts
      $\lay_1, \lay_2$, we assume that $\lay_1(\id) = \lay_2(\id)$. In
      particular, this applies to identifiers appearing in $\expr$
      that belong to $\refs_\system(\syss(\syscall))$ by (H).  For all
      layouts if our quantification layouts, we have:
      \begin{equation*}
        \step {\conf{\frame{\vx \ass\expr\sep\cmdtwo}{\regmap'}{\km[\syscall]}:\st', \lay\lcomp \rfs''}}{}\\
        {\conf{\frame{\cmdtwo}{\update {\regmap'}\vx {\val}}{\km}:\st', \lay\lcomp \rfs''}}.
        \tag{$*$}
      \end{equation*}
      From (H), we deduce
      \[
        \refs_\system(\frame{\cmdtwo}{\update {\regmap'}\vx {\val}}{\km}:\st')\subseteq\refs_\system(\syss(\syscall))
      \]
      Therefore, we conclude that (A) holds if either
      $\cmdtwo \neq \cnil$ or $\st'$ is non-empty. Otherwise, (C)
      holds.  \proofcase{$\cnil$} Analogous to the case above.
      \proofcase{$\cskip\sep \cmdtwo$} Analogous to the case of
      assignments.
      \proofcase{$\cif \expr {\cmd_\bot}{\cmd_\top}\sep\cmdtwo$}
      Analogous to the case of assignments.
      \proofcase{$\cwhile \expr {\cmd}\sep\cmdtwo$} Analogous to the
      case of assignments.
      \proofcase{$\cmemass \expr \exprtwo\sep \cmdtwo$} In this case,
      we start by observing that there exist a value $\val_\expr$ such
      that for each layout $\lay$ that satisfies the premise, we have
      $\sem\expr_{\regmap,\lay} = \val_\expr$. This follows from
      \Cref{rem:expreval}, (H2), and the definition of $\refs$, which
      ensures that all the identifiers within $\expr$ are in
      $\refs(\frame{\cmemass\expr\exprtwo\sep\cmdtwo}{\regmap'}{\km[\syscall]}:\st')$.
      For the same reason, there is a unique value $\val_\exprtwo$
      such that for each of the layout in our quantification, we have
      that $\sem\exprtwo_{\regmaptwo,\lay} = \val_\exprtwo$.
      Therefore, there is a unique $\add\in \Add = \toAdd{\val_\expr}$ which
      the store instruction attempts to write at.  Finally, we observe
      that there is a unique set $P$ such that, for every layout
      $\lay$ satisfying the assumption, we have
      $P = \underline \lay(\refs_\system(\syss(\syscall))\setminus \Sys)$.  We go by
      case analysis on $\add \in P$.
      \begin{proofcases}
        \proofcase{$\add\in P$} In this case, for every layout $\lay$
        that we are quantifying over, there exists some index $i$ such
        that $\add \in \underline \lay(\id_i)$.  If $\id_i$ is a
        procedure identifier, then we have
        $\add = \underline{\lay}(\id_i)$, and thus, independently of $\lay$, we have:
        \[
          \step
          {\conf{\frame{\cmemass\expr\exprtwo\sep\cmdtwo}{\regmap'}{\km[\syscall]}:\st',
              \lay\lcomp \rfs''}} {\err}.
        \]
        This shows that (B) holds with probability 1.  Otherwise,
        there exists a unique array $\ar$ and a unique index $j$ such
        that, independently of the layout, we have
        $\lay(\ar) + j = \add$. Notice that if $\ar$ and $j$ were not
        unique, then it would not be true that the layouts store
        $\id_1, \dots,\id_h$ respectively at $\add_1, \dots, \add_h$.
        Thus, each layout $\lay$ that satisfies the premises of rule
        \ref{WL:Store}, we apply the rule to show the following
        transition:
        \begin{equation*}
          \step {\conf{\frame{\cmemass\expr\exprtwo\sep\cmdtwo}{\regmap'}{\km[\syscall]}:\st', \lay\lcomp \rfs''}}{}\\
          {\conf{\frame{\cmdtwo}{\regmap'}{\km[\syscall]}:\st', \update{(\lay\lcomp \rfs'')}\add{\val_\exprtwo}}}.
        \end{equation*}
        Using \Cref{rem:memupdtostupd}, we further conclude:
        \begin{equation*}
          \step {\conf{\frame{\cmemass\expr\exprtwo\sep\cmdtwo}{\regmap'}{\km[\syscall]}:\st', \lay\lcomp \rfs''}}{}\\
          {\conf{\frame{\cmdtwo}{\regmap'}{\km[\syscall]}:\st', \lay\lcomp {(\update{\rfs''}{(\ar, j)}{\val_\exprtwo})}}}.
        \end{equation*}
        By observing the uniqueness of the target configuration modulo
        $\lay$, and given that
        $\refs_\system(\frame{\cmdtwo}{\regmap'}{\km[\syscall]}:\st')
        \subseteq \refs_\system(\syss(\syscall))$ follows from (H), we
        conclude that (A) holds. If $\cmdtwo = \cnil$ and $\st'$ is
        empty (C) holds, instead.
        \proofcase{$\add\notin P$}
        % In this case, by introspecting the semantics, we observe
        % that only rules \ref{WL:Store-Error} and
        % \ref{WL:Store-Unsafe} can apply.  
        %
        Observe that, for every layout $\lay$ such that
        $\add \notin \underline{\lay}(\Idk)$,
        only rule \ref{WL:Store-Error} applies, which shows the following transition:
        \[
          \step {\conf{\frame{\cmemass\expr\exprtwo\sep\cmdtwo}{\regmap'}{\km[\syscall]}:\st', \lay\lcomp \rfs''}}\err.
        \]
        Thus, we observe that:
        \begin{equation*}
          \Pr_{\lay \leftarrow\mu}\big[\step {\conf{\frame{\cmemass\expr\exprtwo\sep\cmdtwo}{\regmap'}{\km[\syscall]}:\st', \lay\lcomp \rfs''}}\err\,\, \big|\\
          \forall 1\le i\le h.\lay(\id_i)=\add_i\big]
        \end{equation*}
        is greater than
        \[
          \Pr_{\lay \leftarrow\mu}\big[ \add \notin \underline\lay(\Idk) \mid
          \forall 1\le i\le h.\lay(\id_i)=\add_i\big]
        \]
        which, by definition, is greater than $\delta_{\mu}$.  
        This shows that (B) holds.
      \end{proofcases}
      \proofcase{$\cmemread \vx \expr\sep \cmdtwo$} Analogous to the
      case of store operations.
      \proofcase{$\ccall \expr {\exprtwo_1, \dots, \exprtwo_k}\sep
        \cmdtwo$} We start by observing that there exists a unique
      address $\add$ such that for every $\lay$ that satisfies the
      precondition, we have $\toAdd{\sem
        \expr_{\regmap,\lay}}=\add$. Similarly, we introduce the
      values $\val_1, \dots,\val_k$, which correspond to the semantics
      of $\exprtwo_1,\dots,\exprtwo_k$ evaluated under $\regmap$ and
      every layout that satisfies the precondition.  Then, we observe
      that there exists a set $P$ such that for each layout under
      consideration, it holds that
      $\underline{\lay} (\refs_\system(\syss(\syscall))\setminus
      \Sys)=P$. The proof proceeds by cases on whether $\add \in P$.
      \begin{proofcases}
        \proofcase{$\add\in P$} In this case, there is a unique identifier $\id_j$ such that for every layout $\lay$ that satisfies the precondition, we have $\add \in \underline{\lay}(\id_j)$. We analyze two cases based on whether $\id_j$ is a function identifier $ \fn$.
        \begin{proofcases}
          \proofcase{$\id_j= \fn$} In this case, from the definition
          of $\lcomp$, we deduce that for each of these layouts we
          have $\lay\lcomp \rfs''(\add) = \rfs''(\fn) = \rfs(\fn)$,
          where the last step follows from the assumption
          $\rfs''\eqon{\Fn}\rfs$.  Since $\add \in P$, we conclude
          that, independently of the specific layout, if the
          preconditions hold, then:
          \begin{equation*}
            \step
            {\conf{\frame{\ccall  \expr {\exprtwo_1, \dots, \exprtwo_k}\sep \cmdtwo}{\regmap}{\km[\syscall]}:\st', \lay\lcomp \rfs''}}{}\\
            {\conf{\frame{\rfs(\fn)}{\regmap_0'}{\km[\syscall]}:\frame{\cmdtwo}{\regmap}{\km[\syscall]}:\st', \lay\lcomp \rfs''}},
          \end{equation*}
          where $\regmap_0'= \regmap_0[\vx_1,\dots,\vx_k\upd \val_1,\dots,\val_k]$.
          By definition, $\refs_\system(\syss(\syscall))$
          contains all the identifiers within
          $\rfs(\fn) = \rfs(\id_j)$ because $\id_j \in \refs_\system(\syss(\syscall))$ and
          $\refs$ is closed under procedure calls.
          This shows that (A) holds.
          \proofcase{$\id_j =\ar$} In this case, since the set of
          array identifiers and that of functions are disjoint, we
          conclude that for every layout $\lay$ that satisfies the
          preconditions, we have that
          $\add \notin \underline \lay(\Fn[{\km}])$.  This means that
          for each of these layouts, we can show:
          \begin{equation*}
            \step
            {\conf{\frame{\ccall  \expr {\exprtwo_1, \dots, \exprtwo_k}\sep \cmdtwo}{\regmap}{\km[\syscall]}:\st', \lay\lcomp \rfs''}}
            {\err},
          \end{equation*}
          and this means that (B) holds with probability 1.
        \end{proofcases}
        \proofcase{$p \notin P$} Analogous to the corresponding case
        for store operations.
      \end{proofcases}
      \proofcase{$\csyscall \expr {\exprtwo_1, \dots, \exprtwo_k}\sep \cmdtwo$}
      Analogous to the previous case.
    \end{proofcases}
  \end{proofcases}
    \proofcase{B} From the definition of $\to^{!\nat}$, we observe that if
    \[
      \nstep {! \nat} {\conf{\frame{\syss(\syscall)}{\regmap}{\km[\syscall]}, \lay\lcomp \rfs'}}\err,
    \]
    then
    \[
      \nstep {! \nat+1} {\conf{\frame{\syss(\syscall)}{\regmap}{\km[\syscall]}, \lay\lcomp \rfs'}}\err.
    \]
    This implies that
    \[
      \Pr_{\lay\leftarrow \mu}\Big[ \nstep {! \nat+1} {\conf{\frame{\syss(\syscall)}{\regmap}{\km[\syscall]}, \lay\lcomp \rfs'}}\err \,\,\Big|\,\,
      \forall 1\le i\le h.\lay(\id_i)=\add_i\Big]
    \]
    is greater than
    \[
      \Pr_{\lay\leftarrow \mu}\Big[ \nstep {! \nat} {\conf{\frame{\syss(\syscall)}{\regmap}{\km[\syscall]}, \lay\lcomp \rfs'}}\err \,\,\Big|\,\,
      \forall 1\le i\le h.\lay(\id_i)=\add_i\Big],
    \]
    which proves the claim.
    \proofcase{C} Similar to the previous case.
  \end{proofcases}
\end{proof}

\begin{remark}
  \label{rem:deltanu}
  \[
    \delta_\nu \ge \min_{\syscall \in \Sys}\frac{\kappa_{\km}/\theta-|{\Idk}|} {\kappa_{\km}/\theta-|{\refs_\system(\syss(\syscall))\setminus \Sys}|}
  \]
\end{remark}

\begin{proof}%[Proof of \Cref{rem:deltanu}]
  We want to prove that
  \begin{multline*}
     \min\bigl \{
      \displaystyle{\Pr_{\lay\leftarrow \nu}}
      [ \add \notin \underline\lay(\Id) \mid \lay(\id_i^\syscall)=p_i,
      \text{ for } 1\le i \le h]  \mid \syscall \in \Sys, p,p_1,\dots,p_h \in \Addk \land {}\\[3pt]
      \phantom{{} \mid} p \notin \{p_i, \ldots, p_i +\size{\id_i^\syscall}-1\}, \text{ for } 1\le i \le h
      \bigr \}.
    \end{multline*}
    is greater than
    $\min_{\syscall \in \Sys}\frac{\kappa_{\km}/\theta-|{\Idk}|} {\kappa_{\km}/\theta-|{\refs_\system(\syss(\syscall))}\setminus \Sys|}$.
    To establish this, we fix a system call $\syscall \in \Sys$
    and addresses
    $p,p_1,\dots,p_h \in \Addk$ that achieve the minimum.
    In particular, we have:
  % \[
  %   \Pr_{\lay\leftarrow \nu}
  %   [ \add \notin \underline\lay(\Id) \mid \lay(\id_i^\syscall)=p_i,
  %   \text{ for } 1\le i \le h]
  %   \tag{$*$}
  % \]
  % and
  \[
    p \notin \{p_i, \ldots, p_i +\size{\id_i^\syscall}-1\}, \text{ for } 1\le i \le h.
    \tag{\dag}
  \]
  Given that the minimum exists, $p_1, \ldots, p_h$ are the starting
  addresses of the slots $s_1, \ldots, s_h$ where the references
  $\refs_\system(\syss(\syscall))\setminus \Sys$ of $\syscall$ are allocated.
  Now, assume that $\add$ is located within one of these slots, say
  $s_j$. From assumption (\dag), we deduce that $\add$ cannot be
  allocated, since no object other than $\id_j$ can be placed in that
  slot. As a result, we obtain:
  \[
    \Pr_{\lay\leftarrow \nu}
    [ \add \notin \underline\lay(\Id) \mid \lay(\id_i^\syscall)=p_i,
    \text{ for } 1\le i \le h] = 1,
  \]
  which establishes the claim.

  If, on the other hand, $\add$ is located in a different slot $s$, we observe that:
  \begin{multline*}
    \Pr_{\lay\leftarrow \nu}
    [ \add \notin \underline \lay(\Id) \mid \lay(\id_i^\syscall)=p_i,
    \text{ for } 1\le i \le h] \ge\\
    \Pr_{\lay\leftarrow \nu}
    [ \mathit{slotof}(\add) \notin \lay(\Id) \mid \lay(\id_i^\syscall)=p_i,
    \text{ for } 1\le i \le h]
    % \ge\min_{\syscall \in \Sys}\frac{\kappa_{\km}/\theta-\size{\Idk}} {\kappa_{\km}/\theta-\size{\caps(\syscall)}}
    \tag{$*$}
  \end{multline*} 
  where $\mathit{slotof}$ associates each address with the starting
  address of its corresponding slot.
  Due to the definition of $\nu$, the slots are sampled uniformly and
  they are independent. Therefore, the probability to the right in
  ($*$) is given by the ratio of the free slots and those that are not
  occupied by elements of
  $\refs_\system(\syss(\syscall))\setminus \Sys$, because of the
assumption (\dag). In conclusion ($*$) is equal to:
  \[
    \frac{\kappa_{\km}/\theta-|{\Idk}|}
    {\kappa_{\km}/\theta-|{\refs_\system(\syss(\syscall))}\setminus \Sys|} \ge
    \min_{\syscall \in \Sys}\frac{\kappa_{\km}/\theta-|{\Idk}|}
    {\kappa_{\km}/\theta-|{\refs_\system(\syss(\syscall))}\setminus \Sys|}.
  \]
\end{proof}

\paragraph{Technical observations}

\begin{remark}
  \label{rem:expreval}
Let $ \{\id_1, \dots, \id_h\} \subseteq \Id $ be a set of identifiers, $ \{\add_1, \dots, \add_h\} $ a set of addresses, and $ W \subseteq \{\lay \in \Lay \mid \forall 1 \leq i \leq h, \, \lay(\id_i) = p_i \} $ a set of layouts. Given an expression $ \expr $ such that $ \ids(\expr) \subseteq \{\id_1, \dots, \id_h\} $ and a register map $ \regmap $, there exists a value $ \val \in \Val $ such that
\[
\forall \lay \in W, \, \sem{\expr}_{\regmap, \lay} = \val.
\]
% For every set of identifiers $\{\id_1,\dots,\id_h\}\subseteq \Id$,
  % set of addresses $\{\add_i,\dots,\add_h\}$, set of layouts
  % \[
  %   W \subseteq \{\lay \in \Lay\mid \forall 1\le i \le h. \lay(\id_i) = p_i \},
  % \]
  % expression $\expr$ such that $\ids(\expr) \subseteq \{\id_1,\dots,\id_h\}$, and register map $\regmap$,
  % there is a value $\val \in \Val$ such that
  % \[
  %   \forall \lay \in W. \sem \expr_{\regmap, \lay} = \val.
  % \]
\end{remark}
\begin{proof}
  We proceed by cases on the size of $ W $. If $ |W| = 0 $ or $ |W| = 1 $, the claim is trivial. Otherwise, we prove the following auxiliary claim:
  \[
    \forall \expr. \ids(\expr) \subseteq \{\id_1,\dots,\id_h\}\Rightarrow \forall \lay_1, \lay_2  \in \Lay. \forall 1\le i \le h. \lay_1(\id_i) = \lay_2(\id_i) \Rightarrow
    \sem \expr_{\regmap, \lay_1} =\sem \expr_{\regmap, \lay_2},
  \]
  which can be proved by induction on the syntax of $ \expr $. The main claim follows from the IH on $ W \setminus \{\lay\} $ for some layout $ \lay $, the application of the auxiliary claim to a pair of layouts $ \overline{\lay} \in W \setminus \{\lay\} $ and $ \lay $, and the transitivity of equality.
\end{proof}

\begin{remark}
  \label{rem:memupdtostupd}
  Let $ \add \in \Add $ be an address, $ \ar \in \Arr $ be an array, and $ \rfs $ be a store. For $ 0 \leq i < \size(\ar) $, if $ \add = \lay(\ar) + i $, then
  \[
    \update{(\lay \lcomp \ars)}{\add}{\val} = \lay \lcomp (\update{{\rfs}}{(\add, i)}{\val}).
  \]
\end{remark}

\begin{proof}
  The claim follows directly from unrolling the definitions.
\end{proof}

\begin{remark}
  \label{rem:simispreserved}
  For every layout $ \lay \in \Lay $, and pair of configurations $ \conf{\st, \lay \lcomp \rfs} $ and $ \conf{\st', \mem} $ such that $ \nstep*{\conf{\st, \lay \lcomp \rfs}}{\conf{\st', \mem}} $, it holds that $ \mem = \lay \lcomp \rfs' $ for some $ \rfs' \eqon{\Fn} \rfs $.
\end{remark}

\begin{proof}
  The proof proceeds by induction on the length of the reduction. The base case follows from the reflexivity of $ \eqon{\Fn} $. The inductive case follows from introspection of the rule that has been used for the last transition. The only non-trivial case occurs for the \ref{WL:Store} rule, where the premise $ \add \in \underline{\lay} (\Ar[\opt]) $ guarantees the existence of a pair $ (\ar, i) $ such that $ 0 \leq i < \size\ar $. The observation follows from \Cref{rem:memupdtostupd}.
\end{proof}

\begin{lemma}
  \label{lemma:contextpluggingtech}
  For every layout $\lay\in \Lay$, configuration $\conf{\frame{\cmd}{\regmap}{\opt}, \mem}$ and non-empty stack $\fr:\st$,  $\nat\in \Nat$, and configuration $\confone$:
  \begin{varitemize}
  \item if $\nstep \nat {\conf{\frame{\cmd}{\regmap}{\opt}, \lay\lcomp \rfs}} {\conf{\st', \mem'}}$, then:
    \[
      \nstep \nat {\conf{\frame{\cmd}{\regmap}{\opt}:\fr:\st, \mem}} {\conf{\st':\fr:\st, \mem'}}
    \]
  \item if $\nstep \nat {\conf{\frame{\cmd}{\regmap}{\opt}, \lay\lcomp \rfs}} {\err}$, then:
    \[
      \nstep \nat {\conf{\frame{\cmd}{\regmap}{\opt}:\fr:\st, \mem}} {\err}
    \] 
  \item if $\nstep \nat {\conf{\frame{\cmd}{\regmap}{\opt}, \lay\lcomp \rfs}} {\unsafe}$, then:
    \[
      \nstep \nat {\conf{\frame{\cmd}{\regmap}{\opt}:\fr:\st, \mem}} {\unsafe}
    \] 
  \end{varitemize}
\end{lemma}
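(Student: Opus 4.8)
The plan is to prove the statement by induction on the step count $\nat$. The only subtlety is that a single reduction step may \emph{grow} the call stack (rules \ref{WL:Call} and \ref{WL:SystemCall}) or \emph{shrink} it (rule \ref{WL:Pop}), so after peeling off the first step the configuration no longer has a single-frame stack, and the induction hypothesis as literally stated would not apply. I would therefore prove a generalized version in which the single starting frame $\frame{\cmd}{\regmap}{\opt}$ is replaced by an arbitrary non-empty call stack $\st_0$ and the memory is arbitrary: for all $\nat$, all non-empty $\st_0$, all $\mem$, and all call stacks $\st$, if $\nstep{\nat}{\conf{\st_0,\mem}}{\conf{\st_0',\mem'}}$ then $\nstep{\nat}{\conf{\st_0 \mathbin{::}\st,\mem}}{\conf{\st_0'\mathbin{::}\st,\mem'}}$, and likewise with $\err$ (resp.\ $\unsafe$) replacing $\conf{\st_0',\mem'}$; here $\mathbin{::}$ denotes concatenation of call stacks, with the convention $\err\mathbin{::}\st=\err$ and $\unsafe\mathbin{::}\st=\unsafe$. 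The lemma is the instance $\st_0=\frame{\cmd}{\regmap}{\opt}$, $\st = \fr:\st$, $\mem = \lay\lcomp\rfs$.

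The structural fact driving everything is that every rule of the classic small-step semantics is \emph{local to the top of the call stack}: whether a rule fires on a configuration with stack $\fr_0:\st_0^-$, and which frames it produces, is determined solely by the top frame $\fr_0$, the memory $\mem$, and the layout $\lay$ (the latter through expression evaluation and the address-range side conditions); the frames of $\st_0^-$ are never inspected, with the sole exception of \ref{WL:Pop}, which only requires \emph{some} frame to sit directly below the top. I would isolate this as a single-step sublemma: if $\step{\conf{\st_0,\mem}}{\conftwo}$, then for every stack $\st$ the same rule instance yields $\step{\conf{\st_0\mathbin{::}\st,\mem}}{\conftwo\mathbin{::}\st}$. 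Its proof is a routine case analysis on the rule used: the in-place rules \ref{WL:Skip}, \ref{WL:Op}, \ref{WL:If}, \ref{WL:While}, \ref{WL:Load}, \ref{WL:Store} and the frame-pushing rules \ref{WL:Call}, \ref{WL:SystemCall} touch only $\fr_0$ and the memory, hence commute with appending $\st$; the abnormal-termination rules \ref{WL:Load-Error}, \ref{WL:Load-Unsafe}, \ref{WL:Store-Error}, \ref{WL:Store-Unsafe}, \ref{WL:Call-Error}, \ref{WL:Call-Unsafe} have premises mentioning only $\fr_0$, $\mem$, $\lay$, so they re-fire verbatim and produce $\err$ resp.\ $\unsafe$ irrespective of the tail; and \ref{WL:Pop} fires on $\fr_0:\st_0^-$ exactly when $\fr_0$ runs $\cnil$ and $\st_0^- = \fr':\st_0''$ is non-empty, in which case it fires identically on $\fr_0:\fr':\st_0''\mathbin{::}\st$, producing the same updated frame on top of $\st_0''\mathbin{::}\st$.

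With the sublemma in hand, the generalized statement follows by induction on $\nat$. For $\nat=0$ it is reflexivity (vacuous in the $\err$/$\unsafe$ cases, which cannot be reached in zero steps). For $\nat+1$, write the reduction as $\conf{\st_0,\mem}\to\conftwo\to^{\nat}\confone$, apply the sublemma to the first step to get $\conf{\st_0\mathbin{::}\st,\mem}\to\conftwo\mathbin{::}\st$, and split on $\conftwo$. If $\conftwo\in\{\err,\unsafe\}$ it is stuck, so $\nat=0$ and $\confone=\conftwo$, and we are done. Otherwise $\conftwo=\conf{\st_1,\mem_1}$ with $\st_1$ again non-empty (every rule leaves at least one frame on the stack), so the induction hypothesis applies to the remaining $\nat$-step reduction from $\conf{\st_1,\mem_1}$ with the same $\st$, and prepending the first step closes the case.

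I expect the whole argument to be essentially bookkeeping; the one point genuinely worth making explicit is \emph{why} the single-frame instance is covered — namely, a reduction started from a one-frame stack never uses \ref{WL:Pop} on that bottom frame (a \ref{WL:Pop} step needs a frame below it), so the appended frames $\fr:\st$ are never ``consumed'', which is precisely what the generality over $\st_0$ makes transparent. A minor notational wrinkle in the statement (the memory is $\lay\lcomp\rfs$ in the hypothesis but written $\mem$ in the conclusion) is likewise dissolved by working with an arbitrary $\mem$ throughout, as the generalized lemma does.
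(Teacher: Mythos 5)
Your proposal is correct and follows essentially the same route as the paper: induction on the step count combined with the observation that each single reduction step is determined by the top frame and the memory, and hence commutes with appending a fixed suffix to the call stack. The paper peels off the last transition and leaves the generalization to arbitrary intermediate stacks implicit in that one observation, whereas you peel off the first step and state the generalized invariant explicitly; this is only a presentational difference.
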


\begin{proof}
  By induction on $\nat$. The base case is trivial.  The inductive
  case follows from the IH and by introspection of the rule that
  has been used for showing the last transition. In particular, it suffices to
  observe that every rule that can be applied to show
  $
  \step {\conf{\st', \mem'}}
    {\conf{\st'', \mem'}}
  $
  can also be applied to show
  $
  \step
  {\conf{\st':\fr:\st, \mem'}}
  {\conf{\st'':\fr:\st, \mem'}}.
  $
\end{proof}

\begin{lemma}
  \label{lemma:contextplugging}
  For every layout $\lay\in \Lay$, configuration $\conf{\frame{\cmd}{\regmap}{\opt}, \lay\lcomp\rfs}$ and non-empty stack $\frame \cmdtwo {\regmap'}{\opt}:\st$,  $\nat\in \Nat$, and configuration $\confone$:
  \begin{varitemize}
  \item if $\Eval {\conf{\frame{\cmd}{\regmap}{\opt}, \lay\lcomp \rfs}} = {\val, \rfs'}$, then:
    \begin{equation*}
      \nstep * {\conf{\frame{\cmd}{\regmap}{\opt}:\frame \cmdtwo {\regmap'}{\opt}:\st, \lay\lcomp\rfs}} {}\\
      {\conf{\frame \cmdtwo {\update{\regmap'}\ret{\val}}{\opt}:\st, \lay\lcomp\rfs'}}
    \end{equation*}
  \item if $\Eval {\conf{\frame{\cmd}{\regmap}{\opt}, \lay\lcomp \rfs}} = \err$, then:
    \[
      \nstep * {\conf{\frame{\cmd}{\regmap}{\opt}:\frame \cmdtwo {\regmap'}{\opt}:\st, \lay\lcomp\rfs}} {\err}
    \] 
  \item if $\Eval {\conf{\frame{\cmd}{\regmap}{\opt}, \lay\lcomp \rfs}} = \unsafe$, then:
    \[
      \nstep * {\conf{\frame{\cmd}{\regmap}{\opt}:\frame \cmdtwo {\regmap'}{\opt}:\st, \lay\lcomp\rfs}} {\unsafe}
    \] 
  \item if $\Eval {\conf{\frame{\cmd}{\regmap}{\opt}, \lay\lcomp \rfs}} = \Omega$, then:
    \[
      \Eval {\conf{\frame{\cmd}{\regmap}{\opt}:\frame \cmdtwo {\regmap'}{\opt}:\st, \lay\lcomp\rfs}} = \Omega
    \] 
  \end{varitemize}
\end{lemma}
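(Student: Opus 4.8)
The plan is to establish the four implications by a single case distinction on the value of $\Eval{\conf{\frame{\cmd}{\regmap}{\opt}, \lay\lcomp \rfs}}$, delegating all the work to \Cref{lemma:contextpluggingtech}, which already lifts any \emph{fixed-length} reduction of a lone top frame to the same reduction carried out with an arbitrary frame stack glued underneath. In each case I unfold the definition of $\Eval$ into a concrete reduction, apply the matching bullet of \Cref{lemma:contextpluggingtech}, and — only in the value case — append one extra transition.

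\textbf{Value case.} If $\Eval{\conf{\frame{\cmd}{\regmap}{\opt}, \lay\lcomp \rfs}} = (\val, \rfs')$, then by definition of $\Eval$ there are $\nat \in \Nat$ and a register map $\overline{\regmap}$ with
\[
  \nstep \nat {\conf{\frame{\cmd}{\regmap}{\opt}, \lay\lcomp \rfs}}{\conf{\frame{\cnil}{\update{\overline{\regmap}}{\ret}{\val}}{\opt}, \lay\lcomp\rfs'}} .
\]
Applying the first bullet of \Cref{lemma:contextpluggingtech} with the frame stack $\frame \cmdtwo {\regmap'}{\opt}:\st$ yields a reduction of the same length reaching $\conf{\frame{\cnil}{\update{\overline{\regmap}}{\ret}{\val}}{\opt}:\frame \cmdtwo {\regmap'}{\opt}:\st, \lay\lcomp\rfs'}$. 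A single application of rule~\ref{WL:Pop} then discards the empty top frame and transfers the return value, producing $\conf{\frame \cmdtwo {\update{\regmap'}\ret{\val}}{\opt}:\st, \lay\lcomp\rfs'}$, which is exactly the claimed target; here one checks that \ref{WL:Pop} updates $\ret$ in $\regmap'$ to $\val$, which holds because the popped frame's register map evaluates $\ret$ to $\val$.

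\textbf{Error, unsafe and divergence cases.} If $\Eval{\conf{\frame{\cmd}{\regmap}{\opt}, \lay\lcomp \rfs}}$ is $\err$ (resp. $\unsafe$), then $\nstep \nat {\conf{\frame{\cmd}{\regmap}{\opt}, \lay\lcomp \rfs}}{\err}$ (resp. $\unsafe$) for some $\nat$, and the second (resp. third) bullet of \Cref{lemma:contextpluggingtech} lifts it immediately, since $\err$ and $\unsafe$ absorb frames below. For divergence: since the semantics is deterministic, $\diverge{\conf{\frame{\cmd}{\regmap}{\opt}, \lay\lcomp \rfs}}$ is equivalent to the statement that for every $\nat \in \Nat$ there is an $\nat$-step reduction, none of whose intermediate configurations is $\err$, $\unsafe$, or terminal, so each has the form $\conf{\st_\nat, \mem_\nat}$. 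The first bullet of \Cref{lemma:contextpluggingtech} then gives, for every $\nat$, an $\nat$-step reduction starting from $\conf{\frame{\cmd}{\regmap}{\opt}:\frame \cmdtwo {\regmap'}{\opt}:\st, \lay\lcomp\rfs}$, so the larger configuration admits reductions of unbounded length and hence diverges, i.e. its evaluation is $\Omega$.

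\textbf{Main obstacle.} There is no genuine obstacle here: all the inductive content already lives in \Cref{lemma:contextpluggingtech}, whose proof is a routine induction on reduction length, inspecting the last rule used and observing that every rule is insensitive to frames lying strictly below the active redex. The only two points needing a moment of care are the bookkeeping of the return-value register in the \ref{WL:Pop} step of the value case, and the appeal to determinism needed to turn "there is an $\nat$-step reduction for every $\nat$" into genuine divergence in the last case.
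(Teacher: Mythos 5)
Your proof is correct and follows essentially the same route as the paper: every case is discharged by lifting the corresponding reduction via \Cref{lemma:contextpluggingtech}, with the divergence case handled exactly as in the paper by extracting an $\nat$-step reduction for every $\nat$ and lifting each one. The extra detail you supply for the value case (the final \ref{WL:Pop} step transferring the return value) and for error/unsafe is precisely what the paper omits as routine.
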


\begin{proof}
  All cases follow directly from \Cref{lemma:contextpluggingtech}. The
  first three cases are omitted, as the most interesting one is the
  last.
  Expanding the definition of $ \Eval \cdot $, we obtain that $\Eval {\conf{\frame{\cmd}{\regmap}{\opt}, \lay\lcomp \rfs}} = \Omega$ is equivalent to stating that for every $\nat$,
  \[
    \exists \conftwo. \nstep \nat {\conf{\frame{\cmd}{\regmap}{\opt}, \lay\lcomp \rfs}} \conftwo.
    \tag{\dag}
  \]
  Our goal is to show that for every $\nat$:
  \[
    \exists \conftwo'. \nstep \nat {\conf{\frame{\cmd}{\regmap}{\opt}:\frame \cmdtwo {\regmap'}{\opt}:\st, \lay\lcomp \rfs}} {\conftwo'}
  \]
  Fix $ \nat $. The claim follows by applying (\dag) to $\nat$ and then using \Cref{lemma:contextpluggingtech} on the $\nat$-step reduction  $\nstep \nat {\conf{\frame{\cmd}{\regmap}{\opt}, \lay\lcomp \rfs}} \conftwo$.
\end{proof}

% 
%
%

%%% Local Variables: 
%%% languagetool-local-disabled-rules: ("ID_CASING" "EN_UNPAIRED_QUOTES" "COMMA_PARENTHESIS_WHITESPACE" "CURRENCY" "WHITESPACE_RULE") 
%%% mode: latex
%%% TeX-master: "arxiv.tex"
%%% End:

\subsection{Appendix for \Cref{sec:safety2}}
\label{sec:appsafety2}

\subsubsection{Speculative Semantics of $\Cmd$}

The speculative semantics of $\Cmd$ is in \Cref{fig:scen2sem1,fig:scen2sem1bis}.

\begin{figure*}
  \centering
  \columnwidth=\linewidth
  % LOAD RULES
  \begin{framed}
    \[
      \Infer[SI][Load-Step][\textsc{SLoad-Step}]
      {
        \lay \red \sframe{\frame{\cmemread[\lbl] \vx \expr\sep\cmd}{\regmap}{\opt}\cons\st}{\bmem}{\boolms}\cons\cfstack
        \sto{\dstep}{\omem \add} \sframe{\frame{\cmd}{\update \regmap x \val}{\opt} \cons \st}{\bmem}{\boolms}\cons\cfstack
      }
      {\toAdd{\sem\expr_{\regmap, \lay}} = \add &
        \bufread {\bm\buf\mem} \add 0 =\val, \bot &
        \add \in \underline \lay(\Ar[\opt]) &
        \fbox{$\opt = \km[\syscall] \Rightarrow \add \in \underline \lay(\caps(\syscall))$}
      }
    \]

    \resizebox{\textwidth}{!}{\(
      \Infer[SI][Load][\textsc{SLoad}]
      {\sstep
        {\sframe{\frame{\cmemread[\lbl] \vx  \expr\sep\cmd}{\regmap}{\opt}\cons \st} {\bm\buf\mem}{\boolms}\cons\cfstack}
        {\sframe{\frame{\cmd}{\update \regmap x \val}{\opt}\cons\st}{\bm\buf\mem}{\boolms\lor\bool'}\cons
          \sframe{\frame{\cmemread[\lbl] \vx  \expr\sep\cmd}{\regmap}{\opt}\cons\st}{\bm\buf\mem}{\boolms}\cons \cfstack}  {{{\dload[\lbl] i}}} {\omem \add}}
      {\toAdd{\sem\expr_{\regmap, \lay}} = \add &
        \bufread {\bm\buf\mem} \add i =\val, \bool' &
        \add \in \underline \lay(\Ar[\opt]) &
        \fbox{$\opt = \km[\syscall] \Rightarrow \add \in \underline \lay(\caps(\syscall))$}
      }
      \)
    }
    
    \[
      \Infer[SI][Load-Err][\textsc{SLoad-Error}]
      {\sstep
        {\sframe{\frame{\cmemread[\lbl] \vx \expr\sep\cmd}{\regmap}{\opt}\cons \st}{\bm\buf\mem}{\boolms}\cons\cfstack}
        {\sconf{\err, \boolms}\cons \cfstack}  {\dir} {\onone}}
      { \toAdd{\sem\expr_{\regmap, \lay}} = \add &
        \add \notin \underline \lay(\Ar[\opt]) &
        \dir \in \{\dstep, \dload[\lbl]\}
      }
    \]

    \[
      \Infer[SI][Load-Unsafe][\textsc{SLoad-Unsafe}]
      {\sstep
        {\sframe{\frame{\cmemread[\lbl] \vx \expr\sep\cmd}{\regmap}{\km[\syscall]}\cons \st}{\bm\buf\mem}{\boolms}\cons\cfstack}
        {{\unsafe}}   {\dir} {\omem \add}}
      {\toAdd{\sem\expr_{\regmap, \lay}} = \add &
        \add \in \underline \lay(\Ar[\km]) &
        \dir \in \{\dstep,\dload[\lbl]\} &
        \fbox{$\add \notin \underline \lay(\caps(\syscall))$}
      }
    \]
    \[
      \Infer[SI][Store][\textsc{SStore}]
      {
        \lay \red \sframe{\frame{\cmemass \expr \exprtwo\sep\cmd}{\regmap}{\opt}\cons\st}{\bm\buf\mem}{\boolms}\cons\cfstack
        \sto{\dstep}{\omem \add} \sframe{\frame{\cmd}{\regmap}{\opt} \cons \st}{\bm{\bitem \add {\sem \exprtwo_{\regmap,\lay}} \buf}\mem}{\boolms}\cons\cfstack
      }
      {\toAdd{\sem\expr_{\regmap, \lay}} = \add &
        \add \in \underline \lay(\Ar[\opt]) &
        \fbox{$\opt = \km[\syscall] \Rightarrow \add \in \underline \lay(\caps(\syscall))$}
      }
    \]

    \[
      \Infer[SI][Store-Err][\textsc{SStore-Error}]
      {\lay \red \sframe{\frame{\cmemass \expr \exprtwo\sep\cmd}{\regmap}{\opt}\cons\st}{\bm\buf\mem}{\boolms}\cons\cfstack
        \sto{\dstep}{\onone} 
        {\sconf{\err, \boolms}\cons \cfstack}}
      { \toAdd{\sem\expr_{\regmap, \lay}} = \add &
        \add \notin \underline \lay(\Ar[\opt]) 
      }
    \]

    \[
      \Infer[SI][Store-Unsafe][\textsc{SStore-Unsafe}]
      {\lay \red \sframe{\frame{\cmemass \expr \exprtwo\sep\cmd}{\regmap}{\km}\cons\st}{\bm\buf\mem}{\boolms}\cons\cfstack
        \sto{\dstep}{\onone} {\unsafe}}
      {\toAdd{\sem\expr_{\regmap, \lay}} = \add &
        \add \in \underline \lay(\Ar[\km]) &
        \fbox{$\add \notin \underline \lay(\caps(\syscall))$}
      }
    \]
    \resizebox{\textwidth}{!}{\(
      \Infer[SI][Call][\textsc{SCall}]{
        \lay \red \sframe{\frame{\ccall \expr {\exprtwo_1,\dots,\exprtwo_h}\sep\cmd}{\regmap}{\opt}\cons\st}{\bmem}{\boolms}\cons\cfstack
        \sto{\dstep}{\omem \add} \sframe{\frame{\mem(\add)} {\regmap_0[\vx_1,\dots,\vx_h\upd\sem{\exprtwo_1}_{\regmap,\lay}, \dots, \sem{\exprtwo_h}_{\regmap,\lay}]}{\opt}\cons\frame{\cmd}{\regmap}{\opt} \cons \st}{\bmem}{\boolms}\cons\cfstack
      }
      {\toAdd{\sem\expr_{\regmap, \lay}} = \add &
        \add \in \underline \lay(\Fn[\opt]) &
        \fbox{$\opt = \km[\syscall] \Rightarrow \add \in \underline \lay(\caps(\syscall))$}
      }
      \)}
    
    \[
      \Infer[SI][Call-Unsafe][\textsc{SCall-Unsafe}]{\sstep
        {\sframe{\frame{\ccall \expr{\vec \exprtwo}\sep\cmd}{\regmap}{\km[\syscall]}\cons\st}{\bm\buf\mem}{\boolms}\cons\cfstack}
        {\unsafe}
        {\dstep}
        {\ojump \add}}
      {\toAdd{\sem\expr_{\regmap, \lay}} = \add &
        \add \in \underline \lay(\Fn[\km]) &
        \fbox{$\add \notin \underline \lay(\caps(\syscall))$}
      }
    \]

    \[
      \Infer[SI][Call-Err][\textsc{SCall-Error}]{
        \lay \red \sframe{\frame{\ccall \expr {\exprtwo_1,\dots,\exprtwo_h}\sep\cmd}{\regmap}{\opt}\cons\st}{\bmem}{\boolms}\cons\cfstack
        \sto{\dstep}{\onone} \sconf{\err,\boolms}\cons\cfstack
      }
      {\toAdd{\sem\expr_{\regmap, \lay}} = \add &
        \add \notin \underline \lay(\Fn[\opt])
      }
    \]
    
    \resizebox{\textwidth}{!}{\(
      \Infer[SI][System-Call][\textsc{SSC}]{
        \lay \red \sframe{\frame{\csyscall \syscall {\exprtwo_1,\dots,\exprtwo_h}\sep\cmd}{\regmap}{\opt}\cons\st}{\bmem}{\boolms}\cons\cfstack
        \sto{\dstep}{\onone} \sframe{\frame{\syss(\add)} {\regmap_0[\vx_1,\dots,\vx_h\upd\sem{\exprtwo_1}_{\regmap,\lay}, \dots, \sem{\exprtwo_h}_{\regmap,\lay}]}{\km[\syscall]}\cons\frame{\cmd}{\regmap}{\opt} \cons \st}{\bmem}{\boolms}\cons\cfstack
      }
      { }
      \)}
    
    \[
      \Infer[SI][Pop][\textsc{SPop}]{\sstep
        {\sframe{\frame{\cnil}{\regmap}{\opt}\cons\frame{\cmd}{\regmap'}{\opt'} \cons \st}{\bm\buf\mem}{\boolms}\cons\cfstack}
        {\sframe{\frame{\cmd}{\update{\regmap'}{\ret}{\regmap(\ret)}}{\opt'} \cons \st}{\bm\buf\mem}{\boolms}\cons\cfstack}
        {\dstep}
        {\onone}}
      {}
    \]
  \end{framed}
  \caption{Speculative rules for $\Cmd$ and a system $\system=(\rfs, \syss, \caps)$, Part I.}
  \label{fig:scen2sem1}
\end{figure*}

\begin{figure*}[t]
  \centering
  \columnwidth=\linewidth
  % Operations
  \begin{framed}
    \ 
    
    \[
      \Infer[SI][Op][\textsc{SOp}]{\sstep
        {\sframe{\frame{\vx \ass \expr\sep\cmd}{\regmap}{\opt}\cons\st}{\bm{\buf}{\mem}}{\boolms}\cons \cfstack}
        {\sframe{\frame{\cmd}{\update \regmap x {\sem \expr_{\regmap, \lay}}}{\opt}\cons\st}{\bm{\buf}{\mem}}{\boolms}\cons \cfstack} {\dstep}{\onone}}{}
    \]

    \[
      \Infer[SI][Skip][\textsc{SSkip}]{\sstep
        {\sframe{\frame{\cskip\sep\cmd}{\regmap}{\opt}\cons \st}{\bm{\buf}{\mem}}{\boolms}\cons \cfstack}
        {\sframe{\frame{\cmd}{\regmap}{\opt}\cons\st}{\bm{\buf}{\mem}}{ \boolms}\cons \cfstack} {\dstep}{\onone}}{}
    \]
    
    \resizebox{\textwidth}{!}{\(
      \Infer[SI][Loop-Step][\textsc{SLoop}]
      {\sstep
        {\sframe{\frame{\cwhile[\lbl] \expr \cmdtwo\sep\cmd}{\regmap}{\opt}\cons\st}{\bm\buf\mem}{\boolms}\cons\cfstack}
        {\specconfone_{d} \cons\cfstack}
        {\dstep}{\obranch d}}
      {
        {\toBool{\sem\expr_{\regmap, \lay}}} = d &
        \specconfone_\ctrue =\sframe{\frame{\cmdtwo\sep\cwhile[\lbl] \expr \cmdtwo\sep\cmd}{\regmap}{\opt}\cons\st}{\bm{\buf}{\mem}}{\boolms} &
        \specconfone_\cfalse =\sframe{\frame{\cmd}{\regmap}{\opt}\cons\st}{\bm \buf \mem}{\boolms}
      }
    \)}\\
    
    \resizebox{\textwidth}{!}{\(
      \Infer[SI][Loop-Branch][\textsc{SLoop-Branch}]
      {\sstep
        {\sframe{\frame{\cwhile[\lbl] \expr \cmdtwo\sep\cmd}{\regmap}{\opt}\cons\st}{\bm{\buf}{\mem}}{\boolms}\cons\cfstack}
        {\specconfone_{d}\cons  \sframe{\frame{\cwhile[\lbl]\expr \cmdtwo\sep\cmd}{\regmap}{\opt}\cons\st}{\bm{\buf}{\mem}}{\boolms}\cons\cfstack}
        {\dbranch d}{\obranch {d}}
      }
      {
        \specconfone_\ctrue =\sframe{\frame{\cmdtwo\sep\cwhile[\lbl] \expr \cmdtwo\sep\cmd}{\regmap}{\opt}\cons\st}{\bm{\buf}{\mem}}{\boolms\lor(d\neq \toBool{\sem\expr_{\regmap, \lay}})} &
        \specconfone_\cfalse =\sframe{\frame{\cmd}{\regmap}{\opt}\cons\st}{\bm{\buf}{\mem}}{\boolms\lor(d\neq \toBool{\sem\expr_{\regmap, \lay}})}
      }
      \)}

    \resizebox{\textwidth}{!}{\(
      \Infer[SI][If][\textsc{SIf}]{\sstep
        {\sframe{\frame{\cif[\lbl] \expr {\cmd_\ctrue} {\cmd_\cfalse}\sep\cmd}{\regmap}{\opt}\cons\st}{\bm{\buf}{\mem}}{\boolms}\cons\cfstack}
        {\sframe{\frame{\cmd_{d}\sep\cmd}{\regmap}{\opt}\cons\st}{\bm{\buf}{\mem}}{ \boolms}\cons\cfstack} {{\dstep}}{\obranch{d}}}
      {\toBool{\sem \expr_{\regmap, \lay}}=d}
    \)}

    \[
      \Infer[SI][If-Branch][\textsc{SIf-Branch}]{\sstep
        {\specconfone\cons\cfstack}
        {\sframe{\frame{\cmd_{d}\sep\cmd}{\regmap}{\opt}\cons\st}{\bm\buf\mem}{\boolms\lor (d\neq\sem \expr_{\regmap, \lay})}\cons
          \specconfone\cons\cfstack}
        {{{\dbranch[\lbl] {d}}}}
        {\obranch d}}
      {\specconfone=\sframe {\frame{\cif[\lbl] \expr {\cmd_\top} {\cmd_\bot}\sep\cmd}{\regmap}{\opt}\cons\st}{\bm\buf\mem}{\boolms}}
    \]

    % BACKTRACK
    
    \[
      \Infer[SI][Backtrack-Top][\textsc{Bt}_{\top}]{
        \lay \red \specconfone \cons\cfstack \sto{\dbt}{\obt \top} \cfstack
      }{
        \specconfone = \sframe{\st}{\bm\buf\mem}{\top} \lor \specconfone = \sconf{\err,\top}
      }
      \quad
      \Infer[SI][Backtrack-Bot][\textsc{Bt}_{\bot}]{
        \lay \red \specconfone \cons\cfstack \sto{\dbt}{\obt \bot} \specconfone \cons \nil
      }{
        \specconfone = \sframe{\st}{\bm\buf\mem}{\bot} \lor \specconfone = \sconf{\err,\bot}
        & \cfstack \neq \nil
      }
    \]

    \[
      \Infer[SI][Fence]{\sstep
        {\sframe{\frame{\cfence\sep\cmd}{\regmap}{\opt}\cons\st, \bm\buf\mem, \bot}\cons \cfstack}
        {\sframe{\frame{\cmd}{\regmap}{\opt}\cons\st,(\nil, \overline {\bm\buf\mem}), \bot}\cons \cfstack} {\dstep}{\onone}}{}
    \]
  \end{framed}
  \caption{Speculative rules for $\Cmd$ and a system $\system=(\rfs, \syss, \caps)$, Part II.}
  \label{fig:scen2sem1bis}
\end{figure*}

\subsubsection{Semantics of $\SpCmd$} The semantics of $\SpCmd$ is in
\Cref{fig:scen2sem2,fig:scen2sem2bis,fig:scen2sem2bisbis}.

\begin{figure*}[t]
  \centering
  \columnwidth=\linewidth
  \begin{framed}
    \
    
    \[
      \Infer[AL][Poison]{
        \lay \red \aconf{\frame{\cpoison \dir\sep\adversary}{\regmap}{\opt}\cons\st }{\mem}{\Ds}{\Os}
        \ato \aconf{\frame{\adversary}{\regmap}{\opt}\cons\st}{\mem}{\dir:\Ds}{\Os}
      }{}
    \]

    \[
      \Infer[AL][Obs][\textsc{Observe}]{
        \lay \red \aconf{\frame{\vx \ass \eobs\sep\adversary}{\regmap}{\opt}\cons\st }{\mem}{\Ds}{\obs:\Os}
        \ato \aconf{\frame{\adversary}{\update \regmap\vx \obs}{\opt}\cons\st}{\mem}{\Ds}{\Os}
      }{}
    \]

    \[
      \Infer[AL][Obs-End][\textsc{Observe-End}]{
        \lay \red \aconf{\frame{\vx \ass \eobs\sep\adversary}{\regmap}{\opt}\cons\st }{\mem}{\Ds}{\nil}
        \ato \aconf{\frame{\adversary}{\update \regmap\vx \cnull}{\opt}\cons\st}{\mem}{\Ds}{\nil}
      }{}
    \]

    \[
      \Infer[AL][Spec-Init]{
        \lay \red \aconf{\frame{\cspec\cmd\sep\adversary}{\regmap}{\opt}\cons\st }{\mem}{\Ds}{\Os}
        \ato \hconf{\sframe{\frame{\cmd}{\regmap}{\opt}}{\mem}{\bot}}{\frame{\adversary}{\regmap}{\opt}\cons\st}{\Ds}{\Os}
      }{}
    \]

    \[
      \Infer[AL][Spec-Dir][\textsc{Spec-D}]{
        \lay \red \hconf{\cfstack}{\st}{\dir{\cons}\Ds}{\Os} \ato \hconf{\cfstack'}{\st}{\Ds}{\obs{\cons}\Os}
      }{
        \lay \red \cfstack \sto{\dir}{\obs} \cfstack'
      }
      \quad
      \Infer[AL][Spec-Step][\textsc{Spec-S}]{
        \lay \red \hconf{\cfstack}{\st}{\Ds}{\Os} \ato \hconf{\cfstack'}{\st}{\Ds}{\obs{\cons}\Os}
      }{
        \nf \cfstack \Ds &
        \lay \red \cfstack \sto{\dstep}{\obs} \cfstack'
      }
    \]

    \[
      \Infer[AL][Spec-Bt][\textsc{Spec-BT}]{
        \lay \red \hconf{\cfstack}{\st}{\Ds}{\Os} \ato \hconf{\cfstack'}{\st}{\Ds}{\obs{\cons}\Os}
      }{
        \nf \cfstack \Ds &
        \nf \cfstack \dstep &
        \lay \red \cfstack \sto{\dbt}{\obs} \cfstack'
      }
    \]
    
    \[
      \Infer[AL][Spec-Term]{
        \lay \red
        \hconf{\sframe{\frame{\cnil}{\regmap}{\opt}}{\bm\buf\mem}{\bot}}
              {\frame{\speccmd}{\regmap'}{\opt'}\cons\st}{\Ds}{\Os}
        \ato \aconf{\frame{\speccmd}{\regmap'}{\opt'}\cons\st}{\overline{\bm\buf\mem}}{\Ds}{\Os}
      }{}
    \]

    \[
      \Infer[AL][Spec-Error]{\esstep
        {\hconf{\sconf{\err,\bot}} \st \Ds \Os}
        {\err}
        {}}{}
      \quad
      \Infer[AL][Spec-Unsafe]{\esstep 
        {\hconf\unsafe\st \Ds \Os}
        {\unsafe}
        {}}{}
    \]
  \end{framed}
  \caption{Semantics of the non-standard constructs of $\SpCmd$ for the system $\system =(\rfs, \syss, \caps)$.}
  \label{fig:scen2sem2}
\end{figure*}

\begin{figure*}
  \centering
  \columnwidth=\linewidth
  \begin{framed}
    \[
      \Infer[AL][Load][\textsc{ALoad}]
      { \esstep
        {\ntce{\cmemread \vx \expr\sep\cmd}{\regmap}{\opt}{\st}{\mem}}
        {\ntce{\cmd}{\update{\regmap}{x}{\mem(\add)}}{\opt}{\st}{\mem}}
      }
      {
        \toAdd{\sem{\expr}_{\regmap, \lay}} = \add &
        % \mem(\add) \in \Val &
        \add \in \underline \lay(\Ar[\opt]) &
        \fbox{$\opt = \km[\syscall] \Rightarrow \add \in \underline \lay(\caps(\syscall))$}
      }
    \]
    
    \[
      \Infer[AL][Store][\textsc{AStore}]
      { \esstep
        {\ntce{\cmemass \expr \exprtwo\sep\cmd}{\regmap}{\opt}{\st}{\mem}}
        {\ntce{\cmd}{\regmap}{\opt}{\st}{\update{\mem}{\add}{\sem \exprtwo_{\regmap, \lay}}}}
      }
      {
        \toAdd{\sem{\expr}_{\regmap, \lay}} = \add &
        % \mem(\add) \in \Val &
        \add \in \underline \lay(\Ar[\opt]) &
        \fbox{$\opt = \km[\syscall] \Rightarrow \add \in \underline \lay(\caps(\syscall))$}
      }
    \]

    \[
      \Infer[AL][Load-Unsafe][\textsc{ALoad-Unsafe}]
      {\esstep
        {\ntce{\cmemread \vx \expr\sep\cmd}{\regmap}{\km[\syscall]}{\st}{\mem}}
        {\unsafe}
      }
      {
        \toAdd{\sem{\expr}_{\regmap, \lay}} = \add &
        \add \in \underline \lay(\Ar[\km]) &
        \fbox{$\add \not\in \underline \lay(\caps(\syscall))$}
      }
    \]
    % \quad
    
    \[
      \Infer[AL][Store-Unsafe][\textsc{AStore-Unsafe}]
      {\esstep
        {\ntce{\cmemass \expr \exprtwo\sep\cmd}{\regmap}{\km[\syscall]}{\st}{\mem}}
        {\unsafe}
      }
      {
        \toAdd{\sem{\expr}_{\regmap, \lay}} = \add &
        \add \in \underline \lay(\Ar[\km]) &
        \fbox{$\add \not\in \underline \lay(\caps(\syscall))$}
      }
    \]

    \[
      \Infer[AL][Load-Err][\textsc{ALoad-Error}]
      {\esstep
        {\ntce{\cmemread \vx \expr\sep\cmd}{\regmap}{\opt}{\st}{\mem}}
        {\err}
      }
      {\toAdd{\sem\expr_{\regmap, \lay}} = \add &
        \add \notin \underline \lay(\Ar[\opt])
      }
    \]
    % \quad
    
    \[
      \Infer[AL][Store-Err][\textsc{AStore-Error}]
      {\esstep
        {\ntce{\cmemass \expr \exprtwo\sep\cmd}{\regmap}{\opt}{\st}{\mem}}
        {\err}
      }
      {\toAdd{\sem\expr_{\regmap, \lay}} = \add &
        \add \notin \underline \lay(\Ar[\opt])
      }
    \]

    \[
      \Infer[AL][Skip][\textsc{ASkip}]
      { \esstep
        {\ntce{\cskip\sep\cmd}{\regmap}{\opt}{\st}{\mem}}
        {\ntce{\cmd}{\regmap}{\opt}{\st}{\mem}}
      }
      {
      }
    \]

    \[
      \Infer[AL][Fence][\textsc{AFence}]
      { \esstep
        {\ntce{\cfence\sep\cmd}{\regmap}{\opt}{\st}{\mem}}
        {\ntce{\cmd}{\regmap}{\opt}{\st}{\mem}}
      }
      {
      }
    \]

    \[
      \Infer[AL][Op][\textsc{AOp}]
      { \esstep
        {\ntce{\vx \ass \expr\sep\cmd}{\regmap}{\opt}{\st}{\mem}}
        {\ntce{\cmd}{\update{\regmap}{\vx}{\sem \expr_{\regmap, \lay}}}{\opt}{\st}{\mem}}
      }
      {
      }
    \]  
    
    \[
      \Infer[AL][If][\textsc{AIf}]
      { \esstep
        {\ntce{\cif \expr {\cmd_\ctrue} {\cmd_\cfalse}\sep\cmdtwo}{\regmap}{\opt}{\st}{\mem}}
        {\ntce{\cmd_{\toBool{\sem \expr_{\regmap, \lay}}}\sep \cmdtwo}{\regmap}{\opt}{\st}{\mem}}
      }
      {
      }
    \]
    
    \[
      \Infer[AL][While][\textsc{AWhile}]
      { \esstep
        {\ntce{\cwhile {\expr} {\cmd}\sep\cmdtwo}{\regmap}{\opt}{\st}{\mem}}
        {\confone_{\toBool{\sem \expr_{\regmap, \lay}}}}
      }
      {
        \confone_{\ctrue} = {\ntce{\cmd\sep\cwhile {\expr} {\cmd}\sep \cmdtwo}{\regmap}{\opt}{\st}{\mem}} &
        \confone_{\cfalse} = {\ntce{\cmdtwo}{\regmap}{\opt}{\st}{\mem}}     }
    \]  
  \end{framed}
  \caption{Semantics of standard construct of $\SpCmd$ for the system $\system =(\rfs, \syss, \caps)$, part I.}
  \label{fig:scen2sem2bis}
\end{figure*}

\begin{figure*}[t]
  \centering
    % 
  % JUMPS
  %
  % \resizebox{\textwidth}{!}{%

  \begin{framed}
    \resizebox{\textwidth}{!}{\(
      \Infer[AL][Call][\textsc{ACall}]{
        \esstep
        {\ntce{\ccall{\expr}{\exprtwo_1,\dots,\exprtwo_n}\sep\cmd}{\regmap}{\opt}{\st}{\mem}}
        {
          \ntc
          {\mem(\add)}
          {\regmap_0[\vx_{1} \upd \sem{\exprtwo_1}_{\regmap,\lay}, \dots, \vx_{n} \upd \sem{\exprtwo_n}_{\regmap,\lay}]}
          {\opt}
          {\frame{\cmd}{\regmap}{\opt} : \st}
          {\mem}
        }
      }
      {\toAdd{\sem{\expr}_{\regmap, \lay}}=\add &
        \add \in \underline \lay(\Fn[\opt]) &
        \fbox{$\opt = \km[\syscall] \Rightarrow \add \in \underline \lay(\caps(\syscall))$}
      }
      \)}

    \[
      \Infer[AL][Call-Unsafe][\textsc{ACall-Unsafe}]{
        \esstep
        {\ntce{\ccall{\expr}{\exprtwo_1,\dots,\exprtwo_n}\sep\cmd}{\regmap}{\km[\syscall]}{\st}{\mem}}
        {\unsafe}
      }
      {
        \toAdd{\sem{\expr}_{\regmap, \lay}} = \add &
        \add \in \underline \lay(\Fn[\km]) &
        \fbox{$\add \not\in \underline \lay(\caps(\syscall))$}
      }
    \]

    \[
      \Infer[AL][Call-Err][\textsc{ACall-Error}]
      {\esstep
        {\ntce{\ccall \exprtwo {\expr_1, \ldots, \expr_n}\sep\cmd}{\regmap}{\opt}{\st}{\mem}}
        {\err}
      }
      {\toAdd{\sem\expr_{\regmap, \lay}} = \add &
        \add \notin \underline \lay(\Fn[\opt])
      }
    \]

    \resizebox{\textwidth}{!}{\(
      \Infer[AL][System-Call][\textsc{ASC}]{
        \esstep
        {\ntce{\csyscall{\syscall}{\exprtwo_1,\dots,\exprtwo_n}\sep\cmd}{\regmap}{\opt}{\st}{\mem}}
        {
          \ntce
          {\syss(\syscall)}
          {\regmap_0[\vx_{1} \upd \sem{\exprtwo_1}_{\regmap,\lay}, \dots, \vx_{n} \upd \sem{\exprtwo_n}_{\regmap,\lay}]}
          {\km[\syscall]}
          {\frame{\cmd}{\regmap}{\opt} : \st}
          {\mem}
        }
      }
      {
      }
      \)}

    \[
      \Infer[AL][Pop][\textsc{APop}]{
        \esstep
        {\ntce{\cnil}{\regmap}{\opt}{\frame {\cmd} {\regmap'} {\opt'}\cons \st}{\mem}}
        {\ntce{\cmd}{\update{\regmap'}{\ret} {\regmap(\ret)}}{\opt'}{\st}{\mem}}
      }
      {}
    \]
  \end{framed}
  \caption{Semantics of standard construct of $\SpCmd$ for the system $\system =(\rfs, \syss, \caps)$, part II.}
  \label{fig:scen2sem2bisbis}
\end{figure*}

\subsubsection{Buffered Memories}
\label{sec:bufmem}

Buffered memories have already been defined in \Cref{sec:safety2}.
In this section we just give the formal definitions of their lookup
and flushing operations following \cite{HighAssurance}. The operation of
lookup a value from a buffered memory is defined as follows:
\begin{align*}
  \bufread {\mem} \add k &\defsym \mem(a), \bot &  &\\
  \bufread {\bm{\bitem \add \nat \cons \buf}\mem} \add 0 &\defsym \nat, \bot &  &\\
  \bufread {\bm{\bitem \add \nat \cons \buf}\mem} \add {i+1} &\defsym \nat', \top &&\text{if }\bufread {\bm\buf\mem} \add i = \nat', b \\
  \bufread {\bm{\bitem {\add'} \nat \cons \buf}\mem} \add {i} &\defsym \bufread {\bm\buf\mem} \add i &&\text{if }\add\neq\add' 
\end{align*}
and a function for flushing buffers:
\begin{align*}
  \overline {\mem} &\defsym \mem \\
  \overline {\bm{\bitem \add \nat \cons \buf}\mem} &\defsym \update{\overline {\bm \buf \mem}} \add \nat.
\end{align*}
This function commits all the pending stores to the main memory. The
domain of a buffered memory is defined as follows: $\dom(\nil)=\emptyset$,
$\dom(\bitem \add\val\buf)=\{\add\}\cup\dom(\buf)$.

\begin{remark}
  \label{remark:onbuflookup}
  if $\bufread{\bm\buf\mem} \add i= \val, \bot$, then $\bufread{\bm\buf\mem} \add 0= \val, \bot$.  
\end{remark}
\begin{proof}
  the claim is:
  \[
    \forall \bm \buf\mem.\forall i.\forall \val.\bufread{\bm\buf\mem} \add i= \val, \bot \to \bufread{\bm\buf\mem} \add 0= \val, \bot
  \]
  By induction on the length of the buffer.
  \begin{proofcases}
    \proofcase{$\nil$} The claim comes from the definition of lookup.
    \proofcase{$\bitem {\add'} {\overline\val}\cons \buf$} The IH says:
    \[
      \forall i, \val.\bufread{\bm\buf\mem}\add i=\val, \bot \to \bufread{\bm\buf\mem}\add i=\bufread{\bm\buf\mem}\add 0
    \]
    and the claim is:
    \begin{equation*}
      \forall i, \val.\bufread{(\bitem {\add'} {\overline\val}\cons \mu, \mem)}\add i=\val, \bot \Rightarrow\\ \bufread{(\bitem {\add'} {\overline\val}\cons\mu, \mem)}\add i=\bufread{(\bitem {\add'} {\overline\val}\cons\mu, \mem)}\add 0
    \end{equation*}
    By cases on $i$.
    \begin{proofcases}
      \proofcase{$0$} The claim is:
      \begin{equation*}
        \forall \val.\bufread{\bm{\bitem {\add'} {\overline\val}\cons \mu}\mem}\add 0=\val, \bot \Rightarrow\\
        \bufread{\bm{\bitem {\add'} {\overline\val}\cons\mu} \mem}\add 0=\bufread{\bm{\bitem {\add'} {\overline\val}\cons\mu} \mem}\add 0
      \end{equation*}
      Observe that the conclusion is trivial.
      \proofcase{$i+1$} The claim is:
      \begin{equation*}
        \forall i.\forall \val.\bufread{\bm{\bitem {\add'} {\overline\val}\cons \mu}{\mem}}\add {i+1}=\val, \bot \Rightarrow\\
        \bufread{\bm{\bitem {\add'} {\overline\val}\cons \mu}{\mem}}\add {i+1}=\bufread{\bm{\bitem {\add'} {\overline\val}\cons \mu}\mem}\add 0
      \end{equation*}
      Fix $i, \val$, assume $\bufread{\bm{\bitem {\add'} {\overline\val}\cons \mu} \mem}\add {i+1}=\val, \bot$, call this assumption (H). The claim becomes:
      \[
        \bufread{\bm{\bitem {\add'} {\overline\val}\cons\mu} \mem}\add 0=\bufread{\bm{\bitem {\add'} {\overline\val}\cons\mu} \mem}\add 0
      \]
      Observe that it must be the case where $\add'\neq\add$, otherwise from (H) and the definition of lookup, we obtain $\bot=\top$. With this assumption, from (H) we deduce $\bufread{\bm\buf\mem}\add {i+1}=\val, \bot$, and we can rewrite the claim as follows:
      \[
        \bufread{\bm\buf\mem}\add {i+1}=\bufread{\bm\buf\mem}\add 0
      \]
      The claim is a consequence of the IH.
    \end{proofcases}
  \end{proofcases}
\end{proof}

\begin{remark}
  \label{rem:bufreadoverline}
  For every buffered memory $\bm\buf\mem$, and every address $\add$ we have that
  $\bufread{\bm\buf\mem}\add 0=\overline {\bm\buf\mem}(\add)$.
\end{remark}
\begin{proof}
  The proof goes by induction on $\buf$. If it is empty, then the claim is a
  trivial consequence of the definition of lookup.
  Otherwise, the claim is:
  \[
    \bufread{\bm{\bitem {\add'}\val:\buf}\mem}\add 0=\overline {\bm{\bitem {\add'}\val:\buf}\mem}(\add),
  \]
  that rewrites as follows:
  \[
    \bufread{\bm{\bitem {\add'}\val:\buf}\mem}\add 0=\update{\overline {\bm \buf\mem}}{\add'}\val(\add).
  \]
  If $\add=\add'$, the claim is a consequence of the definition of lookup and memory update. Otherwise,
  it is a consequence of the IH.
\end{proof}

\begin{remark}
  \label{rem:overlinewrtdom}
  For every buffered memory $\bm \buf{(\lay\lcomp \rfs)}$
  if $\dom(\buf) \subseteq \lay(\Ar)$,
  then we have that: $\overline {\bm \buf{(\lay\lcomp \rfs)}}
  = \lay \lcomp \rfs'$ for some $\rfs' \sim_{\Fn} \rfs$.
\end{remark}
\begin{proof}
  The proof is by induction on the length of the buffer. If it is 0,
  then the claim is trivial. Otherwise, it is a consequence of
  \Cref{rem:memupdtostupd}.
\end{proof}

\subsubsection{Omitted Proofs and Results}
\label{sec:proofs2}

In the following we will assume, without lack of generality that all
the memories $\mem$ within a configuration that is reached during
the evaluation of a configuration whose memory is $\lay\lcomp \rfs$
is such that $\mem =\lay\lcomp \rfs'$ for some $\rfs' \sim_{\Fn} \rfs$.
This is justified by \Cref{rem:simispreserved2,rem:simispreservedspec}.

\begin{proof}[Proof of \Cref{lemma:cttosafe}]
  Assume that a system call $\syscall$
  of a system $\system = (\syss, \rfs, \caps)$ is not \emph{speculative
    kernel safe}; this means that for some $\nat$ layout $\lay$,
  register map $\regmap$, buffered memory $(\buf, \lay \lcomp \rfs')$
  with $\rfs'\sim_{\Fn}\rfs$,
  sequence of directives $\Ds$, sequence of observations $\Os$,
  and mis-speculation flag $\boolms$
  we have that: 
  \[
    \nsstep \nat {\sframe{\frame{\syss(\syscall)}{\regmap}{\km[\syscall]}}{\bm{}\lay \lcomp \rfs'}\boolms} \unsafe \Ds \Os.
  \]
  By introspection on the rules, we deduce that the rule applied
  must be one among \ref{SI:Load-Unsafe},
  \ref{SI:Store-Unsafe}, \ref{SI:Call-Unsafe}. In all these cases,
  the rightmost observation within $\Os$ must be $\omem \add$
  for some address $\add \in \underline \lay(\Idk)$.
  More precisely, $\add$ belongs to $\underline \lay(\Fnk)$ if
  the rule was \ref{SI:Call-Unsafe} and to $\underline \lay(\Ark)$
  otherwise. In the following we just show this last case.
  From the definition of $\underline \lay(\Ark)$, we deduce that
  there are $\ar \in \Ark$ and $0\le i < \size \ar$ such that
  $\lay(\ar)+i = \add$. Our goal now, is to show that there is a
  layout $\lay'$ such that $p \notin
  \underline \lay'(\Idk)$, which means that $\add$ is not allocated
  in $\lay'$. To build $\lay'$, we go by cases on $\add' = \lay(\ar)$.
  \begin{proofcases}
    \proofcase{$\kappa_\um$} In this case, the array is stored
    at the beginning of the kernel-space address space.
    From the assumption on the size of this address space, there are
    at least $2\cdot\max_{\id \in \Idk} \size\id> \size \ar$ free
    addresses in the set $\{\kappa_\um+\size \ar, \dots,
    \kappa_\um+\kappa_{\km}-1\}$, so the array can be moved in
    this space, leaving the address $\add$ not allocated. We call
    $\lay'$ one such layout.
    \proofcase{$\kappa_{\km}-1-\size{\ar}$} Analogous to the case above.
    \proofcase{$\kappa_\um < \add' <\kappa_{\km}-1-\size{\ar}$} Due to
    the pigeonhole principle, in at least one of the
    address spaces $\{\kappa_\um, \dots, \add' -1\}$ and
    $\{\add'+\size \ar, \dots, \kappa_{\km}-1\size{\ar}\}$ there are
    at least $\max_{\id \in \Idk} \size\id> \size \ar$ not allocated
    address, this means that $\ar$ can be moved to one of these sub-spaces,
    leaving free the gap $\lay(\ar), \dots, \lay(\ar)+i$
    and in particular $\add$. We call $\lay'$ one such layout.
  \end{proofcases}
  From the \emph{speculative side-channel layout-non-interference}
  assumption, we deduce that there is $\cfstack'$ such that
  \[
    \nsstep[\system][\lay'] \nat {\sframe{\frame{\syss(\syscall)}{\regmap}{\km[\syscall]}}{\bm{\buf}{\lay' \lcomp \ars'}}{\boolms}} {\cfstack'} \Ds \Os,
  \]
  Observe that, in particular, this transition produces
  the sequence of observations $\Os$. By applying 
  \Cref{rem:noaddobs}, we deduce that $\omem \add$ does not
  appear in $\Os$, but this is absurd, because we showed that
  the rightmost observation of $\Os$ was exactly $\omem \add$.
\end{proof}

\begin{proof}[Proof of \Cref{thm:scenario2}]
  We fix a system $\system= (\rfs, \syss, \caps)$ and go by contraposition.
  We assume that there are an unprivileged command $\speccmd\in \SpCmd$,
  an initial register map $\regmap$, a number of steps $\nat$ and a layout such that:
  \[
    \nesstep \nat {\conf{\frame \speccmd \regmap \um, \lay \lcomp \rfs, \nil, \nil}} \unsafe.
  \]
  We first observe that $\nat \neq 0$, and by introspection of
  the rules of the semantics, we observe that the last rule must be
  one among \ref{AL:Load-Unsafe}, \ref{AL:Store-Unsafe}, \ref{AL:Call-Unsafe}
  and\\ \ref{AL:Spec-Unsafe}. We go by cases on these rules; in particular, the
  proof in the case of the first three rules is analogous, so we take the case
  of the rule \ref{AL:Load-Unsafe} as an example.
  \begin{proofcases}
    \proofcase{\ref{AL:Load-Unsafe}} By introspection of the rule,
    we deduce that there is a configuration
    \[
      \conf{\frame {\cmemread \vx \expr} {\regmap'}{\km[\syscall]}:\st, \mem, \Ds, \Os} 
    \]
    such that
    \small
    \begin{equation*}
      \nesstep \nat {\conf{\frame \speccmd \regmap \um, \lay \lcomp \rfs,\nil, \nil}}
      {\conf{\frame {\cmemread \vx \expr} {\regmap'}{\km[\syscall]}:\st, \mem, \Ds, \Os}}\\ \ato \unsafe.
    \end{equation*}
    \normalsize
    With an application of \Cref{lemma:thereisasyscall1}, we deduce that
    there is a configuration
    \[
      {\conf{\frame {\syss(\syscall)} {\regmap_0[\vx_1, \dots, \vx_k\upd \val_1, \dots, \val_k]}{\km[\syscall]}, \mem', \Ds', \Os'}},
    \]
     a prefix $\st'$ of $\st$ and $\nat'\in \Nat$ such that:
    \begin{equation*}
      \lay \red {\conf{\frame {\syss(\syscall)} {\regmap_0[\vx_1, \dots, \vx_k\upd \val_1, \dots, \val_k]}{\km[\syscall]}, \mem', \Ds', \Os'}} \\ \ato^{\nat'}
      {\conf{\frame {\cmemread \vx \expr} {\regmap'}{\km[\syscall]}:\st', \mem, \Ds, \Os}}
    \end{equation*}
    In the following, we write $\rfs' \sim_{\Fn} \rfs$, instead of $\mem'$
    as a consequence of \Cref{rem:simispreserved2}. 
    By introspection of the rule \ref{AL:Load-Unsafe} we deduce that:
    \begin{varitemize}
    \item $\sem \expr_{\regmap', \lay} \in \underline \lay(\Ark)$.
    \item $\sem \expr_{\regmap', \lay} \notin \underline \lay(\caps(\syscall))$.
    \end{varitemize}
    and this allows us to conclude that the same rule applies to
    \[
      {\conf{\frame {\cmemread \vx \expr} {\regmap'}{\km[\syscall]}:\st', \mem, \Ds, \Os}},
    \]
    thus showing that 
    \[
      {\conf{\frame {\syss(\syscall)} {\regmap_0[\vx_1, \dots, \vx_k\upd \val_1, \dots, \val_k]}{\km[\syscall]}, \lay \lcomp \rfs', \Ds', \Os'}}
    \]
    Reduces in $\nat$ steps to $\unsafe$.
    With an application of \Cref{lemma:ordinatytospec}, we deduce that
    also the speculative configuration
    \[
      {\sframe{\frame {\syss(\syscall)} {\regmap_0[\vx_1, \dots, \vx_k\upd \val_1, \dots, \val_k]}{\km[\syscall]}}{\lay \lcomp \rfs'}{\bot}}
    \]
    reduces in $\nat'+1$ steps to $\unsafe$
    using the sequence of directives ${\dstep^{\nat'+1}}$ and this contradicts
    \Cref{lemma:cttosafe} applied to the system call $\syscall$.
    \proofcase{\ref{AL:Spec-Unsafe}} By introspection of the rule,
    we deduce that there is a hybrid configuration
    \[
      {\hconf\unsafe {\frame{\adversary}{\regmap}{\opt}\cons\st}\Ds\Os}
    \]
    such that 
    \[
      \esstep {\hconf\unsafe {\frame{\adversary}{\regmap}{\opt}\cons\st}\Ds\Os} \unsafe
    \]
    and
    \[
      \nesstep \nat  {\conf{\frame \speccmd \regmap \um, \lay \lcomp \rfs, \nil, \nil}} 
      {\hconf\unsafe {\frame{\adversary}{\regmap}{\opt}\cons\st}\Ds\Os}
    \]
    With an application of \Cref{rem:thereisaspecon}, we deduce that
    there is a configuration
    \[
      {\sframe{\frame {\cmd} {\regmap'}{\um}}{\lay \lcomp \rfs'}{\bot}},
    \]
    with $\rfs' \sim_{\Fn}\rfs$, a sequence of directives
    $\Ds'$, a sequence of observations $\Os'$
    and a natural number $\nat' \le \nat$ such that:
    \begin{equation*}
      \nsstep {\nat'}
      {\sframe{\frame {\cmd} {\regmap'}{\um}}{\lay \lcomp \rfs'}{\bot}}
      \unsafe {\Ds'}{\Os'}.
    \end{equation*}
    Because of \Cref{lemma:nobt} we can assume without lack of generality
    that $\Ds'$ does not contain any $\dbt$ directive.
    From \Cref{lemma:thereisasyscall2},
    we deduce that there is configuration
    \begin{equation*}
      {\sframe{\frame {\syss(\syscall)} {\regmap''}{\km[\syscall]}}{\lay \lcomp \rfs'}{\bool}},
    \end{equation*}
    a sequence of directives $\Ds''$, a sequence of observations
    $\Os''$ and a store $\rfs''$ such that:
    \begin{equation*}
      \nsstep {\nat''}
      {\sframe{\frame {\syss(\syscall)} {\regmap''}{\km[\syscall]}}{\lay \lcomp \rfs''}{\bool}}
      \unsafe {\Ds''}{\Os''}.
    \end{equation*}
    but this is in contradiction with \Cref{lemma:cttosafe}
    applied to $\syscall$.
  \end{proofcases}
\end{proof}

\begin{lemma}
  \label{lemma:thereisasyscall1}
  For every system $\system = (\rfs, \syss, \caps)$,
  natural number $\nat$, configurations
  \[
    \confone = \conf{\frame {\speccmd}{\overline \regmap}\um, \overline \mem, \overline \Ds, \overline \Os}\]
  and
  \[
    \conf{\frame{\cmd}{\regmap}{\km[\syscall]}:\st_{\km}:\st_\um, \mem, \Ds, \Os}
  \]
  where $\speccmd$ is unprivileged, $\km(\st_{\km})$
  and $\um(\st_\um)$, such that
  \[
    \nesstep \nat \confone 
    {\conf{\frame{\cmd}{\regmap}{\km[\syscall]}:\st_{\km}:\st_\um,
          \mem, \Ds, \Os}},
  \]
  there is a third configuration
  \[
    \conf{\frame{\syss(\syscall)}{\regmap'}{\km[\syscall]}, \mem', \Ds, \Os},
  \]
  a natural number $\nat'$ such that
  \[
    \nesstep {\nat'}  
    {\conf{\frame{\syss(\syscall)}{\regmap'}{\km[\syscall]}, \mem', \Ds, \Os}}
    {\conf{\frame{\cmd}{\regmap}{\km[\syscall]}:\st_{\km}, \mem, \Ds, \Os}}.
  \]
\end{lemma}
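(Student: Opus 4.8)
The plan is to argue by induction on the length $\nat$ of the reduction $\confone \ato^{\nat} \aconf{\frame{\cmd}{\regmap}{\km[\syscall]}:\st_\km:\st_\um}{\mem}{\Ds}{\Os}$, inspecting the \emph{last} step. The base case $\nat=0$ is vacuous: $\confone$ has the single frame $\frame{\speccmd}{\overline\regmap}{\um}$ running in user mode, which cannot coincide with a configuration whose top frame runs in a kernel mode $\km[\syscall]$. Before the inductive step I would record the structural observation that makes everything work: kernel code lives in $\Cmd$, and $\Cmd$ contains \emph{none} of the attacker-only constructs $\cspec{\cmd'}$, $\cpoison{\dir}$, or $\vx\ass\eobs$; since moreover the attacker starts in $\um$ and only ever enters a kernel mode through \ref{AL:System-Call} — which pushes a frame running some $\syss(\cdot)\in\Cmd$ — once the top frame of an attacker configuration $\aconf{\st'}{\mem'}{\Ds'}{\Os'}$ runs in a kernel mode, the only applicable rules are the ``ordinary'' rules for $\Cmd$ of \Cref{fig:scen2sem2bis,fig:scen2sem2bisbis} (\ref{AL:Op}, \ref{AL:If}, \ref{AL:While}, \ref{AL:Skip}, \ref{AL:Fence}, \ref{AL:Load}, \ref{AL:Store}, \ref{AL:Call}, \ref{AL:System-Call}, \ref{AL:Pop}), which neither read nor modify $\Ds,\Os$ and which examine and rewrite only the top one or two frames together with the memory.

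For the inductive step write the reduction as $\confone \ato^{\nat} \confone' \ato \aconf{\frame{\cmd}{\regmap}{\km[\syscall]}:\st_\km:\st_\um}{\mem}{\Ds}{\Os}$ and analyse the rule of the last step. The error/unsafe rules produce $\err$ or $\unsafe$; \ref{AL:Poison}, \ref{AL:Spec-Init} and the rules for $\vx\ass\eobs$ act on a user-mode top frame (or produce a hybrid configuration); \ref{AL:Spec-Dir}, \ref{AL:Spec-Step}, \ref{AL:Spec-Bt} produce hybrid configurations; and \ref{AL:Spec-Term} restores the continuation frame saved by \ref{AL:Spec-Init}, whose mode is that of an occurrence of $\cspec{\cdot}$, hence — by the observation above — necessarily $\um$. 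None of these can yield the target, so the last step is either (i) an application of \ref{AL:System-Call} to a \emph{user-mode} top frame of $\confone'$, or (ii) an ordinary $\Cmd$-rule applied to a kernel-mode top frame of $\confone'$. In case (i) the pushed frame is the base kernel frame; matching stacks and using $\um(\st_\um)$ forces $\st_\km=\nil$, $\cmd=\syss(\syscall)$ and $\regmap=\regmap_0[\vec\vx\upd\vec\val]$ for suitable values $\vec\val$, while $\mem,\Ds,\Os$ are left untouched by the step, so one takes the ``third configuration'' to be $\aconf{\frame{\syss(\syscall)}{\regmap}{\km[\syscall]}}{\mem}{\Ds}{\Os}$ itself and $\nat'=0$. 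In case (ii), matching stacks and using $\um(\st_\um)$ again shows $\confone'=\aconf{\frame{\cmd_0}{\regmap_0}{\km[\syscall]}:\st_\km':\st_\um}{\mem_0}{\Ds}{\Os}$ for suitable $\cmd_0,\regmap_0,\mem_0$ and a kernel-mode stack $\st_\km'$ with $\km(\st_\km')$ — in particular, if the last rule is \ref{AL:Pop} it cannot pop below the kernel segment, since the popped-into frame inherits a kernel mode. Apply the induction hypothesis to the prefix $\confone\ato^{\nat}\confone'$, obtaining $\aconf{\frame{\syss(\syscall)}{\regmap'}{\km[\syscall]}}{\mem'}{\Ds}{\Os}$ and $\nat''\le\nat$ with $\aconf{\frame{\syss(\syscall)}{\regmap'}{\km[\syscall]}}{\mem'}{\Ds}{\Os}\ato^{\nat''}\aconf{\frame{\cmd_0}{\regmap_0}{\km[\syscall]}:\st_\km'}{\mem_0}{\Ds}{\Os}$; then replay the last rule on this configuration — it only touches the top frames and the memory, hence applies verbatim once the suffix $\st_\um$ has been deleted — to reach $\aconf{\frame{\cmd}{\regmap}{\km[\syscall]}:\st_\km}{\mem}{\Ds}{\Os}$ in $\nat'=\nat''+1\le\nat+1$ steps.

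The only non-routine ingredient is the ``context-removal'' step used in case (ii): every ordinary $\Cmd$-rule of \Cref{fig:scen2sem2bis,fig:scen2sem2bisbis} rewriting $\aconf{\fr:\st'}{\mem_0}{\Ds}{\Os}$ to $\aconf{\st''}{\mem'}{\Ds}{\Os}$ with $\st'$ left untouched (resp. touching only the second frame, in the \ref{AL:Pop} case) also rewrites $\aconf{\fr}{\mem_0}{\Ds}{\Os}$ to the corresponding configuration with $\st'$ deleted. This is the speculative-attacker analogue of the (easy converse of the) observation underlying \Cref{lemma:contextpluggingtech}, proved by the same inspection of the rules. I expect the genuine difficulty to be organisational rather than conceptual: keeping the stack bookkeeping straight — pinning down, under the hypotheses $\km(\st_\km)$ and $\um(\st_\um)$, that the kernel segment sitting on top of $\st_\um$ is well defined and uniformly labelled, so that ``rewinding to the triggering \ref{AL:System-Call}'' makes sense — and checking that no speculative construct can intervene once kernel execution has begun, which is exactly what keeps $\Ds$ and $\Os$ frozen throughout the rewound sub-computation, and hence equal, in the third configuration, to the $\Ds,\Os$ of the target.
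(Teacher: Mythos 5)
Your proposal is correct and follows essentially the same route as the paper's proof: induction on the length of the reduction with a case analysis on the last rule, treating \ref{AL:System-Call} (applied to a user-mode frame) as the point where the rewind bottoms out with $\nat'=0$, and handling the ordinary kernel-mode rules by the induction hypothesis followed by replaying the rule on the stack with the user-mode suffix $\st_\um$ deleted. The only cosmetic difference is that you re-derive inline the stack-shape invariant (kernel frames sit above user frames and run only $\Cmd$ code, hence $\Ds$ and $\Os$ stay frozen), which the paper factors out as \Cref{rem:stackinv1} and cites in the \ref{AL:System-Call} and \ref{AL:Pop} cases.
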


\begin{proof}
  The proof goes by induction on $\nat$. The base case holds for vacuity of the premise
  The inductive case goes by cases on the rule that has been used to show the last transition.
  All the rules that do have as target a non-terminal classic configuration  
  except for \ref{AL:Pop} and \ref{AL:System-Call} share a similar behavior is the same,
  so we just show the case of loads:
  \begin{proofcases}
    \proofcase{\ref{AL:Load}} In this case, the assumption is that  
    \begin{equation*}
      \nesstep \nat \confone 
      {\conf{\frame{\cmemread \vx \expr\sep \cmd}{\regmap''}{\opt}:\st, \mem, \Ds, \Os}} \ato\\
      {\conf{\frame{\cmd}{\update{\regmap''}\vx{\sem\expr_{\regmap'', \lay}}}{\km[\syscall]}:\st_{\km}:\st_\um, \mem, \Ds, \Os}}
    \end{equation*}
    by introspection of the rule, we deduce  $\regmap =  \update{\regmap''}\vx{\sem\expr_{\regmap'', \lay}}$ and
    that flag of the source configuration is equal to that of the target one
    it must be $\km[\syscall]$ as well, so we can apply the IH. This shows that there is a configuration
    \[
      \conf{\frame{\syss(\syscall)}{\regmap'}{\km[\syscall]}, \mem', \Ds, \Os}
    \]
    a natural number $\nat'$ such that
    \begin{equation*}
      \nesstep {\nat'}  
      {\conf{\frame{\syss(\syscall)}{\regmap'}{\km[\syscall]}, \mem', \Ds, \Os}} {}\\
      {\conf{\frame{\cmemread \vx \expr\sep \cmd}{\regmap''}{\km[\syscall]}:\st_{\km}, \mem, \Ds, \Os}}.
    \end{equation*}
    Then we observe that the rule \ref{AL:Load} can be applied to the configuration
    \[
      {\conf{\frame{\cmemread \vx \expr\sep \cmd}{\regmap''}{\km[\syscall]}:\st_{\km}, \mem, \Ds, \Os}}
    \]
    to show the transition to
    \[
      {\conf{\frame{\cmd}{\update{\regmap''}\vx{\sem\expr_{\regmap'', \lay}}}{\km[\syscall]}:\st_{\km}, \mem, \Ds, \Os}}
    \]
    and this shows the claim.
    \proofcase{\ref{AL:System-Call}} In this case, the assumption is that  
    \begin{multline*}
      \nesstep \nat \confone 
      {\conf{\frame{\csyscall \syscall {\expr_1, \dots, \expr_k}\sep \cmd}{\regmap''}{\opt}:\st, \mem, \Ds, \Os}} \ato\\
      {\conf{\frame{\syss(\syscall)}{\regmap}{\km[\syscall]}: \frame{\cmd}{\regmap''}{\opt}:\st_{\km}:\st_\um, \mem, \Ds, \Os}}
    \end{multline*}
    From \Cref{rem:stackinv1}, we deduce that $\um(\st)$ and $\opt = \um$.
    This shows that the claim holds, in particular $\st_{\km}= \varepsilon$,
    $\st_{\um}= \st$, and it is easy to verify that
    \begin{equation*}
      \esstep
      {\conf{\frame{\csyscall \syscall {\expr_1, \dots, \expr_k}\sep \cmd}{\regmap''}{\opt}, \mem, \Ds, \Os}} {}\\
      {\conf{\frame{\syss(\syscall)}{\regmap}{\km[\syscall]}: \frame{\cmd}{\regmap''}{\opt}, \mem, \Ds, \Os}}
    \end{equation*}
    holds.
    \proofcase{\ref{AL:Pop}} In this case, the assumption is that  
    \begin{equation*}
      \nesstep \nat \confone 
      {\conf{\frame \cmdtwo {\regmap''}{\opt}:\frame{\cmd}{\regmap}{\km[\syscall]}:\st, \mem, \Ds, \Os}} \ato\\
      {\conf{\frame{\cmd}{\update{\regmap'}{\ret}{\regmap''(\ret)}}{\km[\syscall]}:\st, \mem, \Ds, \Os}},
    \end{equation*}
    With an application of \Cref{rem:stackinv1}, we rewrite $\st$ as $\st_{\km}:\st_\um$,
    and we deduce that $\opt = \km[\syscall]$, so we can apply the IH.
    It shows that there is a configuration
    \[
      \conf{\frame{\syss(\syscall)}{\regmap'}{\km[\syscall]}, \mem', \Ds, \Os}
    \]
    a natural number $\nat'$ such that
    \begin{equation*}
      \nesstep {\nat'}  
      {\conf{\frame{\syss(\syscall)}{\regmap'}{\km[\syscall]}, \mem', \Ds, \Os}} {}\\
      {\conf{\frame \cmdtwo {\regmap''}{\opt}:\frame{\cmd}{\regmap}{\km[\syscall]}:\st_{\km}, \mem, \Ds, \Os}}.
    \end{equation*}
    To conclude the proof it suffices to verify that
    \begin{equation*}
      \esstep  
      {\conf{\frame \cmdtwo {\regmap''}{\opt}:\frame{\cmd}{\regmap}{\km[\syscall]}:\st_{\km}, \mem, \Ds, \Os}} {} \\
      {\conf{\frame{\cmd}{\update{\regmap'}{\ret}{\regmap''(\ret)}}{\km[\syscall]}:\frame{\cmd}{\regmap}{\km[\syscall]}:\st_{\km}, \mem, \Ds, \Os}}.
    \end{equation*}
  \end{proofcases}
\end{proof}

\begin{lemma}
  \label{lemma:ordinatytospec}
  For every system $\system = (\overline \rfs, \syss, \caps)$,
  store $\rfs \sim_{\Fn}\overline \rfs)$,
  configuration $\conf{\frame{\cmd}{\regmap}{\um}, {\lay \lcomp \rfs}, \Ds, \Os}$,
  $\nat \in \Nat$,
  and every configuration $\conf{\st, {\lay \lcomp \rfs}, \Ds, \Os}$
  if
  \[
    \nesstep \nat {\conf{\frame{\cmd}{\regmap}{\km[\syscall]}, {\lay \lcomp \rfs}, \Ds, \Os}}
    {\conf{\st, {\lay \lcomp \rfs'}, \Ds, \Os}},
  \]
  then $\km[\syscall](\st)$ and there is a buffered memory $\bm \buf {(\lay \lcomp \rfs'')}$ and a sequence of observations $\Os$ such that 
  \[
    \nsstep \nat {\sframe {\frame{\cmd}{\regmap}{\km[\syscall]}} {\lay \lcomp \rfs} \bot} 
    {\sframe \st {\bm \buf {(\lay \lcomp \rfs'')}} \bot} {\dstep^\nat} \Os
  \]
  and $\overline {\bm\buf{(\lay \lcomp \rfs'')}} = {\lay \lcomp \rfs'}$.
\end{lemma}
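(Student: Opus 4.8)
The plan is to prove the statement by induction on the number of steps $\nat$, carrying a strengthened invariant so the induction closes. Concretely I would track two side conditions on the reached states: a \emph{mode invariant}, saying that the call stack $\st$ consists only of kernel-mode frames (which is exactly $\km[\syscall](\st)$), and a \emph{memory invariant}, saying that the buffered memory $\bm{\buf}{(\lay\lcomp\rfs'')}$ built on the speculative side satisfies $\dom(\buf)\subseteq\underline\lay(\Ar)$ and $\overline{\bm{\buf}{(\lay\lcomp\rfs'')}}=\lay\lcomp\rfs'$, where $\lay\lcomp\rfs'$ is the plain memory reached by the $\ato$-reduction. That every intermediate memory is of the canonical form $\lay\lcomp(\cdot)$ is free by \Cref{rem:simispreserved2,rem:simispreservedspec}, as already assumed throughout this section. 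The mode invariant is a routine stack argument: the reduction starts from the single kernel-mode frame $\frame{\cmd}{\regmap}{\km[\syscall]}$, each call or system-call rule of \Cref{fig:scen2sem2bis,fig:scen2sem2bisbis} pushes a fresh kernel-mode frame, no other rule turns a kernel frame into a user frame, and within a non-terminating $\nat$-step reduction the bottom frame is never popped.

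For the base case $\nat=0$ the $\ato$-reduction is reflexive, so $\st=\frame{\cmd}{\regmap}{\km[\syscall]}$ and $\rfs'=\rfs$; taking the empty buffer $\nil$ and $\rfs''=\rfs$ works, since $\overline{\bm{\nil}{(\lay\lcomp\rfs)}}=\lay\lcomp\rfs$. For the inductive step I would split the $\nat+1$-step $\ato$-reduction into a first step followed by an $\nat$-step reduction, and do a case analysis on the rule used for the first step. Since $\cmd\in\Cmd$, none of the $\SpCmd$-specific rules (poisoning, observation, speculation) can fire, so the first step uses one of \ref{AL:Op}, \ref{AL:Skip}, \ref{AL:If}, \ref{AL:While}, \ref{AL:Call}, \ref{AL:System-Call}, \ref{AL:Pop}, \ref{AL:Load}, \ref{AL:Store}, or the no-op \ref{AL:Fence}; the premise rules out the error/unsafe rules because the $\ato$-reduction reaches a non-terminal configuration. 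Each of these is matched, step for step, by exactly one speculative rule fired with the directive $\dstep$, namely \ref{SI:Op}, \ref{SI:Skip}, \ref{SI:If}, \ref{SI:Loop-Step}, \ref{SI:Call}, \ref{SI:System-Call}, \ref{SI:Pop}, \ref{SI:Load-Step}, \ref{SI:Store}, and \ref{SI:Fence}; in every case the side conditions on addresses, array membership, and capabilities are literally the same as in the corresponding $\ato$-rule, so the non-erroneous branch is taken on both sides, and none of these $\dstep$-steps modifies the mis-speculation flag, which stays $\bot$. Applying the induction hypothesis to the residual $\nat$-step reduction from the matched speculative configuration and prepending the first step yields the required $\dstep^{\nat+1}$-reduction, and concatenating the emitted observations produces the (existentially quantified) observation sequence.

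The only part that needs genuine care is keeping the memory invariant synchronized across the three interesting rules. For \ref{AL:Load} versus \ref{SI:Load-Step} the plain step reads $\mem(\add)$ and the speculative one reads $\bufread{\bm{\buf}{\mem}}{\add}{0}$, which agree by \Cref{rem:bufreadoverline} together with $\overline{\bm{\buf}{\mem}}=\mem$. For \ref{AL:Store} versus \ref{SI:Store} the plain step updates the memory at $\add$ whereas \ref{SI:Store} merely prepends a delayed write to the buffer; the invariant is re-established because $\overline{\bm{\bitem{\add}{v}\cons\buf}{\mem}}=\update{\overline{\bm{\buf}{\mem}}}{\add}{v}$ holds by the definition of the flush, because the side condition $\add\in\underline\lay(\Ar[\km])\subseteq\underline\lay(\Ar)$ preserves $\dom(\buf)\subseteq\underline\lay(\Ar)$, and because the updated plain memory is again of the form $\lay\lcomp(\cdot)$ by \Cref{rem:memupdtostupd}. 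For \ref{AL:Fence} versus \ref{SI:Fence} the plain step leaves $\lay\lcomp\rfs'$ unchanged while \ref{SI:Fence} replaces the buffered memory by $(\nil,\overline{\bm{\buf}{\mem}})$, whose flush is $\overline{\bm{\buf}{\mem}}=\lay\lcomp\rfs'$ by the invariant and is again of canonical form by \Cref{rem:overlinewrtdom}. The main obstacle is thus not conceptual but bureaucratic: getting the triangle \emph{plain update / buffer prepend / fence flush} exactly right and maintaining the domain condition $\dom(\buf)\subseteq\underline\lay(\Ar)$ that keeps each intermediate buffered memory of the form $\lay\lcomp(\cdot)$, which is precisely what allows the conclusion to be stated as $\overline{\bm{\buf}{(\lay\lcomp\rfs'')}}=\lay\lcomp\rfs'$.
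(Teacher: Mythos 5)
Your proof is correct and is essentially the paper's own argument: a lockstep simulation by induction on the step count, matching each attacker rule with its $\dstep$-directed speculative counterpart and maintaining exactly the two invariants the paper maintains ($\km[\syscall]$ on the stack, and $\overline{\bm\buf\mem}$ equal to the plain memory, together with $\dom(\buf)\subseteq\underline\lay(\Ar)$ so that the buffered memory stays of the canonical form $\lay\lcomp(\cdot)$), discharged with the same supporting facts (\Cref{rem:bufreadoverline} for loads, the flush/update equation for stores, \Cref{rem:overlinewrtdom} and \Cref{rem:memupdtostupd} for keeping memories canonical). The one real divergence is directional: you peel off the \emph{first} step and apply the induction hypothesis to the tail, whereas the paper peels off the \emph{last} step and applies it to the head. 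Your direction only closes if the statement is generalized so that the induction hypothesis admits an arbitrary kernel-mode stack and an arbitrary buffered memory satisfying your memory invariant as the \emph{starting} configuration --- after a \ref{AL:Store} or \ref{AL:Call} step the residual reduction no longer starts from the single-frame, empty-buffer shape the lemma quantifies over. You signal awareness of this with ``strengthened invariant,'' but make sure the strengthening lands on the hypothesis of the induction and not merely on the reached states; the paper sidesteps the issue by keeping the original statement and performing the case analysis on the intermediate configuration delivered by the IH instead.
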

\begin{proof}
  We go by induction on $\nat$.
  \begin{proofcases}
    \proofcase{0} Trivial.
    \proofcase{$\nat+1$} In this case, the premise is:
    \begin{equation*}
      \nesstep \nat {\conf{\frame{\cmd}{\regmap}{\km[\syscall]}, {\lay \lcomp \rfs}, \Ds, \Os}}
      {\conf{\frame {\cmd'} {\regmap'} {\opt'}: \st', {\lay \lcomp \rfs''}, \Ds, \Os}}\ato\\
      {\conf{\st,{\lay \lcomp \rfs'}, \Ds, \Os}}
    \end{equation*}
    We can apply The IH to the first $\nat$ steps. This shows that there are $\buf$ and $\Os$
    \[
      \nsstep \nat {\sframe {\frame{\cmd}{\regmap}{\km[\syscall]}} {\lay \lcomp \rfs} \bot} 
      {\sframe {\frame {\cmd'} {\regmap'} {\km[\syscall]}:\st'} {\bm {\buf} {(\lay \lcomp \omega)}} \bot} {\dstep^\nat} \Os
    \]
    such that $\overline {\bm\buf{(\lay \lcomp \omega)}} = {\lay \lcomp \rfs''}$ and $\km[\syscall]({\frame {\cmd'} {\regmap'} {\km[\syscall]}:\st'})$.
    We are required to show that
    \[
      \sstep {\sframe {\frame {\cmd'} {\regmap'} {\km[\syscall]}:\st'} {\bm {\buf} {(\lay \lcomp \omega)}} \bot}
      {\sframe{\st}{\bm{\buf'}{(\lay \lcomp \omega')}}{\bot}}{\dstep} \obs,
    \]
    that $\overline {\bm{\buf'}{(\lay \lcomp \omega')}} = {\lay \lcomp \rfs'}$, and that $\km[\syscall](\st)$ holds.
    The proof cases on the $\ato$ relation;
    many of the cases are similar with the others, so we just show
    the most important ones, also note that, in particular, cases \ref{AL:Poison},
    \ref{AL:Obs}, \ref{AL:Spec-Term}, \ref{AL:Spec-Error}, and \ref{AL:Spec-Unsafe}
    can be omitted because of the IH.
    \begin{proofcases}
      \proofcase{\ref{AL:Fence}} In this case, the assumption rewrites as follows:
      \begin{equation*}
        \esstep  {\conf {\frame{\cfence\sep \cmdtwo}{\regmap}{\km[\syscall]}:\st',  {\lay \lcomp \rfs''}, \Ds, \Os}}{} \\
        {\conf {\frame{\cmdtwo}{
              \regmap
            }{\km[\syscall]}:\st',  {\lay \lcomp \rfs''}, \Ds, \Os}}
      \end{equation*}
      and the goal is to show that there is an observation $\obs$, a buffer $\buf'$ and a store $\omega'$ such that:
      \begin{equation*}
        \sstep  {\sframe {\frame{\cmemread \vx \expr\sep \cmdtwo}{\regmap}{\km[\syscall]}:\st'}{\bm{\buf}{(\lay \lcomp \omega)}}{\bot}}{} \dstep \obs \\
        {\sframe {\frame{\cmdtwo}{
              \regmap
            }{\km[\syscall]}:\st'}{ \bm{\buf'}{(\lay \lcomp \omega')}}{\bot}}
      \end{equation*}
      and $\overline {\bm{\buf'}{(\lay \lcomp \omega)}} = {\lay \lcomp \rfs''}$. By applying the rule \ref{SI:Fence},
      suitable buffers and stores for the target configuration are
      $\buf'=\nil$ and $\omega'=\overline {\bm{\buf}{(\lay \lcomp \omega)}}$, and the transition produces the observation $\onone$; we need to observe that
      ${\lay \lcomp \rfs''}= \overline {\overline {\bm{\buf}{(\lay \lcomp \omega)}}}$, that is a consequence of the IH and of the definition of $\overline \cdot$ on memories.
      Finally, we
      must observe that $\km[\syscall]({\frame{\cmdtwo}{\regmap}{\km[\syscall]}:\st'})$, but this is a direct consequence of the IH,
          and of the fact that $\cmdtwo$ is a sub-term of $\cmemread \vx \expr\sep \cmdtwo$.
      \proofcase{\ref{AL:Load}} In this case, the assumption rewrites as follows:
      \begin{equation*}
        \esstep  {\conf {\frame{\cmemread \vx \expr\sep \cmdtwo}{\regmap}{\km[\syscall]}:\st',  {\lay \lcomp \rfs''}, \Ds, \Os}}{} \\
        {\conf {\frame{\cmdtwo}{
              \update\regmap\vx{{\lay \lcomp \rfs''}(\toAdd{\sem \expr_{\regmap, \lay}})}
            }{\km[\syscall]}:\st',  {\lay \lcomp \rfs''}, \Ds, \Os}}
      \end{equation*}
      and the goal is to show that there is an observation $\obs$ and a buffer $\buf'$ such that:
      \begin{equation*}
        \sstep  {\sframe {\frame{\cmemread \vx \expr\sep \cmdtwo}{\regmap}{\km[\syscall]}:\st'}{\bm{\buf}{(\lay \lcomp \omega)}}{\bot}}{} \dstep \obs \\
        {\sframe {\frame{\cmdtwo}{
              \update\regmap\vx{{\lay \lcomp \rfs''}(\toAdd{\sem \expr_{\regmap, \lay}})}
            }{\km[\syscall]}:\st'}{ \bm{\buf'}{(\lay \lcomp \omega')}}{\bot}}
      \end{equation*}
      and $\overline {\bm{\buf'}{(\lay \lcomp \omega)}} = {\lay \lcomp \rfs''}$. We choose $\buf'=\buf$ and $\omega=\omega'$ and the equality is a consequence of the IH. The only rule that can apply is \ref{SI:Load-Step}; with this choice the observation produced is $\omem {\toAdd{\sem \expr_{\regmap, \lay}}}$; we need to observe that
      ${\lay \lcomp \rfs''}(\toAdd{\sem \expr_{\regmap, \lay}})= \bufread {\bm{\buf'}{(\lay \lcomp \omega)}} {\toAdd{\sem \expr_{\regmap, \lay}}} 0$, that is a consequence of \Cref{rem:bufreadoverline}. Finally, we
      must observe that
      \[
        \km[\syscall]({\frame{\cmdtwo}{
              \update\regmap\vx{{\lay \lcomp \rfs''}(\toAdd{\sem \expr_{\regmap, \lay}})}
            }{\km[\syscall]}:\st'}),
        \]
        but this is a direct consequence of the IH,
          and of the fact that $\cmdtwo$ is a sub-term of $\cmemread \vx \expr\sep \cmdtwo$.
      \proofcase{\ref{AL:Store}} In this case, the assumption rewrites as follows:
      \begin{equation*}
        \esstep {\conf {\frame{\cmemass\expr\exprtwo\sep \cmdtwo}{\regmap}{\km[\syscall]}:\st',  {\lay \lcomp \rfs''}, \Ds, \Os}}{} \\
        {\conf {\frame{\cmdtwo}{\regmap}
            {\km[\syscall]}:\st',  \update{{\lay \lcomp \rfs''}} {\toAdd{\sem \expr_{\regmap, \lay}}} {\sem \exprtwo_{\regmap, \lay}}, \Ds, \Os}}
      \end{equation*}
      and the goal is to show that there is an observation $\obs$ and a buffer $\buf'$ such that:
      \begin{equation*}
        \sstep  {\sframe {\frame{\cmemass \vx \expr\sep \cmdtwo}{\regmap}{\km[\syscall]}:\st'}{\bm{\buf}{(\lay \lcomp \omega)}}{\bot}}{} \dstep \obs \\
        {\sframe {\frame{\cmdtwo}{
              \regmap
            }{\km[\syscall]}:\st'}{ \bm{\buf'}{(\lay \lcomp \omega)}}{\bot}}
      \end{equation*}
      and $\overline {\bm{\buf'}{(\lay \lcomp \omega)}} = \update{{\lay \lcomp \rfs''}} {\toAdd{\sem \expr_{\regmap, \lay}}} {\sem \exprtwo_{\regmap, \lay}}$. We observe that the rule \ref{SI:Store} applies and produces a target configuration that matches the one we are looking for, in particular, the buffer it produces is
      \[
        \bitem{\toAdd{\sem \expr_{\regmap, \lay}}} {\sem \exprtwo_{\regmap, \lay}} \cons \buf.
      \]
      We observe that the conclusion
      \[
        \overline {\bm{\bitem{\toAdd{\sem \expr_{\regmap, \lay}}} {\sem \exprtwo_{\regmap, \lay}} \cons \buf}{(\lay \lcomp \omega)}}= \update{{\lay \lcomp \rfs''}} {\toAdd{\sem \expr_{\regmap, \lay}}} {\sem \exprtwo_{\regmap, \lay}}
      \]
      comes from the rewriting of the function $\overline \cdot$ and form the assumption
      $\overline {\bm \buf {(\lay \lcomp \omega)}} = {\lay \lcomp \rfs''}$.
      \proofcase{\ref{AL:Call}} In this case, we observe that this rule and \ref{SI:Call}
      share the same premises, so also the second one can be applied. By introspection of these rules,
      we deduce that said $\cmd'= \ccall \exprtwo {\expr_1, \dots, \expr_k}\sep \cmdtwo$, the target configurations
      are respectively
      \[
        {\conf{\frame {{\lay \lcomp \rfs''}(\sem \exprtwo_{\regmap', \lay})} {\regmap_0'} {\km[\syscall]}: \frame {\cmdtwo} {\regmap'} {\km[\syscall]}: \st', {\lay \lcomp \rfs''}, \Ds, \Os}}
      \]
      and
      \[
        {\sframe{\frame {{\lay \lcomp \rfs''}(\sem \exprtwo_{\regmap', \lay})} {\regmap_0'} {\km[\syscall]}: \frame {\cmdtwo} {\regmap'} {\km[\syscall]}: \st'}{\mu{(\lay \lcomp \omega)}}\bot},
      \]
      where $\regmap_0'=\update{\regmap_0}{\vx_1,\dots, \vx_k}{\sem {\expr_1}{\regmap', \lay}, \dots,  {\expr_k}{\regmap', \lay}}$. For this reason, the conclusion on the buffered memory is a consequence of the IH, and we can deduce
      \[
        \km[\syscall](\frame {{\lay \lcomp \rfs''}(\sem \exprtwo_{\regmap', \lay})} {\regmap_0'} {\km[\syscall]}: \frame {\cmdtwo} {\regmap'} {\km[\syscall]}: \st')
      \]
      from the IH and the premises of the rule \ref{AL:Call}, which guarantee  $\sem \exprtwo_{\regmap', \lay}\in \underline \lay(\Fnk)$; this, in turn, has as consequence that there is $\fn \in \Fnk$ such that $\lay(\fn)=\sem \exprtwo_{\regmap', \lay}$. Thus, by definition of $\cdot \lcomp \cdot$, we conclude that
      \[
        {\lay \lcomp \rfs''}(\sem \exprtwo_{\regmap', \lay}) =\omega(\fn) = \overline \rfs(\fn).
      \]
      For this reason, it suffices to observe that $\km[\syscall](\overline \rfs(\fn))$ holds by definition of system.
    \end{proofcases}
  \end{proofcases}
\end{proof}

\begin{lemma}
  \label{lemma:thereisasyscall2}
  For every system $\system =(\rfs, \syss, \caps)$,
  configuration
  \[
    \sframe {\frame \cmd \regmap \um} {{(\bm\buf\mem)}} {\bot}
  \]
  such that $\um(\cmd)$
  % \DD{This has to be shown where we use the theorem. It can be done with \ref{rem:stackinv1}},
  speculative stack $\cfstack = \sframe {\frame \cmd {\regmap'} {\km[\syscall]}:\st'} {{(\bm\buf\mem)}'} {\boolms}:\cfstack'$,
  sequence of directives $\Ds$ without $\dbt$ directives,
  sequence of observations $\Os$
  and layout $\lay$
  such that
  \[
    \nsstep \nat {\sframe {\frame \cmd \regmap \um} {{(\bm\buf\mem)}} {\bot}} \cfstack \Ds \Os,
  \]
  there is a configuration $\sframe {\frame {\syss(\syscall)} {\regmap_0[\vx_0, \dots, \vx_h \upd \val_1, \dots, \val_h]} {\km[\syscall]}:\st''} {{(\bm\buf\mem)}''} {\boolms}$
  a sequence of directives $\Ds'$, a sequence of observations $\Os'$,
  and a natural number $\nat'\le \nat$ such that: 
  \begin{equation*}
    \nsstep {\nat'} {\sframe {\frame {\syss(\syscall)} {\regmap''} {\km[\syscall]}} {{(\bm\buf\mem)}''} {\boolms''}}{}{\Ds'}{\Os'}\\   {\sframe {\frame \cmd {\regmap'} {\km[\syscall]}:\overline \st} {{(\bm\buf\mem)}'} {\boolms'}:\overline \cfstack},
  \end{equation*}
  for some $\overline \cfstack$, where in particular
  $\overline \st$ is a prefix of $\st'$ such that $\km[\syscall](\frame \cmd {\regmap'} {\km[\syscall]}:\overline \st)$ and there is a stack $\st''$ such that
  $\um(\st'')$ and $\st'=\overline \st:\st''$.
\end{lemma}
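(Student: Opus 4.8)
The plan is to prove the statement by induction on $\nat$, mirroring the proof of \Cref{lemma:thereisasyscall1} but working with the pure speculative relation $\sto{}{}$ instead of the attacker relation $\ato$. Two preliminary facts are needed. First, a \emph{speculative stack-shape invariant}, the counterpart of \Cref{rem:stackinv1}: whenever $\sframe{\frame\cmd\regmap\um}{\bm\buf\mem}{\bot}$ with $\um(\cmd)$ reduces via $\sto{}{}$ to some state $\cfstack$, every backtrackable configuration $\sframe\st{\bm{\buf'}{\mem'}}{\boolms}$ occurring in $\cfstack$ has a call stack of the form $\st = \st_\km : \st_\um$ with $\um(\st_\um)$ and, when $\st_\km \neq \varepsilon$, $\km[\syscall](\st_\km)$ for a single syscall label $\syscall$. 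This follows by a routine induction on the reduction, using that \textsc{SSC} is the only rule that can turn a user-mode top frame into a kernel-mode one and that it inherits the caller's syscall label when the caller already runs in kernel mode. Second, a \emph{speculative context-plugging} fact analogous to \Cref{lemma:contextpluggingtech}: if $\Ds$ contains no $\dbt$ directive then no backtracking rule can fire along any $\sto{\Ds}{\Os}^*$-reduction, the backtrackable stack is therefore append-only, and as long as the top kernel frame is never fully popped each applied rule inspects only the top frame(s) of the top backtrackable configuration; hence one may replace, in the bottom of that top configuration's call stack and in the tail of the backtrackable stack, arbitrary suffixes --- in particular strip them away --- and the same $\Ds,\Os$-reduction still goes through.

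With these in hand I induct on $\nat$. The base case $\nat=0$ is vacuous: then $\cfstack$ is $\sframe{\frame\cmd\regmap\um}{\bm\buf\mem}{\bot}$, whose top frame carries the flag $\um \neq \km[\syscall]$, contradicting the hypothesis that the top frame of $\cfstack$ is $\frame\cmd{\regmap'}{\km[\syscall]}$. For $\nat+1$, factor the reduction as an $\nat$-step reduction from $\sframe{\frame\cmd\regmap\um}{\bm\buf\mem}{\bot}$ to some $\cfstack''$ followed by one further step $\cfstack'' \sto{\dir}{\obs} \cfstack$, and case on the rule used for the last step. By the stack-shape invariant, the only way the top frame of $\cfstack$ is kernel-mode while that of $\cfstack''$ is not is when the last rule is \textsc{SSC} applied to a user-mode frame executing $\csyscall\syscall{\exprtwo_1,\dots,\exprtwo_h}$; then the conclusion holds directly with $\cmd = \syss(\syscall)$, $\regmap' = \regmap_0[\vx_1,\dots,\vx_h \upd \val_1,\dots,\val_h]$ (where the $\val_i$ are the evaluated arguments), $\overline\st = \varepsilon$, $\st''$ the caller's continuation stack (all user-mode by the invariant), the memory and flag carried over unchanged, and $\nat'=0$. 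In every other case the top frame of $\cfstack''$ is already kernel-mode and carries the same label $\syscall$ as the top frame of $\cfstack$, so the induction hypothesis applies to the $\nat$-step reduction ending at $\cfstack''$ and produces a reduction from the single-frame configuration $\sframe{\frame{\syss(\syscall)}{\regmap''}{\km[\syscall]}}{(\bm\buf\mem)''}{\boolms''}$ to the ``stripped'' top configuration of $\cfstack''$, sitting on some $\overline\cfstack$. Re-applying the very same rule $\dir$ to that stripped configuration --- legitimate by the context-plugging fact, since $\dir$ only looks at the top frames --- extends the reduction by one step to a stripped version of $\cfstack$, and the desired conclusion follows by reading off how that rule updates the kernel prefix $\overline\st$, the backtrackable tail $\overline\cfstack$, the memory and the flag. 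Each concrete rule (op, skip, if and if-branch, while and loop-branch, load and load-step, store, call, system call from kernel mode, pop, fence) is verified exactly as the corresponding case in the proof of \Cref{lemma:thereisasyscall1}; the pop rule is the only one that uses the stack-shape invariant nontrivially, to see that the frame it reveals is still kernel-mode and that $\overline\st$ loses precisely that frame.

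The main obstacle is the bookkeeping around the backtrackable stack: one has to be sure that, because $\Ds$ carries no $\dbt$, no backtracking ever occurs, so the backtrackable stack only grows, the top kernel frame is never discarded, and the induction genuinely descends to the \emph{most recent} \textsc{SSC} rather than to some earlier, already-resolved one. Making the context-plugging fact precise enough for the inductive step --- in particular, that stripping the tail of the backtrackable stack changes neither the observation sequence produced nor the evolution of the top configuration's memory and mis-speculation flag, which rests on the top configuration evolving independently of everything beneath it in either stack --- is the delicate point. The remaining reconciliations (matching $\boolms,\boolms',\boolms''$ and the three memories) are absorbed by the existential quantifiers in the conclusion, and every per-rule verification is routine and parallel to \Cref{lemma:thereisasyscall1,lemma:ordinatytospec}.
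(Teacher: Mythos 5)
Your proposal is correct and follows essentially the same route as the paper's proof: induction on $\nat$, case analysis on the last applied rule with the user-mode \textsc{SSC} invocation serving as the anchor case ($\nat'=0$, $\overline\st=\varepsilon$), the speculative stack-shape invariant (the paper's \Cref{rem:stackinv2}) to control execution modes and to discharge the pop and kernel-mode-syscall cases, and the no-$\dbt$ assumption to exclude backtracking. The only difference is presentational: you factor the ``same rule applies to the stripped configuration'' step into an explicit context-plugging lemma, whereas the paper re-verifies it inline in each case.
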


\begin{proof}
  By induction on $\nat$.
  \begin{proofcases}
    \proofcase{0} Holds by vacuity of the premise.
    \proofcase{$\nat+1$} The premise rewrites as
    \[
      \nsstep \nat {\sframe {\frame \cmd \regmap \um} {{(\bm\buf\mem)}} {\bot}} T \Ds \Os \sto \dir  \obs \cfstack.
    \]
    We go by cases on the rule that has been applied to show the last transition. Most of these cases are similar to the others; for this reason, we just show some of the most interesting ones. In particular, since $\Ds$ does not contain any $\dbt$ directive, we can assume without lack of generality that the rules for backtracking are not employed. 
    \begin{proofcases}
      \proofcase{\ref{SI:Op}}
        In this case we can assume that
        \[
          T = \sframe {\frame {\vx \ass \expr\sep\cmdtwo} {\regmap'''} {\opt}:\st'''} {{(\bm\buf\mem)'''}} {\boolms'''}: T'.
          \tag{\dag}
        \]
        By introspection of the rule, we deduce that
        the flag of the target configuration is equal to that of the source
        configuration, so we deduce $\opt = \km[\syscall]$, this also means that
        we can apply the IH on the first $\nat$ steps and deduce that 
        \begin{equation*}
          \nsstep {\nat'} {\sframe {\frame {\syss(\syscall)} {\regmap''} {\km[\syscall]}} {{(\bm\buf\mem)}''} {\boolms''}}{}{\Ds'}{\Os'}\\
          {\sframe {\frame {\vx \ass \expr\sep\cmdtwo} {\regmap'''} {\opt}:\overline \st'''} {{(\bm\buf\mem)'''}} {\boolms'''}: \overline T'},
        \end{equation*}
        $\overline \st'''$ is a suffix of $\st'''$
        such that $\km[\syscall](\overline \st''')$ and there is a stack $G$ such that
        $\um(G)$ and $\st'''=\overline \st''':G$.
        Then, we observe that 
        \begin{equation*}
          \sstep {\sframe {\frame {\vx \ass \expr\sep\cmdtwo} {\regmap'''} {\opt}:\overline \st'''} {{(\bm\buf\mem)'''}} {\boolms'''}: \overline T'}{}  \dstep\onone\\
          {\sframe {\frame {\cmdtwo} {\update{\regmap'''}\vx {\sem\exprtwo_{\regmap''', \lay}}} {\opt}:\overline \st'''} {{(\bm\buf\mem)'''}} {\boolms'''}: \overline T'},
        \end{equation*}
        and that:
        \begin{equation*}
          \sstep {\sframe {\frame {\vx \ass \expr\sep\cmdtwo} {\regmap'''} {\opt}:\st'''} {{(\bm\buf\mem)'''}} {\boolms'''}: T'}{}  \dstep\onone\\
          {\sframe {\frame {\cmdtwo} {\update{\regmap'''}\vx {\sem\exprtwo_{\regmap''', \lay}}} {\opt}:\st'''} {{(\bm\buf\mem)'''}} {\boolms'''}: T'},
        \end{equation*}
        The conclusions that we are required to show are a direct consequence of the IH,
        and of the observation that $\km[\syscall](\vx \ass \expr\sep\cmdtwo)$ has as
        consequence $\km[\syscall](\cmdtwo)$.

        \proofcase{\ref{SI:Load}}
        In this case we can assume that
        \[
          T = \sframe {\frame {\cmemread[\lbl] \vx \expr\sep\cmdtwo} {\regmap'''} {\opt}:\st'''} {{(\bm\buf\mem)'''}} {\boolms'''}: T'.
          \tag{\dag}
        \]
        By introspection of the rule, we deduce that
        the execution mode flag of the target configuration is equal to that of the source
        configuration, so we deduce $\opt = \km[\syscall]$, this also means that
        we can apply the IH on the first $\nat$ steps and deduce that 
        \begin{equation*}
          \nsstep {\nat'} {\sframe {\frame {\syss(\syscall)} {\regmap''} {\km[\syscall]}} {{(\bm\buf\mem)}''} {\boolms''}}{}{\Ds'}{\Os'}\\
          {\sframe {\frame {\cmemread[\lbl] \vx \expr\sep\cmdtwo} {\regmap'''} {\opt}:\overline \st'''} {{(\bm\buf\mem)'''}} {\boolms'''}: \overline T'},
        \end{equation*}
        $\overline \st'''$ is a suffix of $\st'''$
        such that $\km[\syscall](\overline \st''')$ and there is a stack $G$ such that
        $\um(G)$ and $\st'''=\overline \st''':G$.
        Then, we call $\add$ the value of $\toAdd{\sem \expr_{\regmap''', \lay}}$ and
        $(\val, \bool)$ the pair that is returned by
        $\bufread{{(\bm\buf\mem)'''}} {\add} {i}$. Then, we observe that 
        \begin{multline*}
          \sstep {\sframe {\frame {\cmemread[\lbl] \vx \expr\sep\cmdtwo} {\regmap'''} {\opt}:\st'''} {{(\bm\buf\mem)'''}} {\boolms'''}: T'}{}  {\dload[\lbl]{i}}{\omem \add}\\
          \sframe {\frame {\cmdtwo} {\update{\regmap'''}\vx {\val}} {\opt}: \st'''} {{(\bm\buf\mem)'''}} {\boolms'''\lor \bool} :\\
          \sframe {\frame {\cmemread[\lbl] \vx \expr\sep\cmdtwo} {\regmap'''} {\opt}: \st'''} {{(\bm\buf\mem)'''}} {\boolms'''}:  T',
        \end{multline*}
        and that:
        \begin{multline*}
          \sstep {\sframe {\frame {\cmemread[\lbl] \vx \expr\sep\cmdtwo} {\regmap'''} {\opt}:\overline \st'''} {{(\bm\buf\mem)'''}} {\boolms'''}: \overline T'}{}  {\dload[\lbl] i}{\omem \add}\\
          \sframe {\frame {\cmdtwo} {\update{\regmap'''}\vx {\val}} {\opt}: \overline \st'''} {{(\bm\buf\mem)'''}} {\boolms'''\lor \bool} :\\
          \sframe {\frame {\cmemread[\lbl] \vx \expr\sep\cmdtwo} {\regmap'''} {\opt}: \overline \st'''} {{(\bm\buf\mem)'''}} {\boolms'''}:  \overline T',
        \end{multline*}
        The conclusions that we are required to show are a direct consequence of the IH and of the observation that and of the observation that $\km[\syscall](\cmemread \vx \expr\sep\cmdtwo)$ has as
        consequence $\km[\syscall](\cmdtwo)$.

        \proofcase{\ref{SI:Call}} In this case we can assume that $T$ is
        \[
          \sframe {\frame {\ccall \expr {\exprtwo_1, \dots, \exprtwo_k}\sep\cmdtwo} {\regmap'''} {\opt}:\st'''} {{(\bm\buf\mem)'''}} {\boolms'''}: T'.
        \]
        By introspection of the rule, and by knowing that the execution flag of the
        target configuration is $\km[\syscall]$, we deduce that this must be the case
        also for the source configuration, so  $\opt = \km[\syscall]$, this also means that
        we can apply the IH on the first $\nat$ steps and deduce that 
        \begin{equation*}
          \nsstep {\nat'} {\sframe {\frame {\syss(\syscall)} {\regmap''} {\km[\syscall]}} {{(\bm\buf\mem)}''} {\boolms''}}{}{\Ds'}{\Os'}\\
          {\sframe {\frame {\ccall \expr {\exprtwo_1, \dots, \exprtwo_k}\sep\cmdtwo} {\regmap'''} {\opt}:\overline \st'''} {{(\bm\buf\mem)'''}} {\boolms'''}: \overline T'},
        \end{equation*}
         $\overline \st'''$ is a suffix of $\st'''$
        such that $\km[\syscall](\overline \st''')$ and there is a stack $G$ such that
        $\um(G)$ and $\st'''=\overline \st''':G$.
        Then, we call $\add$ the value of $\toAdd{\sem \expr_{\regmap''', \lay}}$ and
        we observe that form the premise of the rule and the definition of $\underline\lay$,
        we can deduce that there is $\fn \in \Fn[\opt]$ such that
        $\lay(\fn) = \add$. From this observation, and \Cref{rem:simispreserved2},
        we deduce that $(\bm\buf\mem)''' = \bm{\buf'}{\lay\lcomp \rfs'}$
        for a $\rfs'\sim_{\Fn}\rfs$
        So, from the definition of $\cdot\lcomp\cdot$ and these observations we conclude
        that the executed procedure is exactly $\rfs(\fn)$.
        We also call $\overline \regmap$ the register map that
        is obtained by evaluating the semantics of the arguments in
        $\regmap'''$ and updating the argument registers of $\regmap_0$ with these values.
        Then, 
        by introspection of the rule, we observe that
        \small
        \begin{multline*}
          \sstep {\sframe {\frame {\ccall \expr {\exprtwo_1, \dots, \exprtwo_k}\sep\cmdtwo} {\regmap'''} {\opt}: \st'''} {{(\bm\buf\mem)'''}} {\boolms'''}:  T'}{}  {\dstep}{\ojump \add}\\
          {\sframe {\frame {\rfs(\fn)} {{\overline \regmap}} {\opt}:\frame {\cmdtwo} {{\regmap'''}} {\opt}: \st'''} {{(\bm\buf\mem)'''}} {\boolms'''}:  T'},
        \end{multline*}
        \normalsize
        and that also:
        \small
        \begin{multline*}
          \sstep {\sframe {\frame {\ccall \expr {\exprtwo_1, \dots, \exprtwo_k}\sep\cmdtwo} {\regmap'''} {\opt}:\overline \st'''} {{(\bm\buf\mem)'''}} {\boolms'''}: \overline T'}{}  {\dstep}{\ojump \add}\\
          {\sframe {\frame {\rfs(\fn)} {{\overline \regmap}} {\opt}:\frame {\cmdtwo} {{\regmap'''}} {\opt}:\overline \st'''} {{(\bm\buf\mem)'''}} {\boolms'''}: \overline T'}.
        \end{multline*}
        \normalsize
        can be shown. The conclusions that we are required to show are a direct consequence of the IH,
        and of the observation that $\km[\syscall](\rfs(\fn))$ holds by definition of $\rfs$
        because $\fn \in \Fnk$.

        \proofcase{\ref{SI:System-Call}} In this case we can assume that $T$ is
        \[
          \sframe {\frame {\csyscall \syscalltwo {\exprtwo_1, \dots, \exprtwo_k}\sep\cmdtwo} {\regmap'''} {\opt}:\st'''} {{(\bm\buf\mem)'''}} {\boolms'''}: T'.
        \]
        By introspection of the rule and the target configuration, we deduce that
        $\syscalltwo = \syscall$; and that
        $\cfstack$ has the following shape:
        \[
          \sframe {\frame {\syss(\syscall)}{\regmap_0'}{\km[\syscall]}:\frame {\cmdtwo} {\regmap'''} {\opt}:\st'''} {{(\bm\buf\mem)'''}} {\boolms'''}: T'.
        \]
        where $\regmap_0'$ is obtained by updating the argument
        registers of $\regmap_0$ with
        the evaluation of $\expr_1, \dots, \expr_k$. 
        To show the claim, it suffices to set $\nat'=0$, $\Ds'=\nil$, $\Os=\nil$,
        $\overline \st = \varepsilon$, $\st''=\st'$, and
        the observation $\km[\syscall]({\frame {\syss(\syscall)}{\regmap_0'}{\km[\syscall]}})$
        holds for definition of $\syss$, while $\um({\frame {\cmdtwo} {\regmap'''} {\opt}:\st'''})$ is a consequence of \Cref{rem:stackinv2}.
        In particular, we can refuse the assumption that $\op=\km[\syscalltwo]$, because in such case we could not have
        a system call invocation as a command. Then, from the invariant on the composition on the stack,
        we deduce $\um({\csyscall \syscall{\expr_1, \dots,\expr_k}\sep \cmdtwo})$.
        \proofcase{\ref{SI:Pop}} In this case we can assume that $T$ is
        \[
          \sframe {\frame {\cnil} {\regmap'''} {\opt}:\st'''} {{(\bm\buf\mem)'''}} {\boolms'''}: T'.
        \]
        And by introspection of the rule, we deduce that
        \[
          \st'''= \frame \cmd {\update{\regmap'}\ret\val} {\km[\syscall]}:\st'
        \]
        for some $\val$;
        from this observation and
        \Cref{rem:stackinv2}, we deduce that $\opt=\km[\syscall]$,
        so we can apply the IH to the first $\nat$ steps and obtain that
        \begin{equation*}
          \nsstep {\nat'} {\sframe {\frame {\syss(\syscall)} {\regmap''} {\km[\syscall]}} {{(\bm\buf\mem)}''} {\boolms''}}{}{\Ds'}{\Os'}\\
          {\sframe {\frame {\cnil} {\regmap'''} {\opt}:\overline \st'''} {{(\bm\buf\mem)'''}} {\boolms'''}: \overline T'},
        \end{equation*}
         $\overline \st'''$ is a suffix of $\st'''$
        such that $\km[\syscall](\overline \st''')$ and there is a stack $G$ such that
        $\um(G)$ and $\st'''=\overline \st''':G$. This shows that, in particular,
        the first frame of $\overline \st'''$ must also be
        $\frame \cmd {\update{\regmap'}\ret\val} {\km[\syscall]}$.
        Otherwise, it either would not be a prefix of $\st'''$ or its
        concatenation with $G$ (observe that $\um(G)$) would not be
        $\st'''$.
         Thanks to this observation, 
         by introspection of the rule, we observe that 
        \begin{equation*}
          \sstep {\sframe {\frame {\cnil} {\regmap'''} {\opt}:\st'''} {{(\bm\buf\mem)'''}} {\boolms'''}: T'}{}  {\dstep}{\onone}\\
          {\sframe {\frame \cmd {\regmap'} {\km[\syscall]}:\st'} {{(\bm\buf\mem)'''}} {\boolms'''}:  T'},
        \end{equation*}
        and that also:
        \begin{equation*}
          \sstep {\sframe {\frame {\cnil} {\regmap'''} {\opt}:\overline \st'''} {{(\bm\buf\mem)'''}} {\boolms'''}: \overline T'}{}  {\dstep}{\onone}\\
          {\sframe {\frame \cmd {\regmap'} {\km[\syscall]}:\st'} {{(\bm\buf\mem)'''}} {\boolms'''}:  T'},
        \end{equation*}
        This concludes the proof.
      \end{proofcases}    
  \end{proofcases}
\end{proof}

\subsubsection{On the fencing transformation }
\label{sec:proofs3}

\begin{proof}[Proof of \Cref{thm:mitigation}]
  This result is a direct consequence of \Cref{lemma:fencesempres,lemma:fencesafeimpo}
  and of \Cref{prop:mitigation}.
\end{proof}

We extend the non speculative semantics of our language
by defining the semantics of the
$\cfence$ instruction as follows:

\[
  \Infer[WL][Fence]
  { \step
    {\ntc{\cfence\sep\cmd}{\regmap}{\opt}{\st}{\mem}}
    {\ntc{\cmd}{\regmap}{\opt}{\st}{\mem}}
  }
  {
  }
\]

\newcommandx{\fwf}[4][1=\system,2=\lay,3=\syscall]{#1\vdash \mathtt{\kwd f}\mathsf{wf}_{#2, #3}(#4)}

In addition, in order to facilitate the proof of the
forthcoming results, we introduce the predicate
of $\fencetrans$-\emph{well-formedness}.
Given a system $\system=(\rfs, \syss, \caps)$, such predicate is defined 
 as follows:

 \[
  \infer{\fwf {\rfs'}}{\forall \fn\in\Fnk. \exists \cmd. \rfs'(\fn)=\fencetrans(\cmd) & {\rfs} \sim_{\Fn} \rfs'}
\]

\[
  \infer{\fwf{\frame \cmd \regmap {\km[\syscall]}:\st}}{\fwf{\st} & \exists \cmd'.\cmd=\fencetrans (\cmd')}
\]
\[
  \infer{\fwf \varepsilon}{} \quad
  \infer{\fwf\err}{}\quad
  \infer{\fwf\unsafe}{}
\]

\[
  \infer{\fwf{\sframe\st{\bm\buf(\lay\lcomp \rfs')}\boolms}}{\fwf {\rfs'} & \fwf\st & \dom(\buf) \subseteq \underline \lay(\Ark)}
\]

\[
  \infer{\fwf{\specconfone:\cfstack}}{\fwf \specconfone }\quad
  \infer{\fwf{\nil}}{}
\]

\begin{lemma}
  \label{lemma:fencesempres}
  The semantics of every system $\system$ is equivalent to that of $\fencetrans (\system)$. 
\end{lemma}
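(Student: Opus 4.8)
The statement asserts precisely that $\fencetrans$ is \emph{user-space semantics preserving} in the sense of \Cref{def:sempres}: writing $\system = (\rfs,\syss,\caps)$ and $\fencetrans(\system) = (\rfs',\syss',\caps)$ — so that $\rfs'$ coincides with $\rfs$ on arrays and user-space procedures, $\rfs'(\fn)=\fencetrans(\rfs(\fn))$ for $\fn\in\Fnk$, and $\syss'(\syscall)=\fencetrans(\syss(\syscall))$ — one must show $\Eval[\fencetrans(\system)][\lay]{\cmd,\regmap,\um,\rfs'} \simeq \Eval[\system][\lay]{\cmd,\regmap,\um,\rfs}$ for every layout $\lay$, unprivileged $\cmd$, and $\regmap$. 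The plan is to establish a lock-step simulation between the two systems in which the only discrepancy is that $\fencetrans(\system)$ occasionally performs one extra, architecturally inert, $\cfence$ step — recall that the added rule \ref{WL:Fence} leaves the memory untouched and merely advances the program — and then to transfer this simulation to $\Eval$.

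First I would fix a cross-system relation $\mathcal R$ on configurations. On memories I relate $\lay\lcomp\rho$ and $\lay\lcomp\rho'$ whenever $\rho,\rho'$ agree on all arrays and on user-space procedures and $\rho'(\fn)=\fencetrans(\rho(\fn))$ for all $\fn\in\Fnk$; this relation agrees with $\rho$ on $\Idu$, and — crucially, because \ref{WL:Store} only ever writes into $\underline\lay(\Ar[\opt])$, i.e.\ never to a code address (W\^{}X) — it is preserved by every store. On frame stacks I relate them frame-by-frame, requiring identical register maps and execution modes: for a user-mode frame the commands must be \emph{literally equal} (the attacker and user-space procedures are untouched by $\fencetrans$), while for a kernel-mode frame I require the commands to be matched by an auxiliary relation $\approx_c$ defined by $c \approx_c \fencetrans(c)$ together with $c'\sep d \approx_c c'\sep \fencetrans(d)$ when $c'$ is one of the fenced instructions $\cmemread\vx\expr$, $\cmemass\expr\exprtwo$, $\ccall\expr{\vec\exprtwo}$ (this second clause captures the intermediate state reached after the inserted $\cfence$ has been consumed but the guarded instruction has not yet executed). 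The predicate $\fwf{\cdot}$ just introduced is essentially the right-hand projection of $\mathcal R$, so I would record that $\mathcal R$ implies $\fwf{\cdot}$ on the $\fencetrans(\system)$-component, for reuse in the later speculative proofs.

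The core is a two-directional simulation proved by induction on reduction length and case analysis on the rules, in the style of \Cref{lemma:premiseApreserv,lemma:contextpluggingtech}. In the forward direction, a single $\system$-step from an $\mathcal R$-related pair is matched by one $\fencetrans(\system)$-step when the executed instruction is not fenced, and by two steps (first \ref{WL:Fence}, then the guarded instruction) when it is, reaching again an $\mathcal R$-related pair. The delicate points are: (i) \ref{WL:Call} and \ref{WL:SystemCall} in kernel mode, where $\fencetrans(\system)$ installs the $\fencetrans$-transformed body of the called procedure/system call, so the freshly pushed frame's command is $\approx_c$-related by construction; (ii) \ref{WL:Pop}, where returning into a kernel frame whose command sat in the ``intermediate'' clause of $\approx_c$ must again land in $\approx_c$ — it does, since the residual command and its $\fencetrans$-image rewrite compatibly; and (iii) the $\err$/$\unsafe$ rules, whose side conditions mention only the evaluated address, the layout, $\Ar[\opt]$, $\Fn[\opt]$ and $\caps(\syscall)$, all identical across the two systems, so errors and safety violations occur in perfect lock-step. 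In the backward direction, each non-$\cfence$ $\fencetrans(\system)$-step is matched by exactly one $\system$-step and each $\cfence$ step by none (stuttering); since $\fencetrans$ never produces two adjacent fences and a plain command contains none, no reduct of $\fencetrans(c)$ ever performs two consecutive $\cfence$ steps, so the stuttering is bounded.

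Finally I would lift the simulation to $\Eval$. The initial configurations $\conf{\frame{\cmd}{\regmap}{\um},\lay\lcomp\rfs}$ and $\conf{\frame{\cmd}{\regmap}{\um},\lay\lcomp\rfs'}$ are $\mathcal R$-related because $\cmd$ is unprivileged (hence equal to itself) and $\rfs \mathrel{\mathcal R} \rfs'$ holds by definition of $\fencetrans(\system)$. Forward and backward simulation then give: a normal termination on one side is matched by a normal termination on the other with the same return value and $\mathcal R$-related — hence $\eqon{\Idu}$-equal — final stores; an $\err$ by $\err$; an $\unsafe$ by $\unsafe$; and divergence by divergence (forward simulation from $\system$, backward simulation plus finiteness of stuttering from $\fencetrans(\system)$). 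This yields $\Eval[\fencetrans(\system)][\lay]{\cmd,\regmap,\um,\rfs'} \simeq \Eval[\system][\lay]{\cmd,\regmap,\um,\rfs}$, which is the claim. The main obstacle is pinning down $\approx_c$ so that it is genuinely invariant under sequencing, while-unfolding and procedure return — in particular checking the interaction of the intermediate clause with \ref{WL:Pop} and with nested loop bodies — together with the routine but load-bearing verification that the memory relation survives every \ref{WL:Store} by W\^{}X.
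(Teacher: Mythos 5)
Your proposal is correct and follows essentially the same route as the paper: the paper's \Cref{lemma:fencestep} sets up exactly such a cross-system simulation via a relation $\fencerel$ that equates user-mode frames literally, relates kernel-mode commands to their $\fencetrans$-images, and relates stores agreeing on $\Idu\cup\Ark$ with kernel procedures transformed. The one notable difference is that the paper matches a single $\system$-step by $\to^*$ in $\fencetrans(\system)$ (executing the inserted $\cfence$ and the guarded instruction atomically), which dispenses with the ``intermediate'' clause of your $\approx_c$ --- the part you rightly identify as the most delicate to keep invariant under sequencing, unfolding and \ref{WL:Pop} --- and it proves only the forward direction, the converse following from determinism of the semantics rather than from an explicit backward simulation with bounded stuttering.
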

\begin{proof}
  Direct consequence of \Cref{lemma:fencestep}.
\end{proof}

\begin{lemma}
  \label{lemma:fencesafeimpo}
  The transformation $\fencetrans$  \emph{imposes speculative kernel safety}.
\end{lemma}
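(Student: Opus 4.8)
The goal is to show that the fencing transformation $\fencetrans$ satisfies \Cref{def:cbu}: whenever a speculative run of a system call $\syscall$ (lifted through $\fencetrans$) reaches $\unsafe$, already the \emph{non-speculative} run of that same system call, starting from the flushed memory, reaches $\unsafe$. The plan is to proceed by analysing the speculative reduction and showing that any step that could introduce a safety violation in $\fencetrans(\system)$ must have been preceded by a $\cfence$, hence it occurs with mis-speculation flag $\bot$ and empty write buffer, so it coincides with a step of the non-speculative semantics.

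First I would set up the right invariant. I would use the $\fwf{\cdot}$ predicate just introduced (which records that all kernel procedures are in the image of $\fencetrans$, that the stack frames running in kernel mode carry $\fencetrans$-transformed code, and that the write buffer only touches kernel-array addresses). The key fact is that $\fwf{\cdot}$ is preserved by $\sto{\dir}{\obs}$: every kernel-mode step either executes a command that is itself a $\fencetrans$-image (so its head instruction, if it is a load/store/call, was preceded by a $\cfence$), or is structural. I would prove this preservation as an auxiliary lemma by induction on the derivation of the speculative step, case-splitting on the rule used; the nontrivial cases are \ref{SI:Call} and \ref{SI:System-Call}, where a new frame is pushed — here I use $\fwf{\rfs'}$ to conclude the callee's body $\rfs(\fn)$ (resp.\ $\syss(\syscall)$) is again $\fencetrans(\cmd')$ for some $\cmd'$.

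Next, the heart of the argument. Suppose $\lay \red[\zeta(\system)] \conf{\conf{\syss(\syscall), \regmap, \km[\syscall]}, \bm\buf{(\lay \lcomp \rfs')}, \boolms} \sto{\Ds}{\Os}^* \unsafe$. The $\unsafe$ state can only be produced by one of \ref{SI:Load-Unsafe}, \ref{SI:Store-Unsafe}, \ref{SI:Call-Unsafe}, acting on a top frame of the form $\frame{\fencetrans(\cmemread[\lbl]\vx\expr)\sep\cmdtwo}{\cdot}{\km[\syscall]}$ or the store/call analogue. By definition of $\fencetrans$, this frame is $\frame{\cfence\sep\cmemread[\lbl]\vx\expr\sep\cmdtwo}{\cdot}{\km[\syscall]}$, so the violating instruction cannot be the immediate next reduct — the $\cfence$ must fire first via \ref{SI:Fence}, which (i) requires the mis-speculation flag to be $\bot$ and (ii) flushes the buffer. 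After the fence, the write buffer is empty and the flag is $\bot$, and since the violating memory instruction is reached in the very next step without any further speculation that could be rolled back before it (a conditional could branch, but by the remark after \Cref{thm:mitigation} a conditional never produces $\unsafe$, and $\fencetrans$ descends into branches so the relevant instruction is still fenced), the unsafe step is performed on a buffer-free memory with flag $\bot$. On such states the speculative rules \ref{SI:Load}/\ref{SI:Store}/\ref{SI:Call} and their unsafe counterparts agree exactly with the non-speculative rules \ref{WL:Load}/\ref{WL:Store}/\ref{WL:Call} and \ref{WL:Load-Unsafe} etc.; I would record this as a small "simulation at $\bot$" lemma (essentially a restriction of \Cref{lemma:ordinatytospec}/\Cref{lemma:thereisasyscall2}). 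Combining: the non-speculative run from $\conf{\conf{\syss(\syscall), \regmap, \km[\syscall]}, \overline{\bm\buf{(\lay \lcomp \rfs')}}}$ can replay the $\dstep$-directives of the speculative run (restricting $\Ds$ to its $\dstep$ entries, which is all that matters once speculation is never committed because every speculative detour either backtracks or is irrelevant), eventually executing the $\cfence$ (a no-op on an already-flushed memory) and then the same violating instruction, reaching $\unsafe$.

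\textbf{Main obstacle.} The delicate point is bridging a \emph{speculative} trace of $\fencetrans(\system)$ — which may interleave arbitrary $\dload$, $\dbranch$, $\dbt$ directives and mis-speculated sub-executions — with a \emph{non-speculative} trace. I expect the bulk of the work to be a simulation lemma showing that any maximal speculative prefix ending just before the fatal instruction can be mirrored, step for step on the fenced/flushed configuration, by the deterministic non-speculative semantics; the subtlety is that speculative sub-executions that eventually backtrack must be shown not to be needed (they produce no architectural effect on the memory that persists past the backtrack, because $\fencetrans$ guarantees no store commits across a $\cfence$ while mis-speculating, and backtracking discards the mis-speculated frames). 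Making the induction hypothesis strong enough to carry the buffer-emptiness and flag-$\bot$ invariants through nested speculation, while tracking that the kernel-mode portion of the stack stays $\fencetrans$-well-formed, is where the care is required; once that lemma is in hand, the case analysis on the three unsafe rules closes the proof immediately.
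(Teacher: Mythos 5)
Your plan matches the paper's proof essentially step for step: the $\fwf$ well-formedness invariant (the paper's \Cref{lemma:fwf} and \Cref{rem:fwftech}), the observation that the violating load/store/call must be immediately preceded by a $\cfence$ forcing flag $\bot$ and a flushed buffer, the reduction of the speculative trace to a $\dstep$-only one (the paper's \Cref{lemma:nobt} and \Cref{lemma:nobtsteponly}), and the final simulation of $\dstep$-only speculative steps by the non-speculative semantics (the paper's \Cref{lemma:stepsemsim}). The only slight imprecision is that $\fwf$ is not preserved by every single step (the intermediate frame between firing the fence and executing the memory instruction is not itself a $\fencetrans$-image), which the paper handles by stating the invariant as holding up to one or two steps; this is a detail you would resolve when writing the induction out.
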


\begin{proof}
  We assume that there is a system
  $\system = (\rfs, \syss, \funs)$, a system call $\syscall$
  a register map $\regmap$
  % \[
  %   \regmap = \update{\regmap_0}{\vx_1, \dots, \vx_k}{\val_1, \dots, \val_k}
  % \]
  buffer $\buf$, an array store $\rfs' \sim_{\Fun} \rfs$, and a natural number $\nat$ such that
  \[
    \nsstep[\system][\lay] \nat {\sframe{\frame {\syss(\syscall)}{\regmap} {\km[\syscall]}}  {\bm\buf{(\lay\lcomp\rfs')}}{\boolms}} \unsafe \Ds \Os.
  \]
  for a sequence of directives $\Ds$ (that we assume free of $\dbt$ because
  of \Cref{lemma:nobt})
  producing a set of observations $\Os$. By case analysis on $\nat$,
  we refuse the case where $\nat= 0$, because that would mean
  \[
    {\sframe{\frame {\syss(\syscall)} {\regmap} {\km[\syscall]}}{\bm\buf{(\lay\lcomp\rfs')}}{\boolms}} = \unsafe.
  \]
  For this reason, in the following, we assume that $\nat>0$. By case analysis on
  the proof relation, we deduce that the rule that has been used to show the last transition
  must be one among \ref{AL:Load-Unsafe}, \ref{AL:Store-Unsafe} and \ref{AL:Call-Unsafe}.
  We just show the case for \ref{AL:Load-Unsafe}, in the other cases it is analogous.
  This means that
  \begin{equation*}
    \nsstep[\system][\lay] {\nat-1} {\sframe{\frame {\syss(\syscall)} {\regmap} {\km[\syscall]}}{\bm\buf{(\lay\lcomp\rfs')}}{\boolms}} {} {\Ds'} {\Os'} \\
    {\sframe{\frame {\cmemread \vx \expr\sep\cmdtwo} {\regmap'} {\km[\syscall]}:\st}{\bm{\buf'}{(\lay\lcomp\rfs'')}}{\boolms'}}:\cfstack  \sto {\dstep} {\omem\add} 
    \unsafe.
  \end{equation*}
  Observe that $\nat >1$: otherwise, if $\nat=1$, we would have ${\cmemread \vx \expr\sep\cmdtwo}= \syss(\syscall)$,
  but from $\system \in \im(\fencetrans)$, we deduce that there is a system $\system'=(\rfs'', \syss',\caps')$ such that $\fencetrans (\system') = \system$. This, in particular, would mean that
  $\syss(\syscall)=\fencetrans(\syss'(\syscall))$, so there is a command $\cmd\in \Cmd$ such that
  $\syss(\syscall)=\fencetrans(\syss'(\cmd))$ but, by induction on the syntax of the command,
  we observe that this is not possible. However, from \Cref{rem:stackinv2}, we deduce that
  $\cmd$ cannot contain system calls, so we can apply  \Cref{lemma:fwf},
  we deduce that there must be a stack of configurations
  \[
    \sframe{\frame{\cmd} {\regmap''} {\km[\syscall]}:\st''}{\bm{\buf''}{(\lay\lcomp\rfs''')}}{\boolms''}:\cfstack'
  \]
  such that:
  \begin{multline*}
    \nsstep[\system][\lay] {\nat-2} {\sframe{\frame {\syss(\syscall)} {\regmap} {\km[\syscall]}}{\bm\buf{(\lay\lcomp\rfs')}}{\boolms}} {}{\Ds''} {\Os''}\\
    {\sframe{\frame{\cmd} {\regmap''} {\km[\syscall]}:\st''}{\bm{\buf''}{(\lay\lcomp\rfs''')}}{\boolms''}:\cfstack'}  \sto \dir \obs\\
    {\sframe{\frame {\cmemread \vx \expr\sep\cmdtwo} {\regmap'} {\km[\syscall]}:\st}{\bm{\buf'}{(\lay\lcomp\rfs'')}}{\boolms'}}:\cfstack  \sto {\dstep} {\omem\add} 
    \unsafe.
  \end{multline*}
  and, in particular, $\fwf{\frame{\cmd} {\regmap''} {\km[\syscall]}:\st''}$.
  From this observation, we deduce that $\cmd = \fencetrans(\cmdtwo)$
  for some $\cmdtwo\in \Cmd$  and that $\fwf{\st''}$ holds.
  For these reasons, going by cases on $\cmdtwo$ and the rule that has been used to show the last transition (knowing that $\dir\neq \dbt$ by assumption),
  we can rewrite the reduction above, that rewrites as follows:
  \begin{multline*}
    \nsstep[\system][\lay] {\nat-2} {\sframe{\frame {\syss(\syscall)} {\regmap} {\km[\syscall]}}{\bm\buf{(\lay\lcomp\rfs')}}{\boolms}} {}{\Ds''} {\Os''}\\
    {\sframe{\frame{\cfence\sep\cmemread \vx \expr\sep\cmdtwo} {\regmap''} {\km[\syscall]}}{\bm{\buf''}{(\lay\lcomp\rfs''')}}{\boolms''}:\cfstack}  \sto \dir \obs\\
    {\sframe{\frame {\cmemread \vx \expr\sep\cmdtwo} {\regmap'} {\km[\syscall]}:\st}{\overline{\bm{\buf''}{(\lay\lcomp\rfs''')}}}{\bot}}:\cfstack  \sto {\dstep} {\omem\add} 
    \unsafe.
  \end{multline*}
  In particular, none of the rules except for \ref{AL:Fence} can have been used.
  This also means that $\boolms'=\boolms''=\bot$ because no $\dbt$
  directive can have been employed.
  From \Cref{lemma:nobtsteponly}, we deduce that there is $\nat'$ such that:
    \begin{equation*}
    \nsstep[\system][\lay] {\nat'} {\sframe{\frame {\syss(\syscall)} {\regmap} {\km[\syscall]}}{\bm\buf{(\lay\lcomp\rfs')}}{\boolms}} {}{\dstep^{\nat'}} {\Os'''}\\
    {\sframe{\frame {\cmemread \vx \expr\sep\cmdtwo} {\regmap'} {\km[\syscall]}:\st}{\overline{\bm{\buf''}{(\lay\lcomp\rfs''')}}}{\bot}}:\cfstack  \sto {\dstep} {\omem\add} 
    \unsafe.
  \end{equation*}
  Finally, we apply \Cref{lemma:stepsemsim} that shows:
  \begin{equation*}
    \nstep[\system][\lay] {\nat'} {\conf{\frame {\syss(\syscall)} {\regmap} {\km[\syscall]}, \overline{\bm\buf{(\lay\lcomp\rfs')}}}}{}\\
    {\conf{\frame {\cmemread \vx \expr\sep\cmdtwo}{\regmap'} {\km[\syscall]}:\st, \overline{\bm{\buf''}{(\lay\lcomp\rfs''')}}}}.
  \end{equation*}
  Finally, by assumption we know that the rule \ref{SI:Load-Unsafe} has been applied
  to show the transition in the speculative semantics. By introspection of that rule,
  we conclude that its premises are also verified by the configuration
  \[
    {\conf{\frame {\cmemread \vx \expr\sep\cmdtwo}{\regmap'} {\km[\syscall]}:\st, \overline{\bm{\buf''}{(\lay\lcomp\rfs''')}}}};
  \]
  this shows that \ref{WL:Load-Unsafe} applies, i.e.
  \begin{equation*}
    \nstep[\system][\lay] {\nat'} 
    {\conf{\frame {\cmemread \vx \expr\sep\cmdtwo}{\regmap'} {\km[\syscall]}:\st, \overline{\bm{\buf''}{(\lay\lcomp\rfs''')}}}}
    \unsafe,
  \end{equation*}
  and this shows $\cbu(\system)$.
\end{proof}

\paragraph{Technical Observations}

\newcommandx{\wf}[3][1=\system,2=\lay]{#1\vdash \mathsf{wf}_{#1}(#3)}

In order to show the analogous result but for the speculative semantics,
we define the judgment $\wf\cdot$. For a system $\system  = (\rfs, \syss, \caps)$,
it is defined as follows:

\[
  \infer{\wf{\sconf{\err, \bot}}}{}
  \quad\infer{\wf\unsafe}{}
  \quad\infer{\wf \nil}{}
\]
\[
  \infer{\wf{\sframe\st {\bm\buf\mem} \boolms:\cfstack} } {\wf\cfstack & \dom(\buf)\subseteq \lay(\Ar) & \mem = \lay \lcomp \rfs' & \rfs' \sim_{\Fn} \rfs }
\]

\begin{remark}
  \label{rem:simispreservedspec}
  For every system $\system = (\rfs, \syss, \caps)$, layout $\lay\in \Lay$, pair of configurations $\cfstack$ and $\cfstack'$, sequences of directives  $\Ds$ and of observations $\Os$, if $\wf\cfstack$ and $\nsstep*{\sframe {\frame \cmd \regmap \um} {\lay\lcomp \rfs} \bot}{\cfstack} \Ds \Os$, then we have $\wf\cfstack$.
\end{remark}
\begin{proof}
  The proof goes by induction on the number of steps.
  \begin{proofcases}
    \proofcase{0} Trivial.
    \proofcase{$\nat+1$} The claim is still a direct consequence of the IH, and in most of the cases the
    proof is straightforward. The most interesting cases are those for stores and fences.
    In the first case,the premise of the rule insures that the address where we write
    is part of the memory of an array. In the second case, it is a consequence of
    \Cref{rem:overlinewrtdom}.
  \end{proofcases}
\end{proof}

\begin{remark}
  \label{rem:simispreserved2}
  For every system $\system = (\rfs, \syss, \caps)$, layout $\lay\in \Lay$, pair of configurations $\conf{\st, \lay\lcomp \rfs, \Ds, \Os}$ and $\conf{\st', \mem,\Ds', \Os'}$ such that $\nesstep*{\conf{\st, \lay\lcomp \rfs, \Ds, \Os}}{\conf{\st', \mem,\Ds', \Os'}}$, we have that $\mem=\lay\lcomp \rfs'$ for some $\rfs'\sim_{\Fn} \rfs$.
\end{remark}
\begin{proof}
  To show this result we need to show, in conjunction with it, that if
  \[
    \nesstep*{\conf{\st, \lay\lcomp \rfs, \Ds, \Os}}{\conf{\cfstack, \st,\Ds', \Os'}}
  \]
  then $\wf\cfstack$ The proof goes by induction on the number of steps.
  \begin{proofcases}
    \proofcase{0} Trivial.
    \proofcase{$\nat+1$} The proof goes by cases on the
    rule that has been applied to show the last transition.
    The cases for the ordinary rules are simple. For this reason
    we just focus on those for the hybrid configuration.
    \begin{proofcases}
      \proofcase{\ref{AL:Spec-Init}} In this case, we can verify that the speculative
      frame of the target configuration enjoys $\wf \cdot$ simply by introspection of
      the target configuration and because of the IH.
      \proofcase{\ref{AL:Spec-Dir}, \ref{AL:Spec-Step}, \ref{AL:Spec-Bt}}
      The claim is a consequence of the IH and of \Cref{rem:simispreservedspec}.
      \proofcase{\ref{AL:Spec-Term}}
      The claim is a consequence of the IH, and of \Cref{rem:overlinewrtdom}
      which ensures that the memory that is extracted form the speculative stack
      enjoys the desired property.
    \end{proofcases}
  \end{proofcases}
\end{proof}

\begin{remark}
  \label{rem:noaddobs}
  For every pair of speculative stacks $\cfstack, \cfstack'$, system $\system=(\rfs, \syss, \caps)$,  $\nat\in\Nat$, sequence of directives $\Ds$, address $\add$ and layout $\lay$ such that $\add \notin \underline\lay(\Id)$, if:
  \[
    \nsstep \nat {\cfstack} {\cfstack'} \Ds {\Os},
  \]
  then $\Os$ does not contain the observation $\omem \add$.
\end{remark}

\begin{proof}
  The proof goes by induction on $\nat$.
  \begin{proofcases}
    \proofcase{0} Trivial
    \proofcase{$\nat+1$} We apply the IH, and we go by cases on the rule
    that has been used to show the last transition. For those rules that do not
    produce a transition like $\omem \add'$, the claim is trivial.
    The rules that can produce a similar transition are \ref{SI:Load-Step},
    \ref{SI:Load}, \ref{SI:Load-Unsafe}, \ref{SI:Store}, \ref{SI:Store-Unsafe},
    \ref{SI:Call},\\ \ref{SI:Call-Unsafe}. All these rules require in their premises that
    $\add' \in \underline\lay(\Ark)$,  $\add' \in \underline\lay(\Fnk)$,
    $\add' \in \underline\lay(\Aru)$, or that  $\add' \in \underline\lay(\Fnu)$, but
    since $\add \notin \underline\lay(\Id)$, we deduce that  $\add$
    does not belong to any of these sets, so it must be that $\add'\neq \add$.
  \end{proofcases}
\end{proof}

\begin{remark}
  \label{rem:stackinv1}
  For every system $\system = (\rfs, \syss, \caps)$,
  every store $\rfs \sim_{\Fun} \rfs'$,
  configuration
  \[
    \confone = \conf{\frame \cmd \regmap \um, \lay\lcomp \rfs', \Os, \Ds},
  \]
  configuration $\conf{\st', \lay\lcomp \rfs'', \Os, \Ds}$, and hybrid configuration
  $\conf{\cfstack, \conf{\st', \Os, \Ds}}$ that are reachable in $\nat$ step from $\confone$,
  there are a pair of stacks  $\st_{\km}, \st_\um$ such that
  $\st' = \st_{\km}:\st_\um$, $\km(\st_{\km})$ and $\um(\st_\um)$.  
\end{remark}
\begin{proof}
  By induction on $\nat$. The base case is trivial, the inductive one comes by cases on the rule that is applied.
  Most of these rules are relatively simple. We just show the most interesting cases:
  \begin{proofcases}
    \proofcase{\ref{AL:Pop}} The target configuration carries a stack which is a suffix of the one in the source configuration,
    so the conclusion is a trivial consequence of the IH.
    \proofcase{\ref{AL:System-Call}} We first apply the IH, which shows that:
    \begin{equation*}
      \nesstep \nat {\conf{\frame \cmd \regmap \um:\st, \lay\lcomp\rfs, \Os, \Ds}}{}\\
      \conf{\frame {\csyscall \syscall{\expr_1, \dots,\expr_k}\sep \cmdtwo} {\regmap'} \opt \st', \lay\lcomp\rfs, \Os, \Ds}
    \end{equation*}
    and that $\frame {\csyscall \syscall{\expr_1, \dots,\expr_k}} {\regmap'} \opt \st'$ is a concatenation
    of a pair of stacks  $\st_{\km}, \st_\um$ such that $\km(\st_{\km})$ and $\um(\st_\um)$.
    Since the current command is a system call, $\st_{\km}$ must be empty and $\opt$ must be $\um$. Observe that
    \begin{equation*}
      \esstep
      {\conf{\frame {\csyscall \syscall{\expr_1, \dots,\expr_k}\sep \cmdtwo} {\regmap'} \um \st', \lay\lcomp\rfs, \Os, \Ds}}{}\\
      {\conf{\frame {\syss(\syscall)} {\regmap''} {\km[\syscall]} : \frame {\cmdtwo} {\regmap'} \um \st', \lay\lcomp\rfs, \Os, \Ds}}
    \end{equation*}
    for a suitable $\regmap''$. The claim just requires verifying that the target configuration above
    satisfies the premises. In particular that $\km(\frame {\syss(\syscall)} {\regmap''} {\km[\syscall]})$
    holds because $\syss(\syscall)$ cannot contain system call invocations for the definition of system.
    \proofcase{\ref{AL:Call}} This case is analogous to the previous one, but instead of
    setting the flag to $\km$ in the new frame, the rule copies it from the topmost frame of the source configuration, and this does not break the invariant. In particular, we observe that the target configuration of the rule looks like the following one:
    \begin{equation*}
      \langle
      \frame
      {\lay\lcomp\rfs''(\add)}
      {\regmap_0[\vx_{1} \upd \sem{\exprtwo_1}_{\regmap,\lay}, \dots, \vx_{n} \upd \sem{\exprtwo_n}_{\regmap,\lay}]}
      {\opt}:\\
      {\frame{\cmd}{\regmap}{\opt} : \st},
      {\lay\lcomp\rfs''}
      \rangle.
    \end{equation*}
    From \Cref{rem:simispreservedspec}, we deduce that $\rfs''\sim_{\Fn}\rfs$, this means that ${\lay\lcomp\rfs''(\add)}$, because from the premises of the rule, we know that $\add \in \underline \lay(\Fun[\opt])$, which means that there is a function $\fn\in \Fun[\opt]$ such that $\lay(\fn)=\add$. From this observation and the definition of $\cdot\lcomp \cdot$, we deduce that $\lay\lcomp\rfs''(\add) = \rfs(\fn)$. So, we go by cases on $\opt$. If it is $\um$, the body of the function is unprivileged, and this shows $\lay\lcomp\rfs''(\add)$, analogously, if it is $\km$, we observe that, this program is in $\Cmd$ and does not have any $\csyscall \cdot \cdot$ instruction inside. This shows $\km[\syscall](\rfs(\fn))$, as required.  
    \proofcase{\ref{AL:Spec-Init}} Observe that this rule simply copies the stack from the source configuration
    to the target configuration, and removes the executed command from the topmost frame; the conclusion is a consequence of the IH.
  \end{proofcases}
\end{proof}

\newcommandx{\swf}[3][1=\system,2=\lay]{#1\vdash \mathsf{swf}_{#2}(#3)}

In order to show a similar result, but for speculative execution,
we strengthen the predicate $\wf\cdot$ into $\swf\cdot$ that is defined as follows:

\[
  \infer{\swf{\sconf{\err, \bot}}}{}
  \quad\infer{\swf\unsafe}{}
  \quad\infer{\swf \nil}{}
\]

\small
\[
  \infer{\swf{\sframe\st {\bm\buf\mem} \boolms:\cfstack} } {\wf{\sframe\st {\bm\buf\mem} \boolms:\cfstack} & \swf{\cfstack} & \st = \st_{\km}:\st_\um & \km[\syscall](\st_{\km}) & \um(\st_\um)}
\]
\normalsize

\begin{remark}
  \label{rem:stackinv2}
  For every system $\system = (\rfs, \syss, \caps)$,
  every stack $\cfstack$ such that $\swf\cfstack$,
  stack $\cfstack'$, such that  $\nsstep \nat \cfstack {\cfstack'} \Ds \Os$ ,
  we have $\swf {\cfstack'}$.  
\end{remark}
\begin{proof}
  By induction on $\nat$. The base case is trivial, the inductive one comes by cases on the rule that is applied. Thanks to \Cref{rem:simispreservedspec},
  we can just focus in showing the additional requirements on the stack. Most of these rules are relatively simple. We just show the most interesting cases:
  \begin{proofcases}
    \proofcase{\ref{SI:Pop}} The target configuration carries a frame-stack which is a suffix of the one in the source configuration, so the conclusion is a trivial consequence of the IH.
    \proofcase{\ref{AL:Call}} We apply the IH, and we rewrite the configuration reached after $\nat$ steps as follows:
    \begin{equation*}
      {\sframe{\frame {\ccall \expr{\expr_1,\dots,\expr_k}\sep\cmdtwo} \regmap \opt:\st}{\bm{\mu'}{\lay\lcomp\rfs}} \boolms\cons\cfstack''}
    \end{equation*}
    we know by IH that it enjoys $\swf \cdot$, and
    we observe that the target configuration looks like the following one:
    \begin{equation*}
      \sframe{\frame {\ccall \expr{\expr_1,\dots,\expr_k}\sep \cmdtwo} {\regmap'} \opt:\frame {\cmdtwo} \regmap \km:\st}{\bm{\mu'}{\lay\lcomp\rfs}} \boolms\cons\cfstack''
    \end{equation*}
    for some $\regmap_0$. From the premises of the rule, we know that $\add \in \underline \lay(\Fn[\opt])$, which means that there is a function $\fn\in \Fn[\opt]$ such that $\lay(\fn)=\add$. From this observation and the definition of $\cdot\lcomp \cdot$, we deduce that $\lay\lcomp\rfs''(\add) = \rfs(\fn)$. So, we go by cases on $\opt$. If it is $\um$, the body of the function is unprivileged, and this shows $\lay\lcomp\rfs''(\add)$, analogously, if it is $\km$, we observe that, this program is in $\Cmd$ and does not have any $\csyscall \cdot \cdot$ instruction inside. This shows $\km[\syscall](\rfs(\fn))$, as required 
    \proofcase{\ref{AL:System-Call}} We first apply the IH, which shows that the configuration
    reached after $\nat$ steps enjoys $\swf \cdot$, namely:
    \begin{equation*}
      \swf {\sframe{\frame {\csyscall \syscall{\expr_1,\dots,\expr_k}\sep \cmdtwo} \regmap \um:\st}{\bm{\mu'}{\lay\lcomp\rfs}} \boolms\cons\cfstack''}
    \end{equation*}
    form this observation, we deduce
    that
   \[
      \um({\frame {\csyscall \syscall{\expr_1,\dots,\expr_k}\sep \cmdtwo} \regmap \um:\st})
    \]
    must hold. The claim comes from introspection of the rule, and from the definition of
    $\syss$.
  \end{proofcases}
\end{proof}

% \begin{corollary}
%   \label{rem:stackinv2cor}
%   For every system $\system=(\rfs, \syss, \caps)$,
%   and every configuration $\sframe {\frame{\cmd} \regmap {\km[\syscall]}} {\bm\buf\mem}\boolms$, if $\sframe {\frame{\cmdtwo} {\regmap'} {\opt}} {\bm{\buf'}{\mem'}}{\boolms'}$, if $\cmd$
%   \DD{does not contain system calls}, then $\cmdtwo$ is not a system call.
% \end{corollary}

\begin{remark}
  \label{rem:thereisaspecon}
  For every system $\system = (\rfs, \syss, \caps)$,
  natural number $\nat$, layout
  $\lay$, and every configuration
  $\conf{\frame \speccmd \regmap \um, \lay \lcomp \rfs, \nil, \nil}$
  where $\speccmd$ is unprivileged,
  and stack of directives $\Ds$ and stack of observations $\Os$,
  if
  \[
    \nesstep \nat {\conf{\frame \speccmd \regmap \um, \lay \lcomp \rfs, \nil, \nil}} {\conf{\cfstack, \st, \Ds, \Os}}
  \]
  then there is $\nat'\le \nat$
  and a configuration
  $\sframe{\frame{\cmd'}{\regmap'}{\um}}{\lay\lcomp\rfs'}{\bot}$
  with $\um(\cmd')$,
  a sequence of directives $\Ds'$ and a sequence of observations
  $\Os'$
  such that
  \[
    \nsstep {\nat'}
    {\sframe{\frame{\cmd'}{\regmap'}{\um}}{\lay\lcomp\rfs'}{\bot}}
    \cfstack    {\Ds'} {\Os'}.
  \]
\end{remark}

\begin{proof}
  The proof goes by induction on $\nat$.
  \begin{proofcases}
    \proofcase{0} Follows from vacuity of the premise.
    \proofcase{$\nat+1$} The premise is:
      \[
        \nesstep {\nat} {\conf{\frame \speccmd \regmap \um, \lay \lcomp \rfs, \nil, \nil}} \conftwo \ato  {\conf{\cfstack, \st, \Ds, \Os}} 
      \]
      We go by cases on the rule that has been applied to show the last transition.
      \begin{proofcases}
        \proofcase{\ref{AL:Spec-Init}} We observe that
        \[
          \conftwo = \conf{\frame{\cspec {\cmd''}\sep \cmdtwo}{  \regmap''}\opt, \lay\lcomp\rfs''}
        \]
        and the claim follows by introspection of the rule.
        $\um(\cmd'')$ comes from the definition of the language and, in particular,
        $\opt = \um$ comes from \Cref{rem:stackinv1}:  it cannot be that $\opt=\km[\syscall]$
        because, otherwise, we could not have $\cspec {\cmd'}\sep \cmdtwo$ as command.
        \proofcase{\ref{AL:Spec-Step}}  The IH provides 
        \[
          \nsstep {\nat'}
          {\sframe{\frame{\cmd'}{\regmap'}{\um}}{\lay\lcomp\rfs'}{\bot}}
          \cfstack    {\Ds'} {\Os'}.
        \]
        from the premise of the rule, we conclude
        \[
          \sstep
          \cfstack {\cfstack'}    {\dstep} \obs
        \]
        and this concludes the proof.
      \end{proofcases}
      The other cases are analogous.
  \end{proofcases}
\end{proof}

\begin{remark}
  \label{remark:nobttech}
  If
  \[
    \nsstep\nat  {\specconfone} {\cfstack}\Ds \Os
  \]
  and $\cfstack = \specconftwo\cons\cfstack'$ and $\cfstack'$ not empty, then there are $\nat'\le \nat$, ${\Ds'}$, ${\Os'}$ such that: 
  \[
    \nsstep{\nat'} {\specconfone} {\cfstack'} {\Ds'} {\Os'}
  \]
\end{remark}
\begin{proof}
  By induction on $\cfstack$ and then on $\nat$.
  \begin{proofcases}
    \proofcase{$\nil$} Absurd.
    \proofcase{$\specconftwo\cons\cfstack'$} The IH tells that for every $m$, there are $m'\le m$, ${\Ds'}$, ${\Os'}$ such that: 
    \[
      \nsstep{m'} {\specconfone} {\cfstack'} {\Ds'} {\Os'}
      \tag{IHS}
    \]
    we go by induction on $\nat$.
  \begin{proofcases}
    \proofcase{$0$} From the premise we can deduce $\nil=\cfstack'$, which is absurd.
    \proofcase{$\nat+1$} We go by cases on the directive used for the last step:
    \begin{proofcases}
      \proofcase{$\dstep$} The claim is a consequence of the IH on $\nat$.
      \proofcase{$\dload i, \dbranch b$} The witness we need to introduce is  $\nat$ step transition $\nsstep\nat {\specconfone} {\specconftwo\cons\cfstack'}\Ds \Os$.
      \proofcase{$\dbt$} We go by cases on the rule that has been applied. It must be one of \ref{SI:Backtrack-Top}, \ref{SI:Backtrack-Top}. We will just show the result for the case of ordinary configurations. The proof with error configurations is analogous.
      \begin{proofcases}
        \proofcase{\ref{SI:Backtrack-Top}} In this case, we go by cases on the $\nat+1$-th target stack. It cannot be empty. If it has one element, the claim holds for vacuity of the premise, if it has more than one element, then we can apply the IH on $\cfstack$.
        \proofcase{\ref{SI:Backtrack-Bot}} The claim holds for vacuity of the premise: the $\nat+1$-th target stack has just one element.
      \end{proofcases}
    \end{proofcases}
  \end{proofcases}
  \end{proofcases}
\end{proof}

\begin{remark}
  \label{lemma:misspecunset}
  For every $\nat$, initial configuration $\specconfone=\sframe{\frame \cmd \regmap \um} {\mem} \bot$, if:
  \[
    \nsstep \nat  {\specconfone} {\specconftwo} {\Ds} {\Os},
  \]
  and $\specconftwo$ has a mis-speculation flag ($\specconftwo \neq \unsafe$) then the mis-speculation-flag of $\specconftwo$ must be $\bot$.
\end{remark}
\begin{proof}
  The proof is by induction, and the base case is trivial. In the inductive case, we go by cases on the directive of the last transition:
  \begin{proofcases}
    \proofcase{$\dstep$} By introspection of this fragment of the semantics, we deduce that the number of stacks in the $\nat$-th and in the $\nat+1$-th target configurations is the same, so in particular the $\nat$-th target stack must have one entry only. For this reason we can apply the IH and deduce that the mis-speculation flag of the $\nat$-th target configuration $\bot$. Then we observe that for all these rules, the flag is always copied from the source configuration to the target one, so we conclude.
    \proofcase{$\dload i, \dbranch b$} For all the rules that match these directives, the claim hold for vacuity of the premise: the $\nat+1$-th target configuration stack has height greater than $1$. The only exception is \ref{SI:Load-Err}. In this case, the proof is analogous to the case of the same but with $\dstep$ directive, which has already taken in account in the previous step.  
    \proofcase{$\dbt$} We go by cases on the rule that has been applied. It must be one of \ref{SI:Backtrack-Top}, \ref{SI:Backtrack-Bot}. We will just show the result for the ordinary configurations, the case for errors is analogous.
    \begin{proofcases}
          \proofcase{\ref{SI:Backtrack-Bot}} In this case, the claim follows directly from the definition of the rule.
          \proofcase{\ref{SI:Backtrack-Top}} Call $\cfstack_\nat$ the $\nat$-th target configuration stack. Observe that its height cannot be neither $0$ nor $1$. In the first case there would be no next transition, in the other case, we would get for the IH that its mis-speculation flag is unset, which is in contradiction with the assumption on the applied rule. Observe that if its height is greater than $2$, then the height of the $\nat+1$-th target stack is greater than $1$, so the claim holds for vacuity of the premise. Finally, if its height is exactly $2$, then it must be in the shape $\specconftwo_\nat\cons \specconftwo_{\nat}'$. From \Cref{remark:nobttech}, we deduce that there is a sequence of transitions from $\specconfone$ to $\specconftwo_\nat'$ whose length is $\nat'\le \nat$. By applying the IH on this sequence, we can show that $\specconftwo_\nat'$ has mis-speculation flag unset. We conclude observing that the $\nat+1$-th target configuration stack is exactly $\specconftwo_\nat'$.  
        \end{proofcases}
  \end{proofcases}
\end{proof}

\begin{lemma}
  \label{lemma:nobtsteponly}
  For every $\nat$ initial configuration $\confone$, if:
  \[
    \nsstep \nat  {\confone} {\conftwo: \cfstack} {\Ds} {\Os} 
  \]
  and the mis-speculation flag of $\conftwo$ is $\bot$, then there are $\overline \Os$ and $\nat'\le \nat$ such that :
  \[
    \nsstep{\nat'}  {\confone} {\conftwo} {{\dstep^{\nat'}}} {{\overline \Os}}
  \]
\end{lemma}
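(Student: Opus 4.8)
The plan is to prove the statement by \emph{strong} induction on $\nat$ --- ordinary induction does not suffice because one of the cases below reduces the problem to a prefix of the reduction whose length is only known to be at most $\nat-1$. Write the given $\nat$-step reduction as an $(\nat-1)$-step reduction from $\confone$ to some stack $\cfstack_0$ followed by a single last step $\cfstack_0 \sto{\dir}{\obs} \conftwo\cons\cfstack$, and do a case analysis on the rule used for that last step (the $\unsafe$-producing rules are excluded, since the result $\conftwo\cons\cfstack$ is a non-empty stack). The base case $\nat=0$ is immediate: $\conftwo\cons\cfstack=\confone$, so $\cfstack=\nil$ and $\conftwo=\confone$, whose mis-speculation flag is $\bot$ --- forced at the latest by the hypothesis of the lemma --- and the empty $\dstep$-reduction witnesses the claim with $\nat'=0$.

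The engine of the inductive step is the observation that, when the top configuration of the result carries the flag $\bot$, the last step --- even if it was taken with a speculating directive --- can be replayed with $\dstep$ to the very same top configuration, and that a $\dstep$-step never inspects the part of the stack below its top frame. Thus, when $\dir=\dstep$ and the last rule is one of the non-speculating rules (\ref{SI:Op}, \ref{SI:Skip}, \ref{SI:Load-Step}, \ref{SI:Store}, \ref{SI:Call}, \ref{SI:System-Call}, \ref{SI:Pop}, \ref{SI:Fence}, \ref{SI:If}, \ref{SI:Loop-Step}) or one of the error rules (\ref{SI:Load-Err}, \ref{SI:Store-Err}, \ref{SI:Call-Err}), the stack height is unchanged, so $\cfstack_0=\specconftwo\cons\cfstack$ with $\specconftwo$ carrying the flag copied onto $\conftwo$, hence $\bot$; the induction hypothesis on the $(\nat-1)$-step reduction from $\confone$ to $\specconftwo\cons\cfstack$ gives a $\dstep$-reduction from $\confone$ to $\specconftwo$ of length $m\le\nat-1$, and appending the same last step (now read on $\specconftwo\cons\nil$) yields $\confone$ reaching $\conftwo$ in $m+1\le\nat$ $\dstep$-steps. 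When $\dir=\dbranch[\lbl]{d}$ (rule \ref{SI:If-Branch} or \ref{SI:Loop-Branch}), $\cfstack_0=\specconftwo\cons\cfstack$ with $\specconftwo$ a conditional configuration of flag $\boolms$, and $\conftwo$ is the chosen branch with flag $\boolms\lor(d\neq\toBool{\sem\expr_{\regmap,\lay}})$; this being $\bot$ forces $\boolms=\bot$ and $d$ to equal the guard's actual value, so rule \ref{SI:If} (resp. \ref{SI:Loop-Step}) fires on $\specconftwo$ with $\dstep$ and produces exactly $\conftwo$ --- conclude via the induction hypothesis as above. When $\dir=\dload[\lbl]{i}$ (rule \ref{SI:Load}), the flag $\boolms\lor\bool'$ of $\conftwo$ being $\bot$ forces $\boolms=\bot$ and $\bool'=\bot$, and then \Cref{remark:onbuflookup} gives $\bufread{\bm\buf\mem}{\add}{0}=\val,\bot$ from $\bufread{\bm\buf\mem}{\add}{i}=\val,\bot$, so rule \ref{SI:Load-Step} fires on $\specconftwo$ with $\dstep$ and the same loaded value $\val$; finish as before.

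The backtracking rules are the main obstacle, since they shrink the stack. If the last rule is \ref{SI:Backtrack-Bot}, then $\cfstack_0=\specconftwo\cons\cfstack'$ with $\cfstack'\neq\nil$, $\specconftwo$ of flag $\bot$, and result $\specconftwo\cons\nil$; hence $\conftwo=\specconftwo$, $\cfstack=\nil$, and the induction hypothesis on the $(\nat-1)$-step reduction from $\confone$ to $\specconftwo\cons\cfstack'$ already delivers a $\dstep$-reduction from $\confone$ to $\conftwo$ of length $\le\nat-1<\nat$ (the trailing $\dbt$ is simply discarded). If the last rule is \ref{SI:Backtrack-Top}, then $\cfstack_0=\specconftwo'\cons\cfstack_1$ with $\specconftwo'$ of flag $\top$ and result $\cfstack_1=\conftwo\cons\cfstack$; so $\confone$ reaches $\specconftwo'\cons(\conftwo\cons\cfstack)$ in $\nat-1$ steps, and since the tail $\conftwo\cons\cfstack$ is non-empty, \Cref{remark:nobttech} yields a reduction from $\confone$ to $\conftwo\cons\cfstack$ of some length $\nat''\le\nat-1$; as $\conftwo$ has flag $\bot$ by hypothesis, the strong induction hypothesis applies to this strictly shorter reduction and concludes. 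The error sub-cases (where $\conftwo=\sconf{\err,\bot}$) fall under the non-speculating case above, replacing the directive $\dir\in\{\dstep,\dload[\lbl]\}$ of \ref{SI:Load-Err}, \ref{SI:Store-Err}, \ref{SI:Call-Err} by $\dstep$. Apart from these distinctions the only work is the bookkeeping of stack shapes across the rules and the routine check that each $\dstep$-replay step genuinely ignores the discarded tail of the stack.
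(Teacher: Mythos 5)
Your proposal is correct and follows essentially the same route as the paper's proof: induction on the length of the reduction, a case analysis on the directive of the last step showing that each step whose resulting top configuration carries flag $\bot$ can be replayed with $\dstep$ (using \Cref{remark:onbuflookup} for speculative loads), and an appeal to \Cref{remark:nobttech} to shorten the reduction in the \ref{SI:Backtrack-Top} case. Your explicit use of \emph{strong} induction is a welcome clarification, since the paper's backtracking case likewise applies the induction hypothesis to a reduction whose length is only bounded by $\nat$, and so implicitly relies on the same strengthening.
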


\begin{proof}
  The proof is by induction on $\nat$. The base case is trivial. For the inductive case, we go by cases on the directive of the $\nat+1$-th transition.
  The premise tells us that:
  \[
    \nsstep \nat  {\confone} {\conftwo: \cfstack} {\Ds} {\Os} \sto \dir \obs {\conftwo': \cfstack'}  
  \]
  and we know that the mis-speculation flag of $\conftwo'$ is $\bot$. We must show that
  there are $\overline \Os'$ and $\nat''\le \nat+1$ such that :
  \[
    \nsstep{\nat''}  {\conftwo} {\conftwo'} {{\dstep^{\nat''}}} {{\overline \Os}}
  \]
  \begin{proofcases}
    \proofcase{$\dstep$} Observe that the mis-speculation flag of $\conftwo$  must be $\bot$, because no rules for the directive $\dstep$ changes it. For this reason we can apply the IH. This shows that:
    \[
      \nsstep{\nat'}  {\confone} {\conftwo} {{\dstep^{\nat'}}} {{\overline \Os}}
    \]
    We examine all the rules matching the transition
    \[
      \sstep {\conftwo: \cfstack}  {\conftwo': \cfstack'}  \dir \obs
    \]
    and we observe that the premises these rules do not depend on $\cfstack$, but they depend on $\conftwo$ only, so if one of these rules is applied to show the transition above, it can be applied only on the transition of the claim. By introspection of all these rules, we observe that the configuration they produce is exactly $\conftwo''$.
    \proofcase{$\dload i$} There are two rules that match this directive and our premises, namely \ref{SI:Load-Err} and \ref{SI:Load}. We show just the case of the second one. The first one can be reduced for the same rule with the $\dstep$ directive.
    If this rule is applied, we can rewrite
    \[
      \sstep {\conftwo: \cfstack}  {\conftwo': \cfstack'}  \dir \obs
    \]
    as follows
    \begin{multline*}
      \sstep       {\sframe{\frame{\cmemread[\lbl] \vx  \expr\sep\cmd}{\regmap}{\opt}\cons \st} {\bm\buf\mem}{\boolms}\cons\cfstack} {}{{{\dload[\lbl] i}}} {\omem \add}\\
      \sframe{\frame{\cmd}{\update \regmap x \val}{\opt}\cons\st}{\bm\buf\mem}{\boolms\lor\bool'}\cons\\
        \sframe{\frame{\cmemread[\lbl] \vx  \expr\sep\cmd}{\regmap}{\opt}\cons\st}{\bm\buf\mem}{\boolms}\cons \cfstack
    \end{multline*}
    From the premises of the rule, we deduce:
    \begin{varitemize}
    \item $\toAdd{\sem\expr_{\regmap, \lay}} = \add$
    \item $\bufread {\bm\buf\mem} \add i =\val, \bool'$
    \item $\add \in \underline \lay(\Ar[\opt])$
    \item $\opt = \km[\syscall] \Rightarrow \add \in \underline \lay(\caps(\syscall))$
    \end{varitemize}
    In particular, from the main premise of the claim, we deduce that $\bool'$ cannot be
    $\top$.  This means that $\bufread {\bm\buf\mem} \add i =\val, \bot$. Our goal is to show
    \begin{equation*}
      \sstep {\sframe{\frame{\cmemread[\lbl] \vx  \expr\sep\cmd}{\regmap}{\opt}\cons \st} {\bm\buf\mem}{\boolms}} {}{{{\dstep}}} {\omem \add}\\
      \sframe{\frame{\cmd}{\update \regmap x \val}{\opt}\cons\st}{\bm\buf\mem}{\boolms\lor\bool'}
    \end{equation*}
    To do so, we can observe that, if we show that $\bufread {\bm\buf\mem} \add 0 =\val, \bot$,
    then premises of the rule \ref{SI:Load-Step} are matched, and thus the proof is concluded.
    This is a consequence of \Cref{remark:onbuflookup}.
    \proofcase{$\dbranch \bool$} Analogous to the case above.
    \proofcase{$\dbt$} If the rule which has been applied is \ref{SI:Backtrack-Bot}, the claim is a direct consequence of the IH. Otherwise, the rule applied is  \ref{SI:Backtrack-Top}. We go by cases on the height of $\conftwo: \cfstack$. If it is 0, we reached a contradiction, if its height is $1$, then the topmost configuration must have mis-speculation flag unset for \Cref{lemma:misspecunset}, ad this contradicts the assumption on the applied rule, which requires it to be set to $\top$. If the height is greater, we deduce that $\conftwo: \cfstack = \conftwo \cons \conftwo'\cons \cfstack$ and that the mis-speculation flag of $\conftwo'$ is $\bot$ by introspection of the rule and by the main assumption of this claim. We apply \Cref{remark:nobttech} on the $\nat$-step reduction from $\confone$ to $\conftwo \cons \conftwo'\cons \cfstack$ in order to show that there are $\nat'\le \nat$ and a $\nat'$-long reduction from $\confone$ to $\conftwo'\cons \cfstack'$. For the IH, there is a $\nat''\le \nat'$ step reduction from $\confone$ to $\conftwo'$ which employs only the directive $\dstep$. Which is our claim.
  \end{proofcases}
\end{proof}

\begin{lemma}
  \label{lemma:nobt}
  For every system $\system = (\rfs, \syss, \caps)$,
  speculative configuration $\specconfone = \sframe {\frame \cmd \regmap \opt} {\mem} \bot$
  non-empty speculative stack $\cfstack$,
  sequence of directives $\Ds$ and
  sequence of observations $\Os$,
  if:
  \[
    \nsstep \nat \specconfone {\cfstack} \Ds \Os
  \]
  then there are $\nat'\le\nat$, $\Ds'$, $\Os'$
  such that 
  \[
    \nsstep {\nat'} \specconfone {\cfstack} {\Ds'} {\Os'}.
  \]
  and $\Ds'$ does not contain any $\dbt$ directive.
\end{lemma}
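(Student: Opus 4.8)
The plan is to argue by strong induction on the reduction length $\nat$, with a case analysis on the last directive issued. The base case $\nat=0$ is immediate: $\cfstack$ is then the one-element state $\specconfone$, and one takes $\nat'=0$, $\Ds'=\nil$, $\Os'=\nil$, since $\nil$ contains no $\dbt$. For the inductive step, I would split the given reduction as $\nsstep{\nat-1}{\specconfone}{\cfstack_0}{\Ds_0}{\Os_0}$ followed by a single step $\cfstack_0 \sto{\dir}{\obs} \cfstack$. The state $\cfstack_0$ is non-empty, since otherwise no further transition would be possible, so the induction hypothesis applies to the prefix and yields $\nat_1 \le \nat-1$, a $\dbt$-free directive sequence $\Ds_1$, and observations $\Os_1$ with $\nsstep{\nat_1}{\specconfone}{\cfstack_0}{\Ds_1}{\Os_1}$. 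The key point here is that this reconstructs \emph{exactly the same} state $\cfstack_0$; hence the original last step can be replayed from it.

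If $\dir$ is not a $\dbt$ directive, I would simply append this last step to the prefix reduction: this gives a reduction of length $\nat_1+1 \le \nat$ reaching $\cfstack$, whose directive sequence extends $\Ds_1$ by $\dir$ and is therefore still $\dbt$-free. The remaining case is $\dir=\dbt$, where the last step is an instance of \ref{SI:Backtrack-Top} or of \ref{SI:Backtrack-Bot}. In the \ref{SI:Backtrack-Bot} case, $\cfstack_0 = \specconftwo \cons \cfstack'$ with $\cfstack'\neq\nil$ and the mis-speculation flag of $\specconftwo$ equal to $\bot$ (or $\specconftwo=\sconf{\err,\bot}$), while $\cfstack = \specconftwo \cons \nil$. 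Applying \Cref{lemma:nobtsteponly} to $\nsstep{\nat_1}{\specconfone}{\specconftwo \cons \cfstack'}{\Ds_1}{\Os_1}$ — legitimate because $\specconfone$ is a single-frame configuration with unset mis-speculation flag, as the shape $\sframe{\frame\cmd\regmap\opt}{\mem}{\bot}$ makes explicit — yields $\nat' \le \nat_1 \le \nat$ and $\Os'$ with $\nsstep{\nat'}{\specconfone}{\cfstack}{\dstep^{\nat'}}{\Os'}$, and $\dstep^{\nat'}$ contains no $\dbt$.

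In the \ref{SI:Backtrack-Top} case, $\cfstack_0 = \specconftwo \cons \cfstack$ with $\specconftwo$ carrying mis-speculation flag $\top$ (or $\specconftwo=\sconf{\err,\top}$), and $\cfstack$ is non-empty by the hypothesis of the lemma. I would invoke \Cref{remark:nobttech} on $\nsstep{\nat_1}{\specconfone}{\specconftwo \cons \cfstack}{\Ds_1}{\Os_1}$ to obtain a reduction $\nsstep{\nat_2}{\specconfone}{\cfstack}{\Ds_2}{\Os_2}$ with $\nat_2 \le \nat_1 \le \nat-1$. Since $\nat_2 < \nat$, the strong induction hypothesis applies to this shorter reduction and removes its remaining $\dbt$ directives, delivering $\nat' \le \nat_2 \le \nat$, a $\dbt$-free $\Ds'$, and $\Os'$ with $\nsstep{\nat'}{\specconfone}{\cfstack}{\Ds'}{\Os'}$, as required.

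The main obstacle is exactly this last case: when the final step is a \ref{SI:Backtrack-Top}, one cannot merely drop the last directive, because the tail $\cfstack$ may itself have been assembled with the help of earlier $\dbt$ steps. The fix is to first peel off the discarded active frame by means of \Cref{remark:nobttech} — this is where the non-emptiness of $\cfstack$ is essential — and then recurse on the strictly shorter reduction so obtained. Beyond this, the only routine point to verify is that the side conditions of \Cref{lemma:nobtsteponly} (and, transitively, of \Cref{lemma:misspecunset}) are met by $\specconfone$, namely a single top frame and an unset mis-speculation flag, both of which are part of the lemma's statement; the fact that those auxiliary results are phrased for a user-mode initial frame is immaterial, since their proofs never inspect the execution mode.
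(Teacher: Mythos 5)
Your proof is correct and follows essentially the same route as the paper's: induction on the reduction length, reuse of the induction hypothesis on the prefix, direct appending for non-$\dbt$ directives, \Cref{lemma:nobtsteponly} for the \ref{SI:Backtrack-Bot} case, and \Cref{remark:nobttech} followed by a recursive application of the induction hypothesis for the \ref{SI:Backtrack-Top} case. The only (harmless) divergence is that you justify the non-emptiness of the tail in the \ref{SI:Backtrack-Top} case directly from the lemma's hypothesis on $\cfstack$, where the paper derives it via \Cref{lemma:misspecunset}.
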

\begin{proof}
  By induction on $\nat$.
  \begin{proofcases}
    \proofcase{0} Trivial.
    \proofcase{$\nat+1$} The premise tells
    \[
      \nsstep \nat \specconfone {\cfstack'} \Ds \Os \sto\dir \obs \cfstack
    \]
    The IH shows that there are $\nat'\le \nat$,
    $\Ds'$ without $\dbt$ directives and $\Os'$ such that:
    \[
      \nsstep {\nat'} \specconfone {\cfstack'} {\Ds'} {\Os'}.
    \]
    We are required to show that if
    \[
      \sstep {\cfstack'} {\cfstack} \dir \obs,
      \tag{\dag}
    \]
    then there is $\nat''\le \nat+1$, $\Ds''$ and $\Os'$ such that 
    \[
      \nsstep {\nat''} \specconfone {\cfstack'} {\Ds''} {\Os''}.
    \]
    where in particular $\Ds''$ does not contain $\dbt$ directives. We go by cases
    on the directive $\dir$ that is used in (\dag).
    \begin{proofcases}
      \proofcase{$\dir\neq\dbt$} In these cases, the claim holds if we take $\Ds''=\Ds':\dir$, $\Os''=\Os':\obs$ and $\nat''=\nat'+1$.
      \proofcase{$\dir=\dbt$} The applied rule can either be \ref{SI:Backtrack-Top}, or \ref{SI:Backtrack-Bot}. We just take in exam the case of ordinary configurations:
        \begin{proofcases}
          \proofcase{\ref{SI:Backtrack-Top}} In this case, we conclude that  $\cfstack' = \specconftwo :\cfstack''$ by introspection of the rule, we deduce that the mis-speculation flag of the topmost configuration of $\cfstack'$ is $\top$, so we apply \Cref{lemma:misspecunset} to deduce that $\cfstack''\neq \nil$ (otherwise we could not have the flag set to $\top$).  Thanks to this observation, we can use \Cref{remark:nobttech} to show that there is a sequence of transitions from $\cfstack$ to $\cfstack''$ whose length is $\nat'\le \nat$. The conclusion is a consequence of the application of the IH on this intermediate result. This concludes the sub-derivation.
          \proofcase{\ref{SI:Backtrack-Bot}} In this case, we use \Cref{lemma:nobtsteponly} to show that there is a sequence of transitions from $\specconfone$ to $\confone$ containing only the directive $\dstep$, which is stronger than the claim we need.
  \end{proofcases}
    \end{proofcases}
  \end{proofcases}
\end{proof}

\newcommand{\fencerel}{\precsim}

For sake of simplicity, we define the relation $\fencerel$ as follows:

\[
  \infer{\varepsilon \fencerel \varepsilon}{} \quad
  \infer{\rfs \fencerel \rfs'}{\rfs \sim_{\Idu\cup \Ark} \rfs' & \forall \fn \in \Fnk. \rfs'(\fn)= \fencetrans(\rfs(\fn))}
\]
\[    
  \infer{\frame \cmd \regmap {\km[\syscall]}:\st \fencerel \frame {\cmd'} {\regmap} {\km[\syscall]}:\st'}{\st\fencerel\st' & \cmd'=\fencetrans (\cmd)}\quad
  \infer{\frame \cmd \regmap {\um}:\st \fencerel \frame {\cmd} {\regmap} {\um}:\st'}{\st\fencerel\st'}
\]
\[
  \infer{\err \fencerel \err}{}\quad
  \infer{\unsafe \fencerel \unsafe}{} \quad
  \infer{\conf{\st, \lay \lcomp \rfs} \fencerel
    \conf{\st', \lay \lcomp \rfs}}{\rfs\fencerel \rfs' & \st \fencerel\st'}
\]

\begin{lemma}
  \label{lemma:fencestep}
  For every system $\system$, configurations
  $\confone$
  and 
  $\conftwo$ such that $\confone \fencerel \conftwo$, if
  \[
    \step[\lay][\system]  {\confone}
    {\confone'},
  \]
  then
  \[
    \nstep[\lay][\fencetrans(\system)] * {\conftwo}
    {\conftwo'},
  \]
  for some  $\conftwo'$ such that $\confone' \fencerel \conftwo'$.
\end{lemma}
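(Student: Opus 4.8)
The statement is a single-step simulation lemma: one step of the original system $\system$ is matched by zero-or-more steps of the fenced system $\fencetrans(\system)$, preserving the relation $\fencerel$. The plan is to prove it by a case analysis on the rule used to derive $\step[\lay][\system]{\confone}{\confone'}$. Since $\confone \fencerel \conftwo$, the configuration $\confone$ has the form $\conf{\frame{\cmd}{\regmap}{\opt}:\st, \lay \lcomp \rfs}$ (or an error/unsafe/terminal configuration, which do not step, so those cases are vacuous), and $\conftwo = \conf{\frame{\cmd'}{\regmap}{\opt}:\st', \lay \lcomp \rfs'}$ where $\st \fencerel \st'$, $\rfs \fencerel \rfs'$, and $\cmd'$ is either $\cmd$ itself (if $\opt = \um$) or $\fencetrans(\cmd)$ (if $\opt = \km[\syscall]$). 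The inspection of $\fencetrans$ shows that for each instruction $\stat$, $\fencetrans(\stat\sep\cmd)$ either equals $\stat\sep\fencetrans(\cmd)$ (for the "safe" instructions $\cskip$, $\vx\ass\expr$, $\csyscall$), or equals $\cfence\sep\stat\sep\fencetrans(\cmd)$ (for loads, stores, calls), or restructures conditionals and loops homomorphically.

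\textbf{Key steps.} First I would handle the case $\opt = \um$: here $\cmd' = \cmd$, and since user-space code is unprivileged and identical under both systems, every rule applied to $\confone$ applies verbatim to $\conftwo$, giving a single matching step; I need only check that the target configurations remain $\fencerel$-related, which follows because the relation is preserved by register updates, by store updates at array addresses (since $\rfs \fencerel \rfs'$ differs only on kernel procedures, not on arrays or user-space data), and by pushing frames — with the caveat that a $\csyscall$ step pushes a kernel frame $\frame{\syss(\syscall)}{\regmap_0'}{\km[\syscall]}$ on the $\system$ side and $\frame{\fencetrans(\syss)(\syscall)}{\regmap_0'}{\km[\syscall]}$ on the $\fencetrans(\system)$ side, and these are $\fencerel$-related precisely because $\fencetrans(\syss)(\syscall) = \fencetrans(\syss(\syscall))$ by definition of the lifting of $\fencetrans$ to a system. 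Second, for $\opt = \km[\syscall]$, I would split on the top instruction $\stat$ of $\cmd = \stat\sep\cmdtwo$. For the "transparent" instructions (skip, assignment, conditional, while) the fenced code just has $\stat$ (possibly with fenced subcommands) at the head, so one step on each side suffices. For loads, stores and calls, the fenced code has $\cfence$ prepended: I would use \emph{two} steps on the $\fencetrans(\system)$ side — first the \ref{WL:Fence} rule (which I note is added to the non-speculative semantics precisely in this appendix, so it is available and is a no-op on memory), then the actual load/store/call rule — and observe both land in $\fencerel$-related configurations; the side-conditions of the load/store/call rule depend only on $\lay$, $\regmap$, and whether the target address is in $\underline\lay(\Ar[\opt])$ or $\underline\lay(\caps(\syscall))$, which are identical data on both sides since $\rfs \fencerel \rfs'$ agree on arrays and on array-typed memory. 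A subtle point for calls: when $\mem(\add) = \rfs(\fn)$ for a kernel procedure $\fn \in \Fnk$, the $\system$-side pushes $\frame{\rfs(\fn)}{\dots}{\km[\syscall]}$ while the $\fencetrans(\system)$-side pushes $\frame{\rfs'(\fn)}{\dots}{\km[\syscall]} = \frame{\fencetrans(\rfs(\fn))}{\dots}{\km[\syscall]}$, which is again $\fencerel$-related by the definition of $\fencerel$ on stacks; when $\fn \in \Fnu$ the bodies coincide. Finally, the \ref{WL:Pop} case: both sides pop a frame and propagate the return register, and $\fencerel$ is closed under this because $\st \fencerel \st'$ entails the tails are related and the relation ignores register contents on non-top... actually it relates frames with identical register maps, which is preserved under $\update{\regmap'}{\ret}{\regmap(\ret)}$.

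\textbf{Main obstacle.} The routine part is verifying that $\fencerel$ is preserved by each primitive operation — register and memory updates, frame push/pop; this is where I would lean on helper remarks analogous to \Cref{rem:memupdtostupd} showing that updating $\lay\lcomp\rfs$ at an array address corresponds to updating $\rfs$ componentwise, so that $\rfs \fencerel \rfs'$ implies the updated stores stay $\fencerel$-related (the update touches an array, and $\fencerel$ imposes $\rfs \sim_{\Idu \cup \Ark} \rfs'$). The genuinely delicate step is the \textbf{call case}: I must be certain that $\fencerel$ was \emph{defined} so that the callee frames line up, i.e. that $\frame{\rfs(\fn)}{\regmap_0'}{\km[\syscall]} \fencerel \frame{\fencetrans(\rfs(\fn))}{\regmap_0'}{\km[\syscall]}$, which is exactly the clause $\cmd' = \fencetrans(\cmd)$ in the kernel-frame rule of $\fencerel$ — so the lemma is true essentially by design of $\fencerel$, but one must double-check that the fenced system stores $\fencetrans(\rfs(\fn))$ at $\lay(\fn)$, which is guaranteed by the "lifted to a system $\system$ by systematically applying it to system calls and procedures $\rfs(\fn)$ in kernel-memory" clause. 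One should also confirm the \ref{WL:Fence} step does not disturb the stack tail or the memory, so that after the two-step fenced reduction the relation to $\confone'$ (obtained in one original step) holds; this is immediate from the \ref{WL:Fence} rule shape. No step requires more than bookkeeping once the definitions are unfolded.
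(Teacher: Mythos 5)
Your proposal is correct and follows essentially the same route as the paper's proof: a case split on the execution mode, then on the rule fired, with user-mode steps matched one-for-one, kernel-mode loads/stores/calls matched by a two-step \ref{WL:Fence}-then-instruction reduction, store updates handled via \Cref{rem:memupdtostupd}, and the call and system-call cases closed by the clause of $\fencerel$ relating a kernel frame's body $\cmd$ to $\fencetrans(\cmd)$. The points you flag as delicate (the callee frame lining up with $\fencetrans(\rfs(\fn))$, and the fence step leaving memory and the stack tail untouched) are exactly the ones the paper's proof spells out.
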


\begin{proof}
  The proof goes first by cases on $\confone$ to refuse the cases
  where it is $\err$ or $\unsafe$. After that we can
  assume that $\confone = \conf{\frame \cmd \regmap \opt : \st, \lay \lcomp \rfs}$,
  and we go by cases on $\opt$.
  \begin{proofcases}
    \proofcase{$\um$}
    We can assume that
    \[
      \confone = \conf{\frame \cmd \regmap \um : \st, \lay \lcomp \rfs}
    \]
    and
    \[
      \conftwo = \conf{\frame \cmd \regmap \um : \st', \lay \lcomp \rfs'}.
    \]
    with $\st \fencerel \st'$ and $\rfs \fencerel \rfs'$. The proof goes by cases on the rule that has been applied to $\cmd$
    \begin{proofcases}
      \proofcase{\ref{WL:Op}}
      The assumption is
      \begin{equation*}
        \step[\lay][\system]  {\conf{\frame {\vx \ass \expr\sep \cmdtwo} \regmap \um : \st,  \lay \lcomp \rfs}}{}\\
        {\conf{\frame {\cmdtwo} {\update\regmap\vx{\sem \expr_{\regmap, \lay}}} \um : \st,  \lay \lcomp \rfs}},
      \end{equation*}
      and by applying the same rule to $\conftwo$, we obtain
      \begin{equation*}
        \step[\lay][\fencetrans(\system)]  {\conf{\frame {\vx \ass \expr\sep \cmdtwo} \regmap \um : \st',  \lay \lcomp \rfs'}}{}\\
        {\conf{
            \frame {\cmdtwo} {\update\regmap\vx{\sem \expr_{\regmap, \lay}}} \um : \st',
            \lay \lcomp \rfs'}}
      \end{equation*}
      and this shows the claim.
      \proofcase{\ref{WL:Skip}} Analogous to the previous one.
      \proofcase{\ref{WL:Fence}} Analogous to the case of assignments.
      \proofcase{\ref{WL:Pop}} Analogous to the case of assignments.
      \proofcase{\ref{WL:If}} Analogous to the case of assignments.
      \proofcase{\ref{WL:While}} Analogous to the case of assignments.
      \proofcase{\ref{WL:Store}} The assumption is
      \begin{equation*}
        \step[\lay][\system]  {\conf{\frame {\cmemass \expr\exprtwo\sep \cmdtwo} \regmap \um : \st,  \lay \lcomp \rfs}}{}\\
        {\conf{\frame {\cmdtwo} {\regmap} \um : \st,  \update{\lay \lcomp {\rfs}}\add\val}},
      \end{equation*}
      where $\val = \sem \exprtwo_{\regmap, \lay}$ and $\add = \toAdd{\sem \expr_{\regmap, \lay}}$.
      Observe that $\add \in \underline \lay(\Aru)$.
      Which means that there is a pair $(\ar, i)$ such that $\lay(\ar)+i=\add$ and $\ar \in \Aru$.
      So, in particular, we have that $\update{\lay \lcomp {\rfs}}\add\val={\lay \lcomp {\update{{\rfs}}{(\ar, i)} \val}}$ because of \Cref{rem:memupdtostupd}.
      We also deduce that by applying the same rule to $\conftwo$, we obtain
      \begin{equation*}
        \step[\lay][\system]  {\conf{\frame {\cmemass \expr\exprtwo\sep \cmdtwo} \regmap \um : \st',  \lay \lcomp \rfs'}}{}\\
        {\conf{\frame {\cmdtwo} {\regmap} \um : \st', \lay \lcomp { \update{{\rfs'}}{(\ar, i)}\val}}},
      \end{equation*}
      and to show the claim, we just need to observe $\update{{\rfs}}{(\ar, i)}\val \fencerel \update{{\rfs'}}{(\ar, i)}\val$, which is a consequence of the assumption $\ar \in \Aru$.
      \proofcase{\ref{WL:Load}} Analogous to the previous case
      \proofcase{\ref{WL:Call}} 
      The assumption is
      \begin{equation*}
        \step[\lay][\system]  {\conf{\frame {\ccall \expr{\exprtwo_1,\dots, \exprtwo_k}\sep \cmdtwo} \regmap \um : \st,  \lay \lcomp \rfs}}{}\\
        \conf{\frame {\lay \lcomp \rfs(\sem \expr_{\regmap, \lay}) } {\regmap_0'} \um : \conf{\frame {\cmdtwo} {\regmap} \um : \st,  \lay \lcomp \rfs}},
      \end{equation*}
      where $\regmap_0'$ is a shorthand for
      \[
        \regmap_0[\vx_1, \dots,\vx_k \upd\sem{\exprtwo_1}_{\regmap, \lay}, \dots, \sem{\exprtwo_1}_{\regmap, \lay}]
      \]
      and from the premise of the rule, we obtain that there is $\fn \in \Fnu$ such that
      $\sem \expr_{\regmap, \lay} = \lay(\fn)$. Thus, from the definition of $\lay \lcomp \rfs$,
      we deduce that $ \lay \lcomp \rfs = \rfs(\fn) = \rfs'(\fn)$ for
      the definition of the $\fencerel$ relation. Thanks to these observations,
      we can show that the application of the same rule to $\conftwo$ gives
      rule to $\conftwo$, we obtain
      \small
      \begin{equation*}
        \step[\lay][\fencetrans(\system)]  {\conf{\frame {\ccall \expr{\exprtwo_1,\dots, \exprtwo_k}\sep \cmdtwo} \regmap \um : \st',  \lay \lcomp \rfs'}}{}\\
        {\conf{\frame {\lay \lcomp \rfs(\sem \expr_{\regmap, \lay}) } {\regmap_0'} \um : \conf{\frame {\cmdtwo} {\regmap} \um : \st',  \lay \lcomp \rfs'}}},
      \end{equation*}
      \normalsize
      and this shows the claim.
      \proofcase{\ref{WL:SystemCall}} 
      The assumption is
      \begin{equation*}
        \step[\lay][\system]  {\conf{\frame {\csyscall \syscall{\exprtwo_1,\dots, \exprtwo_k}\sep \cmdtwo} \regmap \um : \st,  \lay \lcomp \rfs}}{}\\
        \conf{\frame {\syss(\syscall) } {\regmap_0'} {\km[\syscall]} : \conf{\frame {\cmdtwo} {\regmap} \um : \st,  \lay \lcomp \rfs}},
      \end{equation*}
      where $\regmap_0'$ is a shorthand for
      \[
        \regmap_0[\vx_1, \dots,\vx_k \upd\sem{\exprtwo_1}_{\regmap, \lay}, \dots, \sem{\exprtwo_1}_{\regmap, \lay}]
      \]
      By applying the same rule to the configuration $\conftwo$, we obtain
      \begin{equation*}
        \step[\lay][\fencetrans(\system)]  {\conf{\frame {\csyscall \syscall{\exprtwo_1,\dots, \exprtwo_k}\sep \cmdtwo} \regmap \um : \st',  \lay \lcomp \rfs'}}{}\\
        {\conf{\frame {\fencetrans(\syss)(\syscall)} {\regmap_0'} {\km[\syscall]} : \conf{\frame {\cmdtwo} {\regmap} \um : \st',  \lay \lcomp \rfs'}}},
      \end{equation*}
      to conclude, we just need to observe that
      \[
        \frame {\syss(\syscall)} {\regmap_0'} {\km[\syscall]}\fencerel
        \frame {\fencetrans(\syss)(\syscall)} {\regmap_0'} {\km[\syscall]}.
      \]
      \proofcase{\ref{WL:Store-Error}} The assumption is
      \begin{equation*}
        \step[\lay][\system]  {\conf{\frame {\cmemass \expr\exprtwo\sep \cmdtwo} \regmap \um : \st,  \lay \lcomp \rfs}}\err,
      \end{equation*}
      We call $\add = \toAdd{\sem \expr_{\regmap, \lay}}$, and from the premises
      of the rule, we deduce that $\add \notin \underline \lay(\Aru)$.
      This suffices to show
      \begin{equation*}
        \step[\lay][\system]  {\conf{\frame {\cmemass \expr\exprtwo\sep \cmdtwo} \regmap \um : \st',  \lay \lcomp \rfs'}}{\err}
      \end{equation*}
      and to show the claim.
    \end{proofcases}
    \proofcase{$\km[\syscall]$} Under this assumption, most of the cases are analogous to
    the corresponding ones for user-mode execution. Even in this case, we go by induction on the
    rules. We just show some among the most important cases:
    \begin{proofcases}
      \proofcase{\ref{WL:Op}}
      The assumption is
      \begin{equation*}
        \step[\lay][\system]  {\conf{\frame {\vx \ass \expr\sep \cmdtwo} \regmap {\km[\syscall]} : \st,  \lay \lcomp \rfs}}{}\\
        {\conf{\frame {\cmdtwo} {\update\regmap\vx{\sem \expr_{\regmap, \lay}}} {\km[\syscall]} : \st,  \lay \lcomp \rfs}},
      \end{equation*}
      from the assumption $\conftwo$, we deduce that is
      \[
        {\conf{\frame {\vx \ass \expr\sep \fencetrans(\cmdtwo)} \regmap {\km[\syscall]} : \st',  \lay \lcomp \rfs'}}
      \]
      so the same rule can be applied on $\conftwo$ to show:
      \begin{equation*}
        \step[\lay][\fencetrans(\system)]  {\conf{\frame {\vx \ass \expr\sep \cmdtwo} \regmap {\km[\syscall]} : \st',  \lay \lcomp \rfs'}}{}\\
        {\conf{
            \frame {\fencetrans(\cmdtwo)} {\update\regmap\vx{\sem \expr_{\regmap, \lay}}} {\km[\syscall]} : \st',
            \lay \lcomp \rfs'}}
      \end{equation*}
      and the claim comes from the observation that the command of the target configuration is exactly $\fencetrans(\cmdtwo)$.
      \proofcase{\ref{WL:Call}} 
      The assumption is
      \begin{equation*}
        \step[\lay][\system]  {\conf{\frame {\ccall \expr{\exprtwo_1,\dots, \exprtwo_k}\sep \cmdtwo} \regmap {\km[\syscall]} : \st,  \lay \lcomp \rfs}}{}\\
        \conf{\frame {\lay \lcomp \rfs(\sem \expr_{\regmap, \lay}) } {\regmap_0'} {\km[\syscall]} : \conf{\frame {\cmdtwo} {\regmap} {\km[\syscall]} : \st,  \lay \lcomp \rfs}},
      \end{equation*}
      where $\regmap_0'$ is a shorthand for
      \[
        \regmap_0[\vx_1, \dots,\vx_k \upd\sem{\exprtwo_1}_{\regmap, \lay}, \dots, \sem{\exprtwo_1}_{\regmap, \lay}]
      \]
      and from the premise of the rule, we obtain that there is $\fn \in \Fnk$ such that
      $\sem \expr_{\regmap, \lay} = \lay(\fn)$. Thus, from the definition of $\lay \lcomp \rfs$,
      we deduce that $ \lay \lcomp \rfs = \rfs(\fn)$.
      We observe that $\conftwo$ is:
      \[
        {\conf{\frame {\cfence \sep\ccall \expr{\exprtwo_1,\dots, \exprtwo_k}\sep \fencetrans(\cmdtwo)} \regmap {\km[\syscall]} : \st',  \lay \lcomp \rfs'}}
      \]
      Thanks to these observations,
      we can show that the application of the \ref{WL:Fence} rule and of the \ref{WL:Call} rule
      to $\conftwo$ show:
      \begin{multline*}
        \nstep[\lay][\fencetrans(\system)] 2 {\conf{\frame {\ccall \expr{\exprtwo_1,\dots, \exprtwo_k}\sep \cmdtwo} \regmap {\km[\syscall]} : \st',  \lay \lcomp \rfs'}}{}\\
        {\conf{\frame {\lay \lcomp \rfs'(\sem \expr_{\regmap, \lay}) } {\regmap_0'} {\km[\syscall]} : \conf{\frame {\fencetrans(\cmdtwo)} {\regmap} {\km[\syscall]} : \st',  \lay \lcomp \rfs'}}},
      \end{multline*}
      To conclude, we must observe that $\lay \lcomp \rfs'(\sem \expr_{\regmap, \lay}) = \fencetrans(\rfs(\fn))$. This is a consequence of the assumptions $\sem \expr_{\regmap, \lay} = \lay(\fn)$, $\fn\in \Fnk$ and $\rfs \fencerel \rfs'$.
    \end{proofcases}
  \end{proofcases}
\end{proof}

\begin{lemma}
  \label{lemma:fwf}
  For every stack of speculative configurations $\cfstack$
  layout $\lay$,
  system $\system\in \im(\fencetrans)$ such that $\fwf\cfstack$,
  $\dbt$-free sequence of directives $\dir:\Ds$, sequence of observations
  $\obs:\Os$,
  and $\cfstack'$ such that $\lnot (\fwf{\cfstack'})$,
  there is a stack $\cfstack''$ such that
  \[
    (\nsstep \nat \cfstack {\cfstack'} {\Ds:\dir} {\Os:\obs}) \Rightarrow (\nsstep {\nat-1} \cfstack {\cfstack''}{\Ds} {\Os}) 
  \]
  and $\fwf{\cfstack''}$.
\end{lemma}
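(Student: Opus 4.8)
The plan is to prove the statement by induction on the length $\nat$ of the reduction, exploiting that, once the directive list is fixed, the speculative semantics is deterministic; hence a length-$\nat$ run $\cfstack = \cfstack_0 \sto{\dir_1}{\obs_1} \cfstack_1 \sto{}{} \cdots \sto{\dir}{\obs} \cfstack_\nat = \cfstack'$ is uniquely determined, and the content of the lemma is precisely that its penultimate configuration $\cfstack_{\nat-1}$ enjoys $\fwf$. The engine of the argument is an auxiliary \emph{preservation} property: every rule of the speculative semantics sends a $\fwf$ configuration to a $\fwf$ configuration, with the single exception of the \ref{SI:Fence} rule when it fires on a fence that $\fencetrans$ has inserted immediately before a $\cmemread$, a $\cmemass$ or a $\ccall$ instruction --- in that case the exposed frame heads with a raw memory instruction, which is never in the image of $\fencetrans$, and $\fwf$ is broken (moreover the exposed head command then heads with exactly such a memory instruction, and the resulting mis-speculation flag is $\bot$). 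I would prove this preservation property by case analysis on the rules, using the syntactic identities $\fencetrans(\cskip \sep \cmd) = \cskip \sep \fencetrans(\cmd)$, $\fencetrans(\cif{\expr}{\cmd_1}{\cmd_2} \sep \cmd) = \cif{\expr}{\fencetrans(\cmd_1)}{\fencetrans(\cmd_2)} \sep \fencetrans(\cmd)$, $\fencetrans(\cwhile{\expr}{\cmd} \sep \cmdtwo) = \cwhile{\expr}{\fencetrans(\cmd)} \sep \fencetrans(\cmdtwo)$, and the analogous ones for assignments, $\cnil$ (pops) and system calls, together with: stores never overwrite a function location, so $\fwf(\rfs')$ is stable; in kernel mode stores only address $\underline\lay(\Ark)$, so $\dom(\buf) \subseteq \underline\lay(\Ark)$ is stable; and, since $\system \in \im(\fencetrans)$, every kernel procedure body $\rfs'(\fn)$ with $\fn \in \Fnk$ and every system-call body lies in $\im(\fencetrans)$, so a (speculative) call or system-call step pushes a frame whose command is again of the form $\fencetrans(-)$.

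With preservation in hand, the base case $\nat = 1$ is immediate: $\Ds$ and $\Os$ are empty and $\cfstack'' \defsym \cfstack$ works, being $\fwf$ by hypothesis. For the inductive step, decompose the run as $\cfstack \sto{\dir_1}{\obs_1} \cfstack_1 \sto{}{}^{*} \cfstack'$ and split on whether $\cfstack_1$ is $\fwf$. If it is, the suffix is again a $\dbt$-free run, one step shorter, from a $\fwf$ configuration to the non-$\fwf$ configuration $\cfstack'$, and its directive (resp.\ observation) list is that of the whole run with its first entry removed; the induction hypothesis supplies a $\fwf$ configuration $\cfstack''$ reached from $\cfstack_1$ in one fewer step, and prepending the step $\cfstack \sto{\dir_1}{\obs_1} \cfstack_1$ yields the required run from $\cfstack$ to $\cfstack''$ with directive list $\Ds$ and observation list $\Os$. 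If $\cfstack_1$ is not $\fwf$, then by preservation the first step was the special \ref{SI:Fence} step, so $\dir_1 = \dstep$, $\obs_1 = \onone$, the mis-speculation flag of $\cfstack_1$ is $\bot$, and the head command of its top frame is $\cmemread{\vx}{\expr} \sep \fencetrans(\cmd')$, $\cmemass{\expr}{\exprtwo} \sep \fencetrans(\cmd')$ or $\ccall{\expr}{\vec\exprtwo} \sep \fencetrans(\cmd')$ for some $\cmd'$. Since the run has at least two steps, there is a next step out of $\cfstack_1$, and whatever directive it consumes it is forced to fire the corresponding memory rule (one of \ref{SI:Load-Step}, \ref{SI:Load}, \ref{SI:Store}, \ref{SI:Call}, or their error/unsafe variants); in every case the successor $\cfstack_2$ is again $\fwf$ --- either the access succeeds and the head command becomes $\fencetrans(\cmd')$ (for a call one additionally uses that the callee's body is in $\im(\fencetrans)$), or the run terminates at an error or at $\unsafe$, both of which are $\fwf$. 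If the run had exactly two steps then $\cfstack_2 = \cfstack'$ would be $\fwf$, contradicting the hypothesis, so this case is vacuous; otherwise I apply the induction hypothesis to the $\dbt$-free suffix starting at the $\fwf$ configuration $\cfstack_2$ and prepend the first two steps as before, obtaining $\cfstack''$ with $\nsstep{\nat-1}{\cfstack}{\cfstack''}{\Ds}{\Os}$ and $\fwf(\cfstack'')$.

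The main obstacle is the faithful verification of the preservation property, which is where all the cases live: one must check every speculative rule --- in particular the two branch-speculation rules, which push a bookkeeping configuration and may flip the mis-speculation flag, as well as the pop and system-call rules --- and confirm that $\fencetrans$ commutes with all of them, keeping in mind that $\fwf$ on a speculative stack constrains only its topmost configuration whereas $\fwf$ on a frame stack descends through the entire call stack. The other point requiring care is purely bookkeeping: tracking the directive and observation lists so that dropping exactly the last directive $\dir$ and the last observation $\obs$ matches the decomposition of the run, consistently with the convention $\cfstack \sto{\dir : \Ds}{\obs : \Os}^{*} \cfstack'$.
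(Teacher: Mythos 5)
Your proposal is correct and follows essentially the same route as the paper: the paper isolates your ``preservation with one identified exception'' property as \Cref{rem:fwftech} (from a configuration satisfying $\fwf\cdot$, either it is irreducible, or every $\dbt$-free one-step successor satisfies $\fwf\cdot$, or every $\dbt$-free two-step successor does --- the two-step case being exactly your $\cfence$-before-memory-operation exception), and then derives the lemma by a short contradiction at the penultimate configuration rather than by your explicit induction with a two-step skip. The substantive content --- the rule-by-rule case analysis, the fact that $\fwf\cdot$ constrains only the topmost speculative configuration so the book-kept non-well-formed frame is harmless, and the reliance on $\dbt$-freeness to rule out backtracking --- coincides with the paper's.
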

\begin{proof}
  By cases on $\nat$.
  \begin{proofcases}
    \proofcase{0} Absurd.
    \proofcase{$\nat+1$} In this case, we assume that
    \[
      (\nsstep {\nat+1} \cfstack {\cfstack'}{\Ds:\dir} {\Os:\obs}) 
    \]
    and $\lnot (\fwf{\cfstack'})$. We need to show that
    \[
      (\nsstep {\nat} \cfstack {\cfstack''}{\Ds} {\Os})
    \]
    for some $\cfstack''$ such that $\fwf{\cfstack''}$.
    Assume that the claim does not hold, i.e. that $\lnot\fwf{\cfstack''}$.
    If $\nat = 0$, that stack is exactly $\cfstack$, but this is absurd,
    so we can assume that $\nat>0$, however,
    in this case we would have a contradiction of
    \Cref{rem:fwftech} for the configuration that is reached after $\nat-1$ steps
    because none of the two following configuration satisfies
    $\fwf\cdot$.
  \end{proofcases}
\end{proof}

\begin{remark}
  \label{rem:fwftech}
  For every stack of speculative configurations $\cfstack$
  every layout $\lay$,
  every system $\system\in \im(\fencetrans)$ such that $\fwf\cfstack$,
  and every system-call $\syscall$,
  one of the three following cases holds:
  \begin{varitemize}
  \item $\nf \cfstack{}$
  \item $\forall \dir \neq \dbt.\forall \obs. \sstep \cfstack {\cfstack'} \dir \obs \Rightarrow \fwf{\cfstack'}$
  \item $\forall \dir_1, \dir_2 \neq \dbt.\forall \obs_1, \obs_2. \nsstep 2 \cfstack {\cfstack'} {\dir_1:\dir_2} {\obs_1:\obs_2} \Rightarrow \fwf{\cfstack'}$
  \end{varitemize}
\end{remark}
\begin{proof}
  The proof goes by cases on the proof of the predicate $\fwf\cdot$.
  Many of the cases are trivial. The most interesting ones are when
  the stack of configurations is not-empty, i.e.
  it is like $\sframe{\frame {\fencetrans(\cmd)} \regmap {\km[\syscall]}:\st}
  {\bm\buf(\lay\lcomp \rfs')}\boolms:\cfstack$.
  The proof goes by cases on $\cmd$.
  Observe that from \label{rem:stackinv2cor}, we can avoid the case
  of system calls. 
  \begin{proofcases}
    \proofcase{$\cnil$}
    Observe that if $\st = \nil$, the first claim holds, otherwise:
    \begin{multline*}
      \sstep{\sframe{\frame {\fencetrans(\cnil)} \regmap {\km[\syscall]}:{\frame \cmdtwo {\regmap'} {\km[\syscall]}}:\st'}
        {\bm\buf(\lay\lcomp \rfs')}\boolms:\cfstack}{}{\dstep} {\onone}\\
      {\sframe{{\frame \cmdtwo {\update{\regmap'}\ret{\regmap(\ret)}} {\km[\syscall]}}:\st'}
        {\bm\buf(\lay\lcomp \rfs')}\boolms:\cfstack} 
    \end{multline*}
    and the claim is a consequence of the premise of the proof-rule for $\fwf\cdot$.
    \proofcase{$\cmemread \vx \exprtwo \sep \cmdtwo$}
    Observe that one rule among \ref{SI:Load-Step}, \ref{SI:Load},
    \ref{SI:Load-Unsafe}, and \ref{SI:Load-Err} must apply. If one of the
    last two rules applies, the claim is trivial. Otherwise, the stack looks like:
    \[
      {\sframe{\frame {\fencetrans(\cmemread \vx \exprtwo \sep \cmdtwo)} \regmap {\km[\syscall]}:\st'}
        {\bm\buf(\lay\lcomp \rfs')}\boolms:\cfstack}
    \]
    And if $\boolms = \top$, the first claim holds, because, without backtracking,
    this configuration cannot reduce further. So we can assume that $\boolms=\bot$
    Otherwise, we just show the case for the \ref{SI:Load} rule:
    \begin{multline*}
      \nsstep 2 {\sframe{\frame {\fencetrans(\cmemread \vx \exprtwo \sep \cmdtwo)} \regmap {\km[\syscall]}:\st'}
        {\bm\buf(\lay\lcomp \rfs')}\boolms:\cfstack}{} {\dstep:\dload i } {\onone:\omem \add}\\
      \sframe{\frame {\fencetrans(\cmdtwo)} {\update\regmap\vx \val} {\km[\syscall]}:\st'} 
        {\bm\buf(\lay\lcomp \rfs')}{\boolms\lor\bool'}:\\
        \sframe{\frame {\cmemread \vx \exprtwo \sep \fencetrans (\cmdtwo)} \regmap {\km[\syscall]}:\st'}
        {\overline{\bm\buf(\lay\lcomp \rfs')}}\boolms:\cfstack 
    \end{multline*}
    where $\add = \toAdd{\sem \exprtwo_{\regmap,\lay}}$,
    $\val = \bufread {\bm\buf(\lay\lcomp \rfs')} \add i$.
    Observe that in particular, the fence instruction that precedes the
    assignment has flushed the memory. For this reason, in order to show the claim,
    we just need to verify that:
    \[
      \fwf{\sframe{\frame {\fencetrans(\cmdtwo)} \regmap {\km[\syscall]}:\st'} 
        {\overline{\bm\buf(\lay\lcomp \rfs')}}{\boolms\lor\bool'}}
    \]
    which is almost entirely consequence of the premise of the proof-rule for $\fwf\cdot$.
    In particular, since the domain of $\buf$ does not contain any function address,
    the result is a consequence of \Cref{rem:overlinewrtdom}, which ensures that the
    resulting memory is $\lay\lcomp \rfs''$ and $\rfs'' \sim_{\Fn}\rfs'\sim_{\Fn}\rfs$.
    \proofcase{$\cmemass \expr \exprtwo \sep \cmdtwo$}
    Observe that one rule among \ref{SI:Store},\\
    \ref{SI:Store-Unsafe}, and \ref{SI:Store-Err} must apply. If one of the
    last two rules applies, the claim is trivial. Otherwise, the stack looks like:
    \[
      {\sframe{\frame {\fencetrans(\cmemass \expr \exprtwo \sep \cmdtwo)} \regmap {\km[\syscall]}:\st'}
        {\bm\buf(\lay\lcomp \rfs')}\boolms:\cfstack}
    \]
    We just show the case for the \ref{SI:Store} rule:
    \begin{multline*}
      \nsstep 2 {\sframe{\frame {\fencetrans(\cmemass \expr \exprtwo \sep \cmdtwo)} \regmap {\km[\syscall]}:\st'}
        {\bm\buf(\lay\lcomp \rfs')}\boolms:\cfstack}{} {\dstep:\dstep } {\onone:\omem \add}\\
      \sframe{\frame {\fencetrans(\cmdtwo)} {\regmap} {\km[\syscall]}:\st'} 
        {\bitem \add \val\overline{\bm\buf(\lay\lcomp \rfs')}}{\boolms\lor\bool'}:\cfstack 
    \end{multline*}
    where $\add = \toAdd{\sem \exprtwo_{\regmap,\lay}}$,
    $\val = \sem \exprtwo_{\regmap,\lay}$.
    Observe that the claim requires to verify just that
    \[
      \fwf{\sframe{\frame {\fencetrans(\cmdtwo)} {\regmap} {\km[\syscall]}:\st'} 
        {\bitem\add \val\overline {\bm\buf(\lay\lcomp \rfs')}}{\boolms\lor\bool'}}
    \]
    which is almost completely a consequence of the premise of
    the proof-rule for $\fwf\cdot$. In particular, as in the case above,
    we observe that we just need to observe that
    $\overline{\bm\buf(\lay\lcomp \rfs')}$ is such that
    $\fwf{\overline{\bm\buf(\lay\lcomp \rfs')}}$ and from the premise of the rule
    \ref{SI:Store}, we deduce that $\add \in \Ark$, that shows that the new
    buffer satisfies the desired conditions.
    \proofcase{$\ccall \expr {\exprtwo_1,\dots,\exprtwo_k} \sep \cmdtwo$}
    Observe that either \ref{SI:Call},
    \ref{SI:Call-Unsafe}, or \ref{SI:Call-Err} must apply. If one of the
    last two rules applies, the claim is trivial. Otherwise, The proof proceeds
    as in the previous cases, but the important observation is that:
    \begin{multline*}
      \nsstep 2 {\sframe{\frame {\fencetrans(\cmemass \expr \exprtwo \sep \cmdtwo)} \regmap {\km[\syscall]}:\st'}
        {\bm\buf(\lay\lcomp \rfs')}\boolms:\cfstack}{} {\dstep:\dstep } {\onone:\omem \add}\\
      \sframe{\frame {\fencetrans(\cmdtwo)} {\regmap} {\km[\syscall]}:\frame {\overline{\bm\buf(\lay\lcomp \rfs')}(\add)} {\regmap} {\km[\syscall]}:\st'} 
        {\overline{\bm\buf(\lay\lcomp \rfs')}}{\boolms\lor\bool'}:\cfstack 
    \end{multline*}
    where $\add = \toAdd{\sem \exprtwo_{\regmap,\lay}}$,
    $\val = \sem \exprtwo_{\regmap,\lay}$.
    And we must ensure that the loaded program
    (namely $\overline{\bm\buf(\lay\lcomp \rfs')}(\add)$) 
    is equal to $\fencetrans(\cmd')$ for some command.
    From the premise of the rule \ref{SI:Call} that has
    been applied, we deduce that $\add \in \underline \lay(\Fn[\km])$
    means that there is a function in $\Fn[\km]$
    such that $\lay(\fn)=\add$.
    From \Cref{rem:overlinewrtdom}, we deduce that
    $\overline{\bm\buf(\lay\lcomp \rfs')} = \lay \lcomp \rfs''$
    for some $\rfs''\sim_{\Fn} \rfs'\sim_{\Fn}\rfs$.
    By definition of $\cdot \lcomp \cdot$,
    this also means that $\overline{\lay\lcomp \rfs''}(\add)= \rfs(\add)$
    that satisfies the requirement by hypothesis on $\system$.
  \end{proofcases}
\end{proof}

\begin{lemma}
  \label{lemma:stepsemsim}
  For every pair of configurations
  \[
    \specconfone = \sframe{\frame {\cmd_1} {\regmap_1} {\km[\syscall]}:\st_1} {\bm{\buf_1}{\mem_1}}{{\boolms}_1}:
  \]
  and 
  \[
    \specconftwo = \sframe{\frame {\cmd_2} {\regmap_2} {\km[\syscall]}:\st_2} {\bm{\buf_2}{\mem_2}}{{\boolms}_2},
  \]
  and stack
  if 
  \begin{equation*}
    \sstep {\sframe{\frame {\cmd_1} {\regmap_1}{\km[\syscall]}:\st_1} {\bm{\buf_1}{\mem_1}}{{\boolms}_1}} {}\dstep \obs \\
    {\sframe{\frame {\cmd_2} {\regmap_2}{\km[\syscall]}:\st_2} {\bm{\buf_2}{\mem_2}}{{\boolms}_2}} 
  \end{equation*}
  for some observation $\obs$, then
  \[
    \step {\conf{\frame {\cmd_1} {\regmap_1}{\km[\syscall]}:\st_1}, \overline {\bm{\buf_1}{\mem_1}}} {\conf{\frame {\cmd_2} {\regmap_2}{\km[\syscall]}:\st_2}, \overline {\bm{\buf_2}{\mem_2}}}
  \]
\end{lemma}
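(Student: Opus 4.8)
The plan is to argue by a case analysis on the rule used to derive the hypothesised step $\sframe{\frame{\cmd_1}{\regmap_1}{\km[\syscall]}:\st_1}{\bm{\buf_1}{\mem_1}}{\boolms_1}\sto{\dstep}{\obs}\sframe{\frame{\cmd_2}{\regmap_2}{\km[\syscall]}:\st_2}{\bm{\buf_2}{\mem_2}}{\boolms_2}$. First I would observe that, since the target of this step is an ordinary speculative configuration (not $\sconf{\err,\cdot}$ and not $\unsafe$) and since the $\dstep$ directive never increases the height of the configuration stack, the only applicable rules are the non-speculating ``step'' rules: \ref{SI:Op}, \ref{SI:Skip}, \ref{SI:If}, \ref{SI:Loop-Step}, \ref{SI:Load-Step}, \ref{SI:Store}, \ref{SI:Call}, \ref{SI:System-Call}, \ref{SI:Pop} and \ref{SI:Fence}; the erroneous variants \ref{SI:Load-Err}, \ref{SI:Store-Err}, \ref{SI:Call-Err} and the unsafe variants \ref{SI:Load-Unsafe}, \ref{SI:Store-Unsafe}, \ref{SI:Call-Unsafe} are excluded by the shape of $\specconftwo$. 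For each remaining rule I would exhibit the matching rule of the non-speculative semantics --- extended by \ref{WL:Fence} for the fence instruction --- and check its side conditions; the decisive point is that the side conditions of each $\dstep$ rule are \emph{literally} those of its classic counterpart, so this verification is immediate in every case.

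The cases \ref{SI:Op}, \ref{SI:Skip}, \ref{SI:If}, \ref{SI:Loop-Step}, \ref{SI:Pop} and \ref{SI:System-Call} are routine: in all of them the buffered memory is passed through unchanged, so $\overline{\bm{\buf_2}{\mem_2}}=\overline{\bm{\buf_1}{\mem_1}}$, and the modification of the frame stack --- a register write, a frame pop, a branch choice fixed by $\toBool{\sem{\expr}_{\regmap_1,\lay}}$, or the pushing of a fresh callee/system-call frame --- is carried out verbatim by \ref{WL:Op}, \ref{WL:Skip}, \ref{WL:If}, \ref{WL:While}, \ref{WL:Pop} and \ref{WL:SystemCall} respectively, whose premises depend only on $\regmap_1$ and $\lay$. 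Hence the required step $\step{\conf{\frame{\cmd_1}{\regmap_1}{\km[\syscall]}:\st_1,\overline{\bm{\buf_1}{\mem_1}}}}{\conf{\frame{\cmd_2}{\regmap_2}{\km[\syscall]}:\st_2,\overline{\bm{\buf_2}{\mem_2}}}}$ holds directly.

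The genuinely memory-sensitive cases are discharged with the observations of \Cref{sec:bufmem}. For \ref{SI:Load-Step} the loaded value $\val$ satisfies $\bufread{\bm{\buf_1}{\mem_1}}{\add}{0}=(\val,\bot)$ and the buffer is untouched; by \Cref{rem:bufreadoverline} this is exactly $\val=\overline{\bm{\buf_1}{\mem_1}}(\add)$, which is precisely the value read from the flushed memory by \ref{WL:Load}, so the two resulting configurations coincide. For \ref{SI:Store} the buffer grows by the single entry $\bitem{\add}{\val}$ with $\val=\sem{\exprtwo}_{\regmap_1,\lay}$, and unfolding the flushing operator gives $\overline{\bm{\bitem{\add}{\val}{\buf_1}}{\mem_1}}=\update{\overline{\bm{\buf_1}{\mem_1}}}{\add}{\val}$, which is exactly the memory produced by \ref{WL:Store}. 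For \ref{SI:Fence} the premise forces $\boolms_1=\bot$, the buffer is emptied and the memory replaced by $\overline{\bm{\buf_1}{\mem_1}}$, so $\overline{\bm{\buf_2}{\mem_2}}=\overline{\bm{\buf_1}{\mem_1}}$; since \ref{WL:Fence} is a no-op on memory, the classic side steps from $\conf{\frame{\cfence\sep\cmd}{\regmap_1}{\km[\syscall]}:\st_1,\overline{\bm{\buf_1}{\mem_1}}}$ to $\conf{\frame{\cmd}{\regmap_1}{\km[\syscall]}:\st_1,\overline{\bm{\buf_1}{\mem_1}}}$, as required.

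The only step that needs a short extra argument, and the one I expect to be the main obstacle, is \ref{SI:Call}: the speculative rule fetches the callee's body from the \emph{raw} underlying memory $\mem_1(\add)$ with $\add\in\underline{\lay}(\Fn[\km])$, whereas \ref{WL:Call} applied to the flushed memory would read $\overline{\bm{\buf_1}{\mem_1}}(\add)$. These coincide because every entry of $\buf_1$ originates from an application of \ref{SI:Store}, whose premise constrains the written address to lie in $\underline{\lay}(\Ar[\km])$; consequently $\dom(\buf_1)\subseteq\underline{\lay}(\Ar)$ --- this is the write/execute separation of the model, which can also be read off from the invariant maintained by \Cref{rem:overlinewrtdom} and \Cref{rem:simispreservedspec} --- and therefore $\overline{\bm{\buf_1}{\mem_1}}$ and $\mem_1$ agree on the procedure address $\add$. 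With $\overline{\bm{\buf_1}{\mem_1}}(\add)=\mem_1(\add)$ in hand, \ref{WL:Call} pushes the same callee frame as \ref{SI:Call}, the buffer component is unchanged, and the two resulting configurations match, completing the case analysis and the proof.
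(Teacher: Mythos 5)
Your proof takes essentially the same approach as the paper's: a case analysis on the rule deriving the $\dstep$ step, matching each speculative rule with its classic counterpart, and discharging the memory-sensitive cases via \Cref{rem:bufreadoverline} for loads and the unfolding of the flushing operator for stores and fences. The one place you go beyond the paper is the \ref{SI:Call} case, which the paper's own proof silently files under ``trivial'' even though it reads the raw memory $\mem_1(\add)$ rather than the buffered lookup; your observation that $\dom(\buf_1)\subseteq\underline{\lay}(\Ar)$ (the W\^{}X discipline enforced by the store rule, cf.\ \Cref{rem:overlinewrtdom,rem:simispreservedspec}) forces $\overline{\bm{\buf_1}{\mem_1}}(\add)=\mem_1(\add)$ on procedure addresses is exactly the right way to close that gap, with the caveat that it relies on a buffer well-formedness invariant that the lemma statement itself does not carry and that must be supplied at the call sites of the lemma.
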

\begin{proof}
  The proof goes by cases on the transition relation. Most of the cases are trivial. The most interesting ones are those which interact with memory.
  \begin{proofcases}
    \proofcase{\ref{AL:Fence}} We rewrite the assumption as follows:
    \begin{equation*}
      \sstep {\sframe{\frame {\cfence\sep\cmd} {\regmap_1} {\km[\syscall]}:\st_1} {\bm{\buf_1}{\mem_1}}{{\boolms}_1}}{} \dstep \obs \\
      {\sframe{\frame {\cmd} {\regmap_1} {\km[\syscall]}:\st_1} {\overline{\bm{\buf_1}{\mem_1}}}{{\boolms}_1}} 
    \end{equation*}
    The claim is 
    \begin{equation*}
      \step {\conf{\frame {\cfence\sep\cmd} {\regmap_1}{\km[\syscall]}:\st_1}, \overline {\bm{\buf_1}{\mem_1}}} {}\\
      {\conf{\frame {\cmd_1} {\regmap_1}{\km[\syscall]}:\st_1}, \overline {\bm{\buf_1}{\mem_1}}}
    \end{equation*}
    that is trivial.
    \proofcase{\ref{AL:Store}} We rewrite the assumption as follows:
    \begin{equation*}
      \sstep {\sframe{\frame {\cmemass \expr \exprtwo\sep\cmd} {\regmap_1}{\km[\syscall]}:\st_1} {\bm{\buf_1}{\mem_1}}{{\boolms}_1}} {} \dstep \obs\\
      {\sframe{\frame {\cmd} {\regmap_1}{\km[\syscall]}:\st_1} {{\bm{\bitem \add \val:\buf_1}{\mem_1}}}{{\boolms}_1}}.
    \end{equation*}
    Where $\add = \toAdd{\sem \expr_{\regmap_1,\lay}}$ and $\val = {\sem \exprtwo_{\regmap_1,\lay}}$.
    To show the claim is suffices to observe that 
    \begin{equation*}
      \step {\conf{\frame {\cmemass \expr \exprtwo\sep\cmd} {\regmap_1}{\km[\syscall]}:\st_1}, \overline {\bm{\buf_1}{\mem_1}}}{}\\
      {\conf{\frame {\cmd_1} {\regmap_1}{\km[\syscall]}:\st_1, \update{\overline {\bm{\buf_1}{\mem_1}}}\add\val}}
    \end{equation*}
    and that $\overline {\bm{\bitem \add \val:\buf_1}{\mem_1}} = \update{\overline {\bm{\buf_1}{\mem_1}}}\add\val$ that is a consequence of the definition of $\overline \cdot$.
    \proofcase{\ref{AL:Load}} We rewrite the assumption as follows:
    \begin{equation*}
      \sstep {\sframe{\frame {\cmemread \vx \expr\sep\cmd} {\regmap_1}{\km[\syscall]}:\st_1} {\bm{\buf_1}{\mem_1}}{{\boolms}_1}} {} \dstep \obs\\
      {\sframe{\frame {\cmd} {\update{\regmap_1}\vx {\val}}{\km[\syscall]}:\st_1} {{\bm{\buf_1}{\mem_1}}}{{\boolms}_1}}.
    \end{equation*}
    Where $\val = \bufread {\bm{\buf_1}{\mem_1}} {\toAdd{\sem \expr_{\regmap_1,\lay}}} 0$.
    To show the claim is suffices to apply \Cref{rem:bufreadoverline}.
  \end{proofcases}
\end{proof}

%%% Local Variables: 
%%% languagetool-local-disabled-rules: ("ID_CASING" "EN_UNPAIRED_QUOTES" "COMMA_PARENTHESIS_WHITESPACE" "CURRENCY" "WHITESPACE_RULE") 
%%% End: 
%%% mode: latex
%%% TeX-master: "arxiv.tex"
%%% End:

\end{document}